\DeclareMathAlphabet{\mathcal}{OMS}{cmsy}{m}{n}
\definecolor{ForestGreen}{rgb}{0.1333,0.5451,0.1333}
\definecolor{DarkRed}{rgb}{0.8,0,0}
\definecolor{Red}{rgb}{1,0,0}
\declaretheorem[numberwithin=section,refname={Theorem,Theorems},Refname={Theorem,Theorems},name={Theorem}]{thm}
\declaretheorem[numberlike=thm,refname={Lemma,Lemmas},Refname={Lemma,Lemmas},name={Lemma}]{lem}
\declaretheorem[numberlike=thm,refname={Corollary,Corollaries},Refname={Corollary,Corollaries},name={Corollary}]{cor}
\declaretheorem[numberlike=thm,name={Conjecture},refname={Conjecture,Conjectures},Refname={Conjecture,Conjectures}]{conjecture}
\declaretheorem[numberlike=thm,refname={Proposition,Propositions},Refname={Proposition,Propositions},name={Proposition}]{prop}
\declaretheorem[numberlike=thm,refname={Definition,Definitions},Refname={Definition,Definitions},name={Definition}]{defn}
\declaretheorem[style=remark,numberlike=thm,refname={Remark,Remarks},Refname={Remark,Remarks},name={Remark}]{rem}
\declaretheorem[style=remark,numberlike=thm,refname={Claim,Claims},Refname={Claim,Claims}]{claim}
\let\oldref\ref
\let\ref\Cref
\newcommand{\reffontsize}{\fontsize{11}{12}\selectfont}
\newcommand{\link}{{\tt link}}
\newcommand{\NEXT}{\mbox{{\tt next}}}
\newcommand{\PREV}{\mbox{{\tt prev}}}
\newcommand{\LM}{\mbox{{\tt leftmost}}}
\newcommand{\RM}{\mbox{{\tt rightmost}}}
\newcommand{\KEY}{\mbox{{\tt key}}}
\newcommand{\NULL}{\mbox{{\tt null}}}
\newcommand\TombStone{\rule{.7ex}{1.7ex}}
\renewcommand{\qedsymbol}{\TombStone}
\newcommand{\qedd}{\let\qed\relax\quad\raisebox{-.1ex}{$\qedsymbol$}}
\begin{document}
	
\global\long\def\boxx{\mathtt{box}}
\global\long\def\base{\mathtt{base}}
\global\long\def\add{\mathtt{ADD}}
\global\long\def\greedy{\textsf{Greedy\,}}
\global\long\def\smooth{\textsf{Smooth\,}}
\global\long\def\greedyfuture{\textsf{GreedyFuture\,}}
\global\long\def\eqdef{\overset{\textnormal{def}}{=}}
\global\long\def\cA{\mathcal{A}}
\global\long\def\cB{\mathcal{B}}
\global\long\def\cF{\mathcal{F}}
\global\long\def\cP{\mathcal{P}}
\global\long\def\opt{\mathsf{OPT}}

\global\long\def\bst{\mathtt{bst}}
\global\long\def\bstins{\mathtt{ibst}}
\global\long\def\sat{\mathtt{sat}}
\global\long\def\satins{\mathtt{isat}}
\global\long\def\path{\mathtt{path}}
\global\long\def\stable{\mathtt{stable}}
\global\long\def\coord#1#2{\left\langle #1,#2\right\rangle }

\title{Smooth heaps and a dual view of self-adjusting data structures\thanks{A preliminary version of this paper appears in Symposium on Theory of Computing (STOC) 2018.}}

%\dedicatory{} 

\author{L\'{a}szl\'{o} Kozma \thanks{Freie Universit\"at Berlin, Germany. 
Work done while at the Blavatnik School of Computer Science, Tel Aviv University, Israel, and TU Eindhoven, Netherlands. E-mail:
{\tt laszlo.kozma@fu-berlin.de}.} \and
\and  Thatchaphol Saranurak \thanks{KTH Royal Institute of Technology, Stockholm, Sweden. E-mail: {\tt
    thasar@kth.se}.}}

\date{}

\pagenumbering{gobble}
\maketitle

\begin{abstract}
	We present a new connection between self-adjusting binary search trees
	(BSTs) and heaps, two fundamental, extensively studied, and practically
	relevant families of data structures (Allen, Munro, 1978; Sleator,
	Tarjan, 1983; Fredman, Sedgewick, Sleator, Tarjan, 1986; Wilber, 1989;
	Fredman, 1999; Iacono, \"Ozkan, 2014). Roughly speaking, we map an
	\emph{arbitrary} heap algorithm within a natural model, to a corresponding BST algorithm with the same cost on a \emph{dual} sequence of operations (i.e.\ the same sequence with the roles of
	\emph{time} and \emph{key-space} switched). This is the first general
	transformation between the two families of data structures. 
	
There is a rich theory of dynamic optimality for BSTs (i.e.\ the theory of competitiveness between BST algorithms). The lack of an analogous theory for heaps has been noted in the literature (e.g.\ Pettie; 2005, 2008). Through our connection, we transfer all instance-specific lower bounds known for BSTs to a general model of heaps, initiating a  
 theory of \emph{dynamic optimality for heaps}. 

On the algorithmic side, we obtain a new, simple and efficient heap algorithm, which we call the \emph{smooth heap}. We show the smooth heap to be the heap-counterpart of Greedy, the BST algorithm with the strongest proven and conjectured properties from the literature, widely believed to be instance-optimal (Lucas, 1988; Munro, 2000; Demaine et al., 2009). Assuming the optimality of Greedy, the smooth heap is also optimal within our model of heap algorithms. As corollaries of results known for Greedy, we obtain instance-specific upper bounds for the smooth heap, with applications in adaptive sorting.

Intriguingly, the smooth heap, although derived from a non-practical BST algorithm, is simple and easy to implement (e.g.\ it stores no auxiliary data besides the keys and tree pointers). It can be seen as a variation on
the popular \emph{pairing heap} data structure, extending it with a ``power-of-two-choices'' type of heuristic. 

%%For the smooth heap we obtain instance-specific upper bounds, with applications in adaptive sorting, and we see it as a promising candidate for the long-standing question of a simpler alternative to Fibonacci heaps. %open question 
\end{abstract}

\begin{flushleft}
	{The paper is dedicated to \textit{Raimund Seidel} on the occasion of his sixtieth birthday.}
\end{flushleft}

\newpage

\newpage

{\small \tableofcontents{}}
\pagebreak{}
\pagenumbering{arabic}

\section{Introduction}\label{sec:intro}

We revisit two popular families of comparison-based data structures that implement the fundamental \emph{dictionary} and \emph{priority queue} abstract data types.

A dictionary stores a set of keys from an ordered universe, supporting the operations \emph{search}, \emph{insert}, and \emph{delete}. Other operations required in some applications include finding the smallest or the largest key in a dictionary, finding the successor or predecessor of a key, merging two dictionaries, or listing the keys of a dictionary in sorted order.
 
A priority queue stores a set of keys from an ordered universe (often referred to as priorities), with typical operations such as finding and deleting the minimum key (``extract-min''), inserting a new key, decreasing a given key, and merging two priority queues. The fact that efficient search is \emph{not} required allows priority queues to perform some operations faster than dictionaries. 

Perhaps the best known and most natural implementations of dictionaries and priority queues are binary search trees (BSTs), respectively, multiway heaps. In both cases, a key is associated\footnote{Typically, a data record is also stored with each key, whose retrieval is the main purpose of the data structure; we ignore this aspect throughout the paper, as it does not affect the data structuring mechanisms we study.} to each node of the underlying tree structure, according to the standard in-order, respectively, (min-)heap order conditions. 
Both structures need an occasional re-structuring in order to remain efficient over the long term. In BSTs, the costs of the main operations are proportional to the depths of the affected nodes, the goal is therefore to keep the tree reasonably balanced. For heaps, the desired structure is, in some sense, the opposite; the crucial extract-min operations are easiest to perform if the heap is fully sorted, i.e.\ if it is a path. 
The design and analysis of efficient algorithms for heap- and BST-maintenance continues to be a central topic of algorithmic research, and a large number of ingenious techniques have been proposed in the literature.

Our contribution is a new connection between the two tasks, showing that, with some reasonable restrictions, an arbitrary algorithm for heap-maintenance encodes an algorithm for BST-maintenance on a related (dual) sequence of inputs. 
This is the first general transformation between the two families of data structures. 
The connection allows us to transfer results (both algorithms and lower bounds) between the two settings. 

On the algorithmic side, we obtain a new, simple and efficient heap data structure, which we call the \emph{smooth heap}. We show the smooth heap to be the heap-equivalent of Greedy, the BST algorithm with the strongest proven and conjectured properties from the literature. We thus obtain, for the smooth heap, instance-specific guarantees (upper bounds) that go beyond the logarithmic amortized guarantees known for existing heap implementations. Key to our result is a new, non-deterministic interpretation of Greedy, which may be of independent interest.

On the complexity side, we transfer all instance-specific lower bounds known for BSTs to a natural class of heap algorithms. Pettie observed~\cite{Pettie,deque_Pet08} that a theory of dynamic optimality for heaps, analogous to the rich set of results known for BSTs, is missing.  
We see the instance-specific lower and upper bounds that we transfer to heaps as essential components of such a theory.

\paragraph{Self-adjusting BSTs.}

The study of dictionaries based on BSTs goes back to the beginnings of computer science.\footnote{Arguably, as an abstraction of searching in ordered sets, binary search trees have been studied (explicitly or implicitly) well before the existence of computers, see e.g.\ methods for root-finding in mathematical analysis.} Standard balanced trees such as AVL-trees~\cite{avl}, red-black trees~\cite{Bayer1972, Guibas78}, or randomized treaps~\cite{SeidelA96} keep the tree balanced by storing and enforcing explicit bookkeeping information.   
We refer to Knuth~\cite[\S\,6.2]{Knuth3} for an extensive overview of classical methods. 

By contrast, splay trees, invented by Sleator and Tarjan~\cite{ST85}, do not store any auxiliary information (either in the nodes of the tree or globally) besides the keys of the dictionary and the pointers representing the tree. Instead, they employ an elegant local re-structuring whenever a key is accessed. In short, splay trees bring the accessed key to the root via a sequence of \emph{double-rotations}. Splay trees can be seen as a member of a more general family of \emph{self-adjusting} data structures.\footnote{Two simpler heuristics proposed earlier by Allen and Munro~\cite{Allen-Munro} also fall into this category, but are known to be inefficient on certain inputs.} 
Splay trees support all main operations in (optimal) logarithmic amortized time. Furthermore, they adapt to various usage patterns, providing stronger than logarithmic guarantees for certain structured sequences of operations; we call such guarantees \emph{instance-specific}. Several instance-specific upper bounds are known for splay trees (see e.g.\ \cite{ST85, tarjan_sequential, deque_Sun92, finger1, finger2, deque_Elm04, deque_Pet08}) and others have been conjectured. In fact, Sleator and Tarjan conjectured splay trees to be \emph{instance-optimal}, i.e.\ to match, up to a constant factor, the optimum among all BST algorithms.\footnote{The term ``instance-optimal'' is, in our context, equivalent with ``$O(1)$-competitive''. An algorithm is $c$-competitive, if its running time is at most $c$ times the running time of the optimal algorithm (for every given input).} This is the famous \emph{dynamic optimality conjecture}, still open. It is not known whether \emph{any} polynomial-time BST algorithm has such a property, even with a priori knowledge of the entire sequence of operations. 

Lucas~\cite{Luc88} proposed, as a theoretical construction, a BST algorithm that takes into account \emph{future} operations; in her algorithm, the search path is re-arranged such as to bring the soonest-to-be-accessed key to the root (and similarly in a recursive way in the left and right subtrees of the root). Following Demaine, Harmon, Iacono, Kane, and P\v{a}tra\c{s}cu\ \cite{DHIKP09}, we refer to this algorithm as GreedyFuture. GreedyFuture is widely believed to be $O(1)$-competitive~\cite{Luc88, Mun00, DHIKP09}, and is known to match essentially all instance-specific guarantees proven for splay trees, as well as some stronger guarantees not known to hold for splay trees~\cite{FOCS15, LI16}.  Unfortunately, $o(\log{n})$-competitiveness has not been shown for either Splay or GreedyFuture.

It was shown in~\cite{DHIKP09} that GreedyFuture can be simulated by an \emph{online} algorithm (which we refer to simply as Greedy), with only a constant factor increase in cost. In a different line of work, Demaine et al.\ described an online BST algorithm with a competitive ratio of $O(\log\log{n})$, called Tango tree~\cite{Tango}. Strictly speaking, neither Greedy (in the online version), nor Tango are self-adjusting, as they both store auxiliary  information in the nodes of the tree.  
Furthermore, due to their rather intricate details, neither can be considered a practical alternative to splay trees. It would be of great interest to find simple and practical BST algorithms that match the theoretical guarantees of Greedy or Tango.

\paragraph{Self-adjusting heaps.}

In 1984, Fredman and Tarjan introduced the Fibonacci heap~\cite{Fibonacci}, a data structure that achieves the theoretically optimal amortized bounds of $O(\log{n})$ for extract-min, and $O(1)$ for all other heap operations. Fibonacci heaps are rather complicated to implement and simpler alternatives tend to outperform them in practice~\cite{Tarjan14}. In the past three decades, a central goal of research in data structures has thus been to find a simpler heap implementation that would match the theoretical guarantees of Fibonacci heaps (see e.g.~\cite{Stasko, Rheap, Theap, ThinThick,  Violation, RankPairing,Strict, BrodalSurvey, Quake, Hollow}). 

Arguably, this goal has not been fully achieved. In particular, Fibonacci heaps, as well as most of their proposed alternatives store auxiliary bookkeeping information, or perform operations outside the comparison/pointer model, or maintain a pointer structure that is not ``forest-like'' (not following, as Iacono and \"Ozkan~\cite{IaconoOzkan} put it, \emph{``the spirit of what we usually think of as a heap''}.) The following general question is still open.

\vspace{0.1in}
\textit{
\textbf{Question 1.} Is there, by analogy to search trees, a simple, self-adjusting heap data structure with optimal amortized cost, possibly achieving instance-optimality?
}
\vspace{0.1in}

The closest in simplicity and elegance to self-adjusting trees are \emph{pairing heaps}, introduced by Fredman, Sedgewick, Sleator, and Tarjan~\cite{pairing}. Pairing heaps do not store auxiliary information besides the keys and the pointers of a single (multiway) heap, are easy to implement and efficient in practice~\cite{Tarjan14}. They implement all operations using the unit-cost \emph{link} primitive (\ref{fig1}). Pairing heaps perform extract-min, the only operation with a non-trivial implementation, in a \emph{two-pass re-structuring} of the children of the root.  
(We describe pairing heaps in more detail in \S\,\oldref{sub:heap model}.)

\begin{figure}[H]
	\begin{center}
\includegraphics[width=0.99\textwidth]{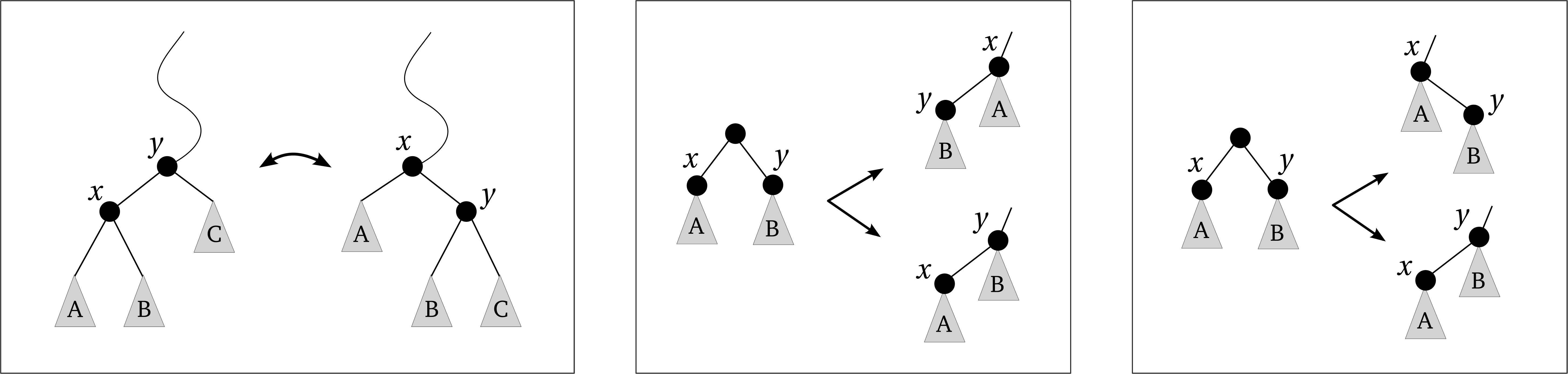}
	\end{center}
\caption{Elementary operations in BSTs and heaps. Letters $x$ and $y$ denote keys, $A$, $B$, $C$ are arbitrary subtrees. (i) Rotation in a binary search tree. (ii) Classical (``unstable'') link between neighboring siblings in a heap. The larger of $x$ and $y$ becomes the leftmost child of the smaller. (iii) Stable link between neighboring siblings in a heap. The larger of $x$ and $y$ becomes the leftmost child of the smaller, if it was originally to the left; otherwise it becomes the rightmost child.
\label{fig1}}
\end{figure}

Fredman et al.\ showed that pairing heaps support all operations in logarithmic amortized time. They also conjectured that pairing heaps match the optimal bounds of Fibonacci heaps. This conjecture was disproved by Fredman~\cite{FredmanLB}, who showed that in certain sequences, \emph{decrease-key} may cost $\Omega(\log\log{n})$. A similar lower bound was later shown by Iacono and \"{O}zkan~\cite{IaconoOzkan, IaconoOzkan2}.\footnote{We remark in passing that the complexity of the standard paring heap is not fully understood. Iacono~\cite{IaconoUB} showed that assuming $O(\log{n})$ for decrease-key, insert takes constant amortized time. Pettie~\cite{Pettie} showed that $O(4^{\sqrt{\log\log{n}}})$ can be simultaneously achieved for both insert and decrease-key. For natural variants of pairing heaps, even the logarithmic cost of extract-min has only been conjectured~\cite{pairing, forward_variant, multipass}. Improving these bounds remains a challenging open problem.}

The Fredman-, and Iacono-\"{O}zkan lower bounds hold for broad classes of ``pairing-heap-like'' data structures that include all natural variants of pairing heaps, as well as some other heap data structures from the literature. The heap models defined in these works are ``in the spirit'' of self-adjusting trees; to the extent that these models capture the idea of a ``self-adjusting heap'', the answer to Question 1 has to be negative. We argue, however, that there exist natural self-adjusting heaps that fall outside these models, leaving Question 1 open. 

For some concrete heaps, certain instance-specific \emph{upper bounds} were studied by Iacono and Langerman~\cite{queaps}, Elmasry~\cite{ElmasryWS}, and Elmasry et al.\ \cite{ElmasryEtc}.   
On the other hand, a theory of instance-specific \emph{lower bounds} (in a sense similar to BSTs) has not yet been proposed for heaps in \emph{any heap model}. We argue that the existence of such lower bounds is \emph{necessary} for dynamic optimality to be possible (even approximately). Informally speaking, even a candidate instance-optimal heap must have high cost on certain input sequences (by information-theoretic arguments); these input sequences must therefore be hard \emph{for every heap}. Put differently, some restriction on the allowed comparisons and links is necessary, to prevent a hypothetical optimal algorithm from ``guessing'' the correct permutation of the input.

\paragraph{Connecting BSTs and heaps.}
Self-adjusting BSTs and heaps serve different purposes, and the underlying trees they maintain have, in general, different characteristics.  
Yet, they are similar in their approach of performing local re-adjustments (using rotations, respectively links), seemingly ignoring global structure. The two families of data structures have been introduced and studied in the past decades by largely the same authors.

The following technical connection between splay trees and the standard variant of pairing heaps has been observed by Fredman et al.~\cite{pairing}. If we view a heap as a binary tree (interpreting ``leftmost-child'' and ``right-sibling'' pointers as ``left-child'' and ``right-child'', see e.g.\ \cite[\S\,2.3.2]{Knuth}), the re-structuring of pairing heaps after an extract-min operation resembles the re-adjustment of splay trees during an access. 
Using this observation, Fredman et al.\ adapt the splay tree potential function~\cite{ST85} to show that pairing heap operations take logarithmic time.\footnote{We remark that Fredman et al.'s analogy between pairing heaps and splay trees is far from trivial; to maintain it, the identities of nodes may need to be permuted, and some left-right children pairs may need to be swapped.} Despite the productive use of this connection, it seems rather specific to splay trees and pairing heaps. Given the intriguing similarity between self-adjusting data structures, the following question suggests itself.

\vspace{0.1in}
\textit{
\textbf{Question 2.} Is there a fundamental, general correspondence between self-adjusting BST and self-adjusting heap data structures?
}
\vspace{0.1in}

\subsection{Our results}

We propose a general model of heaps, which we call the \emph{stable heap model}.\footnote{By \emph{model} we understand the description of the pointer-based structure and the set of primitive operations that may be used to implement the priority queue.} This model allows us to approach both \emph{\textbf{Question~1}} and \emph{\textbf{Question~2}} in surprising ways. The model aims to capture self-adjusting heap algorithms, and is somewhat similar to the ``pure heap'' model of Iacono and \"Ozkan~\cite{IaconoOzkan, IaconoOzkan2}. 

The new element of our model is the \emph{stable link} operation that replaces the \emph{link} operation of existing heap models. Whereas the standard link operation makes the larger of the two linked items the leftmost child of the smaller, our stable link makes the larger item either the leftmost or the rightmost child of the smaller (see \ref{fig1}). More precisely, if $x$ and $y$ are linked, where $x$ is the left neighbor of $y$, then, a stable link operation makes $x$ the \emph{leftmost} child of $y$, or it makes $y$ the \emph{rightmost} child of $x$, respecting the (min-)heap order condition.

An intuitive a priori reason for stable links is that they ``better preserve'' the original left-to-right ordering of keys, which we expect to make algorithms that use stable links more amenable to instance-specific analysis.\footnote{The name is inspired by the idea of ``stable sorting'' (see e.g.\ \cite[\S\,5]{Knuth3}). Stable sorting preserves the original ordering of key-pairs that are \emph{equal}. As seen later, stable linking preserves the original ordering of key-pairs that are \emph{incomparable}, according to our current knowledge, i.e.\ in the partial order described by the heap.} A second justification is that the lower bounds known in existing heap models~\cite{FredmanLB, IaconoOzkan, IaconoOzkan2} crucially exploit that the links are not stable, raising the possibility that an algorithm that uses stable links may circumvent the existing lower bounds.   
The main motivation, however, is that with stable links, the heap model turns out to be deeply connected to the BST model. 

\paragraph{General transformation.}
Our connection addresses \emph{\textbf{Question~2}} as follows. We show that within the stable heap model, \emph{every} algorithm $\mathcal{A}$ for heap re-structuring corresponds to some algorithm $\mathcal{B}$ for BST re-arrangement. This is the first general connection between \emph{models} of heaps and BSTs. The costs of the two algorithms in \emph{sorting-mode} may differ only by a constant factor, when executing ``dual'' sequences of operations. By \emph{sorting-mode} execution we mean the following (see \ref{fig2__} for illustration). 

\noindent\textit{Sorting-mode for a heap:} A sequence of $n$ keys $(x_1, \dots, x_n)$ is inserted into an initially empty heap, followed by $n$ extract-min operations. For simplicity, assume that $\{x_1, \dots, x_n\} = [n]$. The execution can be interpreted as sorting the permutation $(x_1, \dots, x_n)$ via \emph{selection-sort}, using a particular heap-based method for selecting the minimum. 

\noindent\textit{Sorting-mode for a BST:} A sequence of $n$ keys $(x_1, \dots, x_n)$ is inserted into an originally empty BST. Again, assume $\{x_1, \dots, x_n\} = [n]$. The execution can be interpreted as sorting $(x_1, \dots, x_n)$ via \emph{insertion-sort}, using a particular BST-based method for insertion. (The sorted keys can be read out in an $O(n)$-time traversal of the final tree.)

The connection between $\mathcal{A}$ and $\mathcal{B}$ is the following. The number of elementary operations performed by $\mathcal{A}$ when sorting $X = (x_1, \dots, x_n)$ equals (up to a constant factor) the number of elementary operations performed by $\mathcal{B}$ when sorting the \emph{inverse permutation} $X'$, i.e.\ the permutation $X' = (y_1, \dots, y_n)$, where $y_i = j$ iff $x_j = i$. (See \ref{fig2__} for an illustration.) 

We say that the roles of \emph{time} and \emph{key-space} are switched between the two problems. The time of insertion, a.k.a.\ the \emph{index} of a key in the heap input is mapped to the \emph{rank} of a key in the BST input. Similarly, the \emph{rank} of a key in the heap input is mapped to the time of insertion in the BST input.

\begin{figure}
	\begin{center}
\includegraphics[width=0.99\textwidth]{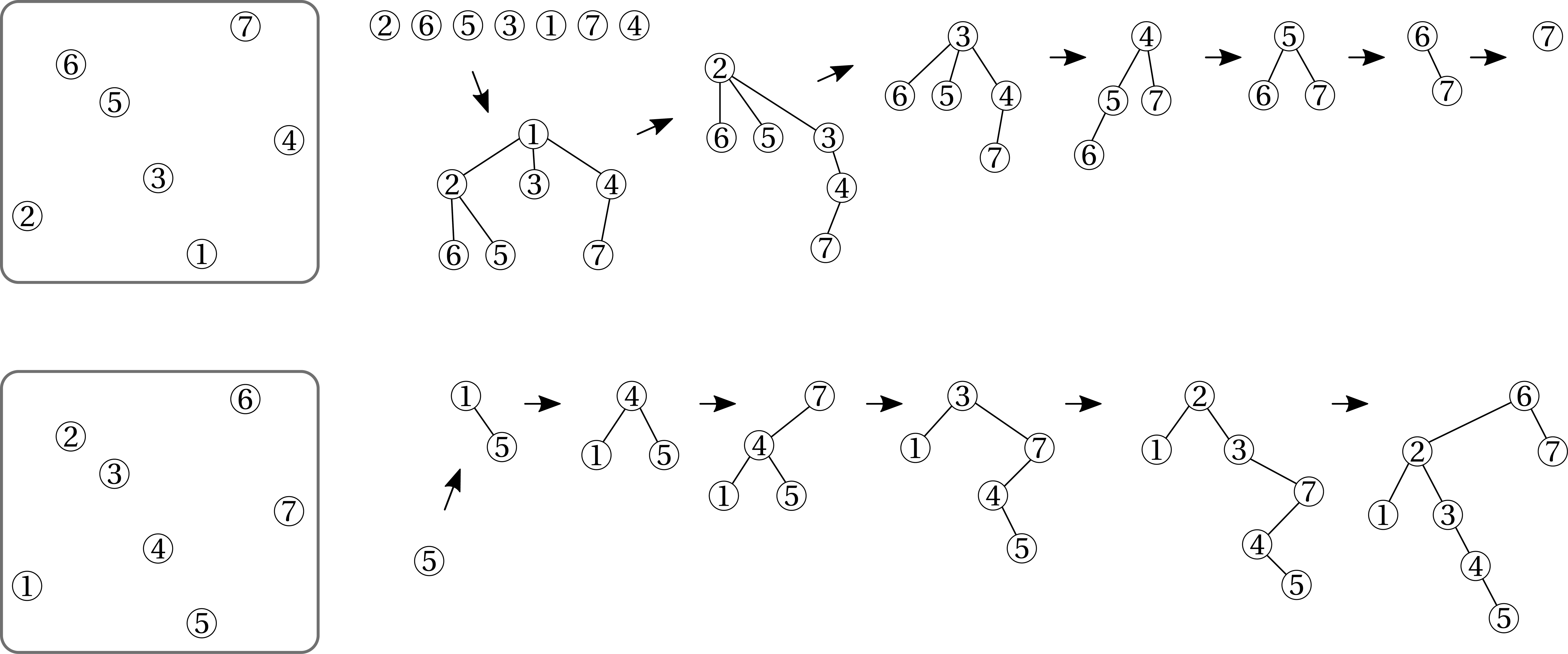}
	\end{center}
\caption{Sorting-mode execution for heaps and BSTs. Input permutation $P = (2,6,5,3,1,7,4)$ and its inverse $P' = (5,1,4,7,3,2,6)$. Above left, $P$ shown with values on $y$-axis, below left, $P'$ shown with values on $x$-axis. Above: sequence of extract-mins in a heap, following insertion sequence $P$, using an unspecified stable heap algorithm. Below, insertion sequence $P'$ in an initially empty BST, using the ``move-to-root'' restructuring (the inserted items are rotated to the root).
\label{fig2__}}
\end{figure}

We remark that the described duality between heaps and trees is quite subtle. As we execute the heap algorithm, the individual rotations that make up the BST execution are revealed in an order that is neither the temporal, nor the spatial order; the order in which the BST execution is revealed depends on the internal structure of the input.  
It is instructive to compare our connection with the connection of Fredman et al.\ between pairing heaps and splay trees. There, individual link operations map (essentially) to individual rotations; the pairing heap and the splay tree evolve in parallel and the correspondence between them is maintained throughout the execution. This is emphatically not the case in our connection. In fact, we do not even map individual operations such as \emph{search}, \emph{extract-min}, or \emph{insert} to each other. Instead, we ``globally'' match the execution trace of a sequence of operations in a heap with the execution trace in a BST for the ``dual'' sequence of operations. An intermediate state of one structure, in our connection, corresponds to an \emph{abstract state} of the other structure, where some decisions pertaining to future operations have been comitted to, others are still left undecided.

\paragraph{Smooth heaps.}
	Perhaps the most interesting consequence of our connection is a new, simple heap algorithm that we call the \emph{smooth heap}. We show it to be the heap-counterpart of Greedy, which is conjectured to be instance-optimal for BSTs. 
	Interpreting the proven guarantees for Greedy, we obtain a logarithmic bound on the cost of extract-min in smooth heaps, as well as a number of instance-specific upper bounds, not previously known for any heap implementation (e.g.\ we show that smooth heaps adapt to locality of reference, and to pattern-avoidance in the input). Assuming the conjectured optimality of Greedy, smooth heaps are also optimal in sorting-mode among all stable heap algorithms. Key to our result is a new, non-deterministic interpretation of Greedy which may be of independent interest.

The intriguing aspect of this connection is that although Greedy is rather complicated to implement as an online BST algorithm, the smooth heap is not; it is comparable in simplicity to pairing heaps. There are \emph{several equivalent descriptions} of the smooth heap. In one view, it appears, roughly speaking, as a pairing heap equipped with a ``power-of-two-choices'' type of heuristic. In another view, it appears as a structure built of nested treaps. The smooth heap is self-adjusting (no auxiliary information), and uses only the standard pointer structure, thus it partially addresses \emph{\textbf{Question~1}}.

In the following, we briefly describe the smooth heap data structure. (See \S\,\oldref{sec:smooth_desc} for more details and variants.) In this section, we view the smooth heap as a single multiway (min-)heap (i.e.\ the key of each node is greater than its parent). The crucial operation is extract-min, implemented as follows. After deleting the root, we are left with a list of nodes that need to be consolidated into a single tree, using link operations.  
We proceed by repeatedly finding a \emph{local maximum} (i.e.\ an item that is larger than its neighboring siblings). We then link the local maximum with \emph{the larger} of its two neighbors (or with its only neighbor, if it is the leftmost or the rightmost item in the list). This operation, in effect, removes the local maximum, ``smoothing'' the sequence of items; this gives the name of the data structure. Linking with the larger of the two neighbors can be seen as a locally greedy choice: as we are moving towards the sorted order, intuitively the smaller the rank-difference between linked items, the more progress we make.\footnote{There are limits to this intuition, and we do not claim smooth heaps to be exactly, globally optimal, see~\ref{app:nonoptimal}.} 

A high-level description of the algorithm is given in \ref{fig2_}. More explicit descriptions are given in \S\,\oldref{sec:smooth_desc} and \ref{smooth_heap_ps}.

\begin{figure}[H]
\begin{center}
\parbox{5.5in}{
\begin{algorithm}[H]
\newcommand{\llWhile}[2]{{\let\par\relax\lWhile{#1}{#2}}}
\newcommand{\llIf}[2]{{\let\par\relax\lIf{#1}{#2}}}
\newcommand{\llElse}[1]{{\let\par\relax\lElse{#1}}}
\DontPrintSemicolon
 \SetAlgoLined
\NoCaptionOfAlgo
\vspace{0.03in}
\KwIn{a list of siblings $X$}
\vspace{0.03in}
\textbf{\texttt{while}} there is an item $x$ in $X$ larger than its neighbors: \\
\quad \textbf{\texttt{let}} $y$ be the largest of the neighbors of $x$.\\
\quad \textbf{\texttt{link}} $x$ and $y$.
\end{algorithm}

\caption{Smooth heap re-structuring for extract-min. \label{fig2_}}
}
\end{center}
\end{figure}

Finding local maxima and linking them with a neighbor can be done in a single left-to-right pass (similarly to the first pass of pairing heaps). The remaining top-level items after this pass are in sorted order, they can therefore be collected in a second, right-to-left accumulation pass (again, similar to the second pass of pairing heaps, except that here, no more comparisons need to be made). The total number of comparisons is easily seen to be at most twice the number of link operations performed. See \ref{fig3} for an illustration. We remark that the link operations used in the smooth heap are all stable links.

\begin{figure}[H]
	\begin{center}
\includegraphics[width=0.99\textwidth]{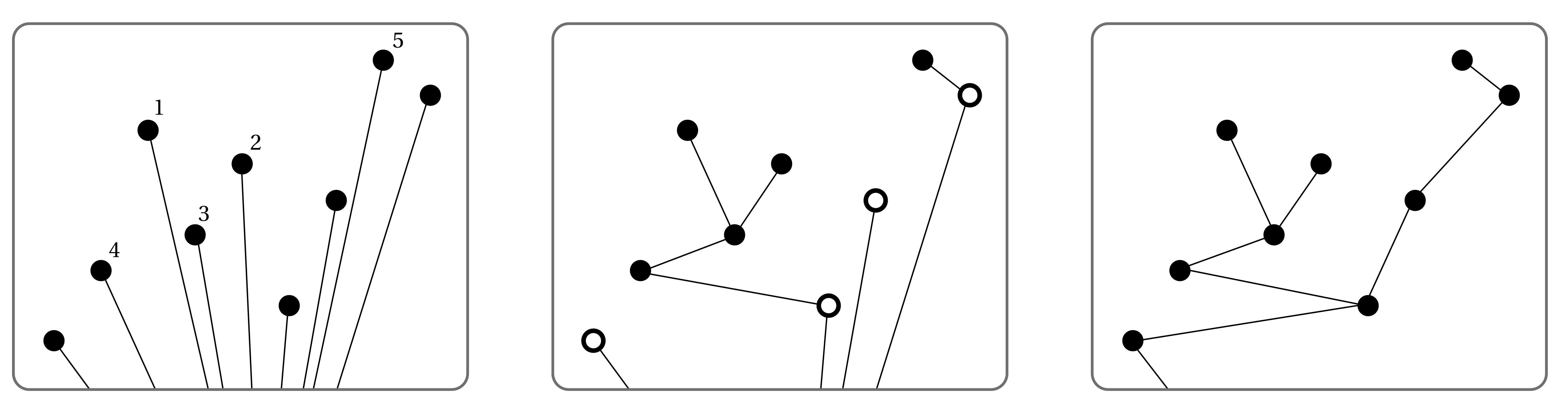}
	\end{center}
\caption{Smooth heap re-structuring. Top-level siblings, left-to-right with keys $(1,3,7,4,6,2,5,9,8)$, key values shown on $y$-axis, subtrees of nodes not shown. (i) Initial state. Small numbers indicate order of linking. (ii) State after left-to-right smoothing round. Remaining top-level nodes appear hollow. (iii) State after final, right-to-left round. Lowest node is new root after extract-min.
\label{fig3}}
\end{figure}

As our focus is on sorting-mode, we postpone the description of other operations to \S\,\oldref{sec:smooth_desc}. It remains an intriguing open question whether the smooth heap (or any other self-adjusting heap) can match Fibonacci heaps in the asymptotic cost of all operations (i.e.\ the remainder of \emph{\textbf{Question~1}}). The arguments that rule out constant-time decrease-key in the case of pairing heaps do not immediately apply to stable heaps. It remains to be seen whether this is a mere technicality, or if an efficient decrease-key can in fact be added to smooth heaps. There are several natural ways to implement this operation, but their analysis does not appear easy. %The arguments that rule out constant-time decrease-key in the case of pairing heaps do not immediately apply to stable heaps; whether this is a mere technicality remains to be seen.

\paragraph{Consequences of the transformation.}
Several standard heap algorithms can be transferred to the stable heap model, including the variants of pairing heaps described in \cite{pairing} (see \S\,\oldref{sub:heap model}). Through our duality between heap and tree algorithms, these new heap algorithms also yield new \emph{offline} BST algorithms that have not been previously described. 
(In fact, for every stable heap algorithm we obtain a \emph{family} of offline BST algorithms, with cost at most a constant factor larger; e.g.\ we obtain a set of (non-trivially) different BST executions that are $O(1)$-competitive with Greedy.)
In the other direction, we can transfer strong instance-specific \emph{lower bounds}~\cite{Wilber, DHIKP09} from BSTs to heaps.  
We thus obtain  
instance-specific lower bounds for all stable heap algorithms in sorting-mode, which we see as a necessary step towards a theory of dynamic optimality for heaps.\footnote{The lack of such a theory was noted e.g.\ by Pettie~\cite{Pettie,deque_Pet08}: \emph{``Despite the connection
between splay trees and pairing heaps, there is no obvious analogue
of dynamic optimality for priority queues.''}}

\paragraph{Adaptive sorting.} 

Our connection between self-adjusting BSTs and heaps can be interpreted as an equivalence between broad classes of selection-sort and insertion-sort algorithms on inverted input permutations. Sorting has been extensively studied, not just in the worst-case, but also in an instance-specific sense. (Sorting with some ``pre-sortedness'' in the input is often called ``adaptive sorting''). As an entry point into the literature on adaptive sorting, we refer to \cite{MunroSpira, Mehlhorn79, Mannila, Levcopoulos89, Levcopoulos,estivill,  Moffat, Barbay1},\cite[\textsection\,5.3]{Knuth3}, and references therein.

The proven instance-specific properties of Splay and GreedyFuture subsume many of the structural properties in the adaptive sorting literature (\ref{app_sort}). Until now, however, there has been no easy way to implement GreedyFuture as an insertion-sort algorithm. (As mentioned, the online Greedy algorithm is rather complicated.) By using smooth heaps, we obtain an easy-to-implement selection-sort algorithm\footnote{We remark that \emph{every} stable heap algorithm yields a \emph{stable sorting} algorithm in the classical sense, assuming that ties are broken according to the ``left is smaller'' rule.} that inherits all the instance-specific properties of Greedy (on the inverse of the input permutation).

As a concrete example, we mention sorting sequences with an avoided permutation pattern (pattern-avoidance is a well-studied combinatorial property of sequences, see e.g.~\cite{knuth68,tarjan_sorting, pratt_queues, Kitaev, MarxPattern, Newman}). Sorting pattern-avoiding permutations was studied by Arthur~\cite{Arthur}. As a corollary of our pattern-avoiding bounds for Greedy~\cite{FOCS15}, we obtain a practical sorting algorithm that sorts permutations with an arbitrary fixed avoided pattern in quasi-linear time (the exact bound involves the inverse Ackerman function), improving the earlier bound of~\cite{Arthur}. Furthermore, in contrast to~\cite{Arthur}, our algorithm need not know the avoided pattern in advance; pattern-avoidance of the input is used only in the analysis of Greedy~\cite{FOCS15}. (We use here the easy fact that if a permutation avoids a pattern, then the inverse permutation also avoids a certain pattern of the same size.)

\subsection{Related work and structure of the paper} 
A ``dual'' view of binary search trees and heaps brings to mind the Cartesian tree data structure~\cite{Vuillemin}, also known as treap~\cite{SeidelA96}. A treap is a binary tree built over pairs of values, respecting the in-order condition with respect to the first, and the heap-order condition with respect to the second entry. Assuming distinct priorities, the treap is easily seen to be unique.  
Swapping the roles of the two entries yields a ``dual treap''.

The operation performed by smooth heaps on a list of items can be seen as a linear-time ``on-the-fly treapification'', where the left-to-right indices of items play the role of key-values. Such a description of the smooth heap is reminiscent of the Cartesian tree sorting algorithm of Levcopoulos and Petersson~\cite{Levcopoulos89}. There too, a treap of the input permutation is built. Afterwards, a secondary heap structure (e.g.\ a binary heap or a Fibonacci-heap) is used to store the children of the root; in this secondary heap, the minimum is repeatedly deleted and its children inserted, resulting in a sorted deletion-order. Cartesian tree sorting was not proposed as a general-purpose data structure, but it could be turned into one, by implementing insert and other operations. (A natural way to insert is perhaps to add a new key into the auxiliary heap.)  

Despite the apparent similarity, the behavior and performance of smooth heaps and Cartesian tree sorting are different. Smooth heaps do not require a secondary heap structure, instead, they (implicitly) build a new treap from the top-level list of nodes after every deletion of the minimum.  
The running time of Cartesian tree sorting can be expressed in closed form. It was shown by Eppstein that this quantity is subsumed by known instance-specific bounds for splay trees~\cite{Eppstein-blog}.  
It is easy to construct examples where the running time of sorting by smooth heaps is $O(n)$, whereas the running time of Cartesian tree sorting is $\Omega(n \log{n})$. An example where Cartesian tree sorting is faster than sorting by smooth heaps would disprove the conjectured dynamic optimality of Greedy (\ref{app_cartesian}).

In \S\,\oldref{sec:bst_model} we define the self-adjusting BST model that we use in the rest of the paper.  
In \S\,\oldref{sub:heap model} we describe the stable heap model and we adapt known algorithms to this model. 
In \S\,\oldref{sec:smooth_desc} we describe the smooth heap, our new heap data structure. 
In \S\,\oldref{sec:produce sat set} we present the main connection between the two models, and describe some of its immediate consequences.  
As a key ingredient of our results, we present a nondeterministic description of \greedy in \S\,\oldref{sec:non-det greedy}. 
We then give the full proofs in \S\,\oldref{sec:proofs}.  
In \S\,\oldref{sec:open} we conclude with an extensive list of open questions and possible further directions.

\section{BST model}\label{sec:bst_model}

We adopt a standard model of binary search trees (BSTs), see e.g.\ \cite{Wilber, Luc88, Tango, DHIKP09}, which can be seen as a restricted pointer machine model. Each node of the BST is associated with a key, respecting the in-order condition (each item is larger than those in its left subtree and smaller than those in its right subtree). Dictionary operations are performed using a cursor that points, at the beginning of every operation, to the root of the tree. The elementary (unit-cost) operations allowed are moving the cursor to the parent, left-child, or right-child of the current node, and performing a rotation on the edge between the node at the cursor and its parent (\ref{fig1}). 

In this paper we consider only search and insert operations; we further assume that all searches are successful. Throughout, we assume that all key-values are distinct.\footnote{See~\cite{DHIKP09,FOCS15} for discussion of these standard assumptions.} 

Search and insert are implemented in the standard way. For search, it is required that the cursor visits, at some point during the operation, the node with the given key. For insert, the cursor must visit both the predecessor and the successor of the new key (or only one of them, in case the key is the new minimum or maximum). A node containing the new key is then attached as the right child of its predecessor, or as the left child of its successor (whichever slot is free). In addition, during both search and insert, a BST algorithm may perform arbitrary re-arrangement of the tree, using rotations at the cursor. The cost of a search or insert operation is the number of nodes that are \emph{visited} by the cursor (``touched'') during the operation. Observe that the touched nodes form a subtree containing the root. This cost model is easily seen to be equivalent up-to constant factors, with most other reasonable cost models~\cite{Luc88,Wilber, DHIKP09}.

In this paper we only consider ``offline'' BST algorithms, i.e.\ we assume that the entire sequence of insert and search operations is known in advance.\footnote{For online algorithms, operations are revealed one by one.} 
We mainly consider two possible modes of operation: a \emph{search-only mode}, in which, starting from some initial tree a sequence of $n$ distinct searches are performed, and an \emph{insert-only mode}, in which, starting from an empty tree, a sequence of $n$ distinct keys are inserted. (We also call the insert-only mode the \emph{sorting-mode}, as it can be seen as a particular implementation of insertion-sort; the $n$ keys can be read out in sorted order by the in-order traversal of the resulting tree.) 

As the operations are performed on distinct keys (in both search-only and insert-only modes), we view a sequence of operations of length $n$ as a permutation over $[n]$. The \emph{execution trace} of a BST algorithm $\cA$ serving the sequence $X$, denoted $\cA(X)$, is the set of touched keys for all operations. That is, $\coord i j \in \cA(X)$, if algorithm $\cA$ touches node $i$ when executing the $j$-th operation of $X$ (``at time $j$''). The cost of the execution is $|\cA(X)|$, i.e.\ the total number of nodes touched at all times. 
For every permutation $X$ over $[n]$, we denote the \emph{offline optimum} in search-only mode and insert-only mode as $\opt_{\bst}(X)$ and
$\opt_{\bstins}(X)$ respectively, where $\opt_{\bst}(X)=\min_{\cA}\{|\cA(X)|\}$ when $X$ is treated as a sequence of search operations, and similarly $\opt_{\bstins}(X)=\min_{\cA}\{|\cA(X)|\}$ when $X$ is treated as a sequence of insert operations. 

Demaine et al.~\cite{DHIKP09} introduced an elegant geometric view for the study of BST algorithms, which we use throughout the paper. 
In the remainder of the section we review some of their definitions and results, and we slightly extend the model, to handle insertion-sequences.

\paragraph{Satisfied point sets.}

Let $P\subseteq\mathbb{Z}\times\mathbb{Z}$ be a finite set of points.
For every point $p\in P$, we write $p=\coord{p.x}{p.y}$ where $p.x$
is the $x$-coordinate of $p$ (the first coordinate) and $p.y$ is
the $y$-coordinate of $p$ (the second coordinate). We denote $P_{x\le i}=\{p\in P\mid p.x\le i\}$
and $P_{y\le i}=\{p\in P\mid p.y\le i\}$, and similarly,
$P_{x=i}$, $P_{x<i}$, and so on. We call $P_{x=i}$ and $P_{y=i}$
the $i$-th column and the $i$-th row of $P$ respectively. We say
that $P$ is a \emph{permutation} (of size $n$) if $|P_{x=i}|=|P_{y=i}|=1$ iff $i \in [n]$.

Given two points $p, q \in P$, we denote as $\square_{pq} \subseteq\mathbb{Z}\times\mathbb{Z}$ the rectangle (possibly degenerate) whose opposite corners are $p$ and $q$. We say that $\square_{pq}$ is a \emph{satisfied rectangle} if $\square_{pq}$ contains another point $r\in P\setminus\{p,q\}$ (where $r$ can be at a corner or a border of $\square_{pq}$), or if $p$ and $q$ are aligned vertically or horizontally.  

A point set $P$ is \emph{satisfied} if, for every pair of points $p,q\in P$, the rectangle $\square_{pq}$ is satisfied.

\begin{defn}
	[Satisfied superset problem]Given a set of points $P$, find a satisfied set $Q \supseteq P$.\label{def:sat problem}
\end{defn}
Let $\cA_{\sat}$ be some algorithm for solving the satisfied superset
problem. Given a point set $P$, we denote by $\cA_{\sat}(P)$ the output of $\cA_{\sat}$ for input $P$, i.e.\ a satisfied superset of $P$. The \emph{cost }of $\cA_{\sat}$ on $P$ is defined to be $|\cA_{\sat}(P)|$. 
Let $\opt_{\sat}(P)$ be the size of the smallest satisfied superset of $P$. 

The following definition is new.

\begin{defn}
	[Insertion-compatible superset] Given a set of points $P$, a superset $Q \supseteq P$ is \emph{insertion-compatible}, if for every $i \in [n]$ we have $\min\{p.y \mid p \in P_{x=i}\} = \min\{q.y \mid q \in Q_{x=i}\}$. In words, $Q$ does not add a point \emph{below} a point of $P$.
\end{defn}

We also consider the variant of the satisfied superset problem where the goal is to find an \emph{insertion-compatible} satisfied superset of the input. 
For every point set $P$, let $\opt_{\satins}(P)$ be the size of the smallest insertion-compatible satisfied superset of $P$. 
Clearly, $\opt_{\satins}(P) \ge \opt_{\sat}(P)$.

Let $X=(x_{1},\dots,x_{n}) \in [n]^n$ be a permutation, i.e.\ $x_{i}\neq x_{j}$ for every $i\neq j$. 
We call $P^{X}=\{\coord{x_{i}}i\mid i \in [n] \}$ the \emph{point set of $X$}. 

Given $X$, we also consider the \emph{inverse} permutation $X'$. (The $i$-th element of $X$ is $x_{i}$ iff the $x_{i}$-th element of $X'$ is $i$.) Treating $P^{X}$ as an $n$-by-$n$
matrix, we can define the transpose $(P^{X})^{'}$. Observe that $(P^{X})^{'}=P^{X^{'}}$. 
We also consider the \emph{reverse} permutation $X^r$. (The $i$-th element of $X^r$ is $x_{n-i+1}$.)
For a point set $P \subseteq [n]\times[n]$, its reverse $P^r$ is such that 
$P^r_{y=i} = \{ \coord{x}{i} \mid \coord{x}{n-i+1} \in P_{y=n-i+1} \}$ for each $i$. Note that $(P^X)^r = (P^{X^r})$. 
Throughout this paper, we usually use $X$ to denote a sequence of numbers, and use $P$ or $Q$ to denote a point set.

Demaine et al.~\cite{DHIKP09} show a precise equivalence between execution traces of BST algorithms and satisfied sets.

\begin{thm}[Geometry of BSTs, search-only \cite{DHIKP09}]\label{thm:geo bst}
	Let $X \in [n]^n$ be an arbitrary permutation. A set $Q \subset [n]^2$ is a satisfied superset of $P^X$ iff there is an offline BST algorithm $\cA_{\bst}$ in search-only mode with some initial tree 
	such that $\cA_{\bst}(X) = Q$.
\end{thm}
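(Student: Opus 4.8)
The statement is an equivalence, and in both directions the inclusion $P^{X}\subseteq Q$ is the easy part: for ``only if'' it holds because every search in search-only mode is successful, so the node with key $x_{t}$ is touched at time $t$ and hence $\coord{x_{t}}{t}\in\cA_{\bst}(X)$; for ``if'' it is part of the hypothesis. So the real content is that an execution trace is always satisfied, and, conversely, that a satisfied $Q$ is realised by some BST algorithm.

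For the first implication I would argue by tracking lowest common ancestors and ancestor relations. Fix a BST execution with trace $Q$ and two distinct points $p=\coord a s,\,q=\coord b t\in Q$; if $a=b$ or $s=t$ then $\square_{pq}$ is a degenerate or axis-aligned segment and is satisfied, so assume $a\neq b$ and $s<t$. Write $T_{r}$ for the tree just after the $r$-th operation, and recall that the nodes touched at time $r$ are exactly $Q_{y=r}$, that they form a connected subtree containing the root, and that every rotation at time $r$ is between two touched nodes. The plan is: (i) let $z$ be the lowest common ancestor (LCA) of the nodes with keys $a$ and $b$ in $T_{t-1}$; since $z$ lies on the root-to-$b$ path it is touched at time $t$, so $\coord{z.\mathrm{key}}{t}\in Q$ with $z.\mathrm{key}$ between $a$ and $b$, and we are done unless $z.\mathrm{key}\in\{a,b\}$; (ii) the same observation at time $s$ (where node $a$ is touched) finishes the proof unless node $a$ is an ancestor of node $b$ in $T_{s-1}$ — if the roles are reversed, or the LCA there is a third key, then $\coord b s$ or that third key at time $s$ is the witness; (iii) in the remaining case, follow the two Booleans ``node $a$ is an ancestor of node $b$'' and ``node $b$ is an ancestor of node $a$'' across $T_{s-1},\dots,T_{t-1}$: either node $a$ turns out to be touched at time $t$ (witness $\coord a t$), or one of these Booleans changes value at some intermediate access, which forces node $a$ or node $b$ to be touched there and hence gives a point $\coord a r$ or $\coord b r$ distinct from $p$ and $q$. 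In each branch one checks that the witness lies in $\square_{pq}\setminus\{p,q\}$; this is routine but the case analysis needs care.

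For the second implication, given a satisfied $Q\supseteq P^{X}$, I would reconstruct an offline BST algorithm serving the search sequence $X$ whose trace is exactly $Q$, in the spirit of Demaine et al.: process the rows of $Q$ from the bottom upward ($t=1,\dots,n$), maintaining a carefully chosen ``reference tree'' $T_{t}$ on the key set $[n]$ that depends only on the not-yet-processed part of $Q$ (the natural candidate being a treap whose priorities record the next appearance of each key in $Q$ after time $t$). The claims to establish are: (a) $Q_{y=t}$ forms a connected subtree of $T_{t-1}$ containing the root, so it is a legal set of nodes to touch; (b) $x_{t}\in Q_{y=t}$ because $\coord{x_{t}}{t}\in P^{X}\subseteq Q$, so the prescribed search for $x_{t}$ is carried out; (c) $T_{t-1}$ is turned into $T_{t}$ using only rotations confined to $Q_{y=t}$. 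Once these hold, the touched set at step $t$ is $Q_{y=t}$, and the trace is $\bigcup_{t}Q_{y=t}=Q$.

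The step I expect to be the main obstacle is (c): that the reference tree can be advanced by one row using only rotations inside that row. This is precisely where satisfied-ness of $Q$ is indispensable. The idea: if advancing the reference tree created a new ancestor relation between columns $c<c'$, neither lying in the current row $t$, then the previous LCA of $c$ and $c'$ would be a column of row $t$ strictly between them — i.e.\ a point of $Q$ in row $t$ between columns $c$ and $c'$ — and feeding this point, together with the next appearances of $c$ and $c'$ in $Q$, into the definition of a satisfied set would yield a further point of $Q$ contradicting the new ancestor relation. Making this fully rigorous, including the degenerate cases where a column has no later appearance and the correct tie-breaking among equal priorities, is the genuinely delicate part; the two inclusions, the reference-tree bookkeeping, and the first implication's case analysis are routine.
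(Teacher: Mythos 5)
Your proposal reconstructs the proof of Demaine, Harmon, Iacono, Kane, and P\v{a}tra\c{s}cu, which is exactly what the paper relies on: the theorem is cited from that work, and the paper's appendix proof of the insertion-compatible variant invokes their Lemmas~2.1 and~2.2 and the same next-touch-time treap. Tracking LCA/ancestor relations for the ``execution trace is satisfied'' direction and running the treap with next-touch-time priorities for the converse is precisely the standard argument, and your identification of tie-breaking among equal priorities and the ``advance the reference tree by rotations confined to the current row'' step as the genuinely delicate points is accurate.
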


\begin{cor}
	For every permutation $X \in [n]^n$, $\opt_{\bst}(X) = \opt_{\bst}(X') = \opt_{\bst}(X^r)$.
	\label{cor:bst symmetric}
\end{cor}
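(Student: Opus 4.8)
The plan is to route everything through the geometric equivalence of \ref{thm:geo bst}. For a permutation $Y \in [n]^n$, write $m(P^Y)$ for the minimum size of a satisfied superset of $P^Y$ contained in $[n]^2$. Then \ref{thm:geo bst}, applied by taking the minimum of $|Q|$ over both sides of the stated ``iff'', gives $\opt_{\bst}(Y) = m(P^Y)$ (that allowing points outside the $n \times n$ box cannot help is a standard fact, but is not even needed here). Since the text preceding the corollary records $(P^X)' = P^{X'}$ and $(P^X)^r = P^{X^r}$, it suffices to prove $m(P^X) = m\bigl((P^X)'\bigr) = m\bigl((P^X)^r\bigr)$.

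The core observation is that the predicate ``satisfied'' is invariant under the coordinate symmetries that realize $'$ and $r$. Let $\tau \colon \coord{a}{b} \mapsto \coord{b}{a}$ be the transpose and $\rho \colon \coord{a}{b} \mapsto \coord{a}{n+1-b}$ the vertical reflection; both are bijections of $[n]^2$ onto itself, $\tau(P^X) = (P^X)'$ and $\rho(P^X) = (P^X)^r$, and each is an involution. Each of $\tau, \rho$ maps an axis-aligned rectangle $\square_{pq}$ onto $\square_{\tau(p)\tau(q)}$, resp.\ $\square_{\rho(p)\rho(q)}$, sending lattice points in its interior or on its boundary to lattice points in the interior or boundary of the image, and preserving vertical/horizontal alignment of pairs of points. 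Hence, for any finite $P \subseteq [n]^2$ and any $p, q \in P$, the rectangle $\square_{pq}$ is satisfied in $P$ iff $\square_{\tau(p)\tau(q)}$ is satisfied in $\tau(P)$ (and likewise for $\rho$), so $P$ is satisfied iff $\tau(P)$ is, iff $\rho(P)$ is. Because $|\tau(Q)| = |\rho(Q)| = |Q|$ and $\tau, \rho$ carry supersets of $P^X$ to supersets of $\tau(P^X) = (P^X)'$, resp.\ $\rho(P^X) = (P^X)^r$, the maps $Q \mapsto \tau(Q)$ and $Q \mapsto \rho(Q)$ are size-preserving bijections from the satisfied supersets of $P^X$ in $[n]^2$ onto those of $(P^X)'$, resp.\ $(P^X)^r$. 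Taking minima yields $m(P^X) = m\bigl((P^X)'\bigr) = m\bigl((P^X)^r\bigr)$, and combining with the first paragraph gives $\opt_{\bst}(X) = \opt_{\bst}(X') = \opt_{\bst}(X^r)$.

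I do not expect a genuine obstacle here: the argument is a pure symmetry observation on the definition of satisfied point sets, using that transposition and reflection map rectangles to rectangles and preserve containment and axis-alignment. The only things requiring a little care are the bookkeeping identities $(P^X)' = P^{X'}$ and $(P^X)^r = P^{X^r}$ (already stated in the excerpt), and checking that $\tau$ and $\rho$ preserve the $n \times n$ box so that the box-restricted optima supplied by \ref{thm:geo bst} line up on both sides.
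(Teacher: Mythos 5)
Your proof is correct and follows the same route as the paper: translate $\opt_{\bst}$ to the geometric optimum via \ref{thm:geo bst}, then invoke the transpose/reflection symmetry of satisfied point sets; the paper simply compresses the symmetry argument into the phrase ``it is clear that $\opt_{\sat}(P^X) = \opt_{\sat}(P^{X'}) = \opt_{\sat}(P^{X^r})$,'' which you have spelled out.
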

\begin{proof}
	By \Cref{thm:geo bst}, $\opt_{\bst}(X) = \opt_{\sat}(P^X)$. It is clear that $\opt_{\sat}(P^X) = \opt_{\sat}(P^{X'}) = \opt_{\sat}(P^{X^r})$. Another application of \Cref{thm:geo bst} implies the claim. \qedd
\end{proof}

The following theorem is new. Its proof closely follows the proof of \ref{thm:geo bst} from \cite{DHIKP09}. For completeness, we give it in \ref{app_insert}.

\begin{thm}
	[Geometry of BSTs, insertion-compatible]\label{thm:geo ins}
	Let $X \in [n]^n$ be an arbitrary permutation. A set $Q \subset [n]^2$ is an insertion-compatible satisfied superset of $P^X$ iff there is an offline BST algorithm $\cA_{\bst}$ in insert-only mode 
	such that $\cA_{\bst}(X) = Q$.
\end{thm}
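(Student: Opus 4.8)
The plan is to mirror the proof of \ref{thm:geo bst} from \cite{DHIKP09}, adapting each of its two directions from search-only to insert-only mode. Structurally the arguments are the same; the one genuinely new ingredient is the geometric meaning of insertion-compatibility. Since insert-only executions perform no deletions, the node with key $x_i$ is absent from the tree at every time $t<i$ and is created, at time $i$, as a leaf below its predecessor or successor. The condition $\min\{q.y\mid q\in Q_{x=i}\}=\min\{p.y\mid p\in P^X_{x=i}\}=i$ is exactly the statement that $Q$ has no point in column $x_i$ below $\coord{x_i}{i}$, i.e.\ that node $x_i$ is not touched before the step that creates it. I would prove both implications.

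\textbf{Forward direction (trace $\Rightarrow$ insertion-compatible satisfied superset).} This part is easy. Let $Q=\cA_{\bst}(X)$ for an offline BST algorithm in insert-only mode. Then $Q\supseteq P^X$, since the key inserted at time $i$ is touched at time $i$; and $Q$ is insertion-compatible, since with no deletions the node $x_i$ is not present (hence not touched) before time $i$, so the lowest point of $Q$ in column $x_i$ is $\coord{x_i}{i}$. For the satisfied property I would reuse the access-path argument of \cite{DHIKP09}, which is in fact simpler here because nothing is ever deleted: given $p=\coord{a}{i},q=\coord{b}{j}\in Q$ with $i\le j$, the case $a=b$ is a vertical alignment; the case $i=j$ follows because $a$ and $b$ then lie in a common connected touched subtree containing the root, so the node on the tree-path between them whose key is between $a$ and $b$ is touched at time $i$ and lies in $\square_{pq}$ (with the ancestor sub-case handled by descending one child); and the case $i<j$ is the Demaine et al.\ argument essentially verbatim, the only new sub-case being $b=x_j$, where one uses the parent of the freshly inserted leaf $b$ (touched during the insertion of $x_j$) in place of $b$.

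\textbf{Reverse direction (insertion-compatible satisfied superset $\Rightarrow$ trace).} This is the substantive part, and I would follow the construction in the reverse direction of \ref{thm:geo bst}, processing times $t=1,\dots,n$. The invariant to maintain is that at the start of step $t$ the current tree is a BST on $\{x_1,\dots,x_{t-1}\}$ and the set $R_t:=\{a\mid\coord{a}{t}\in Q\}\setminus\{x_t\}$ of already-present keys scheduled to be touched at time $t$ forms a connected subtree containing the root. Insertion-compatibility guarantees $x_t\notin\{x_1,\dots,x_{t-1}\}$, so $x_t$ is genuinely new at time $t$. Next I would show the predecessor and successor of $x_t$ among $\{x_1,\dots,x_{t-1}\}$ (or the single one, if $x_t$ is a new extremum) lie in $R_t$: starting from the satisfied rectangle spanned by $\coord{x_t}{t}$ and the $P^X$-point of $\operatorname{pred}(x_t)$ and iterating the satisfied-rectangle condition, one obtains a staircase of points of $Q$ whose columns never fall strictly between $\operatorname{pred}(x_t)$ and $x_t$ — because any such intermediate column $c$ has its key inserted at some time $t_c>t$, so by insertion-compatibility $Q$ has no point in column $c$ below height $t_c>t$ — and whose heights strictly increase up to $t$, so the staircase terminates at $\coord{\operatorname{pred}(x_t)}{t}\in Q$ (and symmetrically for the successor). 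Hence the cursor, moving only within the root-subtree $R_t$, can visit $\operatorname{pred}(x_t)$ and $\operatorname{succ}(x_t)$ and attach $x_t$ as a leaf; we are then free to rotate arbitrarily within the touched region $R_t\cup\{x_t\}$, leaving all hanging subtrees in place, and we use this freedom to reshape the tree so that $R_{t+1}$ becomes a connected root-subtree, restoring the invariant.

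\textbf{Main obstacle.} The hard step is exactly the one inherited from \cite{DHIKP09}: showing that the reshaping by rotations is always achievable, i.e.\ that the satisfied property makes the contiguity constraint \emph{local}, never forcing a node outside the touched region to move in or out. I would handle this as Demaine et al.\ do — track a canonical reference tree read off from the points of $Q$ with $y\le t-1$ (ordered by next touch-time), maintain that the algorithm's tree agrees with it on the touched region, and apply the rectangle condition to pairs of points straddling times $t$ and $t+1$ to conclude that the reference trees for $t$ and $t+1$ differ only inside the touched region. The only extra work beyond \cite{DHIKP09} is checking that the insertion of the fresh leaf $x_t$ is consistent with this machinery, which is precisely what insertion-compatibility (together with $Q\supseteq P^X$) delivers.
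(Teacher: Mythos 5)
Your proposal follows essentially the same route as the paper: both directions adapt \cite[Lemmas 2.1 and 2.2]{DHIKP09} via the next-touch-time treap, identifying insertion-compatibility with the observation that no key is touched before its insertion time. Your staircase argument for why the predecessor and successor of $x_t$ lie in the set of keys touched at time $t$ supplies justification for a step the paper's proof asserts without spelling out.
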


\begin{rem}\label{rem:bst insert stronger}
	By \Cref{thm:geo bst,thm:geo ins}, given any BST algorithm $\cA$ in insert-only mode (a.k.a.\ sorting-mode), there is a BST algortihm $\cA'$ in search-only mode, such that $|\cA'(X)| \le |\cA(X)|$ holds for all permutations $X$.\footnote{
		With the same reasoning, we can even show that there is a BST algorithm $\cA''$ that handles a sequence of arbitrarily intermixed search and insert operations, where $|\cA''(X)| \le |\cA(X)|$ for every permutation $X$.}
	That is, an upper bound for the insert-only mode is also an upper bound for the search-only mode. 
	It is not known whether the reverse statement holds. In this paper we always prove a stronger statement by upper bounding the cost of BST algorithms in insert-only mode.
\end{rem} 

A BST algorithm $\cA$ is $c$-competitive (w.r.\ to the offline optimal BST algorithm) if
for all permutations $X \in [n]^n$, we have $|\cA(X)| \leq c \cdot \opt_{\bst}(X)$.
If $\cA$ is $O(1)$-competitive, we also say that $\cA$ is \emph{instance-optimal}.

\paragraph{\greedyfuture~and $\protect\greedy$.}

Perhaps the most natural algorithm for the satisfied superset problem is \greedy, a straightforward geometric sweepline algorithm.

\begin{defn}
	[\greedy \cite{DHIKP09}]\label{def:greedy}Given a set $P\subseteq[n]\times[n]$,
	\greedy\,works in $n$ steps as follows. Initially, $Q=P$. At the
	$i$-th step, if there is a point $p\in P_{y=i}$ and $q\in Q_{y<i}$
	where $\square_{pq}$ is not satisfied, it adds a point $r=\coord{q.x}i$
	to $Q$ (so now $\square_{pq}$ is satisfied). The output of \greedy\,is denoted $\greedy(P)=Q$.
\end{defn}

Observe that \greedy\,always produces an insertion-compatible satisfied superset.
It is known that \greedy\,is equivalent, in the sense of \ref{thm:geo bst,thm:geo ins},
to the offline BST algorithm \greedyfuture. %That is, we have
%the following:
\begin{thm}
	[\cite{DHIKP09}]For all permutations $X$ it holds that
 \greedyfuture$(X)=\greedy(P^{X})$.\label{thm:greedy and greedyfuture}
\end{thm}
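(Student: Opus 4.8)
The plan is induction on the number of processed operations, $i = 0, 1, \dots, n$, using \ref{thm:geo bst} (and, for the sorting-mode reading, \ref{thm:geo ins}) as the dictionary between a run of a BST algorithm and the point set it generates. Write $Q = \greedy(P^X)$ and $Q^{(i)} = Q_{y \le i}$ for the part of $Q$ produced in the first $i$ sweepline steps, and let $T_i$ be the tree held by \greedyfuture just after serving $x_1, \dots, x_i$. The invariant I would carry is: \emph{(a)} the set of points touched by \greedyfuture over the first $i$ operations equals $Q^{(i)}$; and \emph{(b)} $T_i$ is exactly the tree that the (canonical) reconstruction in the proof of \ref{thm:geo bst} attaches to the partial point set $Q^{(i)}$. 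Intuitively, $T_i$ is the Cartesian tree on the keys in which key $c$ has priority ``most recent time $\le i$ at which $c$ was touched'' (a larger such time sits closer to the root), with ties among the most-recently-touched keys broken in favour of the soonest \emph{future} access; this is what makes the ``bring-the-soonest-to-be-accessed-to-the-root'' rule of \greedyfuture coincide with the reconstruction. The base case $i = 0$ is where the initial tree of \greedyfuture is chosen to match the empty prefix of $\greedy(P^X)$.

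For the inductive step, assume the invariant after step $i$ and let operation $i+1$ access $x_{i+1}$; put $p = \coord{x_{i+1}}{i+1}$. The first and central claim is that the $x$-coordinates on the root-to-$x_{i+1}$ path in $T_i$ are exactly $\{x_{i+1}\} \cup \{q.x : q \in Q^{(i)},\ \square_{pq} \text{ not satisfied in } Q^{(i)}\}$, i.e.\ precisely the columns in which \greedy places a point at row $i+1$. Using part \emph{(b)}, a key $c$ is an ancestor of $x_{i+1}$ in $T_i$ iff the topmost point $q$ of column $c$ in $Q^{(i)}$ strictly dominates, in $y$-coordinate, every point of $Q^{(i)}$ lying in a column strictly between $c$ and $x_{i+1}$ as well as every point of column $x_{i+1}$ in $Q^{(i)}$; by part \emph{(a)} these topmost points encode the touch-history, so this ``staircase'' condition is literally the statement that $\square_{pq}$ contains no third point of $Q^{(i)}$ and is non-degenerate, i.e.\ is unsatisfied. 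Hence at row $i+1$ \greedy touches exactly the search path that \greedyfuture touches, and nothing more, which gives part \emph{(a)} at step $i+1$ and agrees with the BST cost model.

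For part \emph{(b)} at step $i+1$, note that after the access the search-path keys all acquire a point at row $i+1$, so the reconstruction must place them above everything else and order them \emph{among themselves} by soonest future access, recursing in the two key-intervals into which the soonest-accessed one splits the path, while the off-path side-subtrees keep their internal shape and reattach below the deeper of their two bounding path-keys. This is precisely the re-arrangement rule of \greedyfuture: move $x_{i+1}$ (equivalently, the soonest-to-be-accessed path key) to the root and recurse on the left and right portions of the search path. Thus $T_{i+1}$ is again the reconstruction of $Q^{(i+1)}$, and moreover \greedy adds no further points at row $i+1$, matching the fact that \greedyfuture touches nothing outside the search path. After step $n$ the invariant yields $\greedyfuture(X) = Q^{(n)} = \greedy(P^X)$; the insertion-compatible version follows verbatim with \ref{thm:geo ins} in place of \ref{thm:geo bst}.

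The step I expect to be the main obstacle is the two-directional ``ancestor $\Leftrightarrow$ unsatisfied rectangle'' equivalence together with its recursive consistency: one has to check carefully that the priority/tie-breaking description of $T_i$ in \emph{(b)} is genuinely invariant under the recursive re-arrangement performed by \greedyfuture at every level of the recursion (not merely at the root), and that the stair-shaped family of points added by \greedy reproduces the search path at each recursive level. A secondary nuisance is the base case --- fixing the initial tree of \greedyfuture, the handling of never-yet-touched keys, and, in sorting-mode, the reading of ``future access'' --- so that $\greedy(P^X)$, which by definition never adds a point below $P^X$, lines up exactly with the trace of \greedyfuture rather than differing by a handful of boundary points.
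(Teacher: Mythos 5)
The paper does not give a proof of this statement; it is cited verbatim from \cite{DHIKP09}, so there is no ``paper's own proof'' to compare against. (The appendix proof of \ref{thm:geo ins} does, however, sketch the relevant reconstruction, which is useful for evaluating your argument.)

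Your high-level strategy --- induction on $i$, a treap/Cartesian-tree characterization of $T_i$, and the equivalence ``ancestor of $x_{i+1}$ $\Leftrightarrow$ unsatisfied rectangle'' --- is the right one and matches \cite{DHIKP09} in spirit. But the specific priority you use to define $T_i$ is not the standard one, and this creates a genuine gap. The paper's appendix (and \cite{DHIKP09}, Lemma 2.2) uses priority $=$ \emph{next touch time with respect to the full output $Q = \greedy(P^X)$}, a forward-looking quantity. You instead posit priority $=$ (last touch time $\le i$, future access time), i.e.\ a backward-looking primary key with a forward-looking tie-break. These need not coincide a priori, and you neither prove they do nor show directly that your variant reproduces the Greedy touch set.

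The place this bites is exactly the ``ancestor $\Leftrightarrow$ staircase'' step, which you flag as the main obstacle but do not resolve. Ancestorhood in your treap is a ``maximum priority in the interval, ties broken by future access'' condition, whereas Greedy adding a point in column $c$ at row $i+1$ is a \emph{strict} domination condition (a border point at equal height already satisfies $\square_{pq}$). If two columns $c < d < x_{i+1}$ share the same last-touch time $j$, Greedy will not touch $c$ (the point at $\coord{d}{j}$ satisfies $\square_{pq}$), yet your tie-break can make $c$ an ancestor of $x_{i+1}$ whenever $c$'s future access is sooner and nothing in $(c, x_{i+1}]$ has a strictly later last touch. Ruling this out is not a routine verification; it is an inductive structural fact about $\greedy(P^X)$ (roughly, that the keys touched at each time form a root-path in $T_i$ and that the off-path reattachment is consistent with the staircase), and it is precisely what the next-touch-time priority is designed to make automatic. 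I would recommend you either switch to the next-touch-time priority and re-derive the staircase equivalence from it, or add a lemma showing that the tie scenario above cannot occur under your priority.

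A smaller issue: the base case ``initial tree chosen to match the empty prefix'' is not well-defined --- the empty prefix of $Q$ determines nothing. In search-only mode the initial tree must be the treap with priority $=$ first touch time in the \emph{entire} $Q$, which again presupposes the forward-looking priority rather than yours.
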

Similarly to Splay~\cite{ST85}, \greedyfuture is conjectured
to be instance-optimal.
\begin{conjecture}[\cite{DHIKP09,Mun00,Luc88}]
	For all permutations $X \in [n]^n$ it holds that\\
 \greedyfuture$(X)=O(\opt_{\bst}(X))$.\label{conj:greedy opt}
\end{conjecture}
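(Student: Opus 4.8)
The plan is dictated by the geometric machinery just set up, and—since \ref{conj:greedy opt} is the offline dynamic optimality conjecture—I should say up front that I do not expect an unconditional proof; what follows is the canonical route together with the point at which it stalls. First I would pass to a purely geometric statement: by \ref{thm:greedy and greedyfuture}, $\greedyfuture(X)=\greedy(P^{X})$, and by \ref{thm:geo bst}, $\opt_{\bst}(X)=\opt_{\sat}(P^{X})$, so it suffices to show $|\greedy(P)|=O(\opt_{\sat}(P))$ for every permutation point set $P$—equivalently, that the number of extra points \greedy\ inserts on top of $P$ is $O(\opt_{\sat}(P))$ (here $\opt_{\sat}(P)\ge |P|=n$ automatically, since any satisfied superset contains $P$).

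The standard attack is to exhibit a lower-bound functional $L$ with $L(P)\le O(\opt_{\sat}(P))$ and then prove $|\greedy(P)|=O(L(P)+n)$. The available candidates for $L$ are Wilber's first and second bounds~\cite{Wilber} and the independent-rectangle bound of Demaine et al.~\cite{DHIKP09}; the last is the strongest, and it is known that a maximum family of pairwise independent unsatisfied rectangles of $P$ has size $O(\opt_{\sat}(P))$. So it would suffice to build a charging scheme assigning to each point created by \greedy\ a bounded share of some independent rectangle family (or of Wilber-bound witnesses), summing to $O(|\greedy(P)|)$ over all created points.

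The main obstacle is precisely this charging step, and it is what keeps \ref{conj:greedy opt} open: no global charging of \greedy's output against any known lower bound is known, and—worse—it is not even established that any of these lower bounds is within a constant factor of $\opt_{\sat}$ on every instance, so a full proof may require a genuinely new lower bound rather than a more clever analysis of \greedy. What can be done, and what I would actually carry out, is the ``local'' version of the plan: for structured inputs (sequential, dynamic finger, working set, and permutations avoiding a fixed pattern, as in~\cite{FOCS15}) one exhibits a tailored lower bound and bounds the points \greedy\ creates directly, recovering $\opt_{\bst}$ up to constant (or nearly constant) factors. Extending any of these arguments to an arbitrary $P$ is the open problem, so in this paper \ref{conj:greedy opt} is not proved—it is recorded as a conjecture and used only as a hypothesis (e.g.\ to obtain the conditional optimality of the smooth heap in sorting-mode).
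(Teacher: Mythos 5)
You correctly identify that \ref{conj:greedy opt} is the (offline) dynamic optimality conjecture for \greedyfuture, which the paper does not prove --- it is stated as a conjecture attributed to \cite{DHIKP09,Mun00,Luc88} and invoked only as a hypothesis for conditional results such as the instance-optimality of the smooth heap in sorting-mode. Your sketch of the canonical geometric reduction and the identification of the charging step against Wilber/MIR lower bounds as the genuine open obstacle is an accurate account of why the conjecture remains unproved.
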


Demaine et al.\ also define \emph{online} satisfied superset algorithms and show a stronger form of \ref{thm:geo bst} that applies to online algorithms. In this way, they obtain an online BST algorithm competitive with GreedyFuture. This online equivalence is not our concern in the current paper.

\newpage
%\newpage
\section{Stable heap model}
\label{sub:heap model}

In this section we introduce a family of heap data structures that include close variants of several of the previously proposed heaps. The key feature of our model is the stable link operation that replaces the standard (unstable) link (\ref{fig1}). We call the new model the stable heap model.

\subsection{Description of the model}
\paragraph{Structure.} A heap in the stable heap model is organized as a \emph{forest} of multiway min-heaps.\footnote{We opt for a forest-based representation in order to simplify the implementation of insert, meld, and decrease-key operations, which can now be performed lazily. It is not hard to change our model such that it is based on a single heap. Fredman~\cite{FredmanTransform} describes a general transformation that turns various heap implementations into the forest-of-heaps representation.} Every node has an associated key (we refer interchangeably to a node and its key), and every non-root key is larger than its parent. For simplicity, we assume keys to be distinct. Internally, a forest of multiway heaps is represented in the standard way by storing the children of every node as a linked list, see e.g.\ \cite[\S\,2.3.2]{Knuth}. The collection of top-level roots in the forest is also stored in a linked list. We assume that all lists are \emph{doubly-linked}, providing the pointers \NEXT~and \PREV~between neighboring siblings, with the appropriate markers at the two ends of the list. We further assume that every node has a pointer to its \emph{leftmost} and to its \emph{rightmost} child, and that we have a global pointer to the \emph{leftmost} and \emph{rightmost} among the top-level roots. The presence of parent pointers is not assumed (\ref{fig_tree_struct}).

\begin{figure}[H]
	\begin{center}
\includegraphics[width=0.4\textwidth]{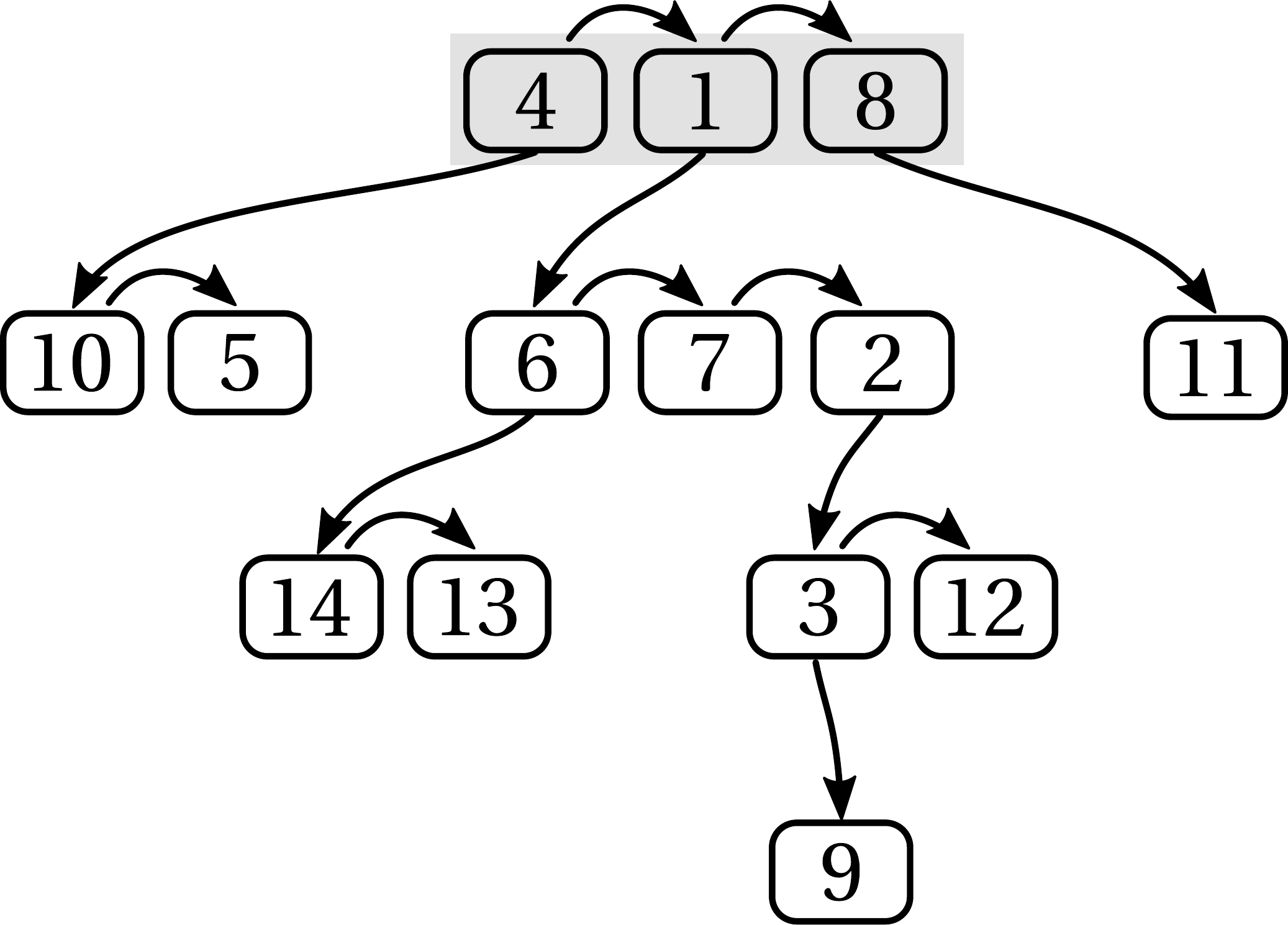}
	\end{center}
\caption{Forest-of-heaps structure with lists of siblings. Arrows show \emph{leftmost} and \emph{next} pointers. 
	(The pointers \emph{rightmost} and \emph{prev} are not shown.)
	Top-level list appears shaded. 
\label{fig_tree_struct}}
\end{figure}

In practice, to make the heap more space-efficient, it is desirable to have only two pointers per node, instead of four, as described. To this effect, algorithms in the stable heap model can also be implemented using only one of \PREV~or \NEXT. Furthermore, one of the pointers \LM~and \RM~can be simulated in constant time by making the lists circularly linked. The discussion of similar issues of implementation by Fredman et al.\ \cite{pairing} for pairing heaps also applies to our model. 

\paragraph{Stable link.} The defining feature of the stable heap model is the stable link operation (\ref{fig1}). We denote by $\link(x)$ the operation of linking $x$ and its right sibling $y = x.\NEXT$ with a stable link. If the key of $x$ is smaller than the key of $y$, then $y$ becomes the rightmost child of $x$. Otherwise, $x$ becomes the leftmost child of $y$. (Contrast this again with the standard, ``unstable'', link operation where the larger item always becomes the leftmost child of the smaller.) Clearly, all necessary pointer changes can be performed in constant time. With a careful implementation, the stable link has, in practice, similar cost as the standard link. 

In the following we assume (without explicitly describing the low-level details) that the link operations correctly update all pointers to reflect the structure-change.

\paragraph{Operations.} The heap operations \emph{makeheap}, \emph{insert}, \emph{decrease-key}, and \emph{meld} can be implemented in a straightforward way, identically to pairing heaps, apart from our use of stable links instead of standard links. 

Makeheap creates a new, empty heap with the structure described above. Insert creates a singleton root with the new key, and appends it to the list of top-level roots of the heap. Melding two heaps concatenates their top-level root-lists. Decrease-key detaches the tree rooted at the node whose key is decreased and appends it to the top-level root list.\footnote{The easiest choice is to append the node at one of the ends of the top-level list. An alternative implementation (more in the spirit of stable heaps) would be to ``sift-up'' the node with decreased key, placing it between its successor and predecessor in the top-level list, according to the insertion-times.} All four operations require constant number of pointer moves and pointer changes. 

We now describe extract-min, the most complex operation. To find the minimum, the roots in the top-level list are consolidated into a single tree, through a sequence of stable links. In our model \emph{only neighboring siblings} can be linked. Every link operation removes one root from the top-level list (the one with the larger key of the two linked). Thus, the number of links performed during extract-min is exactly one less than the initial size of the top-level list. After a single top-level root remains (the minimum), it is deleted, and its list of children becomes the new top-level list. 

Algorithms in the stable heap model differ only in the order in which they link items during the extract-min operation. At the start of extract-min, a cursor is assumed to point to the leftmost node in the top-level list. Algorithms are allowed to move the cursor to the right or to the left, make comparisons on keys of visited nodes, and perform stable links at the cursor.

We call algorithms in the stable heap model \emph{stable heap algorithms}, and we call the entire structure a \emph{stable heap data structure}.

\paragraph{Cost model.}
We define the cost of operations to be the \emph{link-only cost}, i.e.\ the number of stable link operations performed. Thus, the worst-case cost of operations other than extract-min is constant, whereas the cost of extract-min equals the number of top-level nodes before the operation.

It may seem unrealistic to ignore the cost of comparisons and pointer moves. We justify this choice as follows: (1) In all algorithms that we consider, the number of pointer moves and comparisons will in fact be proportional to the link-only cost, therefore, the use of link-only cost is accurate for these algorithms. (2) We can prove \emph{lower bounds} even for the link-only cost, that is, our lower bounds hold even if pointer moves and comparisons are free. In particular, our lower bounds hold even if the outcomes of all comparisons are known in advance.\footnote{This assumption can be seen as the ``dual'' of the BST assumption of knowing in advance the order of operations.} \footnote{It may seem that knowing the sorted order of keys, a stable heap algorithm can arrange the nodes into a path with a linear number of links, contradicting the information-theoretic lower bound for sorting. Observe however, that we are limited to linking neighboring siblings, which removes the contradiction.}

In the above description, we only link nodes at the top level. This is only for simplicity, and our lower bounds also apply to algorithms that can link siblings at any level of the heap.\footnote{The assumption to link only at the top level is analogous to the BST assumption of only rotating the search path. The restriction does not seem significant, since links at lower levels can be postponed to a later time, when the siblings reach the top level. One case where this may make a difference is if decrease-key operations are involved; performing links at lower levels may allow some items to ``travel together'' with the item whose key is decreased.}

\begin{figure}[H]
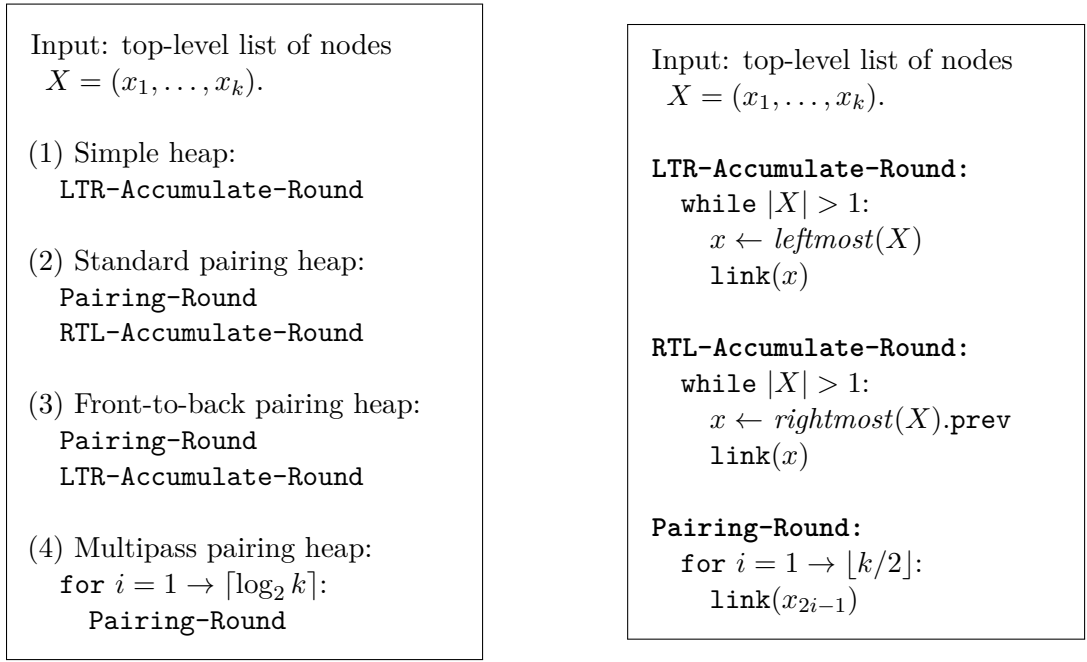

\begin{center}
\parbox{\columnwidth}{
  \parbox{2.85in}{
    \begin{algorithm}[H]
\begin{framed}
    \DontPrintSemicolon
    \SetAlgoLined
    \NoCaptionOfAlgo
    \vspace{0.03in}
    Input: top-level list of nodes $X = (x_1, \dots, x_k)$.\\
    \ \\

    {(1) Simple heap:}\\
    \quad {\texttt{LTR-Accumulate-Round}}\\
    \ \\
    {(2) Standard pairing heap:}\\
    \quad {\texttt{Pairing-Round}}\\
    \quad {\texttt{RTL-Accumulate-Round}}\\
    \ \\	
    {(3) Front-to-back pairing heap:}\\
    \quad {\texttt{Pairing-Round}}\\
    \quad {\texttt{LTR-Accumulate-Round}}\\
    \ \\
    {(4) Multipass pairing heap:}\\
    \quad {\texttt{for}} $i = 1  \rightarrow \lceil \log_2{k} \rceil$:\\
    \quad \quad {\texttt{Pairing-Round}}\\
\end{framed}
    \end{algorithm}
  } 
  \hspace{0.3in} 
  \parbox{2.75in}{
    \begin{algorithm}[H]
\begin{framed}
    \NoCaptionOfAlgo

    Input: top-level list of nodes $X = (x_1, \dots, x_k)$.\\
    \ \\
    \textbf{\texttt{LTR-Accumulate-Round:}}\\

\quad {\texttt{while}} $|X| > 1$:\\
    \quad \quad	$x \gets$ \textit{leftmost}$(X)$ \\
    \quad \quad $\link(x)$\\
    \ \\
    \textbf{\texttt{RTL-Accumulate-Round:}}\\
\quad {\texttt{while}} $|X| > 1$:\\
    \quad \quad $x \gets$ \textit{rightmost}$(X)$.\PREV \\
    \quad \quad $\link(x)$\\
    \ \\
    \textbf{\texttt{Pairing-Round:}}\\
    \quad {\texttt{for}} $i = 1 \rightarrow \lfloor k/2 \rfloor$:\\
    \quad \quad $\link(x_{2i-1})$
\end{framed}
    \end{algorithm}
  }

\caption{(left) Four heap algorithms. (1) A naive restructuring. (2)-(4) Variants of pairing heaps introduced by Fredman et al.~\cite{pairing}. (right) Subroutines implementing linking rounds on the list of nodes. The operation $\link(x)$ links the node $x$ with its right neighbor. To transform the algorithms into stable heap algorithms, $\link(x)$ needs to be implemented as stable link.\label{fig_heap_algos}}
}
\end{center}
\end{figure}

\paragraph{Sorting-mode.} In this paper we look at stable heap algorithms in \emph{sorting-mode} (see \S\,\oldref{sec:intro}), and analyse the amortized cost of smooth heap operations in this mode only. That is, we assume that a makeheap operation is followed by $n$ insert operations with distinct keys, and finally, by $n$ extract-min operations.\footnote{While sorting-mode is somewhat restrictive, we remark that the complexity of classical pairing heap variants such as the front-to-back or multipass heuristics is not known, even in sorting-mode~\cite{pairing,forward_variant,multipass}.}

\paragraph{Algorithms.}

We mention four algorithms for heap re-structuring: a \emph{simple} (folklore) algorithm, and the three \emph{pairing heap variants} introduced by Fredman et al.\ \cite{pairing}. All four algorithms can be implemented with the standard (unstable) link operation, and can be turned into stable heap algorithms if stable link is used instead of the standard link. 

We describe the implementation of extract-min in the four algorithms (see Figure~\oldref{fig_heap_algos}).

The first, \textbf{simple heap} collects the roots in a left-to-right accumulation round, repeatedly linking the (current) leftmost item with its right neighbor. It is easy to construct examples where the amortized cost of extract-min operations is linear, i.e.\ prohibitively large (e.g.\ $1,2,3,\dots$ for the stable variant and $1,3,5,\dots,6,4,2$ for the unstable variant).

The \textbf{standard pairing heap} works in two passes: a left-to-right pairing round in which neighboring items are linked in pairs and a subsequent right-to-left accumulation round. The amortized cost of operations is $O(\log{n})$~\cite{pairing}.

The \textbf{``front-to-back'' variant of pairing heaps} differs from the standard variant only in the fact that the second round is performed left-to-right, rather than right-to-left. In~\cite{pairing} only an $O(\sqrt{n})$ bound is given for the amortized cost of extract-min using this method. This was recently improved in~\cite{forward_variant} to $O(\log{n} \cdot 4^{\sqrt{\log{n}}})$, still far from the conjectured logarithmic bound.

The \textbf{``multipass'' variant of pairing heaps} repeatedly executes pairing rounds (identical to the first round of the standard and front-to-back variants), until a single root remains. In~\cite{pairing} the bound $O(\log{n} \cdot \log\log{n} / \log\log\log{n})$ is given for the cost of extract-min using this method. This was recently improved in~\cite{multipass} to $O(\log{n} \cdot 2^{{\log^{\ast}{n}}} \cdot \log^{\ast}{n})$, again, not known to be tight. 

The mentioned bounds for the three pairing heap variants refer to the implementations with the standard link operation. Only the analysis of~\cite{forward_variant} transfers readily to the stable setting, giving a bound of $O(\log{n} \cdot 2^{O(\sqrt{\log{n}})})$ for all three variants~\cite[\S\,2.2]{forward_variant}. It may well be that the true amortized costs of all three pairing heap variants (both stable and unstable) are $O(\log{n})$.

\subsection{Comparison with other heap models}

Fredman~\cite{FredmanLB} and Iacono and \"Ozkan~\cite{IaconoOzkan,IaconoOzkan2} define general heap models, for the purpose of proving lower bounds for all algorithms within the model. The models of Fredman and Iacono and \"Ozkan are similar to each other and to our stable heap model in that they work on a forest-of-heaps structure, allowing traversal by pointer moves, comparisons, and link operations. 

In Fredman's ``generalized pairing heap'' model, the amount of information stored in the nodes of the underlying tree is restricted. Comparisons between keys are only allowed together with a subsequent link operation. Pointer moves (for reaching the keys to be compared) are free. 

In the ``pure heap'' model of Iacono and \"Ozkan, comparisons are decoupled from links. A wider set of pointer moves and pointer changes are allowed, but their cost is accounted for. 

Our stable heap model borrows elements from both models, but is not directly comparable to them. It is less restrictive in that it allows arbitrary information to be stored in the nodes, and allows arbitrary pointer moves and comparisons for free. It is more restrictive in that changing the structure can happen only via link operations, and only neighboring siblings can be linked. 

The main difference between our model and the other two models is in the link operation: whereas the other two models use the classical, unstable, ``link-as-leftmost'' method, the stable heap model uses stable links. We find the restrictions of our model fairly natural (in hindsight), and in the spirit of classical algorithms, such as pairing heaps -- apart from the use of stable links, which we see as a natural replacement of standard links.  
The main advantage of our model is its surprising connection to the standard BST model.

The lower bounds in our model are of a different kind than the lower bounds shown in the existing models. The lower bounds of Fredman and Iacono-\"Ozkan are concrete constructions of particularly costly sequences of operations, i.e.\ they are lower bounds \emph{for the worst-case}. The lower bounds we show in the stable heap model are \emph{instance-specific}, i.e.\ they apply to all sequences of operations, and describe structural properties that capture the difficulty of these sequences. (In this sense, they are similar to the lower bounds in the BST model~\cite{Wilber, DHIKP09}; in fact, they are the \emph{exact same} bounds, transferred from the BST to the stable heap model.)

\subsection{Stable heaps in sorting-mode}\label{sec:stable heap sort}
Let $X \in [n]^n$ be a permutation sequence. We consider the \emph{execution trace} of a stable heap algorithm for $n$ extract-min operations, after inserting the keys in $[n]$ in the order given by $X$ into an initially empty heap (i.e.\ in sorting-mode). 

For a stable heap algorithm $\cA$, we denote the set of links performed during the sorting-mode execution of $\cA$ on $X$ as $\cA(X)$, i.e.\ $(a,b) \in \cA(X)$ if, at some point during the execution of $\cA$ on $X$, the nodes with keys $a$ and $b$ are linked. Observe that a pair of nodes can be linked only once during the execution (once a node is in the subtree of the other, it stays there). Thus, the cost of the execution is $|\cA(X)|$. We denote by $\opt_{\stable}(X)$ the minimum cost of a stable heap algorithm when serving $X$ in sorting-mode. 

In the following we describe  
the combinatorial \emph{star-path problem} as an intermediate step towards proving the formal connection between the stable heap model and the BST model (\S\,\oldref{sec:produce sat set}).

Recall that a point set $P\subseteq[n]\times[n]$ is a permutation
if $|P_{x=i}|=|P_{y=i}|=1$ for each $i$. In this case, we denote
as $p_{y=i}$ the unique point in $P_{y=i}$, and similarly for $p_{x=i}$. Let $P_0$ denote the set $\{\coord 00\}\cup P$, i.e.\ the set $P$ augmented with the origin.

We consider trees with node-set $P_0$. We call such a tree a \emph{monotone tree}, if $\coord 00$ is the root, and for every edge $(u,v)$ of the tree, $u.y < v.y$ iff $u$ is closer to the root (in graph-theoretical sense) than $v$. Two particular monotone trees are important: $star(P)$ is the tree in which every point in $P$ is the child of $\coord 00$, and $path(P)$ is the path $(\coord 00,p_{y=1},p_{y=2},\dots,p_{y=n})$. See \ref{fig:star path} for illustration. 

\begin{figure}[H]
	\begin{centering}
		\includegraphics[width=0.8\textwidth]{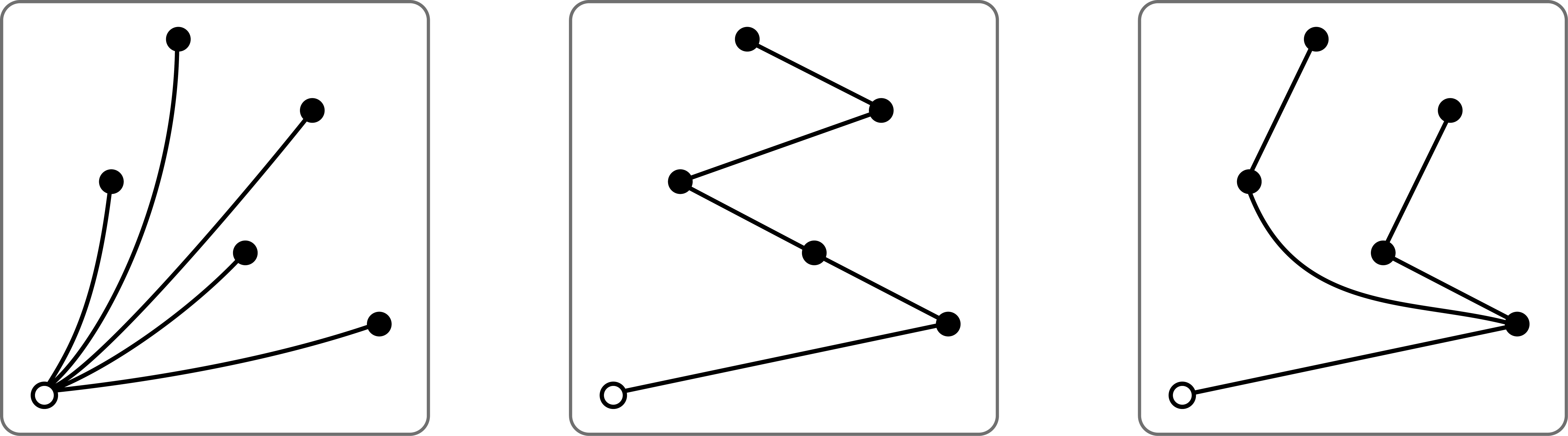}
		\par\end{centering}
	
	\caption{Sequence $X=(3,5,2,4,1)$ and corresponding point set $P = P^{X'}$ (values shown on $y$-coordinate). (i) $star(P)$, (ii) $path(P)$, (iii) an arbitrary monotone tree with points in $P \cup \{\coord 00\}$. \label{fig:star path}}
\end{figure}

We define a link operation in monotone trees as follows. Let $a$ and $b$ be two \emph{neighboring} siblings in the tree (according to $x$-coordinate), and let $u$ be their parent. Suppose $a.y>b.y$. Then $\link(a,b)$ deletes the edge $(u,a)$ and adds the edge $(b,a)$ (i.e.\ changes the parent of $a$ from $u$ to $b$). Otherwise, if $a.y<b.y$, we delete $(u,b)$ and add $(a,b)$. (See~\ref{fig:geometry of stable link}.) 

A link can be performed on any monotone tree that has a node with at least two children, e.g. $star(P)$. Starting from $star(P)$, after at most $O(n^2)$ links we reach $path(P)$, and no more links are possible. (To see this, we can argue that the total length of the $y$-components of edges decreases with every link.)

\begin{figure}[H]
	\begin{centering}
		\includegraphics[width=0.4\textwidth]{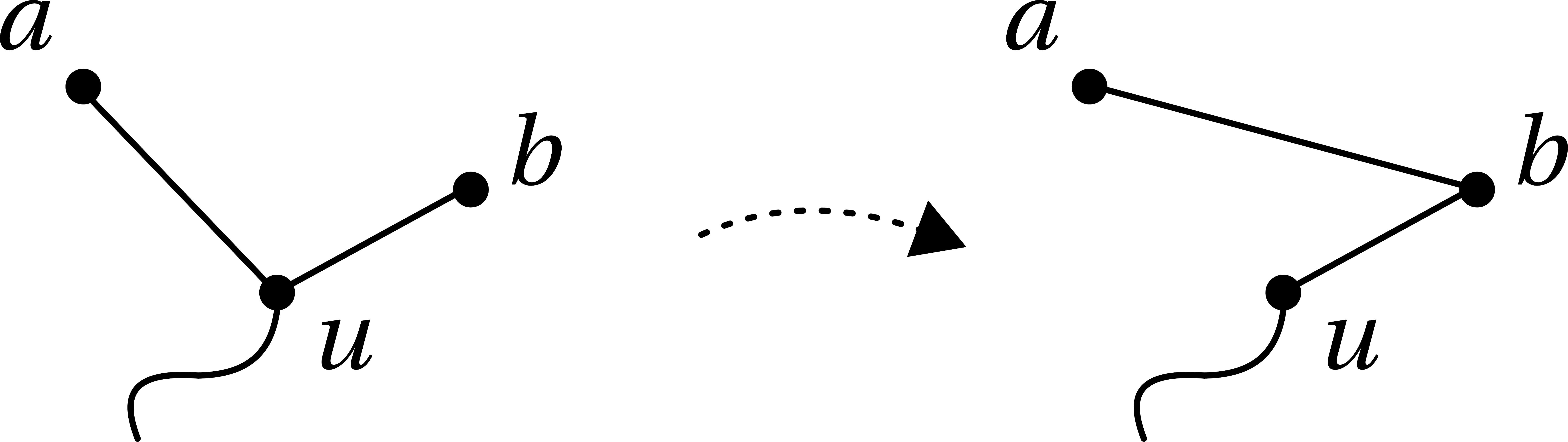}
		\par\end{centering}
	
	\caption{Operation of linking $a$ and $b$ when $a.y>b.y$.\label{fig:geometry of stable link}}
\end{figure}

\begin{defn}
	[Star-path problem] Given a permutation point set $P$, transform $star(P)$ into $path(P)$ using a sequence of link operations.\label{def:path problem}
\end{defn} 

For a point set $P$, we denote by $\cB(P)$ the set of links performed during the execution of an algorithm $\cB$ for the star-path problem with input $P$, i.e.\ $(a,b) \in \cB(P)$, if at some point, $\cB$ links $a$ and $b$ (again, observe that this can happen at most once).
We show that stable heap executions for a permutation sequence $X$ and star-path executions for point set $P^{X'}$ (from $star(P^{X'})$ to $path(P^{X'})$) are, in a precise sense, equivalent.

\begin{thm}\label{prop:heap and path}
	Let $X=(x_{1},\dots,x_{n})$ be an arbitrary permutation of $[n]$, and let $\cP \subset [n]^2$. Then there is a stable heap algorithm $\cA$, such that $\cA(X) = \cP$, iff there is an algorithm $\cB$ for the star-path problem, such that $\cB(P^{X'}) = \cP$.
\end{thm}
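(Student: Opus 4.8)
The plan is to establish a bijective correspondence between the top-level list of a stable heap during its sorting-mode execution on $X$ and the children-of-root structure of a monotone tree evolving under star-path links on $P^{X'}$. The key observation is that both processes are, at heart, the same combinatorial object viewed through transposed coordinates: a key $k$ inserted at time $i$ in the heap (so $x_i = k$) corresponds to the point $\coord{k}{i}$ in $P^X$, equivalently the point $\coord{i}{k}$ in $P^{X'}$; thus ``key value'' becomes the $y$-coordinate and ``insertion time'' becomes the $x$-coordinate in the star-path picture. First I would set up this dictionary precisely, noting that in $P^{X'}$ the point with $y$-coordinate $k$ is $p_{y=k} = \coord{i}{k}$ where $x_i = k$, so the $y$-ordering of points is exactly the key-ordering of heap nodes, and the $x$-ordering of points is exactly the insertion-time ordering.

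The heart of the argument is a simulation in both directions, maintained as an invariant. Going from heap to star-path: I would run the stable heap algorithm $\cA$ on $X$ and simultaneously build a star-path execution on $P^{X'}$. The claim is that at every moment the current top-level list of the heap (read left to right), together with which nodes have already been absorbed as descendants, corresponds exactly to the current children of $\coord{0}{0}$ in the monotone tree (read by increasing $x$-coordinate), with the subtree structure matching. The crucial point is why the \emph{left-to-right order} of top-level heap roots matches the \emph{$x$-coordinate order} of root-children in the monotone tree. This holds because (1) insert appends to the right of the top-level list, and $X'$-points are processed so that later insertion time means larger $x$-coordinate — wait, more carefully: the top-level list at the start of the extract-min phase is $(x_1, x_2, \dots, x_n)$ in insertion order, and in $P^{X'}$ these are exactly the points sorted by $x$-coordinate; and (2) a stable link of neighboring siblings $a, b$ with, say, $a$ to the left and $\KEY(a) < \KEY(b)$ makes $b$ the rightmost child of $a$, which mirrors the monotone-tree link that makes the higher-$y$ node (here $b$, since larger key $=$ larger $y$) a child of the lower-$y$ node $a$ — and the definition of the monotone-tree link in \ref{fig:geometry of stable link} is precisely set up to preserve the left-to-right/$x$-coordinate order of the remaining siblings, exactly as the stable link does. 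I would check the two cases ($\KEY(a) < \KEY(b)$ versus $\KEY(a) > \KEY(b)$) against the two cases in the monotone-tree link definition and confirm they match under the coordinate dictionary; since only neighboring siblings may be linked in both models, the set of legal moves is identical, so $\cA$'s link sequence is a legal star-path sequence $\cB$ with $\cB(P^{X'}) = \cA(X) = \cP$. The reverse direction is symmetric: any star-path algorithm $\cB$ on $P^{X'}$ induces, move-for-move, a stable heap algorithm $\cA$ on $X$, where reaching $path(P^{X'})$ corresponds to consolidating the heap into a single sorted path (one root remaining), i.e.\ completing all $n$ extract-mins. One must also observe that repeated extract-mins after the list is fully sorted cost nothing in links, matching the fact that $path(P)$ admits no further links; and that before reaching a path, the heap's intermediate multi-root states correspond bijectively to the intermediate monotone trees (the root $\coord{0}{0}$ plays the role of the ``virtual'' consolidation point / deleted minimum slot).

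The main obstacle, and the step requiring the most care, is verifying that the \emph{subtree structure} (not just the top-level list) stays in correspondence, and in particular that a link in the heap always has, as its monotone-tree image, a link of two nodes that are currently \emph{siblings under the root} — because in the heap we only ever link top-level roots, whereas in a monotone tree the star-path problem a priori allows links anywhere (though the theorem statement, via \ref{def:path problem} starting from $star(P)$, and the remark that lower-level links can be deferred, suggests we restrict to root-level links or that it doesn't matter). I would handle this by making the invariant strong enough: the top-level heap roots in left-to-right order equal the root-children of the monotone tree in $x$-order, \emph{and} for each such root $r$ the entire subtree hanging off $r$ in the heap is isomorphic (respecting sibling order $\leftrightarrow$ $x$-order and parent-child $\leftrightarrow$ the monotone edge orientation) to the subtree hanging off the corresponding node in the monotone tree. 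Then a heap link of two neighboring top-level roots is literally a monotone-tree link of two neighboring root-children, and the invariant is restored by the matching case analysis above. A secondary subtlety is the initial state: after $n$ inserts, the heap's top-level list is $(x_1,\dots,x_n)$, and one checks this is $star(P^{X'})$ with children in $x$-order $p_{y=?}$ — indeed the root-children of $star(P^{X'})$ listed by increasing $x$-coordinate are exactly the points in insertion-time order, i.e.\ keys $x_1, x_2, \dots, x_n$, matching the heap. With the invariant pinned down, the induction on the number of links is routine, and the ``iff'' follows by running the simulation in whichever direction is needed.
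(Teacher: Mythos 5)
Your overall strategy matches the paper's: identify heap nodes with points via the coordinate dictionary (key $\leftrightarrow$ $y$, insertion time $\leftrightarrow$ $x$), observe that the initial state is $star(P^{X'})$ and the final state is $path(P^{X'})$ once deleted roots are retained as a virtual chain under $\coord 00$, and maintain as an invariant that the heap and the monotone tree are the same tree with sibling order matching $x$-coordinate order. That is exactly the paper's Invariant~1, and everything up to there is right.

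The gap is in the inductive step, which you say ``is restored by the matching case analysis.'' Suppose $a$ is the left neighbor of $b$ under parent $u$, with $a.y > b.y$, and they are linked: in the heap, $x_a$ becomes the \emph{leftmost} child of $x_b$; in the monotone tree, $a$ merely becomes \emph{some} child of $b$, and the points do not move. To restore the invariant you must show that $a.x < c.x$ for every child $c$ that $b$ already had, so that $a$ is indeed the leftmost child by $x$-coordinate. This does \emph{not} follow from your stated invariant (subtree isomorphism with sibling order $=$ $x$-order), which constrains only the $x$-ordering of siblings at each level and says nothing about a node's $x$-coordinate relative to its parent's neighbors. The paper closes this hole with a second invariant: for every node $q$, $parent(q).\PREV.x < q.x < parent(q).\NEXT.x$. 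Applied to a child $c$ of $b$, and using $b.\PREV = a$ before the link, it gives $c.x > a.x$, which is exactly what is needed; and the second invariant is itself preserved because after the link the new left neighbor of $b$ is the old left neighbor of $a$, whose $x$-coordinate is less than $a.x$. Without this additional invariant (or an equivalent laminar-structure statement about the $x$-coordinates of subtrees), the induction you sketch cannot be closed; the remainder of your plan --- legality of moves in both directions, correspondence at the start and end, treatment of the deleted minimum as a virtual node --- is correct and mirrors the paper.
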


\begin{proof}
Let $\cA$ be an arbitrary stable heap algorithm executed in sorting mode for $X$.  
After inserting $X$, the top-level roots are $x_1, \dots, x_n$ (in this order). We assume these roots to be the children of a virtual root with key $0$.  
Thus, at this stage, the heap structure is exactly $star(P^{X'})$, i.e.\ the $i$-th child of the root is $\coord i {x_i}$.
Furthermore, we assume that after each extract-min, the old root is ``kept around'', considering the children of the old root the new top level roots. Thus, after $n$ extract-min operations, the structure of the heap is exactly $path(P^{X'})$, i.e.\ the sorted path of $X$, with a dummy root in front.
These adjustments to $\cA$ are only for convenience, and do not affect its behavior.

To show the equivalence between the two executions, we identify a node with key $x_i$ in the heap with the point $\coord i{x_i}$ in the star-path problem, and we describe a bijection between link operations in the two executions. We maintain the following two invariants:

\textbf{(Invariant 1):} The current heap state is the same as the current monotone tree in the star-path execution (with the virtual roots added in both cases). The keys in the heap view correspond to the $y$-coordinates in the star-path view, and the left-to-right ordering of siblings in the heap view corresponds to the left-to-right ordering by $x$-coordinates in the star-path view. 

\textbf{(Invariant 2):} For every node $q$ in the monotone tree, we have $parent(q).\PREV.x < q.x < parent(q).\NEXT.x$. Here, $\PREV$ and $\NEXT$ denote the left and right neighboring siblings of a node (by $x$-coordinate) in the monotone tree. For convenience we assume that if $q$ has no left neighbor, then $q.\PREV.x = -\infty$, and if $q$ has no right neighbor, then $q.\NEXT.x = +\infty$. 

Initially, in the case of a star, both invariants clearly hold. 
We need to show that linking maintains these invariants.

Consider a link between $a$ and $b$ with parent $u$, where $a$ is to the left of $b$. Assume $a.y > b.y$, as in \ref{fig:geometry of stable link}. After the link, $a$ becomes the child of $b$. 

In the heap, by Invariant (1), the corresponding items $x_a$ and $x_b$ are neighboring siblings ($a<b$), and $x_a > x_b$. Therefore, linking $x_a$ and $x_b$ is a valid operation and $x_a$ becomes the leftmost child of $x_b$. (Conversely, if $x_a$ and $x_b$ are neighboring siblings in the heap, then linking $a$ and $b$ is a valid star-path link.)

We need to show that $a$ becomes the leftmost child of $b$, and thus, the ordering of siblings is by $x$-coordinate. By Invariant (2), before the link operation for all children $c$ of $b$ we have $c.x > parent(c).\PREV.x = a.x$. Thus, Invariant (1) is maintained.

We need to show that Invariant (2) is not violated. This could only happen if, after the link operation, $a.x$ were smaller than $b.\PREV.x$. But this is impossible, since the left neighbor of $b$ is the earlier left neighbor of $a$. The case $a.y < b.y$ is symmetric, and omitted.
\qedd
\end{proof}

Observe that in the star-path problem only the set of edges changes during the execution, the locations of points remain the same. 
In order to maintain this simple geometric model, the use of stable links is essential. (The classical link operation would move entire subtrees from one place to another.) In the remainder of the paper we view stable heap executions mostly in the ``geometric view'' of the star-path problem.

%\newpage
\section{The smooth heap}\label{sec:smooth_desc}

In this section we describe the smooth heap, our new heap data structure. The smooth heap conforms to the stable heap model described in \S\,\oldref{sub:heap model} and is based on a forest-of-heaps representation. The implementations of makeheap, insert, decrease-key and meld are those from the description of stable heap algorithms in \S\,\oldref{sub:heap model}. 

The crucial operation is the restructuring of the top-level list of nodes during extract-min. In the following we give three \emph{equivalent} descriptions of this operation;
a \emph{non-deterministic} description, a \emph{treap-based} description, and a \emph{two-pass} description. We refer to~\ref{fig:smooth2} for the pseudocodes of the three variants.

\begin{figure}
\begin{center}
\textbf{Input: top-level list of roots $X=(x_1, \dots, x_k)$}
\ \\
\parbox{5.5in}{

\begin{algorithm}[H]
\begin{framed}
\newcommand{\llWhile}[2]{{\let\par\relax\lWhile{#1}{#2}}}
\newcommand{\llIf}[2]{{\let\par\relax\lIf{#1}{#2}}}
\newcommand{\llElse}[1]{{\let\par\relax\lElse{#1}}}
\DontPrintSemicolon
\SetAlgoLined
\NoCaptionOfAlgo
\caption{Smooth heap (\emph{non-deterministic} view)}

\textbf{\texttt{while}} $|X| > 1$:\\
\quad \quad \textbf{\texttt{let}} $x$ be an arbitrary \textbf{local maximum} in $X$\\ 
\quad \quad \textbf{\texttt{if}} $x.\PREV.key > x.\NEXT.key$\hfill{(assuming \emph{null.key }$ =-\infty$)} \\
\quad \quad \quad \quad \quad \textbf{\texttt{link}}$(x.\PREV)$ \\
\quad \quad \textbf{\texttt{else}} \\
\quad \quad \quad \quad \quad \textbf{\texttt{link}}$(x)$

\end{framed}
\end{algorithm}

}
\ \\ \ \\ \ \\

\parbox{5.8in}{
\begin{algorithm}[H]
\begin{framed}
\newcommand{\llWhile}[2]{{\let\par\relax\lWhile{#1}{#2}}}
\newcommand{\llIf}[2]{{\let\par\relax\lIf{#1}{#2}}}
\newcommand{\llElse}[1]{{\let\par\relax\lElse{#1}}}
\DontPrintSemicolon
 \SetAlgoLined
\NoCaptionOfAlgo
\caption{Smooth heap (\emph{treap} view)}
Transform $X$ into a treap with keys $(1,2,\dots,k)$ and priorities $(x_1, \dots, x_k)$. 
\end{framed}
\end{algorithm}

}

\ \\ \ \\

\parbox{\columnwidth}{
  \parbox{2.6in}{
    \begin{algorithm}[H]
\begin{framed}
    \DontPrintSemicolon
    \SetAlgoLined
    \NoCaptionOfAlgo
    \caption{Smooth heap (\emph{two-pass} view)}

    {Smooth heap:}\\
    \quad {\texttt{Smoothing-Round}}\\
    \quad {\texttt{RTL-Accumulate-Round}}\\
    
\end{framed}
    \end{algorithm}
  } 
  \hspace{0.2in} 
  \parbox{3.8in}{
    \begin{algorithm}[H]
\begin{framed}
    \NoCaptionOfAlgo
    \caption{Linking rounds}

\vspace{0.03in}
\vspace{0.03in}
    \textbf{\texttt{Smoothing-Round:}}\\
    \textbf{\texttt{while}} there is $x$ in $X$ s.t.\ $x.key > x.\NEXT.key$:\\
    \quad \textbf{\texttt{let}} $x$ be the \emph{leftmost} such node\\
    \quad \textbf{\texttt{if}} ($x.\PREV = null$) or ($x.\PREV.key < x.\NEXT.key$)\\
    \quad \quad $\link(x)$\\
    \quad \textbf{\texttt{else}}\\
	\quad \quad  $\link(x.\PREV)$\\    
	\ \\
    \textbf{\texttt{RTL-Accumulate-Round:}}\\
    \quad	\textbf{\texttt{while}} $|X| > 1$:\\
    \quad \quad $x \gets$ \textit{rightmost}$(X)$.\PREV \\
    \quad \quad $\link(x)$
\end{framed}
    \end{algorithm}
  }

\caption{Smooth heap re-structuring (three different views). Recall that $\link(y)$ denotes a stable link between $y$ and its \emph{right} neighbor $y.\NEXT$.\label{fig:smooth2}}
}
\end{center}
\end{figure}

\paragraph{Non-deterministic view.} In the top-level list $X$ of items, we repeatedly find an arbitrary \emph{local maximum} $x$. A local maximum is an item $x$ that is larger than both its neighboring siblings, or, in case $x$ is the leftmost or rightmost item, larger than its only neighbor.

We link $x$ with the \emph{larger} of its two neighboring siblings (or its only sibling, in case it is the leftmost or rightmost item). %If $x$ had two neighbors, we also refer to this step as a \emph{two-choice event}. 
As $x$ becomes the child of one of its neighbors, it drops out of $X$, reducing the size of $X$ by one. As long as $X$ has at least two elements, there is a suitable next choice of $x$ (for instance, the global maximum of $X$). When $X$ becomes a singleton, we are done; as this item is the minimum, it can be deleted.

The non-deterministic view is useful in analysing smooth heaps, because of its resemblance to our non-deterministic view of Greedy (\S\,\oldref{sec:non-det greedy}).

\paragraph{Two-pass view.} This description differs from the non-deterministic description only in the choice of the local maximum $x$: we always choose the \emph{leftmost} such item. Observe that if the only remaining local maximum is the rightmost item, then the items in the list are \emph{sorted}. In this case, the remaining items can be linked in a single right-to-left pass, which we can execute without further comparisons. It is thus convenient to view the execution in two passes that resemble the description of pairing heaps: a left-to-right \emph{smoothing pass} followed by a right-to-left accumulation pass. The two-pass view of smooth heaps is perhaps the most convenient to implement. We describe this implementation in more detail in~\ref{smooth_heap_ps}.

\paragraph{Treap view.} We associate each item $x_i$ in $X$ with a pair of values $(i,x_i)$, and we transform the list into a treap over the pairs (using an arbitrary method for treap-building). Recall that a treap is a binary tree with a pair of values in every node, respecting the in-order (i.e.\ search tree order) according to the first entry, and the min-heap order according to the second entry of every pair. As mentioned, such a tree is unique. (The item with the unique minimum priority is the root, and the items with smaller, resp.\ larger key values form its recursively-built left and right subtrees). We use the treap view in order to connect the non-deterministic and two-pass views of the smooth heap.

A remark is in order: as a treap is a binary tree, each node may have a left and a right child. For every node in $X$, its left child in the treap will become its \emph{leftmost} child in the underlying tree, and its right child in the treap will become its \emph{rightmost} child. In other words, the existing children of $x_i$ end up \emph{between} the two new children possibly gained during the treap-building.

\begin{thm}
The non-deterministic, two-pass, and treap-based descriptions of smooth heaps describe the same transformation.
\end{thm}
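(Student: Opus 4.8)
The plan is to prove the equivalence in three steps, taking the non-deterministic description as the hub and showing that each of the other two descriptions realizes it. \textbf{Step 1: the non-deterministic process is confluent.} First I would record that every run terminates: each step turns one node into a child of a neighbor and removes it from the top-level list, and whenever at least two nodes remain the current global maximum is a local maximum, so a step is always available; hence every run performs exactly $k-1$ links. I would also note two invariants: the top-level list is always a sublist of $X$ in its original left-to-right order (links only remove nodes, never reorder them), and the global minimum $x_{i^*}$ of $X$ is never linked as a \emph{child} --- whenever $x_{i^*}$ is a neighbor of a processed local maximum it is the smaller of the two candidate neighbors (so the local maximum is linked to the other side), unless it is the unique neighbor, in which case it merely gains a child. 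Consequently every run ends with $x_{i^*}$ as the single surviving root. To obtain a well-defined output tree $T(X)$ I would invoke Newman's lemma: termination plus local confluence yields a unique normal form. For local confluence, let $x \neq y$ be two local maxima of the current list. They cannot be adjacent, since if $x$ is immediately left of $y$ then $x$ being a local maximum gives $x.key > y.key$ while $y$ being one gives $y.key > x.key$; so $x$ and $y$ are at least two positions apart. The key point is then that the parent chosen for a local maximum is simply the higher-key one of its two current neighbors --- a function of the fixed key values --- and processing $y$ does not change $x.\PREV$ or $x.\NEXT$, because the only adjacency created by processing $y$ is between $y.\PREV$ and $y.\NEXT$, and since $x$ is at least two positions from $y$ neither of $x$'s neighbors is removed (the borderline case $x.\NEXT = y.\PREV$ only changes the \emph{right} neighbor of $x.\NEXT$, not the right neighbor of $x$). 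Hence processing $x$ then $y$, or $y$ then $x$, assigns the same parent to each and leaves the same residual list, so the two states coincide.

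\textbf{Step 2: the two-pass view equals $T(X)$.} Here I would just check that every link performed by the two-pass variant is a legal non-deterministic step, so that the two-pass run is one particular run of the confluent process. In the smoothing round the leftmost node $x$ with $x.key > x.\NEXT.key$ is a local maximum, since by its leftmost-ness the left neighbor (if any) satisfies $x.\PREV.key < x.key$; and the branch taken links $x$ to the larger of its two neighbors, exactly the non-deterministic rule. Once no descent remains the list is sorted in increasing order, and on such a list the only local maximum is the rightmost node, whose unique neighbor is its predecessor; applying the non-deterministic step repeatedly makes each node the rightmost child of its predecessor --- precisely what the \texttt{RTL-Accumulate-Round} does. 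So the whole two-pass run is a run of the non-deterministic process and hence outputs $T(X)$.

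\textbf{Step 3: $T(X)$ equals the treap.} I would prove by induction on $k$ that $T(X)$ is the treap on the pairs $(1,x_1),\dots,(k,x_k)$ (with binary left/right children reread as leftmost/rightmost children), exploiting Step 1 to run the process in a convenient order. Let $i^*$ be the position of $x_{i^*} = \min_i x_i$, and split $X$ into the block $L$ of positions $<i^*$ and the block $R$ of positions $>i^*$; by the order invariant, the surviving elements of $L$ always form a contiguous prefix ending just before $x_{i^*}$. I would first process, in any order, all local maxima lying inside $L$, then all lying inside $R$. A local maximum lying inside $L$ behaves exactly as in the process run on $L$ alone: if its right neighbor is still in $L$ nothing differs, and if its right neighbor is $x_{i^*}$ (so it is the last element of $L$) then it is automatically larger than $x_{i^*}$ and linked leftward, matching the ``rightmost element'' behavior of the isolated process. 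Thus this phase is an embedded run of the non-deterministic process on $L$, which by Step 1 and the induction hypothesis reduces $L$ to its minimum $x_{\ell^*}$, carrying the treap on $(1,x_1),\dots,(i^*-1,x_{i^*-1})$; symmetrically for $R$, producing $x_{r^*}$ carrying the treap on $(i^*+1,x_{i^*+1}),\dots,(k,x_k)$. The current list is now $(x_{\ell^*}, x_{i^*}, x_{r^*})$ (shorter if $L$ or $R$ is empty); linking $x_{\ell^*}$ with $x_{i^*}$ makes $x_{\ell^*}$ the leftmost child of $x_{i^*}$, and then linking $x_{i^*}$ with $x_{r^*}$ makes $x_{r^*}$ its rightmost child. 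The result has $x_{i^*}$ as root, its only children being the root of the left treap (as leftmost child) and of the right treap (as rightmost child) --- which is exactly the treap on $(1,x_1),\dots,(k,x_k)$. Combining the three steps proves that all three descriptions coincide.

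The step I expect to be the main obstacle is the local-confluence argument in Step 1: one must check with care that two distinct local maxima are never adjacent and, more delicately, that processing one does not disturb the current pair of neighbors of the other, the subtle case being when they are exactly two positions apart and share a common neighbor. Once the viewpoint ``parent $=$ the higher-key neighbor, fixed in advance by the key values'' is in place this reduces to a short case check, but it is the technical heart of the argument; the remaining steps are then essentially bookkeeping on top of the confluence statement.
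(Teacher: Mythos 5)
Your proof is correct, but it takes a genuinely different route from the paper's. The paper proceeds by directly verifying that the tree $T$ produced by the non-deterministic process satisfies the three defining properties of a treap — heap order on keys (trivial from the link rule), binarity (by arguing that once a node gains a child on one side, its new neighbor on that side is smaller, so it cannot gain a second child there), and search-tree order on indices (stable links preserve the left-to-right arrangement). Since the treap on given keys/priorities is unique, well-definedness of the output (what you call confluence) falls out for free, and the two-pass view is then just one particular run. You instead establish confluence from first principles via Newman's lemma — termination plus a local-confluence check that hinges on two local maxima never being adjacent and processing one never disturbing the other's pair of neighbors — and only afterward identify the output as the treap by a divide-and-conquer induction around the global minimum, using confluence to reorder the run into ``all of $L$, then all of $R$, then the two final links.'' Each approach buys something: the paper's is shorter and avoids abstract-rewriting machinery, packaging confluence into treap uniqueness; yours makes the confluence statement explicit and reusable (for instance it immediately licenses the convenient-order induction), and the recursive characterization in your Step~3 is arguably a cleaner structural description of the output. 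The combinatorial facts doing the real work are essentially the same in both (the non-adjacency of local maxima, and the stability of a node's neighbor pair under steps elsewhere, play the same role that the ``gains at most one new leftmost and one new rightmost child'' argument plays in the paper). One small remark: your Step~2 should perhaps say a word about why, after a leftward link in the smoothing round, the new leftmost candidate ($x.\PREV$) is again a local maximum — it is, because the items to its left were already in increasing order and are unaffected — but this is a minor bookkeeping point and does not affect correctness.
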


We show that the non-deterministic algorithm constructs a treap regardless of the order in which local maxima are chosen, and thus the non-deterministic and treap-based descriptions are equivalent. As the two-pass description is just a particular way of choosing the local maxima, it follows that it also produces a treap, and since the treap is unique, all three views are equivalent.

Although there are several linear-time algorithms known for constructing a treap from an array, e.g.\ \cite{TarjanTreap}, we are not aware of a previous mention in the literature of the particular method implicit in the description of the smooth heap. Due to its simplicity, we find it interesting in its own right; in the following, we prove its correctness.  

\begin{proof}
We refer to the non-deterministic description in~\ref{fig:smooth2}, and denote by $X=(x_1, \dots, x_k)$ the initial top-level list of items.

Let $T$ be the tree built from $X$ through repeatedly linking local maxima of $X$ with their larger neighbor, until a single item remains. Three properties of $T$ together imply that $T$ is a treap. 

\noindent(1) $T$ is a \emph{heap} according to the key-values $x_i$,\\
\noindent(2) $T$ is a \emph{binary} tree,\\
\noindent(3) $T$ is a \emph{search tree}, according to the indices $i$.\\

Property (1) follows from the definition of linking (links only create edges where the parent is smaller than the child).

To see Property (2), observe that once an item $x$ is larger than its left neighbor $x.\PREV$, it continues to be larger as long as it stays in $X$ (this is because if the left neighbor of $x$ drops out of $X$, then it must have become the child of even smaller item, which is the new left neighbor of $x$). Similarly, once $x$ is larger than its right neighbor $x.\NEXT$, it will remain so, as long as it stays in $X$. Suppose $x$ and $y$ are linked, and $x$ gains $y$ as a child. Then, according to our ``two-choices'' strategy, $x$ must have gained a new neighbor that is smaller than $x$ (the old neighbor of $y$), unless $x$ is already the leftmost or rightmost in the list. It follows that $x$ can gain at most two children (one as a new leftmost child, and one as a new rightmost child).

Property (3) follows from the fact that the list is initially ordered by the indices $i$ and stable links maintain this order. (If a node has a single child, we consider it a right child or left child, depending on whether the child was linked as rightmost or leftmost.) \qedd
\end{proof} 

\textit{Remark}: As mentioned, the smooth heap is implemented using stable links. In the non-deterministic and two-pass descriptions it is in fact also possible to use the classical, unstable link. In this case, however, the three descriptions are no longer equivalent. It is an interesting open question how efficient the resulting unstable two-pass algorithm is.

It is instructive to compare smooth heaps and pairing heaps. Besides the implementation of the link operation (stable or unstable) there are several differences between the two algorithms. 

In pairing heaps, as in Fredman's model of generalized pairing heaps, comparisons only occur within a link operation, i.e.\ a comparison is always followed by the corresponding link.
By contrast, in smooth heaps, comparisons are decoupled from links, and are also used to decide \emph{which pair of nodes} to link. In fact, comparisons are used \emph{only} for this purpose: it can be observed in the two-pass description in~\ref{fig:smooth2} or in the description of~\ref{smooth_heap_ps}, that at the time of linking, it is always already known, which of the two linked items is greater, therefore, the link operation can proceed without any comparison.

Recall that in our model, the cost of the algorithm is its ``link cost'', i.e.\ the number of link operations performed. In the detailed description of~\ref{smooth_heap_ps} it can be observed that the number of other elementary operations is proportional to the link cost. In particular, every comparison is followed by moving the cursor to the right, or by a link at the cursor. It follows that the number of comparisons is at most twice the number of links. (The number of items to the right of the cursor \emph{plus} the total number of items decreases after every comparison.)

\begin{figure}
	\begin{centering}
		\includegraphics[width=0.6\textwidth]{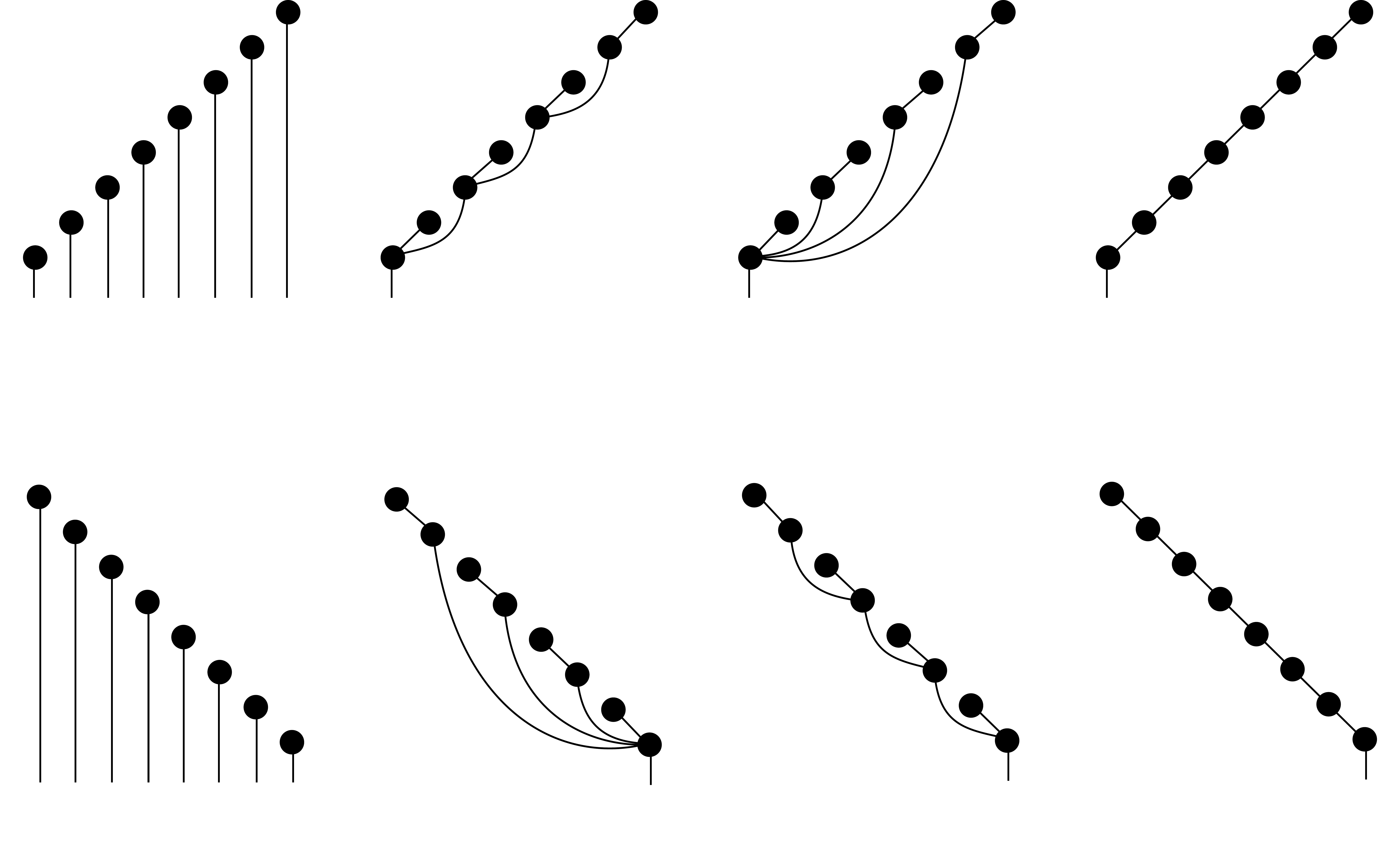}
		\par\end{centering}
	
	\caption{Extract-min operation illustrating the difference between smooth heap and pairing heap with increasing (above) and decreasing (below) initial state. From left-to-right: initial state, standard pairing heap, front-to-back pairing heap, smooth heap. The behavior of smooth heap is similar to the \emph{better} of the two other algorithms, in both cases. A transformation is better, if its result is closer to the fully sorted state, i.e.\ to a path. This is the case for the standard pairing heap for increasing, for the front-to-back pairing heap for decreasing, and the smooth heap for both increasing and decreasing input. In the other cases, the resulting heaps resemble the initial, unsorted state.\label{fig_positive}}
\end{figure}

The distinctive feature of smooth heaps is their \emph{power-of-two-choices} linking, whereby an item $x$ is linked with the larger of its two neighbors (both of which are smaller than $x$). In other words, of the two possible edges we could create, we choose the one with smaller rank-difference. Intuitively, with this choice, we expect to move closer to the totally ordered state (i.e.\ a path in which the rank-difference along every edge is one). Depending on the subtrees of the nodes, this choice may, of course, not be globally optimal, as shown in~\ref{fig_counter} in~\ref{app:nonoptimal}. As shown in \S\,\oldref{sec:produce sat set} there is good reason to believe that the smooth heap is not far from being optimal; a large gap from the optimum would disprove the conjectured optimality of Greedy. In~\ref{fig_positive} we show examples that illustrate the behavior of smooth heaps compared to pairing heap variants. An intuitive reason for the efficiency of the smooth heap is that it does the optimal transformation for both increasing and decreasing (sub)sequences, see~\ref{fig_positive}. (By contrast, the standard and back-to-front pairing heap variants are efficient in one and inefficient in the other case.)

Finally, we give some remarks on how smooth heaps (in the two-pass view) can be implemented with only two pointers per node. We refer to the description in~\ref{smooth_heap_ps}. First, we make the lists circularly linked, such that the \NEXT~pointer of the rightmost item points to the leftmost item in every list, marking the rightmost item appropriately. We can now omit the \LM~pointer, as it can be simulated via the \RM~and \NEXT~pointers. Getting rid of the \PREV~pointer is trickier. Observe that in the first, left-to-right pass we may need, after a link, to make a step to the left with the cursor, if an item was linked with its left neighbor. Finally, in the second, right-to-left pass we need to make several steps to the left. In order to simulate these steps without using \PREV~pointers, we can \emph{reverse} the links of the list as we traverse it left-to-right, such that the \NEXT~pointers of the already visited items point to their left neighbors in the list. (If we move left, we undo the reversal.) This allows us to move the cursor both left and right from its current position using only \NEXT~pointers, and a constant-size buffer to hold the two items at the cursor (and a constant factor overhead in cost). We omit further details.

\newpage
\section{The transformations and their consequences\label{sec:produce sat set}}

In this section we present the main connections between stable heap
algorithms and BST algorithms and we discuss the consequences of these connections.

We show that the cost of smooth heaps in sorting-mode, once the role of key-space and time are swapped, matches the cost of \greedyfuture, conjectured to be instance-optimal in the BST model. 
 
\begin{thm}
[Smooth-Greedy transformation]\label{thm:smooth tran}For every permutation
$X$, 
\[
|\greedyfuture(X')|=\Theta(|\smooth(X)|).
\]

\end{thm}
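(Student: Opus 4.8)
The plan is to reduce the statement, via the two geometric equivalences already available, to a purely combinatorial comparison between the output of \greedy and the star-path execution of the smooth heap, and then to establish that comparison by running the two processes side by side in their non-deterministic formulations.

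\textbf{Reduction.} By \ref{thm:greedy and greedyfuture} applied to $X'$ we have $|\greedyfuture(X')| = |\greedy(P^{X'})|$. By \ref{prop:heap and path}, the sorting-mode execution of the smooth heap on $X$ corresponds to an execution $\cB$ of a star-path algorithm on $P := P^{X'}$, with $|\smooth(X)| = |\cB(P)|$; here $\cB$ is the concrete star-path algorithm obtained by reading off the smooth-heap linking rule (pick a local maximum of the current sibling list, link it with its larger neighbor) in the geometric view, and $\cB$ is a bona fide star-path execution since the smooth heap in sorting-mode terminates with the sorted path. Thus it suffices to prove $|\greedy(P)| = \Theta(|\cB(P)|)$ for every permutation point set $P$. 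Below we identify a node with its $y$-coordinate (i.e.\ its priority), and write $p_{x=c}$, $p_{y=r}$ for the unique point of $P$ in column $c$, respectively row $r$.

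\textbf{Two inequalities.} We prove $|\greedy(P)| \le 2|\cB(P)| + 1$ and $|\cB(P)| \le 2|\greedy(P)|$ separately. For the first, the initial extract-min consolidates all $n$ top-level roots and hence performs exactly $n-1$ links, so $|\cB(P)| \ge n-1$; it therefore suffices to injectively map the $|\greedy(P)| - n$ \emph{added} points of \greedy to links of $\cB(P)$, which gives $|\greedy(P)| - n \le |\cB(P)|$ and hence $|\greedy(P)| \le |\cB(P)| + n \le 2|\cB(P)| + 1$. The map is the natural one: one checks that an added point has the form $\coord cr$ with $c$ the column of an original point, $r$ the row of an original point, and $p_{x=c}.y < r$; and whenever \greedy adds $\coord cr$, the star-path execution $\cB$ performs the link between $p_{x=c}$ and $p_{y=r}$. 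Since a pair of nodes is linked at most once in a star-path execution, distinct added points yield distinct links. For the reverse inequality we charge each link of $\cB(P)$ to a point of \greedy$(P)$: when $\cB$ links $a$ and $b$ with $a.y > b.y$ (so $a$ is the re-parented node and $b$ is a sibling of $a$ lying on one side of it), we charge the link to the point of \greedy$(P)$ in row $a.y$ that is closest to column $b.x$ on the $b$-side of column $a.x$; one shows such a point exists, and that each point of \greedy$(P)$ receives at most one charge from a link on its left and one from a link on its right, so $|\cB(P)| \le 2|\greedy(P)|$.

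\textbf{The crux.} Both charging claims follow from a single structural correspondence between the two processes, and establishing it is the main obstacle. The difficulty is that the order in which the smooth heap performs its links (in the star-path view) is neither the bottom-up row order that \greedy uses nor the column order; to reconcile them we use the \emph{non-deterministic} view of the smooth heap together with the non-deterministic view of \greedy from \S\,\oldref{sec:non-det greedy}. Running both in parallel, we maintain the invariant that the current monotone tree of the star-path execution encodes the current partially-satisfied point set of \greedy: a node $v$ with parent $u$ corresponds to the block of already-added \greedy points occupying the columns strictly between $v$ and its neighboring siblings, and the priority $v.y$ records the row up to which \greedy has already explained that block; equivalently, the treap that the smooth heap implicitly builds during an extract-min mirrors the order and shape in which \greedy fills in the corresponding rows. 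Under this invariant, choosing a local maximum $x$ and linking it with its larger neighbor corresponds to a burst of $O(1)$ point-insertions by \greedy in row $x.y$, and conversely every \greedy insertion is produced by exactly one such link; the freedom of both non-deterministic formulations is precisely what lets us schedule the two processes so that these bursts line up. With the correspondence in hand the two maps above are well-defined and the stated bounds hold, giving $|\greedy(P)| = \Theta(|\cB(P)|)$ and hence $|\greedyfuture(X')| = \Theta(|\smooth(X)|)$.
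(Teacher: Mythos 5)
Your reduction step is correct and matches the paper exactly: by \ref{thm:greedy and greedyfuture} and \ref{prop:heap and path}, the theorem is equivalent to $|\greedy(P)| = \Theta(|\cB(P)|)$ where $\cB$ is the star-path execution of the smooth heap on $P = P^{X'}$. From there, however, your argument has a genuine gap.

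The specific injection you claim in the upper bound is false. Take $P = \{(1,1),(2,3),(3,2)\}$ (so $n=3$). You can check that $\greedy$ adds the three points $(1,2)$, $(3,3)$, $(1,3)$, so $|\greedy(P)| = 6$, while the smooth heap in star-path view performs only two links (first $(2,3)$ linked as a child of $(3,2)$, then $(3,2)$ linked as a child of $(1,1)$), so $|\cB(P)| = 2$. Your claimed injection would send the added point $(1,3)$ to the link between $p_{x=1}=(1,1)$ and $p_{y=3}=(2,3)$, but that link is never performed. More bluntly, $3$ added points cannot inject into $2$ links, and your stated inequality $|\greedy(P)| \le 2|\cB(P)| + 1$ fails here ($6 > 5$). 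The correct relationship is not one-to-one: the paper's construction (\S\,6.5) adds up to \emph{three} points per link in two-neighbor phases, plus an additional $O(n)$ points over all one-neighbor phases, some of which may already be present. This constant slack is exactly what breaks your injection.

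The ``crux'' paragraph is also not a proof: it asserts that a structural invariant can be maintained between the two non-deterministic processes and that ``the bursts line up,'' but that is precisely the hard content. The paper carries this out by introducing an explicit interval function $I(\cdot)$ on the nodes of the monotone tree and maintaining Invariants (1)--(5) (\S\,6.2, \S\,6.5), then showing that every point added to $Q$ fills an $\add$ gadget (\ref{lem:smooth add to gadget}) so that by \ref{thm:greedy nondet} the constructed set is exactly $\greedy(P) \cup \boxx$. Your sketch is also internally inconsistent: the ``two inequalities'' section posits a one-to-one map from added points to links, while the ``crux'' describes a burst of $O(1)$ additions per link, which is a many-to-one map. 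To repair the argument you would need to replace the injection with the constant-factor accounting (and handle separately the one-neighbor phases, where the paper caps the total over all rounds at $2n$), and you would need to actually establish the invariant rather than state it.
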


As an immediate consequence of \ref{thm:smooth tran} and the amortized analysis of \greedyfuture~\cite{Fox11}, we have the following.

\begin{cor}[Amortized analysis of smooth heaps]
For every permutation $X \in [n]^n$, 
\[
|\smooth(X)| = O(n \log{n}).
\]
\end{cor}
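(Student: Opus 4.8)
The plan is to read the bound off directly from \Cref{thm:smooth tran} together with the known \emph{unconditional} amortized analysis of \greedy. \Cref{thm:smooth tran} gives $|\smooth(X)| = \Theta(|\greedyfuture(X')|)$, where $X'$ is the inverse permutation of $X$. Since $X \in [n]^n$ is a permutation, so is $X'$, and thus $P^{X'}$ is an $n$-point permutation point set; it therefore suffices to show $|\greedyfuture(X')| = O(n\log n)$. For this I would invoke \Cref{thm:greedy and greedyfuture} to pass to the geometric view, identifying $\greedyfuture(X')$ with $\greedy(P^{X'})$, and then cite the analysis of Fox~\cite{Fox11}, which shows that \greedy serves a sequence of $n$ operations at $O(\log n)$ amortized cost, i.e.\ $|\greedy(P^{X'})| = O(n\log n)$. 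Chaining the two estimates yields the claim.

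Concretely, the steps in order are: (i) apply \Cref{thm:smooth tran} to replace $\smooth(X)$ by $\greedyfuture(X')$ up to a constant factor; (ii) note that $X' \in [n]^n$ is again a permutation, so the input size is unchanged; (iii) translate to the geometric setting via \Cref{thm:greedy and greedyfuture} (equivalently, use that \greedyfuture in sorting-mode is the BST algorithm whose execution trace is the insertion-compatible satisfied superset $\greedy(P^{X'})$, cf.\ \Cref{thm:geo ins}); (iv) substitute the $O(n\log n)$ bound of~\cite{Fox11}; (v) undo the constant factor from step (i) to conclude $|\smooth(X)| = O(n\log n)$.

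There is essentially no technical obstacle — the corollary is a one-line consequence — but two bookkeeping points must be pinned down. First, the \emph{mode}: \Cref{thm:smooth tran} refers to \greedyfuture in sorting-mode (insert-only), which is consistent with the fact that \greedy always produces an \emph{insertion-compatible} satisfied superset; one should confirm that the bound of~\cite{Fox11} applies to exactly this execution, which it does, since the relevant quantity is precisely $|\greedy(P^{X'})|$. Second, there is no circularity: the argument uses only the unconditional $O(\log n)$ amortized guarantee, not \Cref{conj:greedy opt} — the optimality conjecture would instead yield the separate, stronger statement that $\smooth$ is $O(1)$-competitive in sorting-mode among stable heap algorithms.
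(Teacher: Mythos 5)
Your proposal is correct and follows the paper's own one-line derivation exactly: apply Theorem~\ref{thm:smooth tran} to reduce to $|\greedyfuture(X')|$, and then invoke the unconditional $O(n\log n)$ bound of~\cite{Fox11} for \greedy; the bookkeeping remarks on sorting-mode and on not invoking \ref{conj:greedy opt} are accurate but not strictly necessary.
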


For the amortized cost of $\greedyfuture$ several stronger, instance-specific bounds are known. We describe two from the literature.

The \emph{weighted dynamic finger} $WDF(X)$ of an input sequence $X$ is a quantity that describes its \emph{locality of reference}. It was introduced in~\cite{LI16}, and it subsumes the earlier \emph{dynamic finger} bound~\cite{ST85, finger1, finger2}, as well as other bounds (see e.g.~\cite{landscape}). For the exact definition of $WDF(\cdot)$ we refer to~\cite{LI16}.

A permutation $X \in [n]^n$ \emph{avoids} a pattern permutation $\pi \in [k]^k$ if there is no subsequence of $X$ (not necessarily contiguous) that is \emph{order-isomorphic} to $\pi$. (We refer the reader to~\cite{knuth68,tarjan_sorting, pratt_queues, Kitaev, MarxPattern, Newman} for more information on this extensively studied property.) As a simple observation, we mention that if $X$ avoids $\pi$ then $X'$ avoids $\pi'$.

We have the following results.
\begin{thm} For every permutation $X \in [n]^n$: 
\begin{itemize}
\item {\cite{LI16}}  $|\greedy(X)| = O(WDF(X))$.
\item \cite{FOCS15} If $X$ avoids some permutation $\pi \in [k]^k$, then $|\greedy(X)| = n \cdot 2^{{\alpha(n)}^{O(k)}}$, where $\alpha(\cdot)$ is the slowly growing inverse Ackermann function.
\end{itemize}
\end{thm}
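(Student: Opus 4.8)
The plan is essentially to observe that nothing new needs to be proved here: both bounds are already established in the cited works, and the only task is to phrase them for the geometric algorithm \greedy. Recall from \ref{thm:greedy and greedyfuture} that for every permutation $X$ the point set $\greedy(P^{X})$ is exactly the execution trace of the BST algorithm \greedyfuture on $X$; hence (using also \ref{thm:geo bst}) any amortized upper bound established for \greedyfuture, in either the BST or the geometric formulation, transfers up to constant factors to $|\greedy(P^{X})|$, which we abbreviate $|\greedy(X)|$. So the skeleton of the argument is: quote the external bound, then apply this equivalence.

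For the first item, I would simply invoke~\cite{LI16}, which proves $|\greedyfuture(X)| = O(WDF(X))$; that analysis is carried out directly over the satisfied point set associated with \greedy, so \ref{thm:greedy and greedyfuture} is all that is needed to read it off. For completeness I would first recall the definition of $WDF(\cdot)$ from~\cite{LI16} --- a sum, over consecutive accesses, of a logarithmic term depending on the weights of the keys in the interval touched, minimized over all weight assignments --- and note that it subsumes the classical dynamic finger bound~\cite{ST85, finger1, finger2}, so the statement is strictly stronger than those earlier guarantees.

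For the second item, I would invoke~\cite{FOCS15}, which shows that whenever $X$ avoids a fixed pattern $\pi \in [k]^k$, the sweepline \greedy adds at most $n \cdot 2^{\alpha(n)^{O(k)}}$ points, so that $|\greedy(X)|$ (which also counts the $n$ input points of $P^{X}$, a negligible additive term here) is of the same order. The only genuinely delicate step --- and the one I expect to be the main obstacle --- is bookkeeping rather than mathematics: checking that the result in~\cite{FOCS15} is phrased for exactly this object, the offline geometric \greedy on $P^{X}$, as opposed to its online BST simulation or a slightly different cost measure, and transcribing the Ackermann-type exponent faithfully. By \ref{thm:geo bst} and \ref{thm:greedy and greedyfuture} all reasonable cost measures agree up to constants, so no real difficulty remains; the ``hard part'', such as it is, is purely a matter of quoting the two results in a mutually consistent normalization.
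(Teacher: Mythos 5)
Your proposal is correct and matches the paper's treatment exactly: the paper states this theorem as a pure citation of~\cite{LI16} and~\cite{FOCS15} with no proof given, relying (implicitly) on precisely the \ref{thm:greedy and greedyfuture} / \ref{thm:geo bst} equivalence you invoke to identify the geometric \greedy cost with the BST execution cost analyzed in those works. Your extra care about checking that the cited bounds are stated for the same cost measure is a reasonable precaution, but no new mathematical content is needed.
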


We immediately obtain the following.

\begin{cor}[Instance-specific upper bounds for smooth heaps]
For every permutation $X \in [n]^n$:
\begin{itemize}
\item $|\smooth(X)| = O(WDF(X'))$.
\item If $X$ avoids some permutation $\pi \in [k]^k$, then $|\smooth(X)| = n \cdot 2^{{\alpha(n)}^{O(k)}}$.
\end{itemize}
\end{cor}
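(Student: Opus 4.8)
The plan is to obtain the corollary directly, with no new machinery, by composing three facts already available: the Smooth--Greedy transformation (\ref{thm:smooth tran}), the geometric identity $\greedyfuture(X)=\greedy(P^{X})$ (\ref{thm:greedy and greedyfuture}), and the two quoted instance-specific upper bounds for \greedy. Since \ref{thm:smooth tran} is a $\Theta$-statement, we only need its $O$-direction: for every permutation $X\in[n]^n$, $|\smooth(X)|=O(|\greedyfuture(X')|)$, and by \ref{thm:greedy and greedyfuture} the right-hand side equals $O(|\greedy(P^{X'})|)$, i.e.\ $O(|\greedy(X')|)$ in the abuse of notation used when stating the quoted bounds. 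So it suffices to upper bound $|\greedy(X')|$ and absorb a constant factor.

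For the first item I would instantiate the weighted-dynamic-finger bound $|\greedy(Y)|=O(WDF(Y))$ of \cite{LI16} at $Y=X'$. Chaining it with the observation above gives $|\smooth(X)|=O(WDF(X'))$, the multiplicative constant from \ref{thm:smooth tran} being swallowed by the $O(\cdot)$. For the second item I would first use the elementary fact, recorded just before the statement, that if $X$ avoids a pattern $\pi\in[k]^k$ then $X'$ avoids the pattern $\pi'\in[k]^k$; this is immediate because pattern containment is preserved under transposing the associated point set, and transposition maps $[k]^k$ to $[k]^k$. Applying the pattern-avoidance bound of \cite{FOCS15} to $X'$ and $\pi'$ gives $|\greedy(X')|=n\cdot 2^{{\alpha(n)}^{O(k)}}$, where $\alpha(\cdot)$ is the inverse Ackermann function. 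This expression is stable under multiplication by a constant (which can be absorbed into the $O(k)$ in the exponent), so composing with \ref{thm:smooth tran} and \ref{thm:greedy and greedyfuture} yields $|\smooth(X)|=n\cdot 2^{{\alpha(n)}^{O(k)}}$, as claimed.

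I do not expect any genuine obstacle here; the corollary is a one-line deduction once \ref{thm:smooth tran} is in hand. The only points deserving a word of care are: (i) extracting only the $O$-direction from the $\Theta$ in \ref{thm:smooth tran}, so that no matching lower bound for \smooth\ is needed; (ii) the fact that the quoted \greedy\ bounds hold for an \emph{arbitrary} input permutation, hence may be freely instantiated at $X'$ rather than $X$; and (iii) the transposition-symmetry of pattern avoidance, which is standard and already stated in the excerpt. Accordingly, the "proof" is essentially just this chain of substitutions together with the remark that constant factors are harmless for both target bounds.
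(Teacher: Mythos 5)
Your proposal is correct and matches the paper's (implicit) argument exactly: compose the $O$-direction of \ref{thm:smooth tran} with \ref{thm:greedy and greedyfuture}, instantiate the two quoted \greedy\ bounds at $X'$, and use the transposition-symmetry of pattern avoidance ($X$ avoids $\pi$ implies $X'$ avoids $\pi'$), which is already recorded in the text. The paper leaves this as ``we immediately obtain,'' and your chain of substitutions is precisely that deduction, with the right care taken about constant factors and the direction of the $\Theta$.
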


More generally, we show that there is a general transformation from \emph{an arbitrary} stable
heap algorithm to an offline BST algorithm:
\begin{thm}
[General transformation]\label{thm:general tran}For every stable
heap algorithm $\cA_{\stable}$ there is an offline BST algorithm
$\cA_{\bst}$, such that for every permutation $X$,
the costs of $\cA_{\stable}$ and $\cA_{\bst}$ for sorting $X$ are asymptotically the same, i.e. 
\[
|\cA_{\bst}((X')^{r})|=\Theta(|\cA_{\stable}(X)|),
\]
and in particular, 
\[
\opt_{\bstins}((X')^{r})=O(\opt_{\stable}(X)).
\]

\end{thm}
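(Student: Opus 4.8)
The plan is to chain together the three combinatorial equivalences already in place. Fix a stable heap algorithm $\cA_{\stable}$ and a permutation $X \in [n]^n$. First I would apply \ref{prop:heap and path}: the sorting-mode execution of $\cA_{\stable}$ on $X$ corresponds to a star-path algorithm $\cB$ with $\cB(P^{X'}) = \cA_{\stable}(X)$, so the two have the same link-cost, namely $|\cB(P^{X'})| = |\cA_{\stable}(X)|$. The key remaining work is to show that a star-path execution on a permutation point set $P$ encodes, up to a constant factor in cost, a BST algorithm in insert-only mode serving a permutation closely related to $P$ — concretely, that the set of links in the star-path execution maps to an insertion-compatible satisfied superset (so that \ref{thm:geo ins} applies), with the roles of the two coordinate axes and/or time-direction accounting for the $(\cdot)'$ and $(\cdot)^r$ operations in the statement. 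Given such a correspondence for $P = P^{X'}$, one obtains a BST algorithm $\cA_{\bst}$ with $|\cA_{\bst}(Z)| = \Theta(|\cB(P^{X'})|)$ where $Z$ is the permutation whose point set is the transformed $P^{X'}$; tracking the transformations, $Z = (X')^r$ as claimed.

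For the star-path-to-BST direction I would argue as follows. In a monotone tree on $P_0$, each link $(a,b)$ with (say) $a.y > b.y$ makes $a$ a child of $b$; think of the moment of this link as "revealing" that, from the perspective of the smaller-$y$ point $b$, the point $a$ first becomes attached to it. Reading the tree's evolution by increasing $y$-value (which is the temporal order in the heap view, i.e.\ the insertion order after the coordinate swap), the set of links performed, together with the original points, should form exactly a point set that is satisfied — each linked pair $\square_{ab}$ is "closed off" by a point in the appropriate row/column — and moreover insertion-compatible, since a link never creates a point strictly below an existing point of $P$ (the star-path process only moves subtrees "downward" in the tree, never changing the locations of points, and the analogue of an inserted point sits at the boundary). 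I would make the geometric bookkeeping precise by assigning to each link the point $\coord{a.x}{b.y}$ (or its transpose), verifying satisfaction of every rectangle spanned by two points of the augmented set, and checking that the number of such added points equals $|\cB(P)| \pm O(n)$ so the $\Theta$ holds. The reverse inclusion (every insertion-compatible satisfied superset arises from some star-path execution) then gives, combined with \ref{thm:geo ins}, the "and in particular" bound $\opt_{\bstins}((X')^r) = O(\opt_{\stable}(X))$: taking $\cA_{\stable}$ to be the optimal stable heap algorithm for $X$ yields a BST insert-only algorithm of cost $O(\opt_{\stable}(X))$ on $(X')^r$, hence $\opt_{\bstins}((X')^r)$ is at most this.

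The main obstacle I anticipate is getting the coordinate transformations exactly right — in particular justifying the \emph{reverse} permutation $(X')^r$ rather than simply $X'$. This stems from a direction mismatch: in the heap, extract-min processes keys in increasing order of \emph{value}, while the $i$-th inserted key sits at $x$-coordinate $i$; but in the BST insert-only mode the "satisfying" point added by the geometry must respect the temporal direction of insertions, and a star-path link between two points is naturally witnessed by a point in the row of the \emph{later} ($=$ larger-$y$) of the two, whereas Greedy/insertion-compatibility wants witnesses going the other way. Resolving this should amount to reflecting one axis, which composes with the transpose $(P^X)' = P^{X'}$ to produce $(P^{X'})^r = P^{(X')^r}$; I would verify this by carefully tracking a small example (e.g.\ the permutation in \ref{fig2__}) and confirming that both the qualitative picture and the cost count match. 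A secondary, more routine obstacle is confirming the additive $O(n)$ slack never inflates beyond a constant factor — this follows because every execution of interest has cost $\Omega(n)$ (at least $n-1$ links, resp.\ $n$ touched nodes), so additive $O(n)$ terms are absorbed into the $\Theta$.
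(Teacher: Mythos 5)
Your top-level plan is the same as the paper's: pass through \ref{prop:heap and path} to the star-path problem, establish a geometric lemma (the paper's \ref{lem:general tran geo}) turning a star-path execution into an insertion-compatible satisfied superset of comparable size, and then invoke \ref{thm:geo ins}; the ``in particular'' bound follows by instantiating $\cA_{\stable}$ as an optimal stable heap algorithm, as you note (the ``reverse inclusion'' you mention is not needed and the paper does not prove it). That skeleton is correct. The problem is the heart of the geometric lemma. You propose adding, per link that makes the larger-$y$ node a child of the smaller-$y$ node, a single point at the child's column and the parent's row. This is not enough: once a node has absorbed earlier children, its relevant footprint is not its original column but a whole interval of columns, and the rectangles between its subtree and rows above the parent must be witnessed by the \emph{endpoints} of that interval projected to the parent's row. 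A three-point counterexample: take $a,b,c$ with $a.x<b.x<c.x$ and $a.y<c.y<b.y$, all siblings of $\coord{0}{0}$; link $b$ into $c$, then $c$ into $a$. Your rule produces the points $\coord{b.x}{c.y}$ and $\coord{c.x}{a.y}$, but then the rectangle with corners $a$ and $\coord{b.x}{c.y}$ (spanning $x\in[a.x,b.x]$, $y\in[a.y,c.y]$) contains no point of the augmented set, not even from a base row at $y=0$, so the output is not satisfied.

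The paper instead maintains an interval function $I(\cdot)$ on the monotone tree, adds per link the \emph{two} endpoint projections $\coord{I(b).\min}{a.y}$ and $\coord{I(b).\max}{a.y}$ to the parent's row (\ref{lem:inv maintained general}), and carries four invariants (intervals form a laminar family matching the tree; ancestor--descendant rows pairwise satisfied; no added points between a node and its parent's row; endpoint projections present in $Q$). In the counterexample, when $c$ links into $a$ the interval $I(c)$ has already grown to $(b.x,c.x)$, so the paper also adds $\coord{b.x}{a.y}$, which rescues the offending rectangle; showing that each link contributes exactly one or two new points is what yields the $\Theta$ rather than mere $O$. None of this machinery appears in your sketch, and without it the single-point rule simply fails. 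A separate but related slip: you argue insertion-compatibility by asserting that a link never creates a point strictly below $P$, but the construction in fact adds points strictly \emph{below} $P$ (every witness sits in the parent's row, which has smaller $y$). That is precisely why the theorem involves $(X')^r$ rather than $X'$: after reversing the $y$-axis the added points lie above $P^r$ and the superset becomes insertion-compatible for $P^r$. You correctly flag the reversal as a loose end, but the direction is inverted as written.
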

Observe that the permutation input of $\cA_{\bst}$ is inverted
\emph{and} reversed. From \ref{rem:bst insert stronger}, we note
again that bounding from above the running time of a BST algorithm in sorting mode is stronger than bounding it in search-only mode. 

For search-only mode, we can obtain infinitely many offline BST
algorithms from a single stable heap algorithm, all of which have at most a constant factor larger cost. (Here, \emph{infinitely many} is understood as $n$ tends to infinity, i.e.\ the number of BST algorithms generated depends on the input size $n$.) In particular, we obtain multiple (non-trivial) offline BST executions that are $O(1)$-competitive with $\greedyfuture$.

The proofs of these results are presented in \S\,\oldref{sec:proofs}.
Below we discuss some of their interesting consequences.

\subsection{Consequences of the optimality of $\protect\greedyfuture$\label{sub:consequence greedy}}

$\greedyfuture$ is widely conjectured to be instance-optimal \cite{DHIKP09,Mun00,Luc88}.
In case the conjecture is true, we obtain the following two
statements. First, our smooth heap algorithm is an instance-optimal
stable heap algorithm (at least) for sorting. Second, the optimal
costs of (1) selection-sort with stable heaps ($\opt_{\stable}$), (2) insertion-sort with BSTs ($\opt_{\bstins}$), and (3) searching with BSTs ($\opt_{\bst}$) are the same within some constant factor. Moreover, each of these
quantities are invariant under applying inversion and/or reversion to the input. 

Without the assumption, we only know that, for every permutation $X$, we have
$\opt_{\bstins}(X)\ge\opt_{\bst}(X)$ by \ref{rem:bst insert stronger},
and $\opt_{\stable}(X)\ge\Omega(\opt_{\bstins}((X')^{r}))$ by \ref{thm:general tran}.
It is not clear how to prove the inequalities in the other direction,
or how to compare $\opt_{\stable}(X)$
and $\opt_{\bstins}(X)$.
\begin{cor}
Assuming \ref{conj:greedy opt}, for every permutation $X$,
\begin{enumerate}[(i)]
\item 
$$\smooth(X)=\Theta(\opt_{\stable}(X)),$$  
\item 
$$
\opt_{\stable}(X)=\Theta(\opt_{\bstins}(X))=\Theta(\opt_{\bst}(X)),
$$

and, for each $model\in\{\stable,\bstins,\bst\}$,
\[
\opt_{model}(X)=\Theta(\opt_{model}(X'))=\Theta(\opt_{model}(X^{r})).
\]
 
\end{enumerate}
\end{cor}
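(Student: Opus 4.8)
The plan is to establish a chain of constant-factor inequalities showing that, under \ref{conj:greedy opt}, the four quantities $|\smooth(X)|$, $\opt_{\stable}(X)$, $\opt_{\bstins}(X)$ and $\opt_{\bst}(X)$ all agree up to constant factors; the invariance statements will then be inherited from the invariance of $\opt_{\bst}$ recorded in \ref{cor:bst symmetric}. First I would note two inequalities that need no assumption: $\opt_{\bst}(X)\le\opt_{\bstins}(X)$ by \ref{rem:bst insert stronger}, and $\opt_{\stable}(X)\le|\smooth(X)|$ since $\smooth$ is one particular stable heap algorithm. Next, the $\opt$-inequality in \ref{thm:general tran} gives $\opt_{\bstins}((X')^{r})=O(\opt_{\stable}(X))$; combining this with $\opt_{\bst}(\cdot)\le\opt_{\bstins}(\cdot)$ and with the fact that $\opt_{\bst}$ is unchanged under inversion and under reversal (\ref{cor:bst symmetric}) yields
\[
\opt_{\bst}(X)=\opt_{\bst}\big((X')^{r}\big)\le\opt_{\bstins}\big((X')^{r}\big)=O\big(\opt_{\stable}(X)\big).
\]
For the opposite direction, \ref{thm:smooth tran} combined with \ref{conj:greedy opt} applied to the permutation $X'$ (and one more application of \ref{cor:bst symmetric}) gives $|\smooth(X)|=\Theta(|\greedyfuture(X')|)=O(\opt_{\bst}(X'))=O(\opt_{\bst}(X))$. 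Together with $\opt_{\stable}(X)\le|\smooth(X)|$ this closes the loop: $\opt_{\bst}(X)=\Theta(\opt_{\stable}(X))=\Theta(|\smooth(X)|)$, which already proves part~(i) and the equality $\opt_{\stable}(X)=\Theta(\opt_{\bst}(X))$ of part~(ii).

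It remains to pin $\opt_{\bstins}(X)$ to $\opt_{\bst}(X)$. One direction is the trivial inequality above. For the other, I would apply \ref{thm:general tran} not at $X$ but at a preimage of $X$ under the map $Y\mapsto(Y')^{r}$, which is a bijection on permutations of $[n]$ (being a composition of the two involutions $Y\mapsto Y'$ and $Y\mapsto Y^{r}$). Taking $Y=(X^{r})'$, so that $(Y')^{r}=X$, \ref{thm:general tran} gives $\opt_{\bstins}(X)=O(\opt_{\stable}(Y))$; by the block just established, $\opt_{\stable}(Y)=\Theta(\opt_{\bst}(Y))$, and $\opt_{\bst}(Y)=\opt_{\bst}(Y')=\opt_{\bst}((Y')^{r})=\opt_{\bst}(X)$ by \ref{cor:bst symmetric}. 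Hence $\opt_{\bstins}(X)=O(\opt_{\bst}(X))$, so $\opt_{\bstins}(X)=\Theta(\opt_{\bst}(X))$. Finally, since each of $\opt_{\stable}(X)$ and $\opt_{\bstins}(X)$ is $\Theta(\opt_{\bst}(X))$, and $\opt_{\bst}$ is invariant (exactly) under $X\mapsto X'$ and $X\mapsto X^{r}$ by \ref{cor:bst symmetric}, the same invariance up to constant factors holds for $\opt_{\stable}$ and $\opt_{\bstins}$, completing the proof.

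The argument is a pure composition of \ref{thm:smooth tran}, \ref{thm:general tran}, \ref{conj:greedy opt}, \ref{rem:bst insert stronger} and \ref{cor:bst symmetric}, so no new combinatorial idea is required. The one place that needs care — and the main pitfall I would watch for — is the bookkeeping of the inversion and reversal operations: \ref{thm:general tran} comes with fixed input conventions (the stable-heap input is arbitrary, but the BST input is forced to be $(\,\cdot\,')^{r}$), which is precisely why the preimage step $Y=(X^{r})'$ is needed to reach an arbitrary argument of $\opt_{\bstins}$, and one must check that \ref{cor:bst symmetric} really does absorb inversion and reversal simultaneously. Everything else is routine chasing of $O(\cdot)$ and $\Theta(\cdot)$ along the chain.
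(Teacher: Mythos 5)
Your proposal is correct and follows essentially the same route as the paper: a chain of constant-factor inequalities assembled from \ref{rem:bst insert stronger}, \ref{thm:general tran}, \ref{thm:smooth tran}, \ref{conj:greedy opt}, and \ref{cor:bst symmetric}. The only differences are organizational — you first close the loop $\opt_{\bst}(X)\lesssim\opt_{\stable}(X)\le|\smooth(X)|\lesssim\opt_{\bst}(X)$ and then pin down $\opt_{\bstins}(X)$ separately via the preimage $Y=(X^{r})'$, whereas the paper runs a single collapsing chain through $\opt_{\bstins}$ — and your explicit bookkeeping of $(X')^{r}$ versus $(X^{r})'$ is in fact slightly more careful than the paper's chain, which writes $(X')^{r}$ where applying \ref{thm:general tran} to $X$ actually yields $(X^{r})'$ (a harmless slip, since by \ref{cor:bst symmetric} the target $\opt_{\bst}$ is invariant under both operations, but your version avoids it).
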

\begin{proof}
We use $f\lesssim g$ to denote $f=O(g)$. Consider the following
inequalities, for an arbitrary permutation $X$.

\begin{align*}
\opt_{\bst}(X) & \le\opt_{\bstins}(X) & \mbox{by \ref{rem:bst insert stronger}}\\
 & \lesssim\opt_{\stable}((X')^{r}) & \mbox{by \ref{thm:general tran}}\\
 & \le|\smooth((X')^{r})|\\
 & \lesssim|\greedyfuture(X^{r})| & \mbox{by \ref{thm:smooth tran}}\\
 & \lesssim\opt_{\bst}(X^{r}) & \mbox{by \ref{conj:greedy opt}}\\
 & =\opt_{\bst}(X) & \mbox{by \ref{cor:bst symmetric}.}
\end{align*}
Assuming \ref{conj:greedy opt}, the inequalities collapse,
implying (i). 

For (ii), we have $\opt_{\bst}(X)=\Theta(\opt_{\bstins}(X))$ and $\opt_{\stable}((X')^{r})=\Theta(\opt_{\bst}(X))$
from the above inequalities. By \ref{cor:bst symmetric}, $\opt_{\bst}(X)=\Theta(\opt_{\bst}(X'))=\Theta(\opt_{\bst}(X^{r}))$
for every permutation $X$. \qedd %This implies (2).
\end{proof}

\subsection{Dynamic optimality for stable heaps\label{sub:dynopt for heap}}

The theory of instance-specific lower bounds for BSTs is much richer
than the corresponding theory for heaps. There are several \emph{concrete} lower bounds known for $\opt_{\bst}$, Wilber's
first bound (a.k.a.\ the ``interleave bound''), Wilber's second bound (a.k.a.\ the ``funnel bound'')~\cite{Wilber} and the maximum independent rectangle (MIR) bound~\cite{DHIKP09}. (See also~\cite{in_pursuit} for the precise definitions of these quantities.)
It is typically easier to reason about these lower bounds than to analyse $\opt_{\bst}$ directly. Currently, all BST algorithms shown to be $o(\log n)$-competitive (i.e.\ Tango \cite{Tango}, Multi-splay \cite{multisplay} and Chain-splay \cite{chain_splay}) have been shown to be competitive with Wilber's first bound. Some connections between
these bounds are known. In~\cite{DHIKP09} it is shown that both
Wilber's bounds are subsumed by the MIR bound, which is computable by a sweepline algorithm (intriguingly) similar to \greedy. Harmon~\cite{Harmon} also shows that the MIR bound is captured (up to a constant factor) by the optimal cost of a network-design problem called \emph{small Manhattan networks}~\cite{manhattan}.\footnote{Wilber's first bound is in fact, a family of bounds, as it is defined with respect to a reference tree. Here, by $W_1(X)$ we denote the best bound in the family, i.e.\ using the reference tree that maximizes the bound for given $X$.}

For heaps, lower bounds have so far been studied in the \emph{worst-case} setting
\cite{FredmanLB,IaconoOzkan,IaconoOzkan2}, and a theory of instance-specific
lower bounds has not yet been proposed (in any heap model). 

Our results connect the two models, and yield an analogous theory for heaps (at least with the restrictions implied by the stable heap model). 
By~\ref{thm:general tran}, for \emph{every} stable heap algorithm, there is a corresponding BST algorithm  with the same asymptotic cost. By this result, all known instance-specific lower bounds are immediately transferred from BSTs to stable heaps.
\begin{cor}
Let $W_{1}(X)$, $W_{2}(X)$, $MIR(X)$ be Wilber's two bounds~\cite{Wilber}, and the maximum independent
rectangle (MIR) bound~\cite{DHIKP09} for an arbitrary permutation $X$. Then we have
\[
W_{1}(X),W_{2}(X)\le MIR(X) \le O(\opt_{\stable}(X)).
\]
\end{cor}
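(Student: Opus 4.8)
The plan is to string together the hierarchy of instance-specific BST lower bounds from~\cite{DHIKP09} with the general transformation of \ref{thm:general tran}, transporting the bound from the BST model at the transformed permutation $(X')^{r}$ back to the stable heap model at $X$ via the symmetry of $\opt_{\bst}$. There is no new combinatorics to do: everything is bookkeeping on top of results already established in the excerpt.

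Concretely, the first inequality $W_{1}(X), W_{2}(X) \le MIR(X)$ is precisely the statement of Demaine et al.\ that both Wilber bounds are subsumed by the maximum-independent-rectangle bound, so I would simply invoke it. For the second inequality I would start from the fact, also from~\cite{DHIKP09}, that $MIR$ is a valid lower bound on the offline BST optimum, i.e.\ $MIR(X) \le \opt_{\bst}(X) + O(n) = O(\opt_{\bst}(X))$, where the additive $O(n)$ is harmless since $\opt_{\bst}(X) \ge n$ for a permutation of size $n$. It then remains to relate $\opt_{\bst}(X)$ to $\opt_{\stable}(X)$: applying \ref{cor:bst symmetric} twice gives $\opt_{\bst}(X) = \opt_{\bst}(X') = \opt_{\bst}((X')^{r})$; then \ref{rem:bst insert stronger} gives $\opt_{\bst}((X')^{r}) \le \opt_{\bstins}((X')^{r})$; and finally \ref{thm:general tran} gives $\opt_{\bstins}((X')^{r}) = O(\opt_{\stable}(X))$. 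Chaining everything yields
\[
W_{1}(X),\, W_{2}(X) \;\le\; MIR(X) \;=\; O(\opt_{\bst}(X)) \;=\; O(\opt_{\bst}((X')^{r})) \;\le\; O(\opt_{\bstins}((X')^{r})) \;=\; O(\opt_{\stable}(X)),
\]
which is exactly the claim.

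The only thing worth double-checking is which transformed permutation each quantity is evaluated at: the genuine content sits entirely in \ref{thm:general tran} (proved in \S\,\oldref{sec:proofs}) and in the previously established BST lower-bound chain, and the point that must not slip is that inversion and reversal each \emph{individually} preserve $\opt_{\bst}$ — which is exactly \ref{cor:bst symmetric} — so that the composite operation $(\cdot\,')^{r}$ appearing in \ref{thm:general tran} leaves the value of the lower bound unchanged. I do not expect any real obstacle beyond this.
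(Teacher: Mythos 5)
Your proof is correct and follows exactly the same route as the paper: the Wilber-to-MIR-to-$\opt_{\bst}$ chain from~\cite{DHIKP09}, then $\opt_{\bst}(X)=\opt_{\bst}((X')^{r})$ by \ref{cor:bst symmetric}, then \ref{rem:bst insert stronger}, then \ref{thm:general tran}. The only cosmetic difference is that you spell out the two applications of \ref{cor:bst symmetric} and the harmlessness of the additive $O(n)$ slack, which the paper leaves implicit.
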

\begin{proof}
In \cite{DHIKP09} it is shown that $W_{1}(X),W_{2}(X)\le MIR(X) \le O(\opt_{\bst}(X))$.
The following concludes the claim:
\begin{align*}
\opt_{\bst}(X) & =\opt_{\bst}((X')^{r}) & \mbox{by \ref{cor:bst symmetric}}\\
 & \le\opt_{\bstins}((X')^{r}) & \mbox{by \ref{rem:bst insert stronger}}\\
 & =O(\opt_{\stable}(X)). & \mbox{by \ref{thm:general tran}} \\ & &\qedd
\end{align*}
\let\qed\relax
\end{proof}

\if 0
\subsection{A family of $\protect\greedyfuture$-competitive BST algorithms}

There are several concrete BST algorithms which are conjectured or
proven to be $o(\log n)$-competitive with the BST optimum. Splay~\cite{ST85} and $\greedyfuture$~\cite{DHIKP09,Mun00,Luc88}
are conjectured to be $O(1)$-competitive, Tango~\cite{Tango}, Multi-splay~\cite{multisplay} and Chain-splay~\cite{chain_splay} are proven
to be $O(\log\log n)$-competitive. Note that being $O(1)$-competitive
with each of the above algorithms is an obvious property of an instance-optimal algorithm. However, there was no prior result showing a BST algorithm to be $o(\log n)$-competitive to any of the above algorithms. (Proving $O(\log n)$-competitiveness is trivial.)

Below, we show an infinite family of offline BST algorithms,
each of which is $O(1)$-competitive to $\greedyfuture$, and hence, assuming \ref{conj:greedy opt}, instance-optimal. We emphasize that these algorithms are offline. We obtain them as different satisfied supersets of the same input point set, which, by~\ref{thm:geo bst} correspond to different BST executions for the same input sequence.

The surprising aspect of this result is that a large number of satisfied supersets have the same cost (up to constant factors), moreover that this cost matches the conjectured optimal cost of \greedy. Informally, the difficulty in analysing \greedy, or other BST algorithms, is that small local changes tend to propagate and affect the future behavior of the algorithm. 

For instance, it appears difficult even to show that \greedy~(in the geometric view) is competitive with itself, executed on the same input point set, but with \emph{one additional point} added at time zero. Understanding the effect of different initial states is a bottleneck in our current understanding of \greedy, and indeed, the BST model. (See also~\cite{FOCS15}.)

\begin{cor}
There are infinitely many offline BST algorithms $\cA_{\bst}$ in
search-only mode such that, for all permutations $X$, there is some
initial tree where
\[
|\cA_{\bst}(X)|=\Theta(|\greedyfuture(X)|).
\]
\end{cor}
\begin{proof}
We apply \ref{thm:general tran family} to the smooth heap algorithm,
$\smooth$, and a family $\cF(\smooth)$ of infinitely many offline
BST algorithms in search-only mode. For every $\cA_{\bst}\in\cF(\smooth)$
and every permutation $X$ we have 
\[
|\cA_{\bst}(X)|=\Theta(|\smooth(X')|)=\Theta(|\greedyfuture(X)|)
\]
by \ref{thm:smooth tran}. \qedd \end{proof}
\fi

%\newpage{}

\section{A non-deterministic description of $\protect\greedy$\label{sec:non-det greedy}}

In this section we describe a non-deterministic algorithm for the
satisfied superset problem whose output is exactly the same as the
output of \greedy, described in \S\,\oldref{sec:bst_model}.
We call this algorithm ``non-deterministic \greedy'' and denote it $\greedy_{nondet}$. This alternative description of \greedy\,can be of independent interest; we use it to show a connection between the smooth heap and \greedy\,in \S\,\oldref{sec:proofs}.
The equivalence between \greedy~and $\greedy_{nondet}$ is shown in \ref{thm:greedy nondet}. We first define the $\add$ gadget. 
\begin{defn}
[$\add$ gadget]\label{def:add gadget}Let $P\subseteq\mathbb{Z}\times\mathbb{Z}$
be a point set. We say that $G=(a,b,c,d,e)$ is an $\add$ gadget
in $P$ if 
\begin{itemize}
\item $\{a,b,c,d,e\}\subseteq P$ 
\item The relative positions of $a,b,c,d,e$ are according to \ref{fig:add gadget}
or its horizontal reflection. That is, $a.y>b.y>c.y=d.y=e.y$ and
$e.x<a.x=c.x<b.x=d.x$, or $e.x>a.x=c.x>b.x=d.x$.
\item Let $f=\coord{b.x}{a.y}$. Then $f\notin P$ 
\item Let $\square_{G}=\square_{ef}\setminus\bigl(\{\coord{e.x}i\mid i\}\cup\{\coord i{e.y}\mid i\}\cup\{\coord{b.x}i\mid i\}\cup\{\coord i{a.y}\mid i\}\bigr)$.
In words, $\square_{G}$ is $\square_{ef}$ with the borders removed.
Then, $\square_{G}\cap P=\emptyset$. See \ref{fig:add gadget}. We
call $\square_{G}$ the \emph{rectangle inside the gadget} $G$. 
\end{itemize}
\end{defn}
\begin{figure}[H]
\begin{centering}
\includegraphics[width=0.19\textwidth]{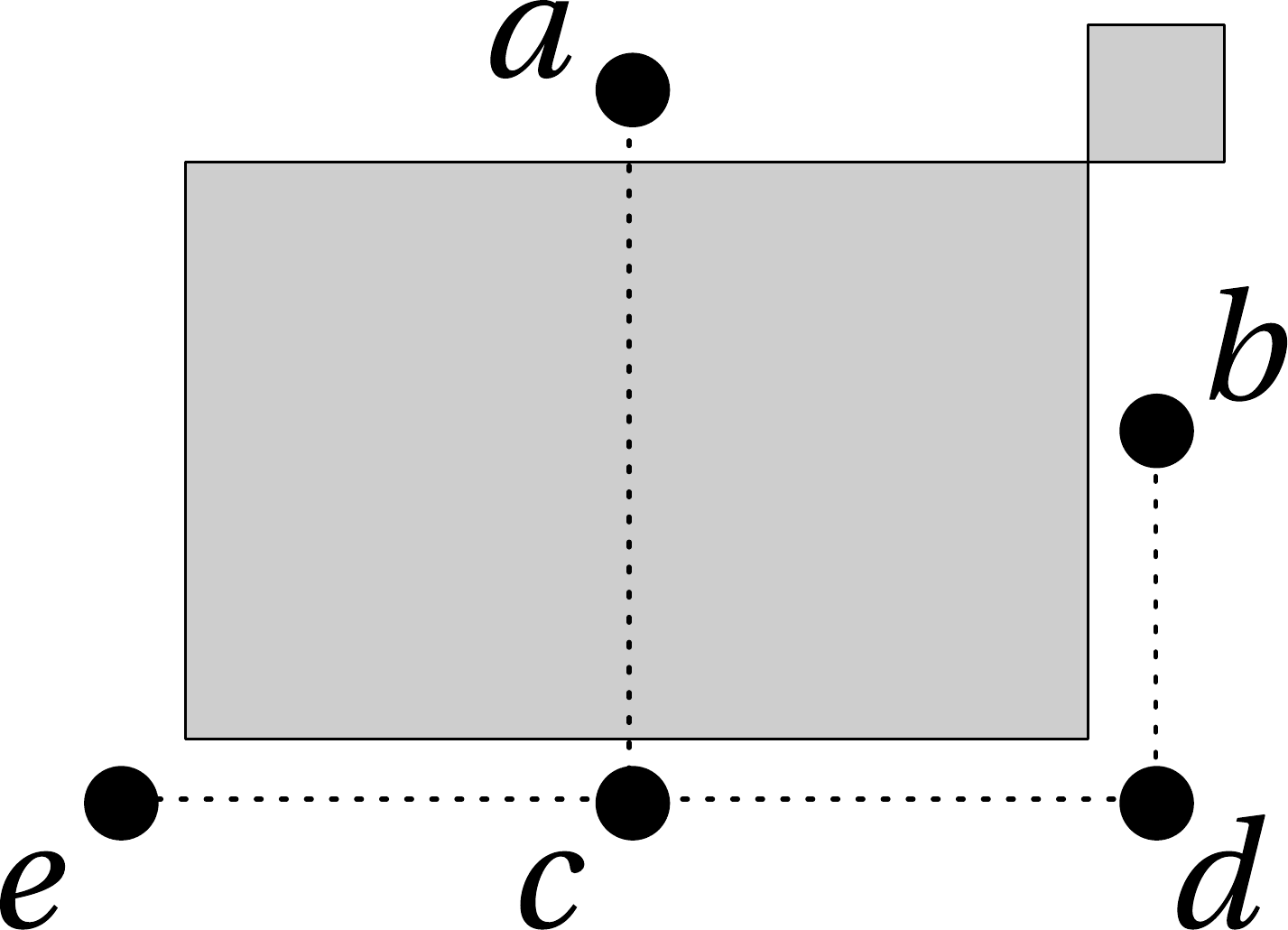} 
\par\end{centering}

\caption{$\protect\add$ gadget $G=(a,b,c,d,e)$. The larger shaded rectangle
is $\square_{G}$. The small square is $f = \protect\coord{b.x}{a.y}$. Dotted lines indicate horizontal or vertical alignment. 
Shaded areas are free of points. \label{fig:add gadget}}
\end{figure}

We begin by proving some basic properties of the $\add$ gadget during the execution of \greedy. Recall the process of \greedy~from \ref{def:greedy}. First, we show that \greedy never adds a point inside $\square_{G}$. This claim follows from~\cite[Lemma 2.4(iii)]{FOCS15}, but we include the argument for completeness. %We proceed on the rows $[e.y+1,a.y]$. 

\begin{lem}
	If there is an $\add$ gadget $G=(a,b,c,d,e)$ in point set $P$, then $\square_{G} \cap \greedy(P) = \emptyset$.
	\label{lem:inside gadget empty}\end{lem}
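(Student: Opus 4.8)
The plan is to prove the stronger statement that \greedy never \emph{adds} a point inside the open rectangle $\square_{G}$ during its execution; since $\square_{G}\cap P=\emptyset$ by the definition of the gadget, this immediately gives the lemma. I would argue by contradiction, taking the \emph{lowest} row in which a bad point is added.

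Concretely: suppose some point of $\square_{G}$ gets added, and let $y^{\ast}$ be the smallest $y$ such that, at step $y^{\ast}$ of \greedy (see \ref{def:greedy}), a point $r=\coord{x_{r}}{y^{\ast}}$ with $e.y<y^{\ast}<a.y$ and $e.x<x_{r}<b.x$ is added. By the rule of \greedy, $r$ is added because of some $p\in P_{y=y^{\ast}}$ and some $q\in Q_{y<y^{\ast}}$ with $q.x=x_{r}$ for which $\square_{pq}$ is not yet satisfied (i.e.\ contains no point of the current set $Q$ other than $p$ and $q$). The first step is to locate $p$ and $q$. For $q$: since $q.y<y^{\ast}<a.y$ and $e.x<q.x=x_{r}<b.x$, having $q.y>e.y$ would put $q$ inside $\square_{G}$; but then $q$ is neither in $P$ (as $\square_{G}\cap P=\emptyset$) nor added by \greedy (such an addition would be in row $q.y<y^{\ast}$, contradicting the minimality of $y^{\ast}$), so $q.y\le e.y$. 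For $p$: since $p.y=y^{\ast}\in(e.y,a.y)$ and $p\notin\square_{G}$, necessarily $p.x\le e.x$ or $p.x\ge b.x$. (Here I assume the orientation of \ref{fig:add gadget}; the horizontally reflected gadget is handled symmetrically.)

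The key step is then to exhibit a witnessing third point of $Q$ inside $\square_{pq}$ in each case, contradicting that $\square_{pq}$ is unsatisfied. If $p.x\ge b.x$, then $\square_{pq}$ spans $x$-coordinates $[x_{r},p.x]\ni b.x$ and $y$-coordinates $[q.y,y^{\ast}]\ni e.y$, so it contains $d=\coord{b.x}{e.y}\in P\subseteq Q$; moreover $d\neq p$ (distinct $y$-coordinates) and $d\neq q$ (distinct $x$-coordinates), so $\square_{pq}$ is in fact already satisfied --- contradiction. If $p.x\le e.x$, then $\square_{pq}$ spans $x$-coordinates $[p.x,x_{r}]\ni e.x$ and $y$-coordinates $[q.y,y^{\ast}]\ni e.y$, so it contains $e$, again distinct from both $p$ and $q$ --- contradiction. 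Hence no point of $\square_{G}$ is ever added, proving the claim.

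I do not expect a real obstacle here; the only things requiring care are the bookkeeping for the reflected orientation of the gadget and the degenerate possibilities (e.g.\ $q$ lying strictly below row $e.y$, or $p$ sitting on the column $x=e.x$), all of which leave the witness argument intact. Note that the argument uses only the positions of $e$ and $d$ together with the emptiness of $\square_{G}$ of original points; the hypotheses $f\notin P$ and $a.y>b.y$ are not needed for this particular lemma (they enter the companion statements showing that \greedy does eventually add the point $f$).
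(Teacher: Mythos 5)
Your proof is correct and follows the same approach as the paper's: contradiction via the lowest point $r$ added inside $\square_G$, locating the causal pair $(p,q)$ with $q.x=r.x$, arguing $q.y\le e.y$ and $p.x\le e.x$ or $p.x\ge b.x$, and then exhibiting $e$ or $d$ as the third point inside $\square_{pq}$ to contradict its being unsatisfied. Your write-up is slightly more explicit than the paper's (which simply asserts $q\in Q_{y\le e.y}$ without spelling out the minimality argument), and your closing observation that the hypotheses $f\notin P$ and $a.y>b.y$ are not used in this particular lemma is accurate.
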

\begin{proof}
	We assume that $e.x<d.x$ (the horizontally reflected case is analysed similarly).

	Suppose for contradiction that when \greedy~runs on $P$, it adds a point $r$ inside $\square_{G}$, and let $r$ be the lowest such point. Observe that $e.y < r.y \leq a.y$, and $e.x < r.x < d.x$.
	Let $Q$ be the point set before processing row $r.y$.
	
	%For the base case, consider the time right before \greedy~adds
	%points to the row $e.y+1$. Let $Q$ be the current point set. Suppose
	%that there is a point $r\in\greedy(P)\cap\square_{G}$ where $r.y=e.y+1$.
	
	From the definition of \greedy~it follows that there are two points $p$ and $q$ where $p\in P_{y=r.y}$
	and $q\in Q_{y \le e.y}$ such that $\square_{pq}$ is unsatisfied and $q.x=r.x$. See \Cref{fig:gadget_empty}.
	We know that either $p.x\le e.x$ or $d.x\le p.x$ because $p\notin\square_{G}$.
	But since $e.x<q.x<d.x$ (because $q.x=r.x$ and $r\in\square_{G}$), either $e$ or $d$ must be in $\square_{pq}$, contradicting the claim that $\square_{pq}$ is unsatisfied. \qedd
	
	%For the induction step, consider the time right before \greedy\;adds points to a row $i\in[e.y+2,a.y]$. Let $Q$ be the current point
	%set. Suppose that there is a point $r\in\greedy(P)\cap\square_{G}$
	%where $r.y=i$. By a similar argument, there are two
	%points $p\in P_{y=i}$ and $q\in Q_{y<i}$ such that either $p.x\le e.x$
	%or $d.x\le p.x$, and $e.x<q.x<d.x$. By the induction hypothesis,
	%$q.y<e.y+1$. This implies that either $e\in\square_{pq}$ or $d\in\square_{pq}$,
	%a contradiction. 
	%\qedd
\end{proof}

\begin{figure}
	\begin{centering}
		\includegraphics[scale=0.22]{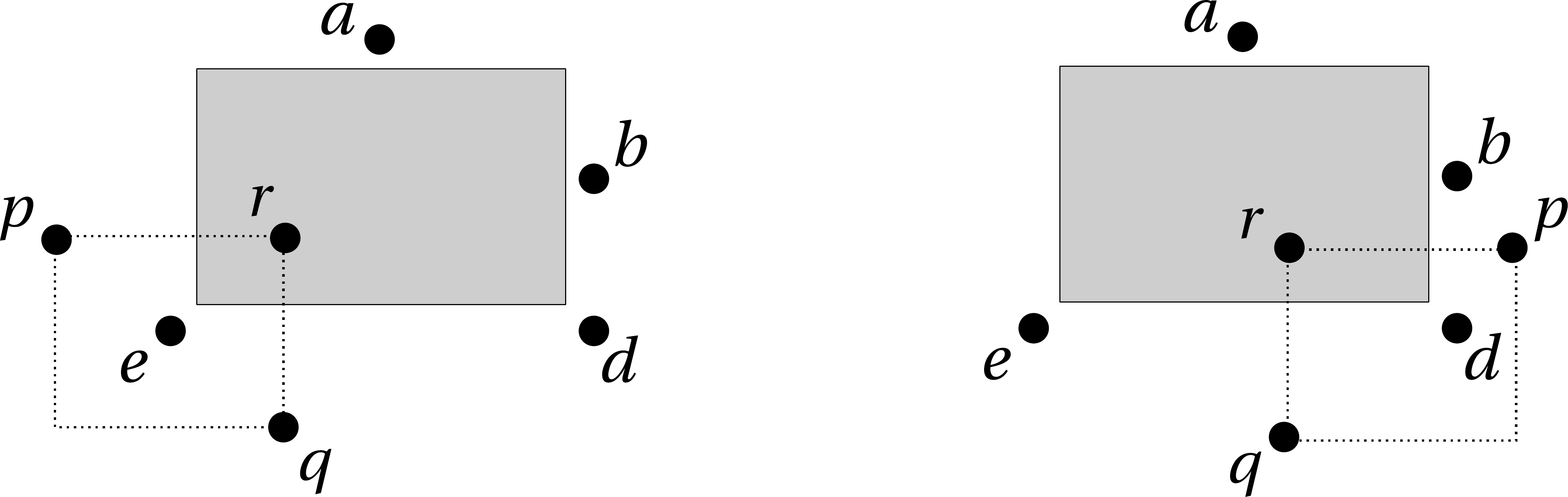}
		\par\end{centering}
	\caption{Illustration of the proof of \Cref{lem:inside gadget empty}. Suppose that $r$ is the lowest point in $\square_{G} \cap \greedy(P)$. Either $e$ or $d$ contradict the fact that $\square_{pq}$ is unsatisfied. \label{fig:gadget_empty}}
\end{figure}

By \emph{filling the gadget $G$} we mean the action of adding the
point $f = \coord{b.x}{a.y}$ into a point set $P$ containing the gadget
$G=(a,b,c,d,e)$. We observe that \greedy~always fills
an $\add$ gadget. 

\begin{lem}
If there is an $\add$ gadget $G=(a,b,c,d,e)$ in $P$, then $\coord{b.x}{a.y}\in\greedy(P)$. \label{lem:add gadget works}\end{lem}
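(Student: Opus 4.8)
The plan is to show that \greedy, when run on $P$, is eventually forced to add the point $f = \coord{b.x}{a.y}$ because of the unsatisfied rectangle $\square_{ac}$ (or more precisely, the rectangle between $a$ and an appropriate point in the column of $b$). I would argue by analyzing what happens in row $a.y$ when \greedy\ processes it. Assume, as in the previous lemma, that $e.x < d.x$ (the reflected case is symmetric). Recall that $a.x = c.x$ and $b.x = d.x$, with $a.y > b.y > c.y = d.y = e.y$, and that $f = \coord{b.x}{a.y} \notin P$. Since $a$ lies in row $a.y$, it will be one of the points $p \in P_{y = a.y}$ considered at step $a.y$.

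**The key step.** Let $Q$ be the point set just before \greedy\ processes row $a.y$. I claim there is a point $q \in Q_{y < a.y}$ with $q.x = b.x$ such that $\square_{aq}$ is unsatisfied at that moment; this forces \greedy\ to add the point $\coord{q.x}{a.y} = \coord{b.x}{a.y} = f$, which is exactly the claim. To produce such a $q$: the point $d = \coord{b.x}{e.y}$ is in $P \subseteq Q$ and lies in column $b.x$ strictly below row $a.y$, so $Q_{y<a.y}$ does contain points in column $b.x$; take $q$ to be the \emph{highest} such point (i.e.\ the one in column $b.x$, below row $a.y$, with largest $y$-coordinate, which exists since $d$ is a candidate). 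Now I must show $\square_{aq}$ is unsatisfied in $Q$. This rectangle spans $x$-range $[a.x, b.x]$ and $y$-range $[q.y, a.y]$. By \ref{lem:inside gadget empty}, $\square_G \cap \greedy(P) = \emptyset$, hence $\square_G \cap Q = \emptyset$ as well, and $\square_G$ contains the entire open interior of the rectangle spanned by $e$ and $f$ in the relevant columns; in particular no point of $Q$ sits strictly inside $\square_{aq}$ except possibly on the bounding lines $x = a.x$, $x = b.x$, $y = a.y$. The line $y = a.y$ contains no point of $Q$ other than $a$ in the relevant range (by choice, $f \notin P$ and nothing has yet been added in row $a.y$ since we are at the start of processing it and $q.x=b.x\ne a.x$). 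The column $x = a.x$ between $q.y$ and $a.y$: any point there with $y$-coordinate in $(q.y, a.y)$ would be a point of $Q$, and such a point together with the alignment structure would need checking — but $b = \coord{b.x}{b.y}$ has $b.x \ne a.x$, so $b$ is not on this line; $c = \coord{a.x}{c.y}$ has $c.y = e.y$, and I need $e.y \le q.y$, which holds because $d = \coord{b.x}{e.y} \in Q_{y<a.y}$ is in column $b.x$ and $q$ was chosen as the \emph{highest} such, so $q.y \ge e.y$. Finally the column $x = b.x$ between $q.y$ and $a.y$ is empty of $Q$-points strictly between, since $q$ was chosen highest in that column below $a.y$. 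Hence $\square_{aq}$ contains no third point of $Q$, and since $a.x \ne q.x$ and $a.y \ne q.y$ it is not axis-aligned degenerate — so it is unsatisfied.

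**Concluding.** Given an unsatisfied $\square_{aq}$ with $p = a \in P_{y = a.y}$ and $q \in Q_{y < a.y}$, the definition of \greedy\ (\ref{def:greedy}) adds the point $\coord{q.x}{a.y} = \coord{b.x}{a.y} = f$ to $Q$ during step $a.y$. Therefore $f = \coord{b.x}{a.y} \in \greedy(P)$, as required.

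**Expected main obstacle.** The delicate part is verifying that $\square_{aq}$ is genuinely unsatisfied at the precise moment row $a.y$ is processed — i.e.\ ruling out that some \emph{added} point of \greedy\ (from an earlier row) or some original point of $P$ on the boundary lines of the rectangle has satisfied it in the meantime. This is where \ref{lem:inside gadget empty} does the heavy lifting for the open interior, but the boundary-line analysis (especially column $x=a.x$ just above $c$, and ensuring no added points landed in column $b.x$ between $e.y$ and $a.y$) requires careful use of the gadget definition — particularly the facts $c.y = d.y = e.y$ and that $\square_G$ with borders removed is point-free, combined with a minimality/maximality choice of $q$. I would also need the standard monotonicity observation about \greedy: once we reach row $a.y$, every point \greedy\ has added so far lies in a row $< a.y$, and added points never appear below existing points of $P$ in their column (insertion-compatibility), which keeps column $b.x$ clean between $d$ and $f$ except for the chosen witness $q$.
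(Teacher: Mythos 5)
Your proof follows essentially the same route as the paper's: invoke \ref{lem:inside gadget empty} to keep $\square_G$ free of \greedy-added points, then look at what \greedy\ is forced to do when it processes row $a.y$, pairing a witness in that row against the highest $Q$-point in column $b.x$ below it. The interior-and-boundary case analysis you perform for $\square_{aq}$ is sound (and, as you almost note, $q.y \ge b.y > e.y$, so the whole open rectangle, including its left border $x=a.x$ and its bottom border $y=q.y$, lies strictly inside $\square_G$ and is therefore empty). The one genuine gap is exactly the one you flag at the end: you fix the row-$a.y$ witness to be $a$ itself, and then you need to rule out that, before $\square_{aq}$ is examined, \greedy\ has already added a point $(a'.x,a.y)$ with $a.x<a'.x<b.x$ that satisfies it. The paper avoids this issue cleanly by choosing the witness $a'$ to be the \emph{rightmost} point currently in row $a.y$ with $a.x\le a'.x<b.x$ (``$a'$ may be $a$''); any intermediate addition in that $x$-range merely shifts $a'$ rightward without ever reaching $x=b.x$, so $\square_{a'b'}$ stays unsatisfied until $f$ is inserted. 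Your version can also be finished directly by observing that such an intermediate $(a'.x,a.y)$ cannot be added at all: its supporting point $q'$ in column $a'.x$ must lie at or below row $e.y$ (else it sits in $\square_G$), and then $c$ or $d$ already satisfies the corresponding rectangle. Either patch closes the gap; with the paper's choice of $a'$ the argument is shortest.
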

\begin{proof}
By \Cref{lem:inside gadget empty}, it holds that $\square_{G}\cap\greedy(P)=\emptyset$,
i.e.\ \greedy\;never adds a point within $\square_{G}$. 
Consider the step when \greedy~processes row $a.y$. There must be $b'$ such that $b'.x=b.x$ and $b.y\le b'.y<a.y$, and $a'$ such that
$a'.y=a.y$ and $a.x \le a'.x < b.x$ 
 such that $\square_{a'b'}$ is unsatisfied ($b'$
may be $b$ and $a'$ may be $a$). Therefore, $f = \coord{b.x}{a.y}$ is added. \qedd
\end{proof}
A straightforward induction shows the following:
\begin{prop}
Let $P$ be a point set and $Q\subseteq\greedy(P)$. Then, $\greedy(Q)=\greedy(P)$.\label{prop:greedy monotone} 
\end{prop}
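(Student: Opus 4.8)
The plan is an induction on the row index $i$, tracking the state of $\greedy$ on $Q$ against the state of $\greedy$ on $P$ and showing that the two accumulators coincide on all rows processed so far. I would use here that $P\subseteq Q$ (this holds in every application of the proposition, where $Q$ is a point set obtained by \emph{adding} points to $P$; without it the statement is false, e.g.\ $P=\{\coord11,\coord22\}$, $Q=\{\coord22\}$). Recall from \ref{def:greedy} (see also \cite{DHIKP09}) that $\greedy$ processes rows $1,\dots,n$ in order and, at step $i$, only inserts points into row $i$, so rows $1,\dots,i$ are frozen once step $i$ finishes. Writing $L\eqdef\greedy(P)_{y<i}$, the row that $\greedy$ places at height $i$ is a well-defined function of $L$ and the original access set $T\eqdef P_{y=i}$: it equals $\mathrm{Gr}(L,T)\eqdef T\cup\mathrm{New}(L,T)$, where $\mathrm{New}(L,T)\eqdef\{\coord{q.x}{i}\mid q\in L,\ \exists p\in T:\ p.x\neq q.x\text{ and }\square_{pq}\cap L=\{q\}\}$, i.e.\ one adds a point directly above every lower point that some access point ``sees'' through an $L$-empty rectangle. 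In particular $\greedy(P)_{y=i}=\mathrm{Gr}(L,T)$.

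The induction hypothesis is: after $\greedy(Q)$ completes step $i-1$, its accumulator restricted to rows $<i$ equals $L=\greedy(P)_{y<i}$ (vacuous at $i=1$). Since $\greedy(Q)$ has not yet touched row $i$, at the start of step $i$ its row-$i$ content is the original set $Q_{y=i}$ and its lower content is $L$. As $P\subseteq Q\subseteq\greedy(P)$, we get $T=P_{y=i}\subseteq Q_{y=i}\subseteq\greedy(P)_{y=i}=\mathrm{Gr}(L,T)$. Thus the inductive step reduces to a purely combinatorial one-row claim: \emph{if $T\subseteq T'\subseteq\mathrm{Gr}(L,T)$ then $\mathrm{Gr}(L,T')=\mathrm{Gr}(L,T)$} (apply it with $T'\eqdef Q_{y=i}$). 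Granting this, $\greedy(Q)$ outputs $\mathrm{Gr}(L,T)=\greedy(P)_{y=i}$ in row $i$, so the accumulators agree through row $i$; at $i=n$ this gives $\greedy(Q)=\greedy(P)$.

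For the one-row claim, the inclusion $\mathrm{Gr}(L,T')\supseteq\mathrm{Gr}(L,T)$ is immediate, since $\mathrm{New}(L,\cdot)$ is monotone in its trigger argument. For $\mathrm{Gr}(L,T')\subseteq\mathrm{Gr}(L,T)$, since $T'\subseteq\mathrm{Gr}(L,T)$ it is enough to show $\mathrm{New}(L,T')\subseteq\mathrm{Gr}(L,T)$. Take $p\in T'$. If $p\in T$, everything $p$ sees already lies in $\mathrm{New}(L,T)$. Otherwise $p\in\mathrm{New}(L,T)$, so $p=\coord{q''.x}{i}$ for some $q''\in L$ seen by some $p^{*}\in T$, i.e.\ $p^{*}.x\neq q''.x$ and $\square_{p^{*}q''}\cap L=\{q''\}$. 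The heart of the argument — and the step I expect to be the main obstacle — is the claim that every lower point $q$ seen by $p=\coord{q''.x}{i}$ is already seen by $p^{*}$, so that $\coord{q.x}{i}\in\mathrm{New}(L,T)\subseteq\mathrm{Gr}(L,T)$.

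To prove that claim I would fix $q\in L$ with $q.x\neq q''.x$ and $\square_{pq}\cap L=\{q\}$; by symmetry assume $q.x<q''.x$. Since $q''$ lies on the side $x=q''.x$ of $\square_{pq}$ and is not allowed inside it, necessarily $q''.y<q.y\ (<i)$. Then I would show $\square_{p^{*}q}\cap L=\{q\}$ by contradiction: suppose $s\in L$, $s\neq q$, $s\in\square_{p^{*}q}$, so $q.x\le s.x\le p^{*}.x$ and $q.y\le s.y<i$. If $s.x\le q''.x$, then $s$ lies in $\square_{pq}$ (the rectangle with opposite corners $\coord{q.x}{q.y}$ and $\coord{q''.x}{i}$), forcing $s=q$ — contradiction. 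Otherwise $q''.x<s.x\le p^{*}.x$, so in particular $p^{*}.x>q''.x$, and then $s\in\square_{p^{*}q''}$ (its $x$-range is $[q''.x,p^{*}.x]$ and $s.y\in(q''.y,i)$), forcing $s=q''$ — impossible since $s.x\neq q''.x$. The boundary coordinates $s.x\in\{q.x,p^{*}.x\}$ and $s.y=q.y$ are disposed of by the same two ``pushes''. A short preliminary observation handles the sub-case $p^{*}.x<q''.x$: there one first notes $q.x<p^{*}.x$ (otherwise $q\in\square_{p^{*}q''}$, contradicting $\square_{p^{*}q''}\cap L=\{q''\}$), after which every potential $s$ automatically has $s.x\le p^{*}.x<q''.x$ and is killed by the first push; also this gives $p^{*}.x\neq q.x$. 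Hence in all cases $\square_{p^{*}q}\cap L=\{q\}$ and $p^{*}.x\neq q.x$, so $p^{*}$ sees $q$, which closes the one-row claim and therefore the proposition. The delicate point throughout is that ``visibility'' is defined only relative to the lower set $L$, so it is exactly the hypothesis $\square_{p^{*}q''}\cap L=\{q''\}$ — and nothing about the points in row $i$ — that must rule out all obstructions in $\square_{p^{*}q}$.
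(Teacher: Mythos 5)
Your proof is correct. The paper itself gives no argument — it states the proposition and appends only ``A straightforward induction shows the following'' — so there is no proof in the paper to compare against; your row-by-row induction, reduced to the one-row identity $\mathrm{Gr}(L,T')=\mathrm{Gr}(L,T)$ for $T\subseteq T'\subseteq\mathrm{Gr}(L,T)$, is the natural way to discharge that remark, and the geometric core (any $q$ visible through $L$ from the derived point $p=\coord{q''.x}{i}$ is already visible from the original trigger $p^{*}$) is carried out correctly, with the case split on $p^{*}.x$ versus $q''.x$ and the two ``pushes'' into $\square_{pq}$ and $\square_{p^{*}q''}$ handling all obstructions, boundary coordinates included. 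You also spotted, correctly, that the proposition as literally printed (only $Q\subseteq\greedy(P)$) is false and that the tacit hypothesis $P\subseteq Q$ is needed; this hypothesis does hold in every application in the paper (where $Q$ grows from $P\cup\boxx$ inside $\greedy(P\cup\boxx)$), so it is a silent assumption rather than an error in usage, but your counterexample makes the gap in the stated hypothesis explicit. The one step worth spelling out in a final writeup is the passage from the paper's informal ``$\square_{pq}$ is not satisfied'' to the $L$-only visibility predicate $\square_{pq}\cap L=\{q\}$ underlying your $\mathrm{Gr}(L,T)$: this is the standard funnel formulation and the only reading under which $\greedy$'s per-row additions are order-independent and the output is in fact a satisfied set (as the paper asserts), so your formulation is the right one, but since the whole argument is phrased through it, a sentence acknowledging this normalization would make the proof self-contained.
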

Let $\boxx=\{\coord i0|~i\in[n]\}\cup\{\coord 0i|~0\le i\le n\}\cup\{\coord{n+1}i|~0\le i\le n\}$. See \Cref{fig:boxx}.
We state two easy observations. 

\begin{figure}
	\begin{centering}
		\includegraphics[scale=0.2]{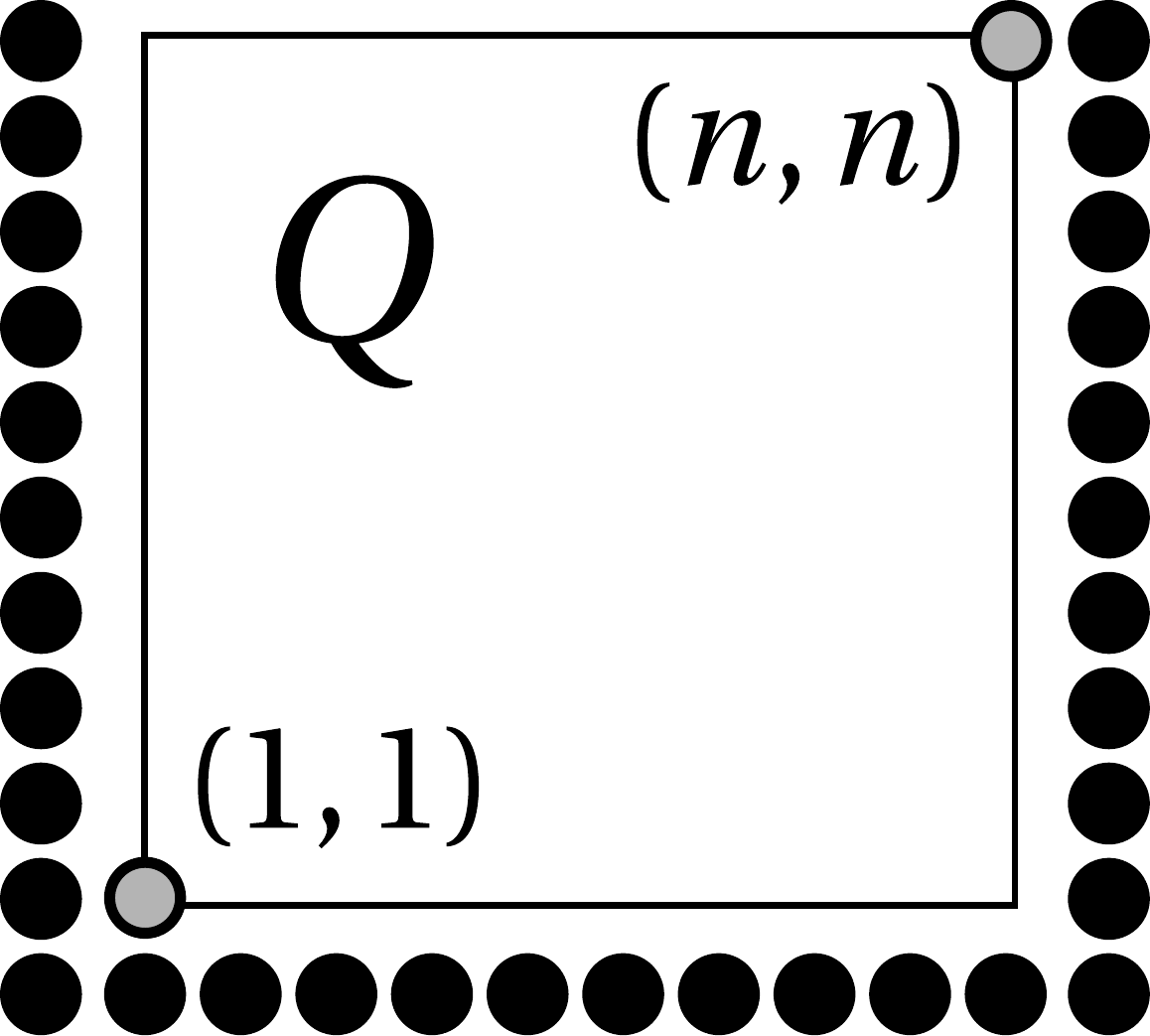}
		\par\end{centering}
	\caption{Illustration of $Q\cup\boxx$ where $Q \subseteq [n]\times[n]$ is a point set.\label{fig:boxx}}
\end{figure}

\begin{prop}
\label{prop:base}Let $Q\subseteq[n]\times[n]$ be an arbitrary point
set. Then, 
\begin{enumerate}[(i)]
\item $\greedy(Q\cup\boxx)=\greedy(Q)\cup\boxx$. 
\item $Q\cup\boxx$ is satisfied iff $Q$ is satisfied. 
\end{enumerate}
\end{prop}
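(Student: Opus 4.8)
The plan is to prove~(ii) first and then deduce~(i) by a row‑by‑row induction on the execution of \greedy. Throughout I use the obvious extension of Definition~\ref{def:greedy} to point sets with $x$‑coordinates in $\{0,\dots,n+1\}$ and $y$‑coordinates in $\{0,\dots,n\}$ (so \greedy\ now also has a vacuous step for row $0$); this is the convention already implicit in \ref{lem:inside gadget empty} and afterwards. The one structural fact I would isolate first is that $\boxx$ contains the \emph{full} bottom row $\{\coord x0\mid 0\le x\le n+1\}$, the \emph{full} left column $\{\coord 0y\mid 0\le y\le n\}$, and the \emph{full} right column $\{\coord{n+1}y\mid 0\le y\le n\}$ (the corners $\coord00$ and $\coord{n+1}0$ coming from the column parts), and that $\boxx\cap([n]\times[n])=\emptyset$. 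From this I extract a ``box absorbs everything'' lemma: if $q\in\boxx$ and $p\neq q$ is any point with $p.x\in\{0,\dots,n+1\}$ and $p.y\in\{0,\dots,n\}$, then $\square_{pq}$ is satisfied, with a witness in $\boxx$. The proof is three short checks: if $q$ is in the bottom row use $\coord{p.x}0$ (which lies on the bottom edge of $\square_{pq}$), if $q$ is in a side column use $\coord0{p.y}$ or $\coord{n+1}{p.y}$, and aligned or degenerate pairs are immediate.

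For~(ii): in the direction ``$Q\cup\boxx$ satisfied $\Rightarrow$ $Q$ satisfied'', note that for $p,q\in Q$ the rectangle $\square_{pq}$ lies inside $[n]\times[n]$, which contains no point of $\boxx$; hence any witness for $\square_{pq}$ in $Q\cup\boxx$ is actually a point of $Q$, so $Q$ is satisfied. Conversely, given a pair $p,q\in Q\cup\boxx$: if both lie in $Q$, a $Q$‑witness (which exists since $Q$ is satisfied) also witnesses $\square_{pq}$ in the superset; otherwise at least one of $p,q$ lies in $\boxx$ and the box‑absorbs‑everything lemma applies directly.

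For~(i): I would run \greedy\ on $Q$ and on $Q\cup\boxx$ in lockstep, letting $G_j$ and $W_j$ denote their states after all rows of index $\le j$ have been processed (with $G_0:=Q$), and prove $W_j=G_j\cup\boxx$ by induction on $j=0,1,\dots,n$; taking $j=n$ gives the claim. The base case $j=0$ is immediate: row $0$ of $Q\cup\boxx$ consists only of $\boxx$‑points and has no row below it, so nothing is added. For the step, the IH gives $W_{j-1}=G_{j-1}\cup\boxx$; an easy induction shows every point \greedy\ adds to $Q$ has $x$‑coordinate in $[n]$ and lies in a processed row, so $G_{j-1}\subseteq[n]\times[n]$, and since $\boxx$ avoids that square, the $[n]\times[n]$‑content of $W_{j-1}$ is exactly $G_{j-1}$. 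Now consider the row‑$j$ processing of \greedy\ on $Q\cup\boxx$, i.e.\ the iterated rule ``if some $p$ in row $j$ and some present $q$ with $q.y<j$ have $\square_{pq}$ unsatisfied, add $\coord{q.x}j$''. Classify the candidate pairs $(p,q)$ by the position of $p$: (a) $p\in\{\coord0j,\coord{n+1}j\}$; (b) $p\in Q_{y=j}$ and $q\in\boxx$; (c) $p\in Q_{y=j}$ and $q\in[n]\times[1,j-1]$. By the box‑absorbs‑everything lemma, every pair of type (a) or (b) is satisfied in any state that still contains $\boxx$ (and \greedy\ never deletes points), so such pairs never trigger an addition; for a pair of type (c) the rectangle $\square_{pq}$ lies in $[n]\times[n]$, so whether it is satisfied depends only on the current $[n]\times[n]$‑content, and the point it would add, $\coord{q.x}j$, again lies in $[n]\times[n]$. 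Hence the row‑$j$ processing on $Q\cup\boxx$ leaves $\boxx$ untouched and, on the $[n]\times[n]$ part, is \emph{verbatim} the row‑$j$ processing of \greedy\ on $Q$ (whose state $G_{j-1}$ has all its rows below $j$ inside $[n]\times[1,j-1]$, so all its pairs have exactly the ``type (c)'' shape). Therefore the $[n]\times[n]$‑contents agree after row $j$, i.e.\ $W_j=G_j\cup\boxx$.

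The step I expect to require the most care is this last one: because \greedy's processing of a single row is iterative — each added point can satisfy further rectangles and is itself available as a future corner — I must argue that the extra $\boxx$‑points are genuinely \emph{inert}: they never become the ``active'' corner of an unsatisfied rectangle and never alter which points are added inside $[n]\times[n]$. This is exactly what the type‑(a)/(b) analysis buys, but it relies on the explicit, ``closed‑up'' description of $\boxx$ — in particular on the presence of the corners $\coord00$ and $\coord{n+1}0$, which is what makes the bottom row and both side columns complete; dropping any of these pieces would break the argument. The remaining bookkeeping — formalizing ``state after row $j$'' and the extension of Definition~\ref{def:greedy} to the larger coordinate box — is routine and I would dispatch it in a sentence.
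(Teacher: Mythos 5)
The paper does not supply a proof of this proposition---it is introduced with ``We state two easy observations''---so there is nothing of the authors' to compare against; I am just checking your argument on its own terms. It is correct. The ``box absorbs everything'' lemma is the right isolation of the key fact, and your three-case check (bottom row / left column / right column, using the closed-up corners $\coord 00$, $\coord{n+1}0$) is complete. Part~(ii) then follows immediately since $\boxx\cap([n]\times[n])=\emptyset$ forces any witness for a rectangle spanned by two $Q$-points to come from $Q$. For part~(i), the row-by-row lockstep coupling is the natural proof: the inductive invariant $W_j = G_j\cup\boxx$ together with the observation that \greedy\ only ever adds points in $[n]\times[n]$ (so the $[n]\times[n]$-slice of $W_{j-1}$ is exactly $G_{j-1}$) lets you split the candidate pairs into (a) box-$p$, (b) box-$q$, and (c) both in $[n]\times[n]$; (a) and (b) are permanently satisfied by box witnesses, so they are inert and never trigger an addition, while (c) pairs see only the $[n]\times[n]$ content and produce additions inside $[n]\times[n]$. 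This shows the two row-$j$ executions can be mirrored step for step, which is exactly what you need even allowing for the nominal nondeterminism in the order of processing within a row. The one presentational quibble: your opening sentence says you will ``deduce~(i)'' after proving~(ii), but (i) is proved independently---the box lemma is the shared ingredient, not~(ii) itself.
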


\begin{figure}
	\begin{centering}
		\includegraphics[width=0.18\textwidth]{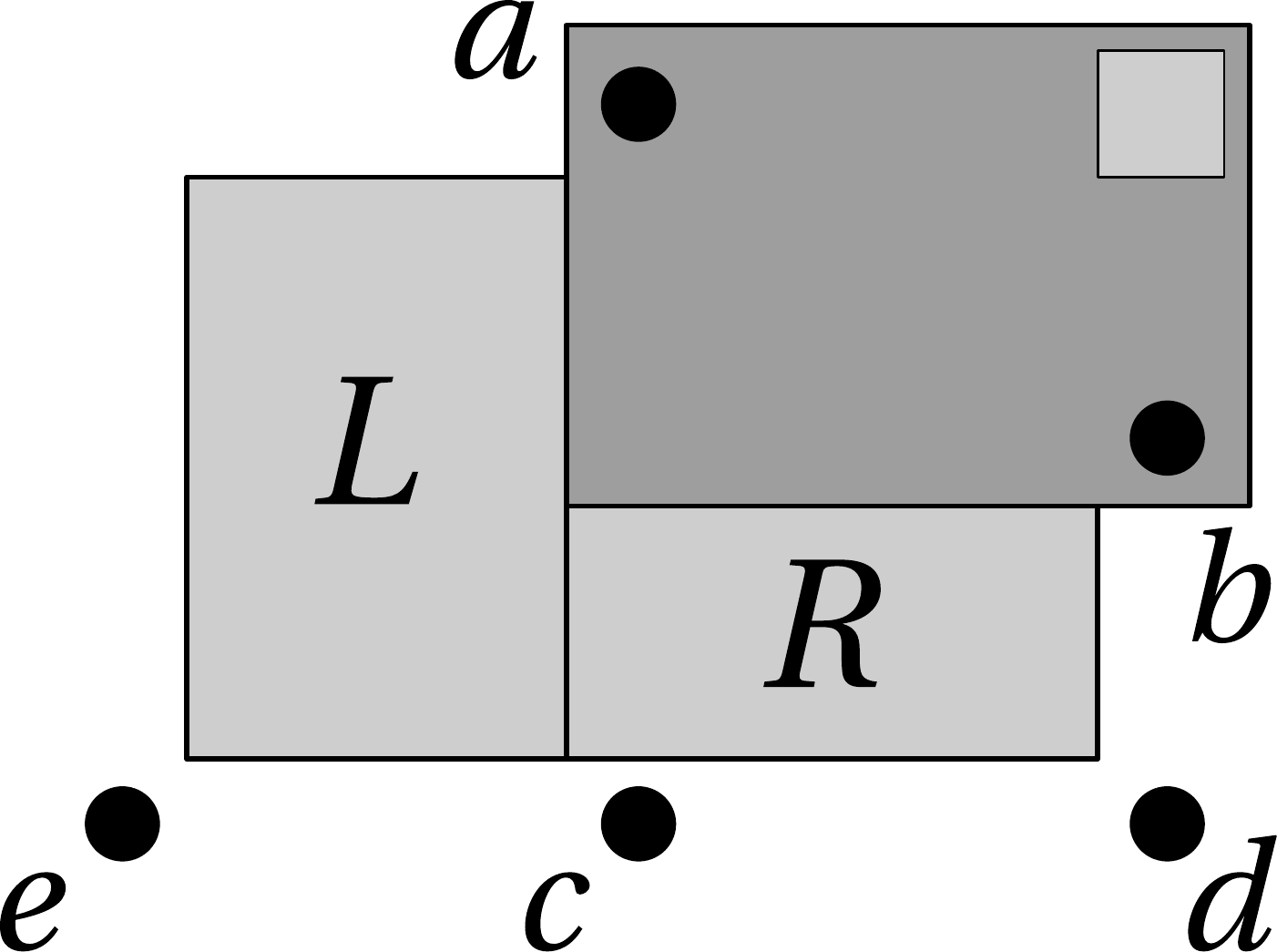} 
		\par\end{centering}
	
	\caption{$\protect\add$ gadget $G=(a,b,c,d,e)$ from the proof of \ref{lem:cannot add imply sat}, where $\square_{ab}$ is unsatisfied and $\square_{G}$ is partitioned
		into $L$ and $R$.\label{fig:add unsat}}
\end{figure}

\begin{lem}
Let $Q\subseteq[n]\times[n]$ be an arbitrary point set. There is
no $\add$ gadget in $(Q\cup\boxx)$ iff $Q$ is satisfied.\label{lem:cannot add imply sat}\end{lem}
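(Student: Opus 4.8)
The statement has two directions, and the plan is to prove the contrapositive of each. For the easy direction, suppose $Q$ is not satisfied. Then there is some pair $p, q \in Q$ with $\square_{pq}$ unsatisfied; pick such a pair minimizing the area (or, say, the height $|p.y - q.y|$) of $\square_{pq}$, so that no point of $Q$ lies in the open rectangle and none lies on its boundary except $p, q$ themselves. Without loss of generality $p.y > q.y$ and $p.x < q.x$ (the other orientations are symmetric). I would then exhibit an $\add$ gadget: take $a = p$, $c = q$, let $b$ be the point whose column is to the right and whose $y$-coordinate sits strictly between — here the role of $\boxx$ is essential, since the bottom and side walls of $\boxx$ guarantee the existence of the points $b, d, e$ needed to close off the gadget even when $p$ or $q$ is extreme in $Q$. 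Concretely, $e$ should be the boundary point of $Q \cup \boxx$ directly below-left (on the floor $y=0$ or the left wall $x=0$), and $d$ the corresponding point on the right. One checks that $f = \coord{b.x}{a.y} \notin Q \cup \boxx$ (it would have to be inside or on the unsatisfied rectangle, contradicting minimality) and that $\square_G \cap (Q \cup \boxx) = \emptyset$ (again by minimality of $\square_{pq}$, after verifying $\square_G \subseteq \square_{pq}$ up to the stripped borders). This produces the gadget, so the ``no $\add$ gadget'' hypothesis fails.

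For the converse, suppose there is an $\add$ gadget $G = (a,b,c,d,e)$ in $Q \cup \boxx$; I want to conclude $Q$ is not satisfied, and by \ref{prop:base}(ii) it suffices to show $Q \cup \boxx$ is not satisfied. The natural candidate for an unsatisfied rectangle is $\square_{ab}$. I would argue that $\square_{ab}$ contains no point of $Q \cup \boxx$ other than $a$ and $b$: partition $\square_{ab}$ (minus its borders) into a left part $L$ and a right part $R$ by the column $x = c.x = a.x$ — wait, $a.x = c.x$, so more carefully, $\square_{ab}$ decomposes using the interior structure as suggested in \ref{fig:add unsat} into two sub-rectangles, one of which is contained in $\square_G$ (hence empty by the gadget definition) and the other of which is a thin strip bordered by the alignment lines through $c, d$ (or through $b$), so any point in it would have to lie on one of those stripped border lines. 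I then need to rule out points exactly on those lines inside $\square_{ab}$: a point on the column of $b$ strictly between $b.y$ and $a.y$ together with $a$ would be fine for satisfaction but I want it *absent*; here the condition $f \notin P$ handles the corner, and for the rest one uses that $c = \coord{a.x}{c.y}$, $d = \coord{b.x}{c.y}$ are the only gadget points on those lines and they lie at or below $c.y < b.y$, hence outside $\square_{ab}$. Assembling these, $\square_{ab}$ is unsatisfied, so $Q \cup \boxx$ is unsatisfied, so $Q$ is unsatisfied.

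The main obstacle I anticipate is the careful bookkeeping in the forward (gadget-construction) direction: one must choose $b, d, e$ so that *all four* gadget conditions hold simultaneously — the relative-position constraints, $f \notin P$, and $\square_G \cap P = \emptyset$ — and this requires picking $b$ as close as possible to the column of $q$ and $e$ as the nearest floor/wall point, then leaning on the minimality of $\square_{pq}$ to evacuate the interior. The interaction with $\boxx$ is what makes this go through in the edge cases (when $p$ or $q$ is leftmost/bottommost in $Q$), and getting the degenerate cases right (aligned pairs, points of $\boxx$ on gadget borders) is the fiddly part. The converse direction is more mechanical once the right decomposition of $\square_{ab}$ is drawn, essentially reading off \ref{fig:add unsat}. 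I would also note that \ref{lem:add gadget works} and \ref{lem:inside gadget empty} are not needed for this particular lemma — they are the tools for the *algorithmic* equivalence of $\greedy$ and $\greedy_{nondet}$ that this lemma feeds into — so the proof here is purely combinatorial/geometric.
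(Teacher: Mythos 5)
Your proposal contains genuine gaps in both directions.

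For the direction ``$\add$ gadget exists $\Rightarrow Q$ unsatisfied,'' your plan is to show that $\square_{ab}$ itself is unsatisfied, and you try to dispose of potential witnesses on the stripped border lines by saying that ``$c$, $d$ are the only gadget points on those lines.'' This does not work: the gadget definition only constrains the five named points and the interior $\square_G$; it places no restriction on \emph{other} points of $Q \cup \boxx$ lying on the stripped column $x=b.x$ (strictly between $b.y$ and $a.y$) or the stripped row $y=a.y$ (strictly between $a.x$ and $b.x$). Such a point is perfectly legal and would satisfy $\square_{ab}$. The paper sidesteps this precisely by \emph{not} using $\square_{ab}$: it picks $a'$ to be the rightmost point in row $a.y$ with $a.x \le a'.x \le b.x$ and $b'$ to be the highest point in column $b.x$ with $b.y \le b'.y \le a.y$, and then shows $\square_{a'b'}$ is unsatisfied. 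This shrinking step is load-bearing and your sketch omits it.

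For the direction ``$Q$ unsatisfied $\Rightarrow$ gadget exists,'' there are two problems. First, the assignment $a = p$, $c = q$ is structurally inconsistent with the gadget: $a$ and $c$ are required to share an $x$-coordinate, whereas an unsatisfied pair $p,q$ necessarily has $p.x \ne q.x$; the correct identification is $a=p$, $b=q$ (with $c,d,e$ to be constructed). Second, and more seriously, the claim ``$\square_G \subseteq \square_{pq}$ up to the stripped borders'' is backwards: $\square_G = \square_{ef}$ stripped, and $e.x < a.x$, $e.y < b.y$, so $\square_G$ is \emph{strictly larger} than $\square_{ab}$, extending further left and further down. Minimality of $\square_{pq}$ therefore says nothing about points in $\square_G \setminus \square_{ab}$, and evacuating that region is in fact the hard content of the lemma. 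The paper does this with a specific minimization (choose $(a,b)$ with $b.y$ minimal, then among valid $(c,d,e)$ maximize first $e.y$ then $e.x$), followed by a partition of $\square_G \setminus \square_{ab}$ into $L$ and $R$ and a two-stage contradiction argument (show $L$ nonempty forces $R$ nonempty; show $R$ nonempty contradicts the choice of $(a,b)$). Your criterion of minimizing the area or height of $\square_{pq}$ is not obviously sufficient to drive those contradictions, and you do not attempt the case analysis at all. Your observations that $\boxx$ is essential for the construction and that Lemmas \ref{lem:inside gadget empty} and \ref{lem:add gadget works} are not needed for this lemma are both correct, but the core arguments in both directions are missing or flawed.
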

\begin{proof}
One direction is trivial. Suppose there is an $\add$ gadget $G=(a,b,c,d,e)$
in $(Q\cup\boxx)$. Assume w.l.o.g.\ that $e.x<d.x$. Let $a'\in (Q\cup\boxx)$
be a point such that $a'.y=a.y$ and $a.x\le a'.x\le b.x$ where $a'.x$
is maximal. Let $b'\in Q\cup\boxx$ be a point such that $b'.x=b.x$
and $b.y\le b'.y\le a.y$ where $b'.y$ is maximal. Then 
$\square_{a'b'}$ is unsatisfied because $\square_{G}\cap Q=\emptyset$
and $\coord{b.x}{a.y}\notin Q$. Hence $(Q\cup\boxx)$ is unsatisfied,
and so is $Q$ by \ref{prop:base}(ii).

Now, suppose that $Q$ is unsatisfied, and thus, $(Q\cup\boxx)$ is unsatisfied
by \ref{prop:base}(ii). Let $a,b\in (Q\cup\boxx)$ be such that $\square_{ab}$
is unsatisfied and $b.y$ is minimal. Assume by symmetry
that $a.x<b.x$. Let $c,d,e\in (Q\cup\boxx)$ be such that the relative
positions of $a,b,c,d,e$ are according to \ref{fig:add gadget}.
When there is more than one choice, we first maximize $e.y$, and then maximize
$e.x$. We know that such $c,d,e$ exist because $\coord 00,\coord{a.x}0,\coord{b.x}0\in\boxx$. (This is the reason we consider $Q \cup \boxx)$ and not just $Q$.)
We claim that $G=(a,b,c,d,e)$ is an $\add$ gadget in $Q\cup\boxx$.

By construction (1) $\{a,b,c,d,e\}\subseteq (Q\cup\boxx)$,
(2) the relative positions of $a,b,c,d,e$ are according to \ref{fig:add gadget},
and (3) $\coord{b.x}{a.y}\notin (Q\cup\boxx)$ as $\square_{ab}$ is
unsatisfied. It remains to show that $\square_{G}$ is empty (i.e.\ contains no point from $Q\cup\boxx$). First note that $\square_{G}\cap\square_{ab}$
is empty because $\square_{ab}$ is unsatisfied. 
Partition $\square_{G}\setminus\square_{ab}$ into disjoint parts $L$ and $R$, where for each point $p\in\square_{G}\setminus\square_{ab}$, we have $p\in L$
if $p.x<c.x$, and $p\in R$ otherwise. (See \ref{fig:add unsat}.) Suppose
$\square_{G}\setminus\square_{ab}$ is not empty. Then we claim that $R$ is not empty. 

\begin{figure}
	\begin{centering}
		\includegraphics[scale=0.23]{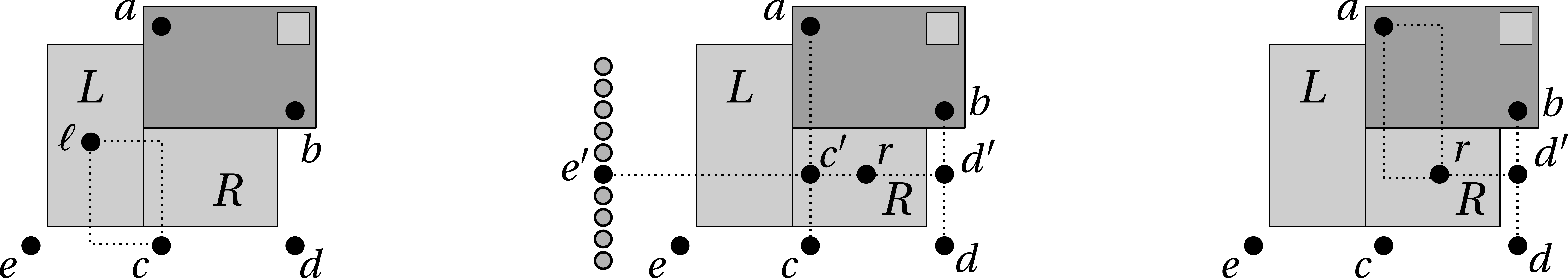}\caption{Illustration of the proof of \ref{lem:cannot add imply sat}. \textbf{(left)} Suppose that $L$ is not empty but $R$ is empty.
		Then $\square_{lc}$ is unsatisfied and $c.y<b.y$ which contradicts
		the choice of $a$ and $b$. \textbf{(middle)} If $c',d'\in Q\cup\boxx$,
		then $(c',d',e')$ contradicts the choice of $(c,d,e)$ where $e.y$
		is maximal. \textbf{(right) }If $c'\protect\notin Q\cup\boxx$, then
		$\square_{ar}$ is unsatisfied and contradicts the choice of $a$
		and $b$. If $d'\protect\notin Q\cup\boxx$, we reach a contradiction similarly.\label{fig:no_add_then_sat}}
	\end{centering}
\end{figure}

\begin{claim}
If $\square_{G}\setminus\square_{ab}$ contains a point in $Q\cup\boxx$,
then $R$ contains a point in $Q\cup\boxx$.\end{claim}
\begin{proof}
Suppose for contradiction that there is a point in $L$ but $R$ is empty. Observe that (1) $\{\coord i{e.y}\mid e.x<i<c.x\}$ is empty
because $e$ is such that $e.x$ is maximal, and (2) $\{\coord{c.x}i\mid e.y<i<a.y\}$
is empty because $R\cup\square_{ab}$ is empty. Consider a rightmost, bottommost point $\ell\in L\cap(Q\cup\boxx)$. We have that $\square_{\ell c}$
is unsatisfied and $c.y<b.y$. See \Cref{fig:no_add_then_sat} (left). But we chose $a$ and $b$ such that
$\square_{ab}$ is unsatisfied and $b.y$ is minimal. This is a contradiction. \qedd
\end{proof}
By the above claim, let $r\in R\cap(Q\cup\boxx)$ where $r.y$ is
maximal. We will show a contradiction. (See \Cref{fig:no_add_then_sat}.)

\begin{claim}
Either $c'=\coord{a.x}{r.y}\notin (Q\cup\boxx)$ or $d'=\coord{b.x}{r.y}\notin (Q\cup\boxx)$.\end{claim}
\begin{proof}
Suppose otherwise. See \Cref{fig:no_add_then_sat} (middle). Then there are points $e'=\coord 0{r.y},c',d'\in (Q\cup\boxx)$, such that $(a,b,c',d',e')$ are in the desired configuration, but $e'.y>e.y$. This contradicts the choice of $c,d,e$. \qedd
\end{proof}
By the above claim, there are two cases. If $\coord{a.x}{r.y}\notin (Q\cup\boxx)$,
then we choose $r$ such that $r.x$ is minimal (among those with the same $r.y$). Let $a'\in (Q\cup\boxx)$
be such that $a'.x=a.x$ and $r.y<a'.y\le a.y$ where $a'.y$ is minimal
(possibly $a'=a$). We have that $\square_{a'r}$ is unsatisfied and,
moreover, $r.y<b.y$. This contradicts the choice of $a$ and $b$. See \Cref{fig:no_add_then_sat} (right).
If $\coord{b.x}{r.y}\notin (Q\cup\boxx)$, then we choose $r$ such
that $r.x$ is maximal (among those with the same $r.y$). Then the argument is symmetric. 

We conclude that $R$ is empty, and therefore, $\square_G$ must be empty. \qedd
\end{proof}

The above lemmas already suggest the definition of $\greedy_{nondet}$. 
\begin{defn}
[Non-deterministic \greedy]Given a set $P\subseteq[n]\times[n]$,
$\greedy_{nondet}$ works as follows. Initially, set $Q=(P\cup\boxx)$.
Repeatedly find an \emph{arbitrary} $\add$ gadget $G=(a,b,c,d,e)$
in $Q$. Then, fill $G$ (i.e.\ add point $f = \coord{b.x}{a.y}$ to
$Q$). Let $Q$ be the final point set where there is no $\add$ gadget
in $Q$. Let $\greedy_{nondet}(P)=(Q\setminus\boxx)$.\label{def:nondet greedy} 
\end{defn}
Now we show that $\greedy_{nondet}$ and \greedy\;are identical. 
\begin{thm}
For every point set $P\subset[n]\times[n]$, we have $\greedy_{nondet}(P)=\greedy(P)$.\label{thm:greedy nondet}\end{thm}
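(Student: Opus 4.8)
The plan is to show the two point sets coincide by a double-inclusion argument, leveraging \Cref{prop:greedy monotone}, \Cref{prop:base}, and the two gadget lemmas already established.

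\textbf{Step 1: $\greedy(P) \supseteq \greedy_{nondet}(P)$.} I would argue that every point added by $\greedy_{nondet}$ is also in $\greedy(P)$. By \Cref{prop:base}(i), $\greedy(P\cup\boxx) = \greedy(P)\cup\boxx$, so it suffices to work inside $Q_0 := P\cup\boxx$ and show that the filling process never leaves $\greedy(Q_0)$. This is an induction on the number of gadget-fills. Suppose at some intermediate stage the current set $Q$ satisfies $Q \subseteq \greedy(Q_0)$ (true initially since $Q_0 \subseteq \greedy(Q_0)$ trivially). If $G = (a,b,c,d,e)$ is an $\add$ gadget in $Q$, I need that $f = \coord{b.x}{a.y} \in \greedy(Q_0)$. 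Here is the subtle point: $G$ is a gadget \emph{in $Q$}, but I want to invoke \Cref{lem:add gadget works}, which speaks of gadgets in an arbitrary point set. Since $Q \subseteq \greedy(Q_0)$ and (by \Cref{prop:greedy monotone}) $\greedy(Q) = \greedy(Q_0)$, and since $\square_G \cap Q = \emptyset$ — but I also need $\square_G \cap \greedy(Q_0) = \emptyset$ to conclude the gadget survives into $\greedy$. I would check that the gadget conditions are about $\greedy(Q_0)$: since $Q \subseteq \greedy(Q_0) = \greedy(Q)$, if $\square_G$ contained a point of $\greedy(Q)$ then... actually this requires care. The cleaner route: apply \Cref{lem:add gadget works} with the point set $Q$ itself to get $f \in \greedy(Q) = \greedy(Q_0) = \greedy(P)\cup\boxx$; and $f \notin \boxx$ since $a.y, b.x \in [n]$ lie strictly inside, so $f \in \greedy(P)$. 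The one gap to close is confirming $G$ is genuinely an $\add$ gadget in $Q$ in the sense required by \Cref{lem:add gadget works} (it is, by the induction hypothesis that each fill produced a gadget in the then-current $Q$). By induction, when $\greedy_{nondet}$ terminates, its output $Q\setminus\boxx \subseteq \greedy(P)$.

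\textbf{Step 2: $\greedy(P) \subseteq \greedy_{nondet}(P)$.} Let $Q^*$ be the terminal set of $\greedy_{nondet}$, so $Q^* = \greedy_{nondet}(P)\cup\boxx$ and $Q^*$ contains \emph{no} $\add$ gadget. By \Cref{lem:cannot add imply sat} applied to $Q := \greedy_{nondet}(P)$ — noting $Q^* = Q\cup\boxx$ has no $\add$ gadget — we get that $\greedy_{nondet}(P)$ is \emph{satisfied}. Now $P \subseteq \greedy_{nondet}(P)$ and $\greedy_{nondet}(P)$ is satisfied; I claim this forces $\greedy(P) \subseteq \greedy_{nondet}(P)$. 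Here I invoke \Cref{prop:greedy monotone}: since $P \subseteq \greedy_{nondet}(P)$, we have $\greedy(\greedy_{nondet}(P)) = \greedy(P)$. But $\greedy_{nondet}(P)$ is already satisfied, and $\greedy$ run on a satisfied set adds nothing (immediate from \Cref{def:greedy}: if every $\square_{pq}$ is satisfied there is no unsatisfied pair to act on), so $\greedy(\greedy_{nondet}(P)) = \greedy_{nondet}(P)$. Combining, $\greedy(P) = \greedy_{nondet}(P)$... wait, this actually gives equality directly once Step 1 is in hand, since $\greedy(P) = \greedy(\greedy_{nondet}(P)) \supseteq \greedy_{nondet}(P)$, and Step 1 gives the reverse. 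So Step 2 alone essentially closes it, and Step 1 becomes the confirmation that the process doesn't overshoot.

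\textbf{Main obstacle.} I expect the delicate part to be Step 1 — specifically, the bookkeeping to ensure that "gadget in the current $Q$" transfers correctly through the inductive invocation of \Cref{lem:add gadget works}, given that the gadget emptiness conditions ($\square_G \cap \cdot = \emptyset$, $f \notin \cdot$) are stated relative to a point set and I must be sure which point set (the current $Q$, versus $\greedy(Q_0)$) they are being checked against. \Cref{lem:inside gadget empty} is what saves this: it guarantees $\square_G \cap \greedy(Q) = \emptyset$ whenever $G$ is a gadget in $Q$, so the gadget structure is preserved under taking $\greedy$, and the induction goes through cleanly. The remaining arguments are just assembling \Cref{prop:greedy monotone}, \Cref{prop:base}, and \Cref{lem:cannot add imply sat}, all of which are already available.
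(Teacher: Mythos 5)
Your proposal is correct and proves the same two claims as the paper: that $\greedy_{nondet}(P) \subseteq \greedy(P)$ (your Step 1, the paper's Claim 2, argued by induction on fills using \Cref{prop:greedy monotone} and \Cref{lem:add gadget works}) and that $\greedy_{nondet}(P)$ is satisfied (the first half of your Step 2, the paper's Claim 1, via \Cref{lem:cannot add imply sat} and \Cref{prop:base}). Where you diverge is in how these two facts are assembled into equality. The paper invokes a minimality property of $\greedy$ output --- that any set $Q$ with $P \subseteq Q \subsetneq \greedy(P)$ must be unsatisfied --- and argues this informally by examining the first row where $Q$ and $\greedy(P)$ differ. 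You instead apply \Cref{prop:greedy monotone} once more (now with $Q := \greedy_{nondet}(P)$, using Step 1 to verify the hypothesis $Q \subseteq \greedy(P)$) to get $\greedy(\greedy_{nondet}(P)) = \greedy(P)$, and combine this with the observation that $\greedy$ acts as the identity on an already-satisfied set. This closes the argument with lemmas already on the table and sidesteps the row-by-row minimality reasoning entirely; it is a slightly tidier finish.

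Two small imprecisions worth cleaning up. First, the sentence ``since $P \subseteq \greedy_{nondet}(P)$, we have $\greedy(\greedy_{nondet}(P)) = \greedy(P)$'' misstates the hypothesis of \Cref{prop:greedy monotone}: what you actually need is $\greedy_{nondet}(P) \subseteq \greedy(P)$, i.e.\ Step 1, which you do acknowledge a sentence later. Second, your worry that \Cref{lem:inside gadget empty} is needed to transfer gadget-emptiness from $Q$ to $\greedy(Q_0)$ is unfounded: applying \Cref{lem:add gadget works} to the current $Q$ directly already gives $f \in \greedy(Q)$, and then $\greedy(Q) = \greedy(Q_0)$ follows from \Cref{prop:greedy monotone} plus the induction hypothesis --- this is exactly the ``cleaner route'' you settle on, and it is the route the paper takes as well.
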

\begin{proof}
We proceed via two claims: (1) $\greedy_{nondet}(P)$ is satisfied, and (2) $\greedy_{nondet}(P)\subseteq\greedy(P)$. These claims together imply
that $\greedy_{nondet}(P)=\greedy(P)$, because $\greedy(P)$ is a
\emph{minimally} satisfied set in the sense that every set $Q$ such that $P \subseteq Q \subset \greedy(P)$ is unsatisfied. (To see this, look at the first row $i$ where $Q$ and $\greedy(P)$ differ. By definition, the points that \greedy~adds in row $i$ are \emph{necessary and sufficient} to satisfy all rectangles defined by points in rows $1,\dots,i-1$ and $P_{y=i}$. As $Q_{y=i}$ is a proper subset of $\greedy(P)$ in row $i$, it leaves some rectangle unsatisfied, which remains unaffected by the points in rows $i+1, \dots, n$.)

%$\greedy(P)\setminus\{p\}$ is unsatisfied for every $p \in \greedy(P)$.

For claim (1), by the definition of $\greedy_{nondet}$, there
is no $\add$ gadget in $\greedy_{nondet}(P)\cup\boxx$. By \ref{lem:cannot add imply sat},
$\greedy_{nondet}(P)\cup\boxx$ is satisfied, and so is $\greedy_{nondet}(P)$
by \ref{prop:base}(ii).

For claim (2), % $\greedy_{nondet}(P)\subseteq\greedy(P)$. 
by~\ref{prop:base}(i), it suffices to show that $\greedy_{nondet}(P)\cup\boxx\subseteq\greedy(P\cup\boxx)$.
We show this by induction on the steps of $\greedy_{nondet}$. For
the base case, the initial point set is $(P\cup\boxx)\subseteq\greedy(P\cup\boxx)$.
For the inductive step, let $Q$ be the points added by $\greedy_{nondet}$
so far. We have $Q\subseteq\greedy(P\cup\boxx)$ by induction, and
so $\greedy(Q)=\greedy(P\cup\boxx)$ by \ref{prop:greedy monotone}.
Let $f$ be the new point added to $Q$ by $\greedy_{nondet}$.
Note that $f$ is added because of an $\add$ gadget in $Q$.
Thus, by \ref{lem:add gadget works}, $f\in\greedy(Q)=\greedy(P\cup\boxx)$.
It follows that $Q\cup\{f\}\subseteq\greedy(P\cup\boxx)$. This concludes
the proof. \qedd \end{proof}

\section{Proofs\label{sec:proofs}}

In this section we present the proofs of \ref{thm:smooth tran} and
\ref{thm:general tran}  
from \S\,\oldref{sec:produce sat set}.

\subsection{Reduction to geometric setting}

In this subsection we state some results about the connection between
the two geometric problems (the star-path problem and the satisfied
superset problem). Then, we prove that they imply the main results
from \S\,\oldref{sec:produce sat set}.

To prove \ref{thm:smooth tran}, let $\smooth_{\text{\ensuremath{\path}}}$
be an algorithm for the star-path problem obtained from the smooth
heap algorithm using \ref{prop:heap and path}. We have the following
key lemma: 
\begin{lem}
$|\greedy(P)|=\Theta(|\smooth_{\path}(P)|)$ for every permutation
point set $P$.\label{lem:smooth tran geo} 
\end{lem}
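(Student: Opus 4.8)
The plan is to pass to the nondeterministic descriptions on both sides: replace \greedy\ by $\greedy_{nondet}$ (legitimate by \ref{thm:greedy nondet}), and, via the star-path picture of \ref{prop:heap and path} and the equivalence of the three views of the smooth heap, view $\smooth_{\path}$ as the nondeterministic monotone-tree process that repeatedly picks a local maximum $x$ of the current sibling list and links it with the larger of its two neighbours. I would then run the two processes in tandem, starting from $star(P)$ on the tree side and from $Q_{0}=P\cup\boxx$ on the geometric side, maintaining a point set $Q\supseteq Q_{0}$ together with a correspondence, up to $O(1)$ slack, between the links performed and the points of $Q\setminus Q_{0}$. The bridge would be the \emph{corner map}: a link of a pair $(a,b)$ with $a.y>b.y$ is associated with $\rho(a,b):=\coord{b.x}{a.y}$, the corner of $\square_{ab}$ directly above the smaller endpoint $b$. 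The key structural claim is that whenever $\smooth_{\path}$ links $a$ and $b$ --- so $a,b$ are neighbouring siblings and $b$ is the larger of $a$'s neighbours --- the current $Q$ contains an $\add$ gadget whose two high points are $a,b$ and whose low points $c,d,e$ are furnished by $\boxx$ and/or by corner points created for earlier links, so that filling it adds precisely $\rho(a,b)$. Emptiness of the inner rectangle $\square_{G}$ of this gadget should follow from the monotonicity of the tree together with the invariant $\mathrm{parent}(q).\PREV.x<q.x<\mathrm{parent}(q).\NEXT.x$ from the proof of \ref{prop:heap and path}: a point inside $\square_{G}$ would have to be a descendant of $b$ separating $a$ from $b$ in $x$-order, which the invariant forbids in the relevant configuration; one also has to treat the reflected gadget when the larger neighbour lies to the left, and the degenerate case where $x$ is an endpoint of the list.

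Granting this coupling, one direction of the bound is almost immediate. Since $P$ is a permutation, $\rho$ is injective on linked pairs ($b.x$ recovers $b$, $a.y$ recovers $a$), and every $\rho(a,b)$ lies in $[n]^{2}\setminus P\subseteq\greedy(P)\setminus P$, whence $|\smooth_{\path}(P)|\le|\greedy(P)\setminus P|\le|\greedy(P)|$. For the reverse inequality $|\greedy(P)|=O(|\smooth_{\path}(P)|)$ I would combine two facts. First, the initial extract-min consolidates the $n$ top-level roots of $star(P)$ with exactly $n-1$ links, so $|P|=n\le|\smooth_{\path}(P)|+1$ (for $n\ge2$). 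Second --- and this is the real work --- the number of points that $\greedy_{nondet}$ adds must be shown to be $O(|\smooth_{\path}(P)|)$. This does not follow from injectivity of $\rho$, since $\rho$ need not be \emph{onto} the added points: a filled corner may later serve as a low point of a further gadget, so added points can form cascades that are not in one-to-one correspondence with links (already for $P=\{\coord12,\coord23,\coord31\}$ one has $|\greedy(P)|=6$ while $|\smooth_{\path}(P)|=2$). I would bound these cascades by an amortized argument --- for instance via a potential on the monotone tree, such as the total $y$-extent of its edges, which strictly decreases with every link, together with a bound on how far the fills triggered by a single link can propagate --- or, alternatively, by charging each added point to a link occurring within $O(1)$ steps of it in the coupled execution.

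The step I expect to be the main obstacle is precisely this cascade bound in the reverse direction: controlling how many new points a single gadget-fill can ultimately force is the global, propagating phenomenon that makes \greedy\ difficult to analyse, so some genuinely amortized or structural insight into $\greedy_{nondet}$ will be needed. A secondary but still delicate burden is the bookkeeping in the coupling invariant --- checking that the required low points of each gadget are always present and that $\square_{G}$ is empty, across all configurations of the monotone tree --- which will require a careful case analysis.
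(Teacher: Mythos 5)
Your high-level setup is sound --- passing to $\greedy_{nondet}$ and to the star-path picture of $\smooth_{\path}$ is exactly the paper's starting point --- but the proposal has a genuine gap, and you have in fact correctly located it yourself without resolving it. The ``cascade bound'' you flag as the main obstacle \emph{is} the crux, and the paper's proof does not attack it by any amortized or potential-function argument. Instead it sidesteps the cascade problem entirely by never comparing two independently built sets. The paper constructs \emph{one} point set $Q$ incrementally, in lock-step with $\smooth_{\path}$'s links, and proves two things about this single $Q$: (i) each point added fills an $\add$ gadget in the current $Q$, and at termination $Q$ contains no $\add$ gadget, so by the characterization \ref{lem:cannot add imply sat} and \ref{thm:greedy nondet} one gets $Q\setminus\boxx = \greedy_{nondet}(P) = \greedy(P)$ \emph{exactly}; and (ii) per link a bounded number of points (at most $3$) is added, with one-neighbour links globally bounded by $2n$, so $|Q| = \Theta(|\smooth_{\path}(P)|)$. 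Once $Q$ is shown to \emph{equal} $\greedy(P)$, both inequalities of the $\Theta$ come for free and no cascade need be controlled.

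Your corner map $\rho(a,b) = \coord{b.x}{a.y}$ is also not quite the right bookkeeping. The paper's points are defined via an evolving \emph{interval function} $I$ on the nodes of the monotone tree: for a two-neighbour link one adds $d=\coord{I(a).\max}{b.y}$, $e=\coord{I(c).\min}{b.y}$, $f=\coord{I(c).\min}{a.y}$, which coincide with raw corner points only initially. This interval machinery (Invariants (1), (2), (3'), (4), (5) in \S\,7.5) is what makes the $\add$-gadget argument go through: the five gadget vertices are taken from interval endpoints and their projections, and the emptiness of $\square_G$ is derived from the laminar structure of the intervals (Invariant (1)) and the ``no points below nodes'' invariant (3'), not from the parent/sibling $x$-ordering invariant of \ref{prop:heap and path}. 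A further technical wrinkle you don't anticipate is that the per-link accounting requires splitting each round into a two-neighbour phase and a one-neighbour phase (\ref{prop:structure one-neighbor}); in the one-neighbour phase some intervals must be extended to include $\boxx$ points, which is why Invariant (5) is stated as an inclusion rather than an equality. Relatedly, your ``easy direction'' is not actually easy as stated: showing $\rho(a,b)\in\greedy(P)$ for every linked pair is not immediate and is essentially equivalent to the gadget-filling claim, so both directions rest on the same construction in the paper.
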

To prove \ref{thm:general tran}, we show the analogous statement
in geometric setting. 
\begin{lem}
For every algorithm $\cA_{\path}$ for the star-path problem, there
is an algorithm $\cA_{\sat}$ for the satisfied superset problem,
such that $|\cA_{\sat}(P)|=\Theta(|\cA_{\path}(P)|)$ for every permutation
$P$. Moreover, $(\cA_{\sat}(P))^{r}$, i.e.\ the reverse of the
output of $\cA_{\sat}$, %$(\cA_{\sat}(P))^{r}$ 
is an insertion-compatible
superset of $P^{r}$.\label{lem:general tran geo} 
\end{lem}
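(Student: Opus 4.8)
The core task is to translate a star-path execution---a sequence of link operations on a monotone tree whose node-set is a fixed permutation point set $P \cup \{\coord 0 0\}$---into a satisfied superset of $P^r$ (after reversal). The plan is to watch the star-path algorithm $\cA_{\path}$ run, and whenever it performs $\link(a,b)$ on a monotone tree, record the point $f = \coord{b.x}{a.y}$ (assuming $a.y > b.y$), i.e.\ the "corner" point that witnesses the link. Concretely, I would define $\cA_{\sat}(P)$ to be $P$ together with all such corner points collected over the entire execution from $star(P)$ to $path(P)$, possibly together with the boundary gadget $\boxx$ to handle edge cases (which I then strip off, using \ref{prop:base}). The cost bound $|\cA_{\sat}(P)| = \Theta(|\cA_{\path}(P)|)$ is then nearly immediate: each link contributes exactly one new point, and conversely $|\cA_{\path}(P)| \ge n-1$ since we must go from a star to a path, so the additive $|P| = n$ term is absorbed.

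The heart of the argument is showing that the recorded point set is \emph{satisfied} (after the appropriate reversal), and that its reverse is \emph{insertion-compatible} for $P^r$. For satisfaction, I would argue that the set of corner points recorded corresponds exactly to the set of points added by $\greedy_{nondet}$ on the reversed input: each link on a monotone tree that brings $a$ under $b$ is precisely the dual of "filling an $\add$ gadget" $G = (a,b,c,d,e)$ with corner $f$ in the geometric view (the parent $u$ of $a,b$ plays the role of the lower anchor $c/d/e$, and the neighboring-sibling condition $parent(q).\PREV.x < q.x < parent(q).\NEXT.x$ from the proof of \ref{prop:heap and path} guarantees the rectangle inside the gadget is empty). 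Since every star-path execution terminates at $path(P)$, where no link is possible, the final recorded set admits no further $\add$ gadget, so by \ref{lem:cannot add imply sat} it is satisfied. The reversal appears because the monotone-tree $y$-coordinate (= heap key) grows \emph{downward} from the root $\coord 0 0$, whereas in the satisfied-superset picture time grows upward; flipping the $y$-axis converts $path(P)$-termination into the "no $\add$ gadget" condition in the correct orientation, which is why the statement is about $P^r$ rather than $P$.

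For insertion-compatibility, I would observe that the links never create a point \emph{below} an existing point of $P^r$ in any column: a recorded corner $f = \coord{b.x}{a.y}$ sits at height $a.y$, the height of a point already present, and the monotonicity of the tree plus the sibling-ordering invariant forces $f$ to lie strictly above the lowest point in its column. (Equivalently, this mirrors the fact, noted after \ref{def:nondet greedy} and in \S\ref{sec:bst_model}, that \greedy---hence $\greedy_{nondet}$---always produces an insertion-compatible superset.)

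The main obstacle I expect is the careful bookkeeping of the reversal and of the $\boxx$ boundary: one must verify that the star-path invariants of \ref{prop:heap and path} (especially Invariant 2, the sibling-ordering condition) translate exactly into the emptiness of $\square_G$ for the dual gadget, and that the leftmost/rightmost edge cases of a link---where $a$ or $b$ has no sibling on one side---are absorbed correctly by the $\boxx$ points $\coord 0 i$ and $\coord{n+1}i$. Getting this correspondence airtight, rather than the counting or the high-level structure, is where the real work lies; once the dictionary "link $\leftrightarrow$ fill $\add$ gadget" is nailed down, \ref{lem:cannot add imply sat}, \ref{lem:add gadget works}, and \ref{prop:base} do the rest.
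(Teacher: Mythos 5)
Your approach takes a genuinely different route from the paper, but unfortunately it has several gaps that would sink it for \emph{general} star-path algorithms. The paper proves \Cref{lem:general tran geo} in \S\,\oldref{sub:general step} by a direct invariant-maintenance argument: it sets $Q^{init}=P\cup\base$, defines the interval function $I(u_i)$ as the span of $Q_{y=i}$, maintains four invariants, and for each link adds \emph{up to two} points $d=\coord{I(b).\min}{a.y}$ and $e=\coord{I(b).\max}{a.y}$ that project the whole child interval onto the parent's row (here $a$ is the new parent, $a.y<b.y$). The ADD-gadget machinery and $\greedy_{nondet}$, which your proposal relies on, are reserved in the paper for the \emph{smooth-specific} \Cref{lem:smooth tran geo} in \S\,\oldref{sub:smooth greedy step}; there, the additional Invariants (3') and (5) are what make each link fill a valid gadget.

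The obstacles to your plan are concrete. First, for an arbitrary $\cA_{\path}$ a link need \emph{not} correspond to filling an $\add$ gadget: the emptiness of $\square_G$ (and the distinctness of the five gadget points) is precisely what Invariants (3') and (5) supply in the smooth case, and these fail for general algorithms, which can create intervening points inside the would-be gadget. Second, and fatally, even granting the claim that the corner set equals the additions of $\greedy_{nondet}$: by \Cref{thm:greedy nondet}, $\greedy_{nondet}(P)=\greedy(P)$ is \emph{unique}. If every star-path execution mapped to that same point set, then for a wasteful $\cA_{\path}$ with $\Theta(n^2)$ links you would have $|\cA_{\sat}(P)|=|\greedy(P)|=O(n\log n)$, which contradicts $|\cA_{\sat}(P)|=\Theta(|\cA_{\path}(P)|)$. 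Third, one corner per link is geometrically insufficient: when the child $b$ already has descendants spanning a wide interval $I(b)$, the single point $\coord{b.x}{a.y}$ does not satisfy rectangles between $Q_a$ and descendants of $b$ near the far endpoint of $I(b)$; the paper's two projected endpoint points do. Finally, the direction of insertion-compatibility is reversed: writing the link as linking $a$ (child, $a.y>b.y$) under $b$ (parent), your corner $f=\coord{b.x}{a.y}$ lies \emph{above} the permutation point $b$ in column $b.x$, so $Q$ would be insertion-compatible for $P$ itself, whereas the lemma requires $(\cA_{\sat}(P))^r$ insertion-compatible for $P^r$, i.e.\ points added \emph{below} $P$, exactly as the paper's construction achieves (its new points sit at the parent's lower row $a.y$ with $a.y<b.y$).
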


Assuming the two lemmas above, we can prove \ref{thm:smooth tran}
and \ref{thm:general tran}.
\begin{proof}
We use $f\approx g$ to denote $f=\Theta(g)$. Fix some permutation
$X$ on $[n]$.

\textbf{(\ref{lem:smooth tran geo} implies \ref{thm:smooth tran}):}
\begin{align*}
|\greedyfuture(X')| & =|\greedy(P^{X'})| & \mbox{by \ref{thm:greedy and greedyfuture}}\\
 & \approx|\smooth_{\path}(P^{X'})| & \mbox{by \ref{lem:smooth tran geo}}\\
 & =|\smooth(X)|. & \mbox{by \ref{prop:heap and path}}
\end{align*}

\textbf{(\ref{lem:general tran geo} implies \ref{thm:general tran}):}
Given a stable heap algorithm $\cA_{\stable}$ from \ref{thm:general tran},
there is $\cA_{\path}$ where $\cA_{\stable}(X)=\cA_{\path}(P^{X'})$
by \ref{prop:heap and path}. Plugging $\cA_{\path}$ into \ref{lem:general tran geo},
there is an algorithm $\cA_{\sat}$ for the satisfied superset problem,
where $|\cA_{\sat}(P^{X'})|=\Theta(|\cA_{\path}(P^{X'})|)$, with
$(\cA_{\sat}(P^{X'}))^{r}$ being an insertion-compatible satisfied
superset of $(P^{X'})^{r}$. By treating $(P^{X'})^{r}$ as the input,
there is an algorithm $\cA'_{\sat}$ such that $\cA'_{\sat}((P^{X'})^{r})=\cA_{\sat}(P^{X'})$
and $\cA'_{\sat}((P^{X'})^{r})$ is insertion-compatible for its input
$(P^{X'})^{r}$. As $(P^{X'})^{r}=P^{(X')^{r}}$, there is an offline
BST algorithm $\cA_{\bst}$ in insert-only mode (sorting mode) such
that $\cA_{\bst}((X')^{r})=\cA'_{\sat}(P^{(X')^{r}})$, by \ref{thm:geo ins}.
To summarize, we have:

\begin{align*}
|\cA_{\bst}((X')^{r})| & =|\cA'_{\sat}(P^{(X')^{r}})| & \mbox{by \ref{thm:geo ins}}\\
 & =|\cA_{\sat}(P^{X'})|\\
 & \approx|\cA_{\path}(P^{X'})| & \mbox{by \ref{lem:general tran geo}}\\
 & =|\cA_{\stable}(X)|. & \mbox{by \ref{prop:heap and path}}
\end{align*}
\let\qed\relax
\end{proof}
It remains to prove \textbf{\ref{lem:smooth tran geo}} and \textbf{\ref{lem:general tran geo}}.
We prove them using a similar approach, but with different key ingredients
in the proof. The rest of the section is devoted to these proofs.

\subsection{The common framework\label{sub:common framework}}

From now on, we fix an algorithm $\cA_{\path}$ for the star-path
problem and a permutation $P$. We describe two different ways of
producing a satisfied superset $Q$ of $P$ such that $|Q|=\Theta(|\cA_{\path}(P)|)$.
The first method works for arbitrary $\cA_{\path}$. We will show
that $Q$ is insertion-compatible for $P^{r}$ which gives \ref{lem:general tran geo}.
The second method works only for $\cA_{\path}=\smooth_{\path}$. In
this case we will show that $Q=\greedy(P)$ which gives \ref{lem:smooth tran geo}. 

In this subsection, we describe a common framework for constructing
a satisfied set $Q$. % in \ref{sub:common framework}. 
Then, in the next
subsections, we show how to construct $Q$ for general $\cA_{\path}$
and for $\smooth_{\path}$.

Let $P_{0}=P\cup\{\coord 00\}$. Recall the definitions of $star(P)$
and $path(P)$ near \ref{def:path problem} (they are two particular
monotone trees whose nodes are $P_{0}$). Initially, we set $Q=Q^{init}$
where the definition of $Q^{init}$ will be described differently
in \S\,\oldref{sub:general step} and \S\,\oldref{sub:smooth greedy step}. For
each link that $\cA_{\path}$ performs to transform $star(P)$ to
$path(P)$, we add a constant number of points to $Q$ while maintaining
certain invariants. Once the execution of $\cA_{\path}$ is finished
(producing $path(P)$), the invariant will imply that $Q$ is satisfied.
To state the invariants precisely, we need some definitions.

Let $T$ be the current tree of $\cA_{\path}$. For convenience, we
write the set of nodes of $T$ as $P_{0}\eqdef\{u_{0},\dots,u_{n}\}$
where $u_{0}=\coord 00$ and $u_{i}$ is the unique point in $P_{y=i}$.
Let $Q\supseteq Q^{init}$ be the current point set. For every $i$,
we write $Q_{u_{i}}=Q_{y=i}$. For every node $u$ we define $I(u)\subset[n]$
as \emph{the interval of $u$} and we write $I(u)\eqdef(I(u).\min,I(u).\max)$.
We call $I$ the \emph{interval function}. The precise definition
of $I$, which is a key element of the proofs, will be given later.

The invariants are the following (illustrated in \ref{fig:inv_smooth-1}). 
\begin{itemize}
\item \textbf{Invariant 1 (Intervals respect tree-structure):} The set of
intervals $I(u)$ for all nodes $u\in T$ forms a laminar family whose
structure is exactly $T$. That is, let $u$ be an arbitrary node in $T$ with
children $v_{1},\dots,v_{k}$. Then $I(v_{j})\subseteq I(u)$ for
all $j$, and $I(v_{j})\cap I(v_{j'})=\emptyset$ for all $j\neq j'$. 
\item \textbf{Invariant 2 (Tree-structure respects satisfiability):} If $u$
is an ancestor of $v$ in $T$, then for every $p\in Q_{u}$ and $q\in Q_{v}$
the rectangle $\square_{pq}$ is satisfied. 
\item \textbf{Invariant 3 (No points between a node and its parent's interval):
}For every node $v$ with parent $u$, we have $\{\coord{v.x}i\mid u.y<i<v.y\}\cap Q=\emptyset$.
\item \textbf{Invariant 4 (Adding interval endpoints):} For every node
$v$ with parent $u$, 
\begin{eqnarray*}
\coord{I(v).\min}{v.y},\coord{I(v).\max}{v.y} & \in & Q\mbox{ (endpoints of \ensuremath{I(v)}), and}\\
\coord{I(v).\min}{u.y},\coord{I(v).\max}{u.y} & \in & Q\mbox{ (endpoints of \ensuremath{I(v)} ``projected'' to row of parent \ensuremath{u}).}
\end{eqnarray*}
\end{itemize}

\begin{figure}
	\begin{centering}
		\includegraphics[width=0.5\textwidth]{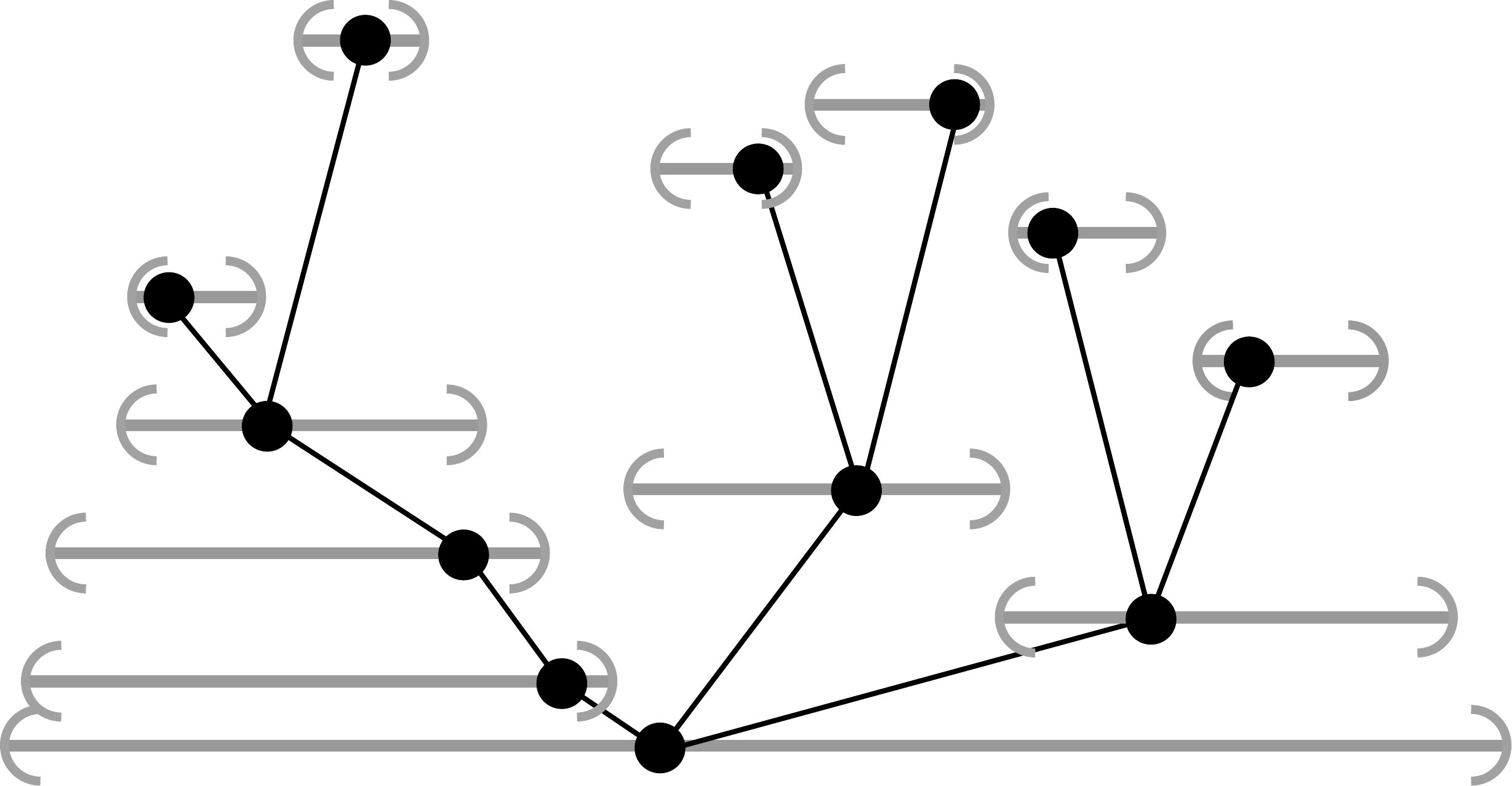}
		\par
	\end{centering}
	
	\caption{Illustration of Invariant (1). Horizontal segments indicate intervals of nodes. 
		\label{fig:inv_smooth-1}}
\end{figure}

We make a crucial observation. 
\begin{prop}
At the end, when $T=path(P)$, if the invariants hold, then $Q$ is
a satisfied set.\label{lem:inv end}\end{prop}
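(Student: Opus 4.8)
The plan is to derive the statement directly from Invariant 2, since in the terminal tree $T=path(P)$ the ancestor relation is completely transparent, so the proof will be short. First I would fix two arbitrary points $p,q\in Q$ and show that $\square_{pq}$ is satisfied (with respect to $Q$). If $p.y=q.y$, then $p$ and $q$ are horizontally aligned and $\square_{pq}$ is satisfied by the definition of a satisfied rectangle, so I may assume $p.y\neq q.y$ and relabel so that $p.y=i<j=q.y$.

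Next I would invoke the shape of $path(P)$: it is the monotone path $(u_0,u_1,\dots,u_n)$ with $u_0=\coord 00$ and $u_i$ the unique point of $P$ in row $i$, so node $u_i$ sits at depth $i$; hence whenever $i<j$ the node $u_i$ is a proper ancestor of $u_j$ in $T=path(P)$. Moreover every point of $Q$ lies in one of the rows $0,1,\dots,n$: those in $Q^{init}$ and those added while simulating $\cA_{\path}$ all lie in rows of current tree nodes (cf.\ Invariant 4). Thus $p\in Q_{u_i}$ and $q\in Q_{u_j}$ in the notation $Q_{u_i}=Q_{y=i}$. Applying Invariant 2 to the ancestor $u_i$ of the node $u_j$ then yields that $\square_{pq}$ is satisfied; since $p,q$ were arbitrary, $Q$ is a satisfied set.

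I do not expect a genuine obstacle in this proposition: the only bookkeeping point worth a word is the claim that every point of $Q$ occupies a row indexed by a node of $T$, which is immediate from how $Q$ is assembled (no point is ever placed in a row other than that of a current tree node). All the real work lies in verifying that the four invariants are actually preserved under each link operation and under the two choices of $Q^{init}$, which is carried out in the subsequent subsections; the present proposition is merely the payoff once that is established.
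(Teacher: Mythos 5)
Your proof is correct and takes essentially the same route as the paper: identify the row indices of $p$ and $q$, use the fact that $path(P)$ makes every pair of rows comparable in the ancestor order, and invoke Invariant~2. The only difference is that you explicitly split off the degenerate case $p.y=q.y$ and note that all of $Q$ lives in rows of tree nodes — both are harmless clarifications of details the paper treats implicitly.
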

\begin{proof}
Let $p$ and $q$ be two arbitrary points in $Q$. W.l.o.g. $p\in Q_{u_{i}}$
and $q\in Q_{u_{j}}$ where $i\le j$. As $T=path(P)$, $u_{i}$ is
an ancestor of $u_{j}$. By Invariant (2), $\square_{pq}$ is satisfied.  \qedd
\end{proof}

\subsection{The general transformation\label{sub:general step}}

Let $\base=\{\coord i0~|~i\in[n]\}$. We define $Q^{init}\eqdef P\cup\base$,
and define the interval function as follows. For each node $u_{i}$,
\begin{equation}
I(u_{i})\eqdef(\min_{q\in Q_{u_{i}}}q.x,\max_{q\in Q_{u_{i}}}q.x).\label{eq:interval general}
\end{equation}
In words, $I(u_{i})$ is the open interval between the leftmost and
rightmost ``touched'' positions in the $i$-th row. With this definition
of $Q^{init}$ and $I$, we observe the following. 
\begin{lem}
The invariants hold initially.\label{lem:inv init general}\end{lem}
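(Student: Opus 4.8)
The plan is to unfold each of the four invariants against the explicit initial configuration and verify it by a short direct check. First I would record the initial data: the current tree is $star(P)$, whose node set I write as $P_{0}=\{u_{0},u_{1},\dots,u_{n}\}$ with $u_{0}=\coord 00$ the root and $u_{i}$ the unique point of $P_{y=i}$, say $u_{i}=\coord{x_{i}}{i}$; since $P$ is a permutation, $(x_{1},\dots,x_{n})$ is a permutation of $[n]$, and $u_{0}$ is the parent of every $u_{i}$. The current point set is $Q=Q^{init}=P\cup\base$, so $Q_{u_{0}}=Q_{y=0}=\base$ and $Q_{u_{i}}=\{u_{i}\}$ for $i\ge 1$. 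From the definition of the interval function this gives $I(u_{0})=(1,n)$ and the degenerate interval $I(u_{i})=(x_{i},x_{i})$, i.e.\ the single point $x_{i}$. I would also fix the convention that $I(v)\subseteq I(u)$ means $I(u).\min\le I(v).\min$ and $I(v).\max\le I(u).\max$, and that two intervals are disjoint when these closed ranges are.

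For \textbf{Invariant 1}, the only parent--child pairs are $(u_{0},u_{i})$. Each $I(u_{i})$ is a single point of $[n]$, so $I(u_{i})\subseteq I(u_{0})$; the $x_{i}$ are pairwise distinct because $P$ is a permutation, so the $I(u_{i})$ are pairwise disjoint. Hence the nesting order of $\{I(u_{0}),\dots,I(u_{n})\}$ is the depth-one star with $I(u_{0})$ on top, which matches $star(P)$. For \textbf{Invariant 2}, the only ancestor relation is $u_{0}$ above each $u_{i}$; given $p=\coord k0\in\base$ and $q=u_{i}=\coord{x_{i}}{i}$, either $k=x_{i}$ and $p,q$ are vertically aligned, or $k\ne x_{i}$ and the point $r=\coord{x_{i}}0\in\base\subseteq Q$ sits at a corner of $\square_{pq}$ while differing from both $p$ and $q$, so $\square_{pq}$ is satisfied. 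For \textbf{Invariant 3}, for $v=u_{j}$ with parent $u_{0}$ the forbidden set is $\{\coord{x_{j}}{i}\mid 0<i<j\}$, which meets neither $\base$ (confined to row $0$) nor $P$ (whose unique point in column $x_{j}$ is $u_{j}$, at row $j$). For \textbf{Invariant 4}, for $v=u_{j}$ the interval $I(u_{j})$ is the single point $x_{j}$, so the four prescribed points collapse to $u_{j}=\coord{x_{j}}{j}\in P\subseteq Q$ and $\coord{x_{j}}{0}\in\base\subseteq Q$.

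I do not expect a real obstacle here: everything follows from $\base$ occupying the whole bottom row and from $P$ being a permutation. The two places worth stating carefully are the convention for degenerate intervals in Invariant 1, and, in Invariant 2, the fact that one should invoke the corner point $\coord{x_{i}}0$ rather than an interior point of $\square_{pq}$ --- when $|k-x_{i}|=1$ no interior point of the rectangle exists, so it is the corner lying in $\base$ that makes the argument go through.
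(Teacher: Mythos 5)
Your proof is correct and follows essentially the same approach as the paper: a direct, invariant-by-invariant verification against the explicit initial configuration $Q=P\cup\base$, $T=star(P)$. The only point worth flagging is a cosmetic deviation in convention: the paper defines $I(\cdot)$ as an \emph{open} interval, so initially $I(u_i)=\emptyset$ for $i\ge 1$ (making Invariant~1 hold vacuously for leaves), whereas you treat $(x_i,x_i)$ as the singleton $\{x_i\}$; both choices give the correct conclusion here, and your version has the advantage of implicitly fixing the paper's typographical $I(u_i)=(i,i)$ (which should read $(u_i.x,u_i.x)$) and of spelling out the ``trivial'' Invariants~3 and~4 explicitly, including the careful observation that for Invariant~2 a \emph{corner} point of $\base$ is what is needed when $|k-x_i|=1$.
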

\begin{proof}
Initially, $Q=(P\cup\base)$, and $T=star(P)$. Observe that $I(u_{0})=(1,n)$
and $I(u_{i})=(i,i)=\emptyset$ for all $i\in[n]$ by definition of
$Q_{init}$. Moreover, these intervals form a laminar family whose
structure is exactly $star(P)$, so Invariant (1) holds. Next, for
every $i\in[n]$, $u_{i}$ has only $u_{0}$ as an ancestor. For every
point $p\in Q_{y=0}^{init}$ and $q\in Q_{y=i}^{init}$, we have that
$\square_{pq}$ is satisfied as $Q_{y=0}^{init}=\base=\{\coord i0\mid1\le i\le n\}$.
Thus, Invariant (2) holds. Invariants (3) and (4) hold trivially. \qedd
\end{proof}
Next, we specify how to add points to $Q$ for each link performed
by $\cA_{\path}$ and show that the invariants are maintained.

%\begin{figure}[H]
%	\begin{centering}
%		\includegraphics[width=0.4\textwidth]{fig_gen}
%		\par\end{centering}
%	
%	\caption{A step in the general transformation. After linking $b$ to $a$,
%		the new points $d$ and $e$ (marked as crosses) are added. Dotted lines show horizontal or vertical alignment. \label{fig:general step}}
%\end{figure}

\begin{figure}
	\begin{centering}
		\includegraphics[scale=0.2]{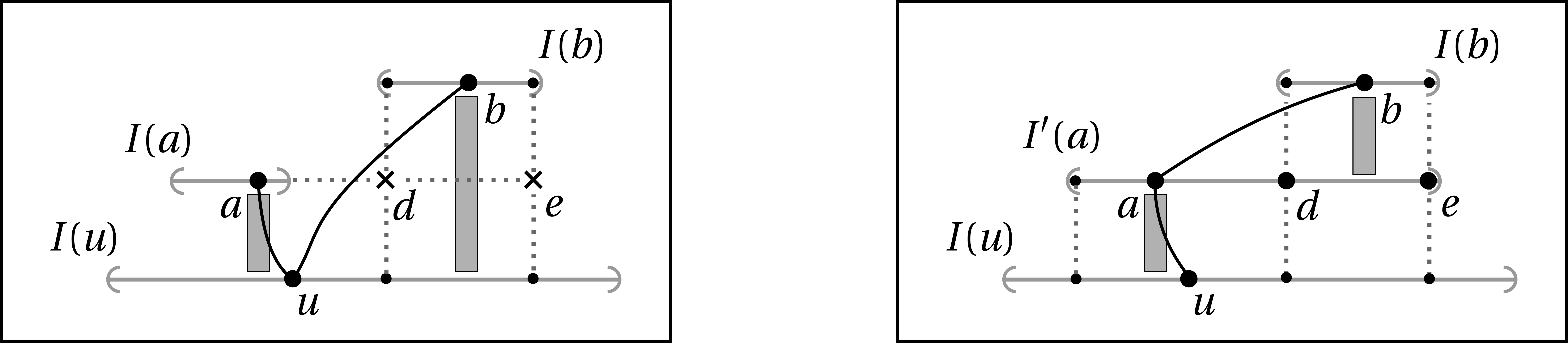}
		\par\end{centering}
	\caption{A step in the general transformation and the maintenance of invariants. \textbf{(left)} Before the transformation: $a$ and $b$ are children of $u$ in $T$. Crosses mark locations of newly added points $d$ and $e$. \textbf{(right)} After the transformation: $b$ is a child of $a$ in $T'$. Intervals shown as horizontal segments, respecting tree-structure (Invariant (1)). Shaded rectangles indicate empty areas (Invariant (3)). Smaller dots indicate interval endpoints and their projections (Invariant (4)), some points omitted for clarity. Non-emptiness of newly created rectangles (Invariant (2)) is guaranteed by newly added points and the points of Invariant (4). 
		\label{fig:inv_gen}}
\end{figure}

\begin{lem}
Let $T$ be the current tree and $Q$ be the current point set. Suppose
that the invariants hold for $T$ and $Q$. Let $T'$ be obtained from
$T$ by a stable link operation. We can add \textbf{one} or \textbf{two}
new points to $Q$ and obtain $Q'$ such that the invariants hold
for $T'$ and $Q'$.\label{lem:inv maintained general}\end{lem}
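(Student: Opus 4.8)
The plan is to do a case analysis on the stable link and, for each link, commit to one or two new points of $Q$ and then re-verify Invariants~(1)--(4) in turn. Write the link as an operation on neighbouring siblings $a,b$ of $T$ with common parent $u$, and assume (up to horizontal reflection) $a.x<b.x$. By the geometry of the stable link (\ref{fig:geometry of stable link}) there are two mirror-image cases: if $a.y<b.y$ then $b$ becomes the rightmost child of $a$ in $T'$; if $a.y>b.y$ then $a$ becomes the leftmost child of $b$. I would write out the first case in full and obtain the second by reflection. So suppose $b$ becomes the rightmost child of $a$; since $u$ is the parent of both, $u.y<a.y<b.y$. By Invariant~(1) for $T$ the intervals $I(a),I(b)$ are disjoint sub-intervals of $I(u)$, and since $a$ is the left neighbour of $b$ among $u$'s children, $I(a)$ lies to the left of $I(b)$, so $I(a).\max\le I(b).\min$ with no child-interval of $u$ strictly between them.

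The two candidate points are $d=\langle I(b).\max,a.y\rangle$ and $e=\langle I(b).\min,a.y\rangle$ (the crosses in \ref{fig:inv_gen}). I would add $d$ always: it extends $I(a)$ to $(I(a).\min,I(b).\max)$, which now contains $I(b)$, still lies inside $I(u)$ (since $I(b).\max\le I(u).\max$), and stays disjoint from the intervals of $u$'s other children and of $a$'s old children (all of which lie at $x$-coordinate $\le I(a).\max$); one checks from Invariant~(3) for $b$ that $d$ is genuinely a new point. I would add $e$ only when it is not already in $Q$, which one shows happens exactly when $I(a).\max<I(b).\min$ (a true gap), since otherwise $e=\langle I(a).\max,a.y\rangle$ is already present by Invariant~(4) for $a$. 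With $d$, with $e$, and with the row-$b.y$ endpoints $\langle I(b).\min,b.y\rangle,\langle I(b).\max,b.y\rangle$ of $I(b)$ (in $Q$ by Invariant~(4) for $T$), Invariants~(1) and~(4) for $T'$ at node $b$ with its new parent $a$ are restored; Invariants~(1) and~(4) at every other node are untouched because only row $a.y$ gained points. This is one or two new points, as claimed.

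The substantive step is Invariant~(2). Adding points never unsatisfies an already-satisfied rectangle, so only three families of ancestor/descendant pairs need inspection: (i) $a$ versus a node $v$ in the subtree of $b$ (the genuinely new ancestor relation); (ii) an ancestor $w$ of $a$ versus the new points $d,e$ in row $a.y$; (iii) a newly added point among $d,e$ versus a node $v$ in $a$'s old subtree. For (i) I would use that $q.x\in I(v)\subseteq I(b)$ and $a.y<b.y\le v.y$, and that after the additions every point of $Q_a$ has $x$-coordinate $\le I(b).\min$ or equal to $I(b).\max$; a short split on the position of $p.x$ relative to $[I(b).\min,I(b).\max]$ then places one of $\langle I(b).\min,b.y\rangle,\langle I(b).\max,b.y\rangle,e$ inside $\square_{pq}$. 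For (ii) the points $\langle I(b).\max,u.y\rangle$ and $\langle I(b).\min,u.y\rangle$ of Invariant~(4) for $T$ lie on the vertical border of $\square_{p,d}$, resp.\ $\square_{p,e}$ (their row $u.y$ lies between $w.y$ and $a.y$), hence inside. For (iii), for a newly added $p\in\{d,e\}$ one has $q.x<I(a).\max<p.x$ (for $p=e$, newness forces $I(a).\max<I(b).\min$), so the point $\langle I(a).\max,a.y\rangle$ of Invariant~(4) for $a$ lies on the bottom edge of $\square_{p,q}$, hence inside. In every subcase the exhibited point differs from both defining corners of the rectangle, or the rectangle is axis-aligned; either way it is satisfied.

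Invariant~(3) only constrains columns of nodes, and the only edge of $T$ that changed is the one above $b$, whose row-range shrank from $(u.y,b.y)$ to $(a.y,b.y)$ and gained no new points; and if a node happens to sit in column $I(b).\min$ or $I(b).\max$, then by laminarity (the ``structure is exactly $T$'' half of Invariant~(1)) it is either a descendant of $b$ --- with Invariant-(3) row-range above $b.y$ --- or an ancestor of $b$ --- with Invariant-(3) row-range at or below $u.y$ --- so the row-$a.y$ points $d,e$ cannot fall into it. The symmetric case is the reflection, adding $\langle I(a).\min,b.y\rangle$ and possibly $\langle I(a).\max,b.y\rangle$. I expect the main obstacle to be the bookkeeping in Invariant~(2): keeping track of which of the several ``boundary'' points supplied by Invariant~(4) for $T$ certifies each newly created rectangle, and handling the degenerate cases where a corner of the rectangle is $d$ or $e$ itself. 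The check that $d,e$ respect Invariant~(3) elsewhere is a smaller point, easy to overlook, and is exactly where the full strength of Invariant~(1) (not just parent-contains-child) is used.
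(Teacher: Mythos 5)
Your proposal takes essentially the same route as the paper's proof: you pick the same two candidate points (the endpoints of $I(b)$ projected to row $a.y$), you argue Invariant~(1) via the new interval $I'(a)=(I(a).\min,I(b).\max)$, you break Invariant~(2) into the same three families of new ancestor/descendant pairs (row $a.y$ against the subtree of $b$, ancestors of $a$ against the new points, the new points against $a$'s old subtree), and you use exactly the points supplied by Invariant~(4) as witnesses. The paper's $d,e$ are your $e,d$ (swapped names), but the construction is the same.

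One step as written would not go through: the newness of $d=\coord{I(b).\max}{a.y}$, which you attribute to ``Invariant~(3) for $b$''. That invariant only speaks about \emph{column $b.x$}; when $I(b)$ is not a singleton you have $I(b).\max\neq b.x$, and Invariant~(3) for $b$ says nothing about column $I(b).\max$. The paper handles this with a two-case argument: if $I(b).\min=I(b).\max=b.x$, Invariant~(3) for $b$ does the job exactly as you say; otherwise it uses the closed-form definition of the intervals (\ref{eq:interval general}) directly --- every $p\in Q_a$ has $p.x\le I(a).\max\le I(b).\min<I(b).\max$, so no point of $Q$ in row $a.y$ sits in column $I(b).\max$. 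You need this second case, or your count of new points is unjustified.

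Conversely, you are more careful than the paper on one point: you observe that adding $d,e$ could in principle land in the forbidden column-strip of a node \emph{other} than $b$, which the paper's sentence ``as only $a$ becomes a new parent of $b$, we only need to show $\{\coord{b.x}{i}\mid a.y<i<b.y\}\cap Q'=\emptyset$'' quietly elides. Your conclusion --- that the node in column $I(b).\min$ or $I(b).\max$ is a descendant of $b$, whose strip lies entirely above $b.y>a.y$ --- is correct, but ``laminarity'' of the open intervals alone is not enough to see it (open intervals in a laminar family can share an endpoint with an incomparable interval). You also need the closed-form of $I$: the $x$-coordinates appearing in $Q_{y=b.y}$ are exactly the permutation column $b.x$ together with interval endpoints of descendants, so $I(b).\min$ and $I(b).\max$ are columns of descendants of $b$. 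With that added, your Invariant-(3) argument is complete and in fact fills a small gap in the paper's exposition.
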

\begin{proof}
Suppose that $a$ and $b$ are neighboring nodes in $T$ with the
same parent $u$, and $T'$ is obtained from $T$ by changing the
parent of $b$ from $u$ to $a$. Assume $a.x<b.x$ (the other case
is symmetric). Let $Q'=Q\cup\{d,e\}$ where $d=\coord{I(b).\min}{a.y}$
and $e=\coord{I(b).\max}{a.y}$ (see \ref{fig:inv_gen}). We
note that it is possible that $d$ is not new, i.e.\ $d\in Q$ already.
But we have the following:
\begin{claim}
$e$ is a new point, i.e.\ $e\notin Q$. Hence, $1\le|Q'\setminus Q|\le2$.\end{claim}
\begin{proof}
There are two cases. First, suppose $I(b).\min=I(b).\max=b.x$. Then
by Invariant (3), we have that $\{\coord{b.x}i\mid u.y<i<b.y\}\cap Q=\emptyset$.
As $e=\coord{b.x}{a.y}$, so $e\notin Q$. Next, suppose $I(b).\min<I(b).\max$.
Then, by \ref{eq:interval general}, for every $p\in Q_{a}$, we have $p.x\le I(a).\max\le I(b).\min$.
As $e.x=I(b).\max$, so $e\notin Q_{a}$ and hence $e\notin Q$. \qedd
\end{proof}
It remains to show that the invariants hold for $T'$ and $Q'$.

For Invariant (1), observe that $I'(x)=I(x)$ for all $x$, except
for $x=a$. As $I(a).\max\le d.x$ by Invariant (1), we have $I'(b)=(d.x,e.x)\subseteq(I(a).\min,e.x)=I'(a)$.
For any child $c$ of $a$, $I'(c).\max=I(c).\max\le I(a).\max\le d.x=I'(b).\min$
(the first inequality holds by Invariant (1)), so we have $I'(b)\cap I'(c)=\emptyset$.
As we only change the parent of $b$ (from $u$ to $a$), all these
observations imply that Invariant (1) holds after adding the new points.

For Invariant (2), after linking, $a$ becomes a new ancestor of $b$
and of any descendant of $b$. It is sufficient to consider rectangles $\square_{pq}$ where $p\in Q'_{a}$, and $q\in Q'_{c}$, where $c$ is an ancestor or descendant of $a$.

First, let $c$ be a descendant of $b$ (possibly $b$ itself). 
There are three cases: (1) if
$p=d$, then $\coord{I(b).\min}{b.y}\in\square_{pq}$; (2) if
$p=e$, then $\coord{I(b).\max}{b.y}\in\square_{pq}$; (3) if $p\neq d,e$, then $p\in Q_{a}$ and $q\in Q_{c}$ and so $d\in\square_{pq}$.

Second, let $c$ be an old descendant of $a$ (possibly $a$ itself). If $p \in \{d,e\}$, then $\coord{I(a).\max}{a.y}\in\square_{pq}$. (If $p\neq d,e$, then $p\in Q_{a}$, and $\square_{pq}$ was already satisfied before adding the new points, due to Invariant (2).)

Third, let $c$ be an ancestor of $u$ (possibly $u$ itself). By Invariant (4) we have $d',e' \in Q'_u$, where $d' = \coord{d.x}{u.y}$, and $e' = \coord{e.x}{u.y}$, i.e.\ the projections of the endpoints of $I(b)$ to $I(u)$. One of the points $d'$, $e'$ is contained in every $\square_{pq}$, making it satisfied. 

As there are no other cases, Invariant (2) holds.

For Invariant (3), as only $a$ becomes a new parent of $b$, we only
need to show that $\{\coord{b.x}i\mid a.y<i<b.y\}\cap Q'=\emptyset$.
As we have $\{\coord{b.x}i\mid u.y<i<b.y\}\cap Q=\emptyset$
and $d,e\notin\{\coord{b.x}i\mid u.y<i<b.y\}$, we are done. 

For Invariant (4), we need to argue that $\coord{I(b).\min}{a.y}$, $\coord{I(b).\max}{a.y}$, $\coord{I(a).\max}{a.y}$, $\coord{I(a).\max}{u.y}$ $\in Q'$. The first three are true by construction (these are exactly $d$, $e$, and $e$, respectively). The fourth is true, because $\coord{I(a).\max}{u.y} = \coord{I(b).\max}{u.y} \in Q \subseteq Q'$ by Invariant (4) for $T$ and $Q$. \qedd
\end{proof}
We conclude with the proof of \ref{lem:general tran geo}. 
\begin{proof}
[Proof of \ref{lem:general tran geo}]Given $\cA_{\path}$ and $P$,
we let $\cA_{\sat}$ be simply the algorithm that returns $Q$ as
described above. By \ref{lem:inv maintained general}, for each link
of $\cA_{\path}$, one or two new points are added to $Q$. So $|Q|=\Theta(|\cA_{\path}(P)|)$.
By \ref{lem:inv end} and the fact that $Q\supseteq(P\cup\base)$,
$Q$ is a satisfied superset of $P$. Lastly, it is easy to see from
\ref{fig:inv_gen} that all points added to $Q$ are ``below''
$P$, i.e.\ for every $q\in Q_{x=i}$ we have $q.y\le(p_{x=i}).y$,
where $p_{x=i}$ is the unique point in $P_{y=i}$. (To
see this, consider for contradiction the first time a point would
be added above a point of $P$.) In other words, $Q$ is an insertion-compatible
superset of $P^{r}$. \qedd
\end{proof}

\subsubsection{The general transformation to a family of BST algorithms\label{sub:general family step}}

In this subsection we sketch the idea of generating multiple (different)
BST executions.

The argument goes in a similar way as in \S\,\oldref{sub:general step}, except that Invariant (3) is no longer maintained.
Every time
$\cA_{\path}$ performs a link operation and \ref{lem:inv maintained general}
is invoked, there is an alternative way to add points to $Q$ (see
\ref{fig:alt step}) such that Invariants (1), (2) and (4) are maintained, by a similar argument.
However, with this alternative, it is possible that the points $d$ and $e$ are already in the point set $Q$.  
It follows that $|Q|=O(|\cA_{\path}(P)|)$ (instead of $|Q|=\Theta(|\cA_{\path}(P)|)$ as before).
Note also that it is no longer the case that all points added to $Q$ are
below $P$.

\begin{figure}[H]
	\begin{centering}
		\includegraphics[width=0.4\textwidth]{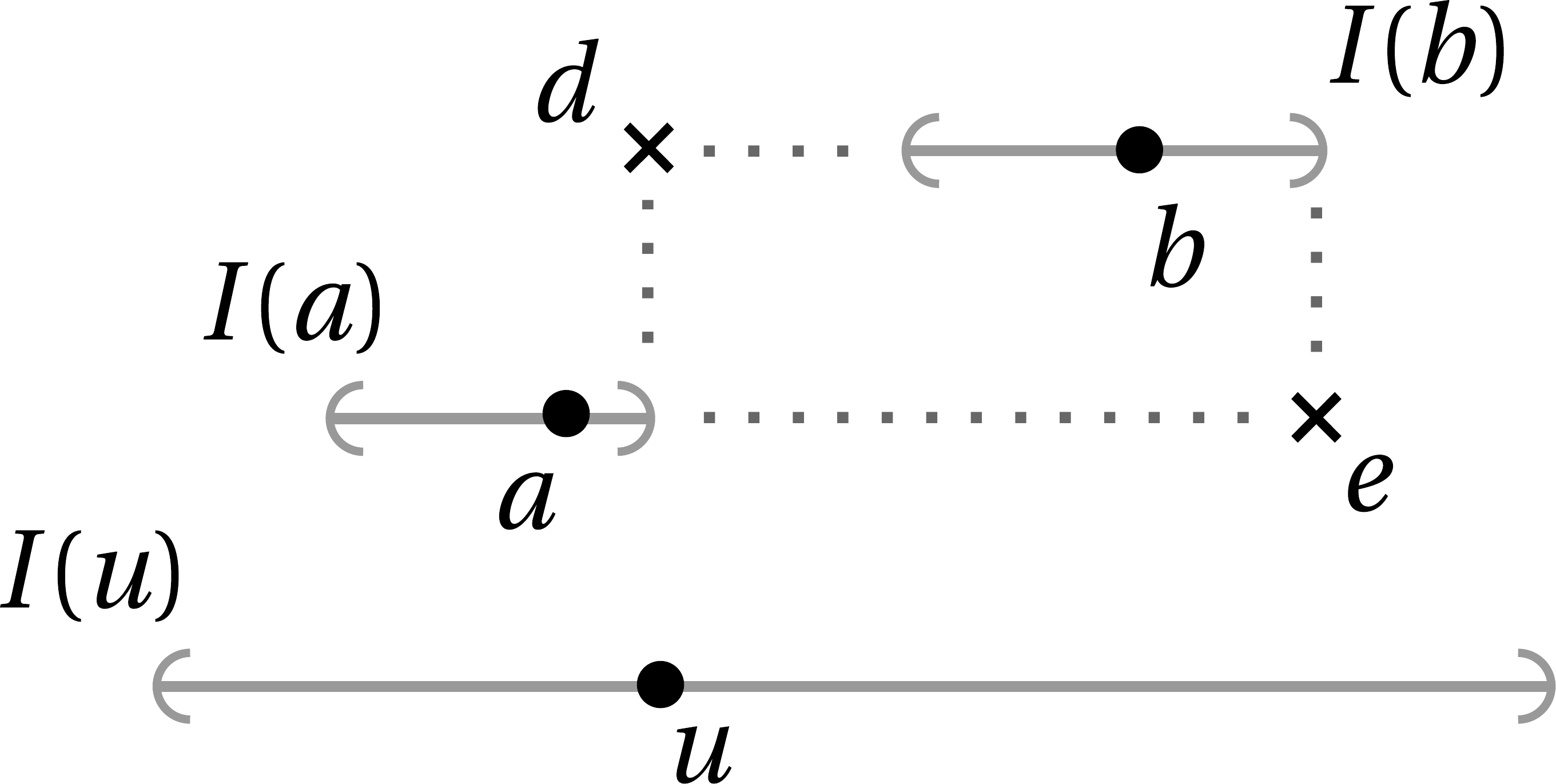}
		\par\end{centering}
	
	\caption{An alternative step in the general transformation. After linking $b$ to $a$,
		the new points $d$ and $e$ (marked as crosses) are added. Dotted lines show horizontal or vertical alignment. \label{fig:alt step}}
\end{figure}

The important observation is that, we can choose, independently for
each link operation, whether to add points as shown in \ref{fig:inv_gen}
or as shown in \ref{fig:alt step}. Thus, we obtain infinitely many
ways of constructing $Q$ (as $n$ goes to infinity). 

Applied to $\smooth$, we obtain various (non-trivial) satisfied supersets,
whose costs are at most (up to constant factors) the conjectured optimal cost
of $\greedy$. The surprising aspect of this result is that small
local changes tend to propagate and affect the future behavior of
algorithms in ways that are otherwise difficult to analyse.

For instance, it appears difficult even to show that $\greedy$\;(in
the geometric view) is competitive with itself, executed on the same
input point set, but with \emph{one additional point} added at time
zero. Understanding the effect of different initial/intermediate states is a bottleneck
in our current understanding of \greedy, and indeed, the BST model.
(See also~\cite{FOCS15}.)

\subsection{Structure of smooth heap after linking all two-neighbor nodes }

Before proving \ref{thm:smooth tran} in \S\,\oldref{sub:smooth
greedy step}, we make some observations about the tree that $\smooth_{\path}$
maintains.

Recall the non-deterministic view of the smooth heap described in \S\,\oldref{sec:smooth_desc}. In sorting mode,
the smooth heap performs $n$ \emph{rounds}, where each round corresponds
to one extract-min operation. In each round, as long as the current
root has more than one child, we repeatedly find an \emph{arbitrary}
local maximum $x$ among the children of the root, and link $x$ to
the neighbor with a larger key (if there are more than one neighbors). 

To facilitate the proof, we assume that, whenever possible, a local maximum with two neighbors
is chosen. If such a node does not exist, then we choose a local maximum
with one neighbor.

We translate this process to describe how $\smooth_{\path}$ works
as follows. Given a permutation $P$, $\smooth_{\path}$ works in
$n$ \emph{rounds. }In the beginning of the $r$-th round $(u_{0},\dots, u_{r-1})$
form a path in the current monotone tree. Let $v_{1},\dots,v_{k}$
be the current children of $u_{r-1}$ ordered from left to right.
We call a node $v_{i}$ a \emph{local maximum} (among the children
of $u_{i-1}$) if $(v_{i}).y>(v_{i-1}).y$ (if $v_{i-1}$ exists)
and $(v_{i}).y>(v_{i+1}).y$ (if $v_{i+1}$ exists). As long as $k>1$,
we choose an arbitrary \emph{local maximum} $v_{i}$, then link $v_{i}$ to
its higher neighbor. Again, we assume that, if possible, a local maximum
with two neighbors is chosen, before choosing a local maximum with
only one neighbor. The following observations are immediate. 
\begin{prop}
\label{prop:structure one-neighbor}Let $v_{1},\dots,v_{k}$ be the
current children of $u_{r-1}$ and suppose that there is no local
maximum node with two neighbors. Then: 
\begin{itemize}
\item $v_{1},\dots,v_{k}$ form a V-shape, i.e.\ for some $i$: 
\[
v_{1}.y>\dots>v_{i-1}.y>v_{i}.y<v_{i+1}.y<\dots<v_{k}.y.
\]

\item Only $v_{1}$ and $v_{k}$ are local maxima. Both have one neighbor. 
\item After linking $v_{1}$ or $v_{k}$, there is still no local maximum
with two neighbors. 
\item At the end of the round, $v_{j}$ is linked to $v_{j+1}$ for all
$j<i$, and $v_{j}$ is linked to $v_{j-1}$ for all $j>i$. 
\end{itemize}
\end{prop}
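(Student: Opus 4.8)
The four items are direct consequences of the standing hypothesis that no local maximum with two neighbours exists among $v_1,\dots,v_k$; the plan is to prove them in the stated order, with the first item (the V-shape) carrying essentially all the weight, and the remaining three amounting to bookkeeping about how a single stable link changes the list of children of $u_{r-1}$.

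First I would establish the V-shape. Write $y_j=v_j.y$. The hypothesis says that every internal child $v_j$ (for $2\le j\le k-1$) is \emph{not} a local maximum, so, since all keys are distinct, $y_j<y_{j-1}$ or $y_j<y_{j+1}$; equivalently, the sequence $(y_1,\dots,y_k)$ has no internal ``peak''. Let $i$ be the index of its minimum. I claim $y_1>\cdots>y_i$ and $y_i<\cdots<y_k$, which is exactly the asserted shape. Suppose the prefix is not strictly decreasing; then there is an ascent $y_\ell<y_{\ell+1}$ with $\ell+1\le i$, and since $y_i$ is the global minimum we cannot have $\ell+1=i$, so in fact $\ell+1<i$. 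Choosing $\ell$ maximal with this property, the indices strictly between $\ell$ and $i$ contain no ascent, hence $y_{\ell+1}>\cdots>y_i$; but then $v_{\ell+1}$ is an internal peak ($2\le\ell+1\le k-1$), a contradiction. The suffix case is symmetric, giving the V-shape.

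Given the V-shape with valley at index $i$, the rest is routine. For the second item, a one-line case check shows no internal $v_j$ is a local maximum ($j\le i$ forces $y_{j-1}>y_j$, and $j\ge i$ forces $y_{j+1}>y_j$), so only the endpoints $v_1$ and $v_k$ can be local maxima, and each of them has a single neighbour by being leftmost, resp.\ rightmost (in the degenerate cases $i=1$ or $i=k$ one endpoint is the valley and is not even a local maximum, but ``only $v_1$ and $v_k$'' still holds). For the third item, $v_1$ is a local maximum only when $y_1>y_2$, and then linking it makes $v_1$ (together with its subtree) a child of $v_2$, so $v_1$ drops out of the child list of $u_{r-1}$; what remains is $v_2,\dots,v_k$, with unchanged left-to-right order and unchanged $y$-values, which is again a V-shape (a suffix, from position $2$ on, of the original one). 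Symmetrically for $v_k$; and a V-shape has no internal peak, hence no local maximum with two neighbours. Finally, for the fourth item I would run the whole round under this maintained invariant: at every step the children of $u_{r-1}$ form a contiguous sub-range $v_a,\dots,v_b$ with $a\le i\le b$ that is a V-shape, and the only links performed are between $v_a$ and $v_{a+1}$ when $a<i$ (advancing the left end to $a+1$) or between $v_{b-1}$ and $v_b$ when $b>i$ (retracting the right end to $b-1$). The valley $v_i$ is never a local maximum, hence never linked, so it is the single surviving child; since the left end starts at $1$, never decreases, and must reach $i$, the links $(v_1,v_2),\dots,(v_{i-1},v_i)$ are all performed, and symmetrically $(v_i,v_{i+1}),\dots,(v_{k-1},v_k)$, which is precisely the claim. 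The only genuinely non-mechanical step is the ``no internal peak implies V-shape'' argument above; everything else is careful tracking of a stable link together with attention to the degenerate cases $k\le 2$ and $i\in\{1,k\}$, which do not affect the substance.
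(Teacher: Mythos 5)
Your argument is correct, and in fact the paper simply asserts the proposition as a collection of ``immediate'' observations without providing any proof, so your write-up supplies details that the paper leaves implicit rather than paralleling or diverging from an existing argument. The central step --- that ``no internal peak'' forces a V-shape, proved by taking the global minimum index $i$, finding a maximal ascent $y_\ell<y_{\ell+1}$ within the prefix, and producing the contradicting internal peak at $\ell+1$ --- is exactly the kind of elementary monotonicity argument the authors had in mind, and your treatment of the remaining three bullets (the V-shape suffix/prefix is preserved after removing an endpoint; the left and right boundaries $a,b$ monotonically close in on the valley $i$) is careful and covers the degenerate cases $i\in\{1,k\}$ and $k\le 2$ appropriately. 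One small cosmetic point: in the fourth item, when you say the valley $v_i$ is ``never linked,'' you should say more precisely that $v_i$ is never the losing side of a link (it does participate in links as the item that gains a child); your parenthetical intent is clear, but the phrasing could be tightened.
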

By this observation, it follows that in each round,
once we have linked local maxima with two neighbors until this is no longer possible, all the remaining links will be of one-neighbor local maxima. 
\begin{prop}
For each round, the sequence of link operations can be divided into
two consecutive phases: a \emph{two-neighbor phase} with links involving
a local maximum with two neighbors, and a \emph{one-neighbor phase}
consisting of the remaining link operations.
\end{prop}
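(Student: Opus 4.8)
The plan is to read the statement off from \Cref{prop:structure one-neighbor} together with the scheduling rule we imposed on $\smooth_{\path}$ (whenever a local maximum with two neighbors exists among the current children of $u_{r-1}$, one of them is linked). The one fact that needs to be isolated is that, within a round, the predicate ``no local maximum with two neighbors exists among the current children of $u_{r-1}$'' is \emph{absorbing}: once it holds, it continues to hold until the round ends.

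First I would fix a round $r$ and let $v_1,\dots,v_k$ be the children of $u_{r-1}$ at its start. If $k\le 1$ the round performs no links and both phases are empty, so assume $k\ge 2$; then the round consists of exactly $k-1$ links, and since each link removes one node from the child list, the list has size $2$ just before the last link, at which moment neither remaining node has two neighbors. Hence some step of the round has no two-neighbor local maximum; let $s$ be the first such step. For every step before $s$ a two-neighbor local maximum exists, so by the scheduling rule each such link involves a local maximum with two neighbors: these are the links of the \emph{two-neighbor phase}. For step $s$ and every later step I argue by induction. Whenever the current children have no two-neighbor local maximum, by the first two items of \Cref{prop:structure one-neighbor} they form a V-shape whose only local maxima are its two ends, each with a single neighbor, so the link performed is a one-neighbor link; and by the third item of \Cref{prop:structure one-neighbor}, after linking one of the ends there is still no two-neighbor local maximum, which re-establishes the hypothesis. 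Thus every link from step $s$ on is a one-neighbor link, and these form the \emph{one-neighbor phase}; the two phases are consecutive, as claimed.

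I do not expect a genuine obstacle here: the entire content already lives in \Cref{prop:structure one-neighbor}, and the only care needed is with the boundary cases — $k\le 1$; $s=1$, making the two-neighbor phase empty; or the one-neighbor phase reducing to the single final link — all of which are consistent with the statement, since ``divided into two consecutive phases'' permits either phase to be empty.
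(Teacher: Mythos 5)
Your proof is correct and takes essentially the same approach as the paper, which simply observes that \Cref{prop:structure one-neighbor} (in particular its third bullet, the absorbing property) implies the phase decomposition; you have just spelled out the induction and the boundary cases more carefully than the paper's one-line remark does.
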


\subsection{The Smooth-Greedy transformation\label{sub:smooth greedy step}}

In this subsection, we finally prove \ref{thm:smooth tran}. The approach
is similar to the one in \S\,\oldref{sub:general step}. The differences
are in (1) the definition of $Q^{init}$ and $I$, and (2) in the
method of updating $Q$ and $I$ for each link performed by $\smooth_{\path}$.

Let $\boxx=\{\coord i0|~i\in[n]\}\cup\{\coord 0i|~0\le i\le n\}\cup\{\coord{n+1}i|~0\le i\le n\}$.
Let $Q^{init}=P\cup\boxx$. Unlike in \S\,\oldref{sub:general
step}, we do not have a closed form formula for the interval function
$I$ at every time. We instead describe how it is updated after each
link. Initially, we set $I(u_{0}).\min=0$, $I(u_{0}).\max=n+1$,
and $I(u_{i}).\min=I(u_{i}).\max=i$, for $i>0$.

We keep invariants (1),(2), and (4) from \S\,\oldref{sub:common framework}
unchanged. We strengthen Invariant (3) to Invariant (3')\footnote{Invariant (3) states that for every node $v$ with parent
$u$, $\{\coord{v.x}i\mid u.y<i<v.y\}\cap Q=\emptyset$. Invariant
(3') states more strongly that for every node $v$, $\{\coord{v.x}i\mid0<i<v.y\}\cap Q=\emptyset$. }. We also introduce one new invariant. 
\begin{itemize}
\item \textbf{Invariant 3' (No points added below nodes): }For every point
$p\in Q\setminus\boxx$ and node $v$, if $p.x=v.x$, then $p.y\ge v.y$. 

\item \textbf{Invariant 5 (Intervals include all non-box points): } For
every node $v$, $I(v)\supseteq(\min_{q\in(Q_{v}\setminus\boxx)}q.x,\max_{q\in(Q_{v}\setminus\boxx)}q.x)$.
\end{itemize}
Observe that earlier, Invariant (5), although not stated explicitly, was true \emph{with equality}, by the definition of the intervals.
Here we have a relaxed statement, to allow for box points being included
in the intervals. 
\begin{lem}
The invariants hold initially.\label{lem:inv init smooth}\end{lem}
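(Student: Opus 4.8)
My plan is to verify the five invariants (1), (2), (3'), (4), (5) directly in the initial configuration. Here the tree is $T = star(P)$, so the root is $u_0 = \coord 00$ and $u_1,\dots,u_n$ are exactly its children; the point set is $Q^{init} = P\cup\boxx$; and the interval function is $I(u_0) = (0,n+1)$ together with, for every $i\ge 1$, $I(u_i)$ the degenerate interval whose two endpoints both equal $(u_i).x$. The argument mirrors the proof of \ref{lem:inv init general}, with $\boxx$ now playing the role that $\base$ played there; only the check of Invariant~(2) requires any thought.

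For Invariant~(1): each $I(u_i)$ with $i\ge 1$ is empty, hence trivially contained in $I(u_0)$ and pairwise disjoint from the others, so the intervals form a laminar family whose containment structure is exactly the star $T$. For Invariant~(3'): note that $Q^{init}\setminus\boxx = P$, since no point of $P\subseteq[n]\times[n]$ lies in $\boxx$ (box points have $x\in\{0,n+1\}$ or $y=0$); as $u_0$ sits in column $0$, which is empty in $P$, and $P$ is a permutation point set, any $p\in P$ and node $v$ with $p.x = v.x$ force $v\in P$ and $p = v$, so $p.y\ge v.y$. For Invariant~(4): each leaf $u_i$ has parent $u_0$, both endpoints of $I(u_i)$ equal $(u_i).x$, and $\coord{(u_i).x}{(u_i).y} = u_i \in P\subseteq Q$ while $\coord{(u_i).x}{0}\in\boxx\subseteq Q$, which is exactly the content of the invariant. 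For Invariant~(5): for $v = u_0$ the set $Q_{u_0}\setminus\boxx$ is empty, so there is nothing to include; and for $v = u_i$ with $i\ge 1$ we have $Q_{u_i}\setminus\boxx = \{u_i\}$, whose induced interval is the degenerate interval at $(u_i).x$, namely $I(u_i)$ itself.

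The one substantive case is Invariant~(2). In a star the only ancestor--descendant pairs are $(u_0,u_i)$ with $i\ge 1$, so it suffices to show that $\square_{pq}$ is satisfied whenever $p\in Q_{u_0}$ and $q\in Q_{u_i}$. By construction $Q_{u_0} = Q^{init}_{y=0}$ is the full bottom side of $\boxx$ together with its two bottom corners, i.e.\ $\{\coord{j}{0}\mid 0\le j\le n+1\}$, and every point of $Q^{init}$ has $x$-coordinate in $[0,n+1]$. Hence, given such $p$ and $q$: if $p.x = q.x$ then $\square_{pq}$ is vertically aligned, and thus satisfied; otherwise $\coord{q.x}{0}\in Q_{u_0}$ is a point distinct from $p$ (as $p.x\ne q.x$) and from $q$ (it lies in row $0$) which sits at a corner of $\square_{pq}$, so $\square_{pq}$ is satisfied. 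This exhausts all cases of Invariant~(2).

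I do not expect a genuine obstacle here: the lemma is essentially a bookkeeping check on a very simple initial state. The only point requiring care is that $\boxx$ contains the two bottom corners $\coord 00$ and $\coord{n+1}{0}$, so that the corner-point argument for Invariant~(2) also covers rectangles whose upper corner lies in the box columns $0$ or $n+1$, and not merely in a column of $P$.
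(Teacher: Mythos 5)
Your proof is correct and follows essentially the same approach as the paper's: verify each invariant directly on the initial configuration $Q^{init}=P\cup\boxx$, $T=star(P)$, with the degenerate intervals $I(u_i)=(i,i)$ and $I(u_0)=(0,n+1)$. The paper dispatches Invariants (3'), (4), (5) with ``hold trivially'' where you spell out the bookkeeping, and your treatment of Invariant (2) (splitting into the aligned case and the case where $\coord{q.x}{0}\in\boxx$ sits at a corner of $\square_{pq}$) matches the paper's argument that $Q_{y=0}^{init}\supseteq\{\coord{j}{0}\mid 0\le j\le n+1\}$ supplies a witness point.
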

\begin{proof}
Observe that the initial intervals form a laminar family whose structure
is exactly $star(P)$, so Invariant (1) holds. Next, for every $i\in[n]$,
$u_{i}$ has only $u_{0}$ as an ancestor. For every point $p\in Q_{y=0}^{init}$
and $q\in Q_{y=i}^{init}$, we have that $\square_{pq}$ is satisfied
as $Q_{y=0}^{init}\supseteq\{\coord i0\mid0\le i\le n+1\}$. So Invariant
(2) holds. Invariants (3'), (4), and (5) hold trivially. \qedd
\end{proof}
Next, we specify how to add points to $Q$ and update the interval
function $I$. There are two cases: whether we are in the two-neighbor
phase (\ref{lem:inv maintained two-neighbor}) or in the one-neighbor
phase (\ref{lem:inv maintained one-neighbor}). We argue these cases separately
in the following subsections.
%
%\begin{figure}
%\begin{centering}
%\includegraphics[width=0.8\textwidth]{fig_sg} 
%\par\end{centering}
%
%\caption{A step in the Smooth-Greedy transformation. Dotted lines show horizontal
%or vertical alignment. Left: Two-neighbor phase. Right: One-neighbor
%phase. \label{fig:smooth greedy step}}
%\end{figure}

\subsubsection{Adding points in two-neighbor phases}

During the the two-neighbor phase, let $T,Q$ and $I$ be the current
monotone tree, point set, and interval function respectively. Suppose
that the invariants hold for $T,Q$ and $I$. We argue that the
invariants hold after just one link. Let $T'$ be obtained from $T$
by a stable link operation performed by $\smooth_{\path}$. We first describe how we update $Q$ and $I$ to obtain
$Q'$ and $I'$. Then, we show that the invariants hold. Finally, we
show that the number of new points in $Q$ is between 1 and 3. By
applying this argument repeatedly for each link until the phase ends,
we conclude:
\begin{lem}
For the two-neighbor phase of every round, the invariants are maintained,
and the number of points added into $Q$ is proportional to the number
of links performed in the phase. \label{lem:inv maintained two-neighbor}
\end{lem}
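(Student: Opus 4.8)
The plan is to prove the single-link version of the statement and then iterate it over the phase: \emph{if} $T,Q,I$ satisfy Invariants (1), (2), (3'), (4), (5) and $T'$ is obtained from $T$ by one stable link that $\smooth_{\path}$ performs while still in the two-neighbour phase, \emph{then} we can add between one and three points to $Q$ and update $I$ so that all five invariants hold for $T',Q',I'$. Applying this to each link of the phase in turn maintains the invariants throughout and adds $\Theta(1)$ points per link, which is exactly \ref{lem:inv maintained two-neighbor}.

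For the single step, write $x$ for the local maximum that is linked. In the two-neighbour phase $x$ has a left neighbour $w$ and a right neighbour $z$ among the current children of a common parent $u$, and it is linked to whichever of $w,z$ has the larger $y$-coordinate — say $z$, the mirror case being symmetric. Since $x$ is a local maximum, $x.y>z.y$, so the link re-parents $x$ to become a child of $z$, and this is the only structural change. I would (i) enlarge $I(z)$ so that it swallows $I(x)$ on the appropriate side and leave every other interval as is, and (ii) add one to three new points, placed as ``lifts'' of the endpoints of $I(x)$ to the row(s) relevant to the new edge $(z,x)$ and the modified edge above $z$, together with (at most) one extra point needed precisely when the naive two-point choice would collide with Invariant (3'). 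This last point is where the smooth construction genuinely departs from the one in \S\,\oldref{sub:general step}, where the weaker Invariant (3) made such a collision impossible.

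The verification then follows the pattern of \ref{lem:inv maintained general}. Invariant (1) is immediate once one notes that $I(x)$ lay between $I(w)$ and $I(z)$, so absorbing it into $I(z)$ keeps the family laminar with shape $T'$; Invariant (5) survives because the added points lie inside the enlarged $I(z)$ and $Q_x$ is untouched — and here one uses that only containment, not equality, is asserted, so $\boxx$-columns inside intervals are harmless; Invariant (3') holds by induction together with the deliberate placement of the new points. Invariants (2) and (4) are the substance: the only new ancestor–descendant pairs are $z$ (and the ancestors of $z$) above $x$ (and the descendants of $x$), and for each such pair one checks, by splitting on which point of $Q'_z$ is chosen, that the relevant rectangle contains either a point inherited from Invariant (4) for the old edge at $x$ or one of the points just added — exactly the three-case argument of \ref{lem:inv maintained general} — while rectangles not involving a new pair were already satisfied and adding points cannot spoil them. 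The count (at least the one ``always new'' point, at most three) then gives the proportionality. I expect the main obstacle to be precisely the tension between Invariant (4), which wants endpoint-witnesses in the row of the new parent $z$, and the strengthened Invariant (3'), which forbids points beneath nodes of $x$'s unchanged subtree: engineering the point-addition rule to thread this needle, and seeing why three points always suffice, is the delicate part of the argument.
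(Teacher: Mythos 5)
There is a genuine gap: you have the right high-level shape (prove the single-link version and iterate), but the concrete point-addition rule you sketch is the one from \S\,\oldref{sub:general step}, and it \emph{cannot} be patched into working here; the paper's construction for the two-neighbour phase is materially different. You propose to lift the endpoints of $I(x)$ to the row of the new parent $z$ (and possibly $u$), add one conditional extra point, and enlarge $I(z)$ while leaving $I(x)$ unchanged. Take the clean case $I(x).\min = I(x).\max = x.x$: then $\coord{I(x).\min}{z.y} = \coord{x.x}{z.y}$ sits strictly below the node $x$ (since $x.y>z.y$), so it violates Invariant (3') the moment it is added. This is not a ``collision'' you can fix by adding one more point; the offending point itself is wrong, and you would need to place your witnesses elsewhere.

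What the paper actually does is qualitatively different from the general step in three ways, and all three are load-bearing. First, in the notation $a,b,c,u$ with $b$ the local maximum, $a,c$ its neighbours, $a.y>c.y$, and $b$ linked to $a$: the two ``bulk'' points $d=\coord{I(a).\max}{b.y}$ and $e=\coord{I(c).\min}{b.y}$ are placed at the row of the \emph{child} $b$, not the new parent, and their $x$-coordinates come from $I(a)$ and from the \emph{other} neighbour $I(c)$ — not from endpoints of $I(b)$ at all. This is what makes (3') survivable: those columns carry nodes (ancestors/descendants in $a$'s and $c$'s subtrees) that lie weakly below row $b.y$, so nothing new lands below a node. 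Second, \emph{both} $I(a)$ and $I(b)$ are redefined: $I'(a)=(I(a).\min,\,I(c).\min)$ and $I'(b)=(I(a).\max,\,I(c).\min)$, with $I'(b)$ expanded across the whole gap between $a$ and $c$; leaving $I(b)$ untouched, as you propose, would break Invariant (5), because $d.x$ and $e.x$ lie outside the old $I(b)$. Third, the third point $f=\coord{I(c).\min}{a.y}$ is \emph{always} new (it sits strictly to the right of $I(a).\max$, hence outside $I(a)$, hence not in $Q_a$ by Invariant (5)), and it is precisely this unconditional newness that yields the lower bound $|Q'\setminus Q|\ge 1$ and with it the proportionality claim; presenting it as ``at most one extra point, conditionally'' loses the lower bound entirely, since $d$ and $e$ can both already be present. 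In short, the tension between (3') and (4) that you correctly flag as the crux is resolved not by adding a repair point but by rerouting the witnesses through the neighbour $c$'s interval and into the child's row — an idea your sketch does not contain.
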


\paragraph{Updating $Q$ and $I$ after one link.}

Suppose that we are in the $r$-th round. Let $b$ be the local maximum
child of $u_{r-1}$, with two neighbors. Let $a$ and $c$ be the
left and right neighbors of $b$ respectively. By symmetry, assume
$a.y>c.y$. So $b$ is linked to $a$. To update $Q$ and $I$, we
set $Q'=Q\cup\{d,e,f\}$ where 
\begin{align*}
d & =\coord{I(a).\max}{b.y},\\
e & =\coord{I(c).\min}{b.y},\\
f & =\coord{I(c).\min}{a.y}.
\end{align*}
We set 
\begin{align*}
I'(a) & =(I(a).\min,I(c).\min),\\
I'(b) & =(I(a).\max,I(c).\min)
\end{align*}
and $I'(x)=I(x)$ for all nodes $x\neq a,b$. See \ref{fig:inv_smooth_two_neighbor}.

\begin{figure}
	\begin{centering}
		\includegraphics[scale=0.18]{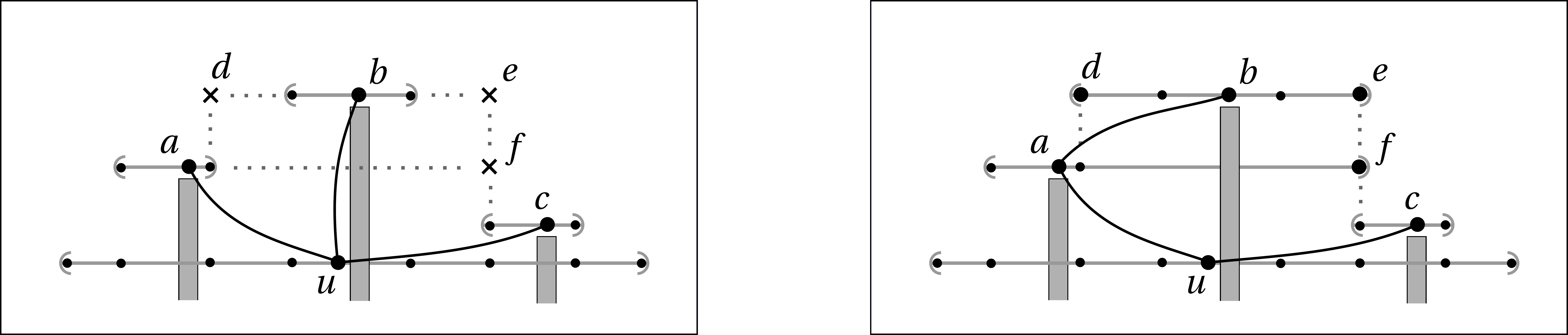}
		\par\end{centering}
	\caption{A two-neighbor link and the maintenance of invariants. \textbf{(left)} Before the transformation: $a$, $b$, and $c$ are children of $u$ in $T$. Crosses mark locations of newly added points $d$, $e$, and $f$. \textbf{(right)} After the transformation: $b$ is a child of $a$ in $T'$. Intervals shown as horizontal segments, respecting tree-structure (Invariant (1) and Invariant (5)); labels omitted. Shaded rectangles indicate empty areas (Invariant (3')). Smaller dots indicate interval endpoints and their projections (Invariant (4)). Dotted lines indicate horizontal and vertical alignment, some lines omitted for clarity. Non-emptiness of newly created rectangles (Invariant (2)) is guaranteed by newly added points and the points of Invariant (4).
		\label{fig:inv_smooth_two_neighbor}}
\end{figure}

\paragraph{Invariants maintained after one link.}

For Invariant (1), observe that $I'(b)\subseteq I'(a)$, and for every
child $a'$ of $a$, $I'(b)\cap I'(a')=\emptyset$ because $I'(a').\max=I(a').\max\le I(a).\max=I'(b).\min$.
Similarly, $I'(a)\subseteq I'(u)$ and $I'(b)\subseteq I'(u)$. As
there are no other changes in the monotone tree $T$ (other than $a$
gaining $b$ as a child), Invariant (1) is maintained.

For Invariant (2), after linking, $a$ becomes a new ancestor of $b$
and of every descendant of $b$. 
This introduces several new pairs of $(p,q)$ where we need to check if $\square_{pq}$ is satisfied.
It is clear that this holds if either $p\in\boxx$ or $q\in\boxx$.
The other cases are as follows:
\begin{enumerate}
\item If $p=d$ and $q$ is below $p$ (i.e. $q.y < p.y$), then $\coord{I(a).\max}{a.y}\in\square_{pq}$. If $p=d$ and $q$ is above $p$, then $\coord{I(b).\min}{b.y}\in\square_{pq}$. Note that $\coord{I(a).\max}{a.y},\coord{I(b).\min}{b.y}\in Q$ by
Invariant (4). 
\item If $p=e$ and $q$ is below $p$, then $f \in \square_{pq}$. If $p=e$ and $q$ is above $p$, then $\coord{I(b).\max}{b.y}\in\square_{pq}$.
Note that $\coord{I(b).\max}{b.y}\in Q$ by Invariant (4). 
\item If $p=f$ and $q$ is below $p$, then $\coord{f.x}{u.y}\in\square_{pq}$. If $p=f$ and $q$ is above $p$, then $e \in \square_{pq}$. 
Note that $\coord{I(b).\max}{b.y}\in Q$ by Invariant (4). 
\item If $p\in Q_{a}\setminus\boxx$ is an old point, then $\coord{I(a).\max}{a.y}\in\square_{pq}$ where $q \in Q_b$. 
\end{enumerate}
The aboves cases exhaust all the cases because only $d,e,f$ are new points, and only $a$ becomes a new ancestor of $b$.
This implies that Invariant (2) holds. 

%We also need to consider the cases of $\square_{pq}$, where $p \in \{d,e,f\}$, and $q\in Q'_x$, where $x$ is $u$ or an ancestor of $u$. Observe that in these cases either $\coord{I(a).\max}{a.y} \in \square_{pq}$ or $\coord{I(c).\min}{c.y} \in \square_{pq}$, therefore $\square_{pq}$ is satisfied.

For Invariant (3'), we show that the points $d,e,f$ are not below
any node of the monotone tree. Let $v_{d}$ be a node such that $v_{d}.x=I(a).\max=d.x$.
Now, by Invariant (4), we have $\coord{I(a).\max}{a.y}\in Q$ and
actually we have $\coord{I(a).\max}{a.y}\in Q\setminus\boxx$. By
Invariant (3'), $v_{d}.y\le a.y<b.y=d.y$. So $d$ does not violate
Invariant (3'). Next, let $v_{e}$ be a node such that $v_{e}.x=I(c).\min=e.x=f.x$.
Again, $\coord{I(c).\min}{c.y}\in Q\setminus\boxx$ by Invariant (4).
Then, Invariant (3') implies $v_{e}.y\le c.y<f.y<e.y$. So $e$ and
$f$ do not violate Invariant (3') either.

For Invariant (4), as $I'(u)=I(u)$ for $u\neq a,b$ we only need
to argue about the endpoints of $I'(a)$ and $I'(b)$. As $I'(a).\min=I(a).\min$,
we have $\coord{I'(a).\min}{a.y}\in Q$ and $\coord{I'(a).\min}{u.y}\in Q$
by Invariant (4) before adding points. As $I'(a).\max=I(c).\min$,
we have $\coord{I'(a).\max}{a.y}=f\in Q'$ and $\coord{I'(a).\max}{u.y}=Q$
by Invariant (4) before adding points. Next, we have $\coord{I'(b).\min}{b.y}=d\in Q'$
and $\coord{I'(b).\min}{a.y}=\coord{I(a).\max}{a.y}\in Q$ by Invariant
(4) before adding points. Finally, we have $\coord{I'(b).\max}{b.y}=e\in Q'$
and $\coord{I'(b).\max}{a.y}=f\in Q'$.

For Invariant (5), only $Q_{a}\setminus\boxx$ and $Q_{b}\setminus\boxx$
change. For $Q_{a}\setminus\boxx$, we have set $I'(a).\max$ 
to be $f.x$. For $Q_{b}\setminus\boxx$, $I'(b)=(d.x,e.x)$. So Invariant
(5) is maintained as well. This concludes that the invariants hold
during the two-neighbor phase.

\paragraph{Counting points added after one link.}

We first observe the following:
\begin{claim}
If $I(b).\min=I(b).\max=b.x$ , then $I(a).\max<b.x<I(c).\min$.\label{claim:singleton get gap}\end{claim}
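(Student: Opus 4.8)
The plan is to prove the two inequalities $I(a).\max < b.x$ and $b.x < I(c).\min$ separately. They are mirror images of one another under the horizontal reflection swapping left and right (the reflection is indifferent to which of $a,c$ is the higher neighbour of $b$), so it suffices to establish the first one, $I(a).\max < b.x$.

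First I would show $I(a).\max \le b.x$. The key point is that, among the children of $u_{r-1}$ listed left to right, the corresponding intervals occur in the same left-to-right order: for consecutive children $p$ (left) and $q$ (right) one has $I(p).\max \le I(q).\min$. This is a property maintained throughout the execution. It holds trivially for $star(P)$, where the intervals are the distinct singletons $\{u_i.x\}$; and it is preserved by every link update, since when the absorber's interval is extended its new right (or left) endpoint is set to equal the left (or right) endpoint of its new neighbour, while the interval newly assigned to the absorbed node fits exactly into the gap vacated between the absorber and that neighbour and becomes the absorber's outermost child. Applying this to the consecutive children $a$ and $b$ and using the hypothesis $I(b).\min = b.x$ gives $I(a).\max \le I(b).\min = b.x$.

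It then remains to exclude the equality $I(a).\max = b.x$. Suppose $I(a).\max = b.x$. Since $a$ is a child of $u_{r-1}$, Invariant (4) applied to $a$ and its parent $u_{r-1}$ says that $\coord{I(a).\max}{a.y} = \coord{b.x}{a.y}$ lies in $Q$. This point is not in $\boxx$: its $x$-coordinate $b.x \in [n]$ is neither $0$ nor $n+1$, and its $y$-coordinate $a.y \in [n]$ is not $0$. Hence $\coord{b.x}{a.y} \in Q \setminus \boxx$. On the other hand, $b$ is a node of the current tree whose $x$-coordinate is exactly $b.x$, and since $b$ is a local maximum with neighbour $a$ we have $a.y < b.y$. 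Thus $\coord{b.x}{a.y}$ is a non-box point lying strictly below the node $b$, contradicting Invariant (3$'$). Therefore $I(a).\max < b.x$. The inequality $b.x < I(c).\min$ follows symmetrically: $I(c).\min \ge I(b).\max = b.x$ by the ordering of sibling intervals, and if $I(c).\min = b.x$ then Invariant (4) applied to $c$ places the non-box point $\coord{b.x}{c.y}$ into $Q$, which lies strictly below the node $b$ (as $c.y < b.y$), again contradicting Invariant (3$'$).

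The only delicate point I anticipate is the bound $I(a).\max \le b.x$: because $I(b)$ is here the degenerate interval $(b.x,b.x)$, the disjointness assertion of Invariant (1) is vacuous and does not by itself separate $I(a)$ from column $b.x$, so one genuinely needs the stronger (but routinely maintained) fact that sibling intervals appear in the same order as the siblings. Once that is available, the equality case is handled cleanly by pairing Invariant (4) — which is forced to put a point into column $b.x$ in a row below $b$ — against Invariant (3$'$), which forbids precisely such a point.
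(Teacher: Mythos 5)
Your proof is correct, and its core coincides with the paper's: to rule out the equality cases you apply Invariant~(4) to place $\coord{b.x}{a.y}$ (resp.\ $\coord{b.x}{c.y}$) in $Q\setminus\boxx$ and then invoke Invariant~(3$'$) against the node $b$, reaching the contradiction $a.y\ge b.y$ (resp.\ $c.y\ge b.y$). The paper's proof consists of exactly this step, together with the unstated observation that $I(a).\max\le b.x\le I(c).\min$ already holds.

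The additional work you do --- establishing explicitly that the intervals of consecutive siblings occur in the same left-to-right order, so that $I(a).\max\le I(b).\min$ and $I(b).\max\le I(c).\min$ --- is a genuine strengthening. You are right that Invariant~(1) as stated does not yield this when $I(b)=(b.x,b.x)$ is degenerate: the empty open interval is trivially disjoint from everything, so nothing forces $I(a)$ to lie to its left, and Invariant~(5) is a one-sided containment that gives no upper bound on $I(a).\max$ either. Your inductive argument for the ordering (trivial at $star(P)$, preserved by both the two-neighbor and one-neighbor interval updates, which always rewrite an endpoint to coincide with a neighbour's adjacent endpoint) is sound. The paper treats this ordering as tacit --- it is visible in the figure illustrating Invariant~(1) but never isolated as a maintained property --- so your version is the more careful one; without it, $I(a).\max>b.x$ is not excluded by ruling out equality alone.
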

\begin{proof}
Recall that $b.y>a.y>c.y$. Suppose that $I(a).\max=b.x$. Then, $\coord{I(a).\max}{a.y}\in Q$
by Invariant (4) and actually $\coord{I(a).\max}{a.y}\in Q\setminus\boxx$.
As $\coord{I(a).\max}{a.y}=\coord{b.x}{a.y}$, Invariant (3') implies
that $a.y\ge b.y$ which is a contradiction. So we have $I(a).\max<I(b).\min$.
Symmetrically, suppose that $I(c).\min=b.x$. Then $\coord{I(c).\min}{c.y}\in Q\setminus\boxx$
by Invariant (4). As $\coord{I(c).\min}{c.y}=\coord{b.x}{c.y}$, Invariant
(3') implies that $c.y\ge b.y$ which is a contradiction. \qedd
\end{proof}
Now, we get:
\begin{claim}
$I(a).\max<I(c).\min$. \label{claim:gap ac} 
\begin{proof}
There are two cases: if $I(b).\min<I(b).\max$, then this is easy;
by Invariant (1), $I(a).\max\le I(b).\min<I(b).\max\le I(c).\min$.
Else, \ref{claim:singleton get gap} directly implies the current claim. \qedd
\end{proof}
\end{claim}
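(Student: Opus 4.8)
The plan is a short case analysis on whether the interval $I(b)$ of the local maximum $b$ that is about to be linked is degenerate, invoking only Invariant~(1) in one case and the already-established \ref{claim:singleton get gap} in the other.

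Recall the setting of \ref{lem:inv maintained two-neighbor}: in the two-neighbor phase of round $r$, $b$ is a local maximum among the current children of $u=u_{r-1}$, and $a$, $c$ are its immediate left and right siblings, with $b.y>a.y>c.y$ after the symmetry reduction; in particular $a$, $b$, $c$ are three consecutive children of $u$ in $T$ with $a.x<b.x<c.x$. By Invariant~(1) the intervals of the children of $u$ are pairwise disjoint subintervals of $I(u)$, and since every interval $I(v)$ contains its own node's $x$-coordinate $v.x$ (this holds at initialization, is preserved by the update rules, and is in any case implied by Invariant~(5) applied to the non-box point $v$ itself), the intervals of $a$, $b$, $c$ occur in exactly this left-to-right order. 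Hence $I(a).\max\le I(b).\min$ and $I(b).\max\le I(c).\min$.

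If $I(b).\min<I(b).\max$, the claim follows by chaining these two inequalities through the strict one: $I(a).\max\le I(b).\min<I(b).\max\le I(c).\min$. Otherwise $I(b)$ is a single point, and since $b$ is itself a non-box point in its own row, Invariant~(5) forces that point to be $b.x$, i.e.\ $I(b).\min=I(b).\max=b.x$; this is exactly the hypothesis of \ref{claim:singleton get gap}, which then gives the stronger conclusion $I(a).\max<b.x<I(c).\min$, and in particular $I(a).\max<I(c).\min$.

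There is essentially no remaining obstacle: the genuine content — that neither $I(a)$ nor $I(c)$ can have ``reached'' column $b.x$ while $I(b)$ is still a single point — was already handled in \ref{claim:singleton get gap} using Invariants~(4) and~(3'). The only point that deserves one explicit sentence here is that sibling intervals are laid out left-to-right and not merely pairwise disjoint; this follows immediately from how $I$ is maintained, but Invariant~(1) on its own states only disjointness, so it is worth spelling out before the case split.
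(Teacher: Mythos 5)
Your proof follows the paper's exact case split on whether $I(b)$ is degenerate, chaining Invariant~(1) in the non-degenerate case and invoking \ref{claim:singleton get gap} otherwise, so it is essentially the same argument; you helpfully spell out two facts the paper leaves implicit, namely that sibling intervals occur left-to-right (needed to read $I(a).\max\le I(b).\min$ off the disjointness in Invariant~(1)) and that a degenerate $I(b)$ must in fact be $(b.x,b.x)$ (needed for the hypothesis of \ref{claim:singleton get gap}). One small caution on your third justification for ``$I(v)$ contains $v.x$'': in the degenerate case $Q_v\setminus\boxx=\{v\}$, so the right-hand side of Invariant~(5) is the empty open interval $(v.x,v.x)$ and the stated containment is vacuous --- it does \emph{not} by itself pin $I(v)$ to $v.x$; but your first two justifications (initialization plus preservation under the update rules, i.e.\ the closed interval $[I(v).\min,I(v).\max]$ always contains $v.x$) are sound and suffice.
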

We claim that we add at least one and at most three new point into
$Q$:
\begin{claim}
$f\notin Q$. Hence, $1\le|Q'\setminus Q|\le3$.
\begin{proof}
As $I(a).\max<I(c).\min$ and $f.y=a.y$, by Invariant (5), we have
that $f\notin Q_{a}\setminus\boxx$, and so $f\notin Q$. \qedd
\end{proof}
\end{claim}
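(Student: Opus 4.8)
The plan is to prove the substantive half, $f\notin Q$; the bound ``$1\le|Q'\setminus Q|\le 3$'' then follows at once, since $Q'=Q\cup\{d,e,f\}$ gives $|Q'\setminus Q|\le 3$, while $f\notin Q$ gives $f\in Q'\setminus Q$, hence $|Q'\setminus Q|\ge 1$. (The points $d,e$ may or may not already lie in $Q$; we only need $f$ to be genuinely new.) So the whole task reduces to showing $f=\coord{I(c).\min}{a.y}\notin Q$.

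Since $f$ lies in row $a.y$, it suffices to rule out $f\in Q_{a}$, where $Q_{a}:=Q_{y=a.y}$, and I would split this into a box part and a non-box part. The main structural input is \ref{claim:gap ac}, namely $I(a).\max<I(c).\min=f.x$. I would combine it with a small auxiliary observation, proved by a routine induction over the link operations using the update rules for $I$ (and the initialization $I(u_i).\min=I(u_i).\max=i$ for $i\ge 1$): every genuine node $v\ne u_0$ has $I(v).\min,I(v).\max\in\{1,\dots,n\}$. Granting this, $f.x=I(c).\min\in\{1,\dots,n\}$, so $f.x\notin\{0,n+1\}$; since the only box points in row $a.y\ge 1$ sit at $x\in\{0,n+1\}$, we get $f\notin\boxx$.

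For the non-box part I would invoke \textbf{Invariant (5)}: $I(a)\supseteq(\min_{q\in Q_{a}\setminus\boxx}q.x,\ \max_{q\in Q_{a}\setminus\boxx}q.x)$. Assume for contradiction $f\in Q_{a}\setminus\boxx$. Note $a=\coord{a.x}{a.y}\in P\subseteq Q$ also lies in $Q_{a}\setminus\boxx$ (as $a.x\in[n]$), so, writing $m\le M$ for the endpoints of the spanning interval, $M\ge f.x=I(c).\min>I(a).\max$ while $m\le a.x$. If $m<M$, then $(m,M)\subseteq I(a)=(I(a).\min,I(a).\max)$ forces $M\le I(a).\max$, contradicting $M>I(a).\max$. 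If $m=M$, then every point of $Q_{a}\setminus\boxx$ — in particular $a$ and $f$ — shares the same $x$-coordinate, so $f=a$ and $f.x=a.x$; but $a.x\le I(a).\max$ (immediate from Invariants (4)–(5) together with the auxiliary observation), so $a.x=f.x=I(c).\min>I(a).\max\ge a.x$, again a contradiction. Hence $f\notin Q_{a}\setminus\boxx$, and with $f\notin\boxx$ this gives $f\notin Q$.

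The only real friction I anticipate is bookkeeping around the degenerate singleton case $Q_{a}\setminus\boxx=\{a\}$, where Invariant (5) collapses to a vacuous statement and one must instead lean on \ref{claim:gap ac} together with $a.x\le I(a).\max$, and around the box points at $x\in\{0,n+1\}$, which the coordinate-range observation dispatches. Neither step is deep: the content of the claim is the single inequality $I(a).\max<I(c).\min$ already secured in \ref{claim:gap ac}, pushed through Invariant (5).
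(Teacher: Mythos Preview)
Your approach is the same as the paper's: combine \ref{claim:gap ac} with Invariant~(5) to rule out $f$ from the non-box part of row~$a.y$, and handle the box part separately. The paper's version is a one-liner because it reads Invariant~(5) as intended (from its name and its other uses in the proof): every $q\in Q_v\setminus\boxx$ satisfies $I(v).\min\le q.x\le I(v).\max$. Under that reading, $f\in Q_a\setminus\boxx$ would force $f.x\le I(a).\max<I(c).\min=f.x$, done. Your literal open-interval reading forces the $m<M$ versus $m=M$ case split, which works but is unnecessary; note also that your side fact $a.x\le I(a).\max$ in the singleton case already relies on the closed reading.

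There is one genuine slip. Your ``auxiliary observation'' that $I(v).\min,I(v).\max\in\{1,\dots,n\}$ for every $v\ne u_0$ is false as stated: the one-neighbor phase explicitly sets $I'(v_j).\min=0$ for $j<i$ and $I'(v_j).\max=n+1$ for $j>i$ (and $I'(v_i)=(0,n+1)$), so the routine induction you propose breaks there. Fortunately you only need $f.x=I(c).\min\in[1,n]$, and this follows without the general claim: the lower bound is immediate from \ref{claim:gap ac}, since $I(c).\min>I(a).\max\ge 0$ forces $I(c).\min\ge 1$; the upper bound follows from Invariant~(5) applied to $c$ itself (as $c\in Q_c\setminus\boxx$), giving $I(c).\min\le c.x\le n$. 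With that correction your argument goes through and matches the paper's.
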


\subsubsection{Adding points in one-neighbor phases}

Before the first link in the one-neighbor phase, let $T,Q,I$ be the
current tree, point set and interval function respectively, and assume
that the invariants hold for $T$, $Q$, and $I$. Let $T'$ be obtained from
$T$ at the end of the one-neighbor phase. 

After the entire phase, we show how to update $Q$ and $I$ to obtain
$Q'$ and $I'$, and we argue that the invariants hold. Finally,
we bound the total number of points added and the links performed over
all one-neighbor phases. This subsection is for proving the following.
\begin{lem}
For each one-neighbor phase of every round, the invariants are maintained.
The total number of link operations and the total number of points
newly added to $Q$ in all one-neighbor phases over $n$ rounds are
both at most $2n$.\label{lem:inv maintained one-neighbor}
\end{lem}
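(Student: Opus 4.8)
The plan is to mirror the structure of \S\,\oldref{sub:smooth greedy step}: (i) specify how $Q$ and the interval function $I$ are updated during a one‑neighbor phase; (ii) check that Invariants (1), (2), (3'), (4), (5) survive; (iii) bound the total number of one‑neighbor links, and of points added to $Q$, over all $n$ rounds, by $2n$. For (i), recall from \ref{prop:structure one-neighbor} that when the one‑neighbor phase of round $r$ begins, the children $v_1,\dots,v_k$ of $u_{r-1}$ form a strict V‑shape with valley $v_i=u_r$, and the phase zips the decreasing left arm $v_1>\dots>v_i$ and the increasing right arm $v_i<\dots<v_k$ into two monotone caterpillars hanging off $u_r$: in the left arm $v_1$ becomes the leftmost child of $v_2$, then $v_2$ becomes the leftmost child of $v_3$, and so on (symmetrically on the right). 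For the link that makes $v_\ell$ the leftmost child of $v_{\ell+1}$ I would enlarge $I(v_{\ell+1})$ to its left so as to absorb $I(v_\ell)$, i.e.\ set $I'(v_{\ell+1})=(I(v_\ell).\min,\,I(v_{\ell+1}).\max)$, and add the single new point obtained by lifting the appropriate endpoint of $I(v_\ell)$ into a row that already contains a point of $Q$, so that Invariant (3') is not endangered; the other endpoint points required for Invariant (4) for the new parent--child pair are already present by Invariant (4) applied before the link. This is the natural analogue of the constructions in \S\,\oldref{sub:general step} and \S\,\oldref{sub:smooth greedy step}, with the roles of left/right mirrored and with the new parent lying \emph{below} (in $y$) the consumed node; one checks that exactly one genuinely new point is added per link.

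For (ii), the verification is a routine copy of the argument of \S\,\oldref{sub:smooth greedy step}. Invariant (1) holds because only the parent of the consumed node changes, and its updated interval is contained in its new parent's interval and disjoint from the intervals of the new parent's other children. Invariant (2) for the new ancestor relation created by the link is witnessed, for each new rectangle $\square_{pq}$, either by the point just added or by one of the points guaranteed by Invariant (4), splitting into the cases ``$q$ below $p$'', ``$q$ above $p$'', and ``$q$ in the row of an ancestor of $u_{r-1}$'' exactly as in \ref{lem:inv maintained two-neighbor}. Invariant (3') holds since no added point is placed below a node; Invariant (4) is immediate from the choice of the added point; and Invariant (5) holds because intervals only grow.

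Part (iii) is the crux. By (i) the number of points added over all one‑neighbor phases is at most the number $L_1$ of one‑neighbor links, so it suffices to prove $L_1\le 2n$. I would argue as follows. Every link makes the consumed node the leftmost or rightmost child of its new parent, and a one‑neighbor‑consumed node $u_j$ re‑enters the top level only in the round in which its current parent becomes the active root (its parent does not change until $u_j$ is consumed again); hence between consecutive one‑neighbor consumptions of $u_j$ the key of its parent strictly increases. Moreover, if that parent is $u_{j-1}$ then $u_j$ becomes the valley of the next round and is never consumed again, so the \emph{last} one‑neighbor consumption of each node $u_j$ (if it has one) is precisely the link creating the final path edge $(u_{j-1},u_j)$. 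Since such path edges, once created, are never destroyed ($u_j$ is the valley of round $j$ and is untouched thereafter), there are at most $n$ ``last'' one‑neighbor consumptions. It remains to bound the number of ``non‑last'' one‑neighbor consumptions, each of which creates a temporary edge $(u_k,u_j)$ with $k<j-1$ that is destroyed in round $k+1$, when $u_k$ becomes the active root. For these I would set up an amortized argument: define a potential that counts, roughly, the nodes currently sitting in such temporary left/right caterpillar branches hanging off the path built so far, charge each non‑last one‑neighbor link to the unit of potential released when its temporary edge is destroyed, and show that the total potential ever created is at most $n$. Combining the two bounds gives $L_1\le 2n$, which also bounds the points added and completes the proof of \ref{lem:inv maintained one-neighbor}.

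The main obstacle is exactly this last counting step. A single node can a priori be one‑neighbor‑consumed a linear number of times — each such consumption needs only to raise its parent's key by one — so the bound is genuinely global, and the real work is to pin down a potential function that captures why the zippings of successive rounds cannot accumulate, without double‑counting the ``temporary caterpillar'' charge across the $n$ rounds.
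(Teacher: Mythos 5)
Your proposed update rule for the interval function is the critical deviation from what actually works, and it is precisely what makes your step~(iii) intractable. You suggest absorbing the consumed node's interval into the receiver's, $I'(v_{\ell+1})=(I(v_\ell).\min,\,I(v_{\ell+1}).\max)$, which only nudges the receiver's left endpoint inward a little. The paper instead pushes the left endpoint of \emph{every} node in the left arm (and the valley) all the way to the box boundary: it sets $I'(v_j)=(0,I(v_{j+1}).\min)$ for $j<i$ and $I'(v_i)=(0,n+1)$, and symmetrically pushes the right endpoints to $n+1$ along the right arm. This is legal because Invariant~(5) is stated as a containment, not an equality; intervals are allowed to over-cover and swallow points of $\boxx$. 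With that rule, the counting in (iii) collapses to a one-line irreversibility argument: if a node $v$ receives a one-neighbor link from the left, it must have had a left sibling $w$ at that moment, forcing $I(v).\min\ge I(w).\max\ge w.x\ge 1>0$ by Invariants~(1) and~(5); after the phase $I(v).\min=0$, and since interval endpoints only ever move outward, $v$ can never again receive a one-neighbor link from the left. Hence each node receives at most two one-neighbor links, one from each side, which immediately gives the $2n$ bound on links, and then on added points via \ref{prop:new point one neighbor phase}.

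Your worry that ``a single node can a priori be one-neighbor-consumed a linear number of times'' is dissolved by this reframing: the paper counts \emph{receivers}, not consumed nodes, and the potential is simply whether $I(v).\min$ (resp.\ $I(v).\max$) has already escaped to $0$ (resp.\ $n+1$). The ``temporary caterpillar'' potential you gesture at is not the right quantity, and you correctly flag that you have not pinned it down --- that unresolved amortization is the genuine gap in your proposal, and it originates upstream in the choice of how $I$ is updated.
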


\paragraph{Update $Q$ and $I$ after one phase.}

\begin{figure}
	\begin{centering}
		\includegraphics[scale=0.2]{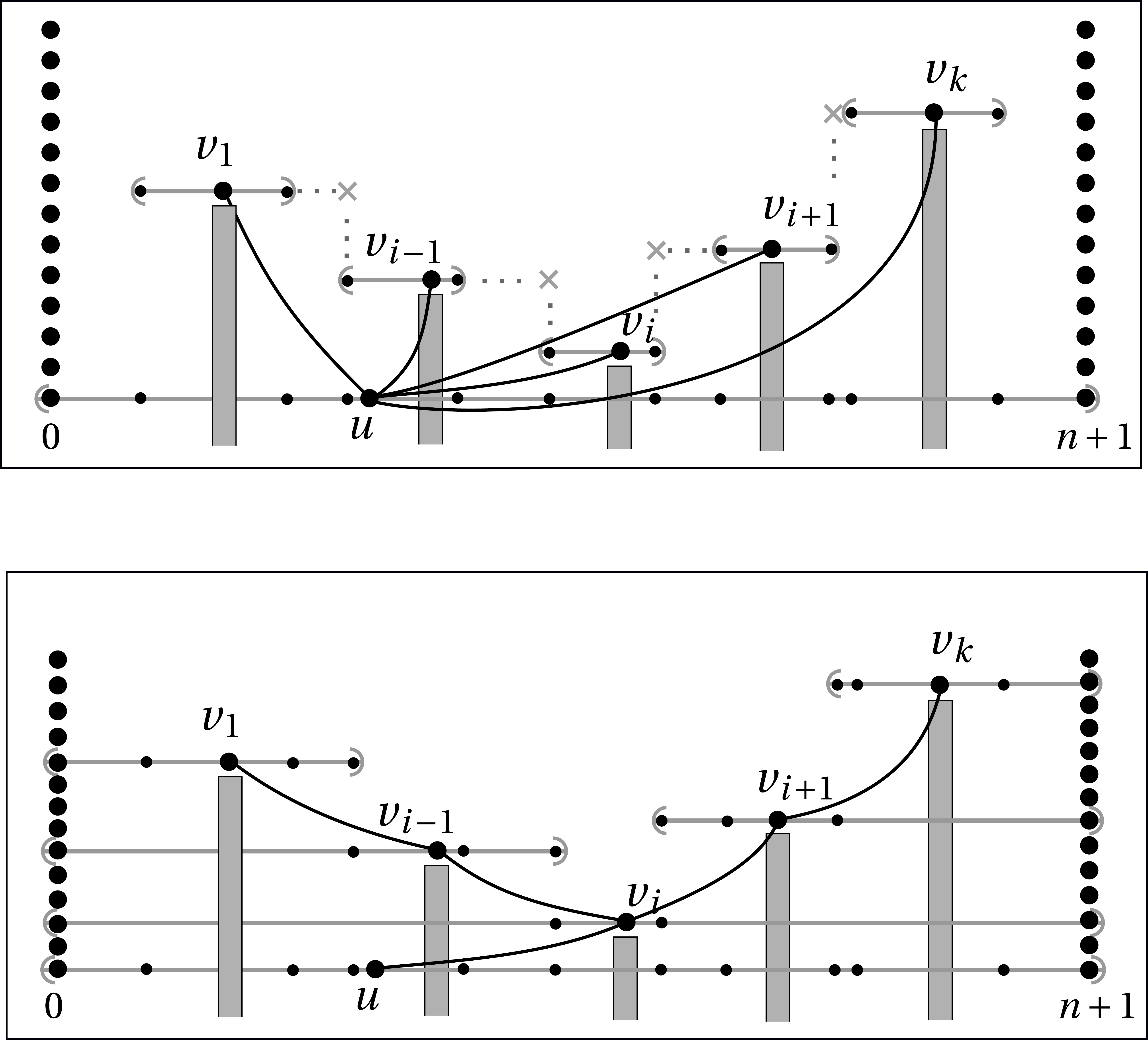}
		\par\end{centering}
	\caption{Links of a one-neighbor phase and the maintenance of invariants. \textbf{(above)} Before the transformation: $v_1, \dots, v_k$ are children of $u$ in $T$. Crosses mark locations of newly added points. \textbf{(below)} After the transformation: $v_1, \dots,v_{i-1}$ and $v_{i+1}, \dots, v_k$ form two paths rooted at $v_i$ in $T'$. Intervals shown as horizontal segments, respecting tree-structure (Invariant (1) and Invariant (5)); labels omitted. Observe that intervals may extend to include box points. Shaded rectangles indicate empty areas (Invariant (3')). Smaller dots indicate interval endpoints and their projections (Invariant (4)). Dotted lines indicate horizontal and vertical alignment, some lines omitted for clarity. Non-emptiness of newly created rectangles (Invariant (2)) is guaranteed by newly added points and the points of Invariant (4). Dots on positions $x = 0$ and $x=n+1$ are points from $\boxx$. \label{fig:inv_smooth_one_neighbor}}
\end{figure}

Suppose that we are at the beginning of the one-neighbor phase of
the $r$-th round. Let $v_{1},\dots,v_{k+1}$ be the current children
of $u_{r-1}$ ordered from left to right. By \ref{prop:structure one-neighbor},
we obtain $T'$ by performing $k$ links to the current local maximum
with one-neighbor (i.e. the current rightmost or leftmost child).
There is some $i$ such that, at the end of the round, $v_{j}$ is
linked to $v_{j+1}$ for all $j<i$, and $v_{j}$ is linked to $v_{j-1}$
for all $j>i$.

For $j<i$, set 
\[
Q'=Q\cup\coord{I(v_{j+1}).\min}{v_{j}.y}\mbox{ and }I'(v_{j})=(0,I(v_{j+1}).\min).
\]

For $j>i$, set 
\[
Q'=Q\cup\coord{I(v_{j-1}).\max}{v_{j}.y}\mbox{ and }I'(v_{j})=(I(v_{j-1}).\max,n+1).
\]
For $j=i$, we do not add points, and we set $I'(v_{j})=(0,n+1)$.
See \ref{fig:inv_smooth_one_neighbor}. Note that we do not claim that
all points that we include are new. That is, possibly, $\coord{I(v_{j+1}).\min}{v_{j}.y}\in Q$
for $j<i$ and $\coord{I(v_{j-1}).\max}{v_{j}.y}\in Q$ for $j>i$.
But we do have that the number of new points is bounded.
\begin{prop}
In each one-neighbor phase, the number of \emph{new }points added
is at most the number of link operations in the phase. \label{prop:new point one neighbor phase}
\end{prop}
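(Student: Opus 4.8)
The plan is a short, self-contained counting argument that leans entirely on the structural description of the one-neighbor phase and on the explicit update rules, with no appeal to the invariants needed.

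First I would invoke \ref{prop:structure one-neighbor}: at the beginning of the one-neighbor phase of round $r$ the children $v_1, \dots, v_{k+1}$ of $u_{r-1}$ form a V-shape, the only one-neighbor local maxima are the current leftmost and rightmost children, and linking one of them again leaves a (shorter) V-shape with no two-neighbor local maximum. Hence the phase performs exactly $k$ link operations, each of which attaches one top-level child below a neighbor and thereby removes it from the children of $u_{r-1}$; the unique child that is never linked away is the minimum $v_i$, which remains as the sole child at the end of the round. In particular the $k$ links are in bijection with the $k$ indices in $\{1,\dots,k+1\}\setminus\{i\}$.

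Next I would read off the point-insertion rule for the phase: for each index $j<i$ the construction adds the single point $\coord{I(v_{j+1}).\min}{v_j.y}$, for each $j>i$ it adds the single point $\coord{I(v_{j-1}).\max}{v_j.y}$, and for $j=i$ it adds nothing. So at most one point is inserted for each of the $k$ indices $j\in\{1,\dots,k+1\}\setminus\{i\}$, giving at most $k$ inserted points in total; \emph{a fortiori} the number of genuinely new points (those not already in $Q$ before the phase) is at most $k$. Since $k$ is exactly the number of links performed in the phase, the proposition follows.

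I do not expect any real obstacle: the statement is pure bookkeeping. The only thing requiring a moment's care is matching links with linked-away children and with added points, and this is immediate from the V-shape structure of \ref{prop:structure one-neighbor} together with the explicit rule "one point per $v_j$ with $j\neq i$, none for $v_i$." (Note that distinctness of the added points — which holds, since they lie in distinct rows $v_j.y$ — is not even needed, as "at most one point per index $j\neq i$" already suffices for the bound.)
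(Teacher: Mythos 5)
Your proof is correct and matches the paper's (implicit) argument: the paper states this proposition immediately after giving the update rule, treating it as immediate from the fact that exactly one point is prescribed for each of the $k$ indices $j\neq i$ and none for $j=i$, while the phase performs exactly $k$ links. Your spelled-out version, including the bijection between links and removed children and the observation that distinctness of the added points is irrelevant, is just a more careful writeup of the same bookkeeping.
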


\paragraph{Invariants maintained after one phase.}

For each of the invariants, we only argue about $j$ where $j<i$ or
$j=i$. The other case is symmetric.

For Invariant (1), it is enough to show that, for $j<i$, $I'(v_{j})\subseteq I'(v_{j+1})$
and $I'(v_{j})\cap I'(x)=\emptyset$ for any other child $x$ of $v_{j+1}$.
First, $I'(v_{j})\subseteq I'(v_{j+1})$ because if $j+1=i$, then
$I'(v_{j+1})=(1,n)$ (i.e.\ the entire range), and otherwise $j+1<i$,
then $I'(v_{j})=(1,I(v_{j+1}).\min)\subseteq(1,I(v_{j+2}).\min)=I'(v_{j+1})$.
Next, we have $I'(v_{j}).\max=I(v_{j+1}).\min\le I(x).\min=I'(x).\min$
for any other child $x$ of $v_{j+1}$ where the inequality holds by Invariant
(1).

For Invariant (2), we only argue about points outside $\boxx$, as
for every $p\in\boxx$ and every $q$, $\square_{pq}$ is satisfied. Consider, for every $j<i$, $p\in Q'_{v_{j}}=Q{}_{v_{j}}\cup\{\coord{I(v_{j+1}).\min}{v_{j}.y}\}$
and $q\in Q'_{v_{j'}}$. If $j'<j$, then there is another point $\coord{I(v_{j}).\min}{v_{j-1}.y}\in\square_{pq}\cap Q'$.
Else, $j'>j$, and then we have that $\coord{I(v_{j+1}).\min}{v_{j}.y}\in\square_{pq}\cap Q'$.
For $j=i$, for every $p\in Q'_{v_{i}}=Q{}_{v_{i}}$ and $q\in Q'_{v_{j'}}$
where $j'\neq i$, there is $\coord{I(v_{i}).\min}{v_{i-1}.y}\in\square_{pq}\cap Q'$
if $j'<i$ and $\coord{I(v_{i}).\max}{v_{i+1}.y}\in\square_{pq}\cap Q'$
if $j'>i$. 
This implies that Invariant (2) holds.

For Invariant (3'), for every $j<i$, let $v'$ be a node where $v'.x=I(v_{j+1}).\min$.
As $\coord{I(v_{j+1}).\min}{v_{j+1}.y}\in Q$ by Invariant (4) and
actually $\coord{I(v_{j+1}).\min}{v_{j+1}.y}\in Q\setminus\boxx$,
Invariant (3') implies $v'.y\le v_{j+1}.y<v_{j}.y$. So the new point
$\coord{I(v_{j+1}).\min}{v_{j}.y}$ does not violate Invariant (3'). 

For Invariant (4), for every $j<i$, observe that $\coord{I'(v_{j}).\max}{v_{j}.y}=\coord{I(v_{j+1}).\min}{v_{j}.y}\in Q'$
which is newly added, and $\coord{I'(v_{j}).\max}{v_{j+1}.y}=\coord{I'(v_{j+1}).\min}{v_{j+1}.y}\in Q$
by Invariant (4). $\coord{I'(v_{j}).\min}{v_{j}.y},\coord{I'(v_{j}).\min}{v_{j+1}.y}\in\boxx$
because $I'(v_{j})=0$. For $j=i$, as $I(v_{i}).\min=0$ and $I(v_{i}).\max=n+1$,
so $\coord{I'(v_{i}).\min}{v_{i}.y},\coord{I'(v_{i}).\max}{v_{i}.y}\in\boxx$.

Invariant (5) clearly holds. We conclude that the invariants
hold after the one-neighbor phase.

\paragraph{Counting links and points added over all rounds.}

We show that over all $n$ rounds, the number of link operations
performed in all one-neighbor phases is at most $2n$. By \ref{prop:new point one neighbor phase},
it follows that the total number of added points in one-neighbor phases
is $2n$ as well.

Let $w$ be a local maximum with one-neighbor $v$ where $w.y>v.y$,
and $w$ is linked to $v$. If $w.x<v.x$, then we say thet $v$ gets a \emph{one-neighbor
link from the left}. Else $v.x<w.x$, then we say $v$ gets a \emph{one-neighbor
link from the right}. 
\begin{claim}
Every node $v$ gets a one-neighbor link from the left at most
once, and a one-neighbor link from the right at most once.\end{claim}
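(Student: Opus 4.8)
The plan is to follow the evolution of the \emph{left spine} $S$ of the monotone tree maintained by $\smooth_{\path}$, where $S$ is the path obtained by starting at the current root and repeatedly moving to the leftmost child until a leaf is reached (so $S$ contains the current root). The first step is to show that $S$ changes in only two ways during the execution: (i) when a one-neighbor link absorbs the current leftmost top-level child $w$ into its right neighbor $p$, the node $p$ is inserted into $S$ immediately below the root, and every previous non-root element of $S$ is pushed one step down; and (ii) at the end of a round, when the unique surviving top-level child becomes the new root, the old root is deleted from $S$. This amounts to a short case check over the links $\smooth_{\path}$ performs: in a two-neighbor link the absorbed local maximum either becomes a \emph{rightmost} child (which never touches $S$) or becomes the leftmost child of a node that still has a left top-level neighbor, hence is not the leftmost top-level child, hence is not on $S$, so again $S$ is untouched; and a one-neighbor link absorbing the current rightmost top-level child only alters rightmost-child pointers, so it too leaves $S$ untouched. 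The conclusion of this step is the monotonicity statement: \textbf{no non-root node ever leaves $S$}; thus once a node enters $S$ it stays in $S$ until (if ever) it becomes the root, after which it is eventually removed but then remains a proper ancestor of the root forever.

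The second step connects the two notions. If $v$ receives a one-neighbor link from the left, the absorbed node $w$ is a local maximum with a single neighbor lying to $v$'s left, which forces $w$ to be the current leftmost top-level child; after the link $v$ becomes the leftmost top-level child (or, if $v$ was then the sole survivor, the new root), so in either case $v$ joins $S$ at that very moment. Conversely, a node that is on $S$, or is a proper ancestor of the current root, can never receive a one-neighbor link from the left: if it equals the current root, is a proper ancestor of it, or sits at depth at least two on $S$, it is simply not a top-level child and receives no link at all; and if it is the element of $S$ at depth one, i.e.\ the leftmost top-level child, it has no left neighbor. Putting the two steps together: from the moment $v$ first receives a one-neighbor link from the left, $v$ belongs to $S$, and by the monotonicity statement it remains in $S$ (possibly becoming the root and later a proper ancestor of the root) for the rest of the execution, so it is never again a candidate to receive such a link. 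Hence $v$ receives at most one one-neighbor link from the left; reflecting the argument left $\leftrightarrow$ right (using the right spine) gives the bound for links from the right, and the claim follows.

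The delicate part is the case analysis in the first step — verifying that $S$ really only grows at the front and only sheds the old root at round boundaries. The scenario that needs care is a two-neighbor link in which the absorbed local maximum $x$ is grafted as a \emph{leftmost} child onto its higher right neighbor $q$; here one must observe that $q$ had $x$, and $x$'s left neighbor, as left top-level neighbors, so $q$ was not the leftmost top-level child and therefore not on $S$, which is why $S$ is unaffected. Once this is settled, the remaining verifications and the final counting are routine, and notably no potential function is needed — the plain invariant ``$v$ lies on $S$'' carries the whole argument.
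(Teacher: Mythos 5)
You take a genuinely different route. The paper's proof tracks the interval function: receiving a one-neighbor link from the left requires $I(v).\min>0$ beforehand (a left top-level neighbor $w$ gives $I(v).\min \geq I(w).\max \geq w.x > 0$ via Invariants (1) and (5)), whereas the one-neighbor-phase update then sets $I(v).\min = 0$; since intervals only grow over the execution, the precondition can never recur. You instead track the left (resp.\ right) spine of the monotone tree directly and prove its monotonicity --- once a node enters the spine it never leaves --- so after $v$ first receives a left one-neighbor link it remains on the left spine, exactly the state in which it lacks a left top-level sibling. Your argument is correct and more self-contained: it needs none of the $Q$/$I$ machinery, whereas the paper's two-line proof is cheap only because those interval invariants are already in place for the larger proof. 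One small imprecision in your case check: you assert that a two-neighbor link, or a one-neighbor link from the right, ``leaves $S$ untouched,'' but when the leftmost top-level child is a leaf and acquires its first child (necessarily as a rightmost child, which is then also the sole --- hence leftmost --- child), $S$ is extended at its lower end. This is an extra way $S$ can change beyond the two you list; it still only adds nodes to $S$ and removes none, so the monotonicity conclusion, and with it the rest of the argument, goes through unchanged.
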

\begin{proof}
Suppose $v$ gets a one-neighbor link from the left. It must
be the case that $I(v).\min>0$ before linking. Because $v$ has a
left neighbor $w$ and $0<w.x\le I(w).\max\le I(v).\min$. However,
after linking, we have $I(v).\min=0$. So $v$ can never get a one-neighbor
link from the left again. The case from the right is argued symmetrically. \qedd
\end{proof}
It follows that there are at most $2n$ links in
one-neighbor phases over all $n$ rounds. This concludes the proof
of \ref{lem:inv maintained one-neighbor}.

\subsubsection{$\protect\smooth_{\protect\path}$ fills $\protect\add$ gadgets}

In order to draw a connection to $\greedy$, we need the following lemma.
Recall the $\add$ gadget from \ref{def:add gadget}.
We claim that each added point \emph{fills} some $\add$ gadget in
$Q$. 
\begin{lem}
\label{lem:smooth add to gadget}For each new point $x$ added to
$Q$ by either \ref{lem:inv maintained two-neighbor} or \ref{lem:inv maintained one-neighbor},
there is an $\add$ gadget in $Q$ which is filled by $x$.
\end{lem}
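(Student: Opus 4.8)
The plan is to argue by cases on which new point $x$ is. There are four types, by \ref{lem:inv maintained two-neighbor} and \ref{lem:inv maintained one-neighbor}: in a two-neighbor link (say $b$ links to $a$, where $a,b,c$ are consecutive children of $u=u_{r-1}$ and $b.y>a.y>c.y$) the new point is one of $d=\coord{I(a).\max}{b.y}$, $e=\coord{I(c).\min}{b.y}$, $f=\coord{I(c).\min}{a.y}$; in a one-neighbor phase it is $\coord{I(v_{j+1}).\min}{v_j.y}$ for some $j<i$ (or the mirror point for $j>i$, handled symmetrically). For each case I will exhibit five points of the \emph{current} point set $Q$ forming an $\add$ gadget $G=(a_G,b_G,c_G,d_G,e_G)$ with $\coord{b_G.x}{a_G.y}=x$. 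The corner points are always drawn from the same stock: the node of $P$ that $x$ sits above (such as $b$, resp.\ $a$, resp.\ $v_j$), the interval-endpoint points and their parent-row projections that Invariant~(4) places in $Q$, and box points of $\boxx$ to close the gadget off at the bottom. For instance, for the point $d$ I would take $a_G=\coord{I(b).\min}{b.y}$, $b_G=\coord{I(a).\max}{a.y}$, $c_G=\coord{I(b).\min}{u.y}$, $d_G=\coord{I(a).\max}{u.y}$, $e_G=\coord{I(c).\min}{u.y}$ (the horizontally reflected shape of \ref{fig:add gadget}), and analogous choices for $e$, $f$, and the one-neighbor point.

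Granting such a choice, three of the four gadget requirements of \ref{def:add gadget} are bookkeeping. Membership $\{a_G,\dots,e_G\}\subseteq Q$ follows from Invariant~(4) together with $\boxx\subseteq Q$. The required relative positions follow from two orderings: horizontally from the chain $I(a).\max<I(b).\min\le I(b).\max<I(c).\min$ (Invariant~(1), sharpened for singleton intervals by \ref{claim:singleton get gap}) and $I(v_j).\max<I(v_{j+1}).\min$ (Invariant~(1)); vertically from the round structure $u_{r-1}.y<c.y<a.y<b.y$, resp.\ $u_{r-1}.y<v_{j+1}.y<v_j.y$. And $\coord{b_G.x}{a_G.y}=x\notin Q$ is exactly ``$x$ is new'': for the two-neighbor phase this is the claim $f\notin Q$ proved right after \ref{lem:inv maintained two-neighbor} (the analogous facts for $d$ and $e$ are not needed, since if they are not new there is nothing to prove), and for the one-neighbor phase it is the hypothesis of the lemma.

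The substantive step, which I expect to be the main obstacle, is the emptiness requirement: the open rectangle $\square_G$ (that is, $\square_{e_G f_G}$ with its four bounding lines removed) must meet $Q$ in no point. The gadget has to be chosen so that $\square_G$ falls into a region the invariants certify to be empty, and the mechanism is a ``territory-gap'' argument: pick $\square_G$ so that its column span lies in the band between two consecutive sibling intervals in the current tree (e.g.\ $(I(a).\max,I(c).\min)$ in the $d$-case), and so that its row span lies strictly below the ``high'' node being linked ($b$, whose $y$-coordinate exceeds every $y$-coordinate in its subtree; resp.\ $v_{j+1}$ in the one-neighbor case). Then: by Invariant~(1) the only node-intervals meeting that column band are $I(b)$ and the intervals nested inside it, so any node whose $x$-coordinate lands in $\square_G$ must belong to the subtree of $b$ --- contradicting that its row lies below $b.y$; by Invariant~(5) the same confinement applies to every non-$\boxx$ point added earlier; and no box point lies in $\square_G$ because the column band avoids $0$ and $n+1$ (its endpoints are genuine interval endpoints, using again that $x$ is new so these endpoints lie in $[1,n]$), while the row band avoids $0$. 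The delicate edge cases --- a ``fresh'' node with a degenerate interval $I(\cdot).\min=I(\cdot).\max$, which forces the outer wall of $\square_G$ out to a box column and hence a wider rectangle; a linked node that is leftmost or rightmost among the siblings, so having only one neighbor; and nodes whose intervals have already absorbed box columns in earlier rounds --- each need a separate short sub-argument, again via Invariants~(1) and~(5): one shows that anything that would otherwise sit strictly inside the widened rectangle is in fact pinned to a boundary column (e.g.\ to $I(a).\min$ for a descendant of a left sibling of $a$) or to a boundary row. Finishing these sub-cases completes the proof; the lemma then feeds, together with \ref{lem:add gadget works} and \ref{prop:greedy monotone}, into the comparison $|\greedy(P)|=\Theta(|\smooth_{\path}(P)|)$ of \ref{lem:smooth tran geo}, but that is a separate matter.
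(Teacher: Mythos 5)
Your proposal takes essentially the same route as the paper: your five points for the point $d$ are exactly the paper's gadget $G_d=(b_{\min},a_{\max},u_{b_{\min}},u_{a_{\max}},u_{c_{\min}})$, membership comes from Invariant~(4), the relative positions from the chain $I(a).\max<I(b).\min\le I(b).\max<I(c).\min$, and emptiness from Invariants~(1) and~(5) together with a case split on which subtree contains the unique node in the offending row --- all matching the paper's argument (the other three gadgets $G_e$, $G_f$, $G_{q_j}$ are defined analogously). One small slip worth fixing: you write that $b$'s $y$-coordinate \emph{exceeds} every $y$-coordinate in its subtree, but in a monotone tree children have \emph{larger} $y$, so $b.y$ is the minimum of its subtree; that is precisely what keeps the subtree of $b$ out of the open row band $(u.y,b.y)$, so your intended conclusion still stands.
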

To prove the above, we introduce some notation. For every node $v$,
let $v_{\min}=\coord{I(v).\min}{v.y}$ and $v_{\max}=\coord{I(v).\max}{v.y}$
where $I$ is the current interval function before adding points.
For every point $p$, let $v_{p}=\coord{p.x}{v.y}$ be the point $p$
``projected'' to the row $v.y$. Now, there are two cases.

\paragraph{Two-neighbor phases.}

Let $Q$ be the current point
set during a two-neighbor phase. Suppose that there are three nodes
$a,b,c$, where $b$ is a local maximum and $a$ and $c$ are its left
and right neighbors respectively. Assume by symmetry that $a.y>c.y$
and so $b$ is linked to $a$. Then $Q'=Q\cup\{d,e,f\}$ where $d=\coord{I(a).\max}{b.y}$,
$e=\coord{I(c).\min}{b.y}$ and $f=\coord{I(c).\min}{a.y}$. Let $u$
be the parent of $a,b,c$. We define the following as depicted in \Cref{fig:find_gadget_two_neighbor}(a):
\begin{align*}
G_{d} & =(b_{\min},a_{\max},u_{b_{\min}},u_{a_{\max}},u_{c_{\min}}),\\
G_{e} & =(b_{\max},c_{\min},u_{b_{\max}},u_{c_{\min}},u_{a_{\max}}),\mbox{ and}\\
G_{f} & =(a_{\max},c_{\min},u_{a_{\max}},u_{c_{\min}},u_{\min}).
\end{align*}

\begin{figure}
	\begin{centering}
		\includegraphics[scale=0.2]{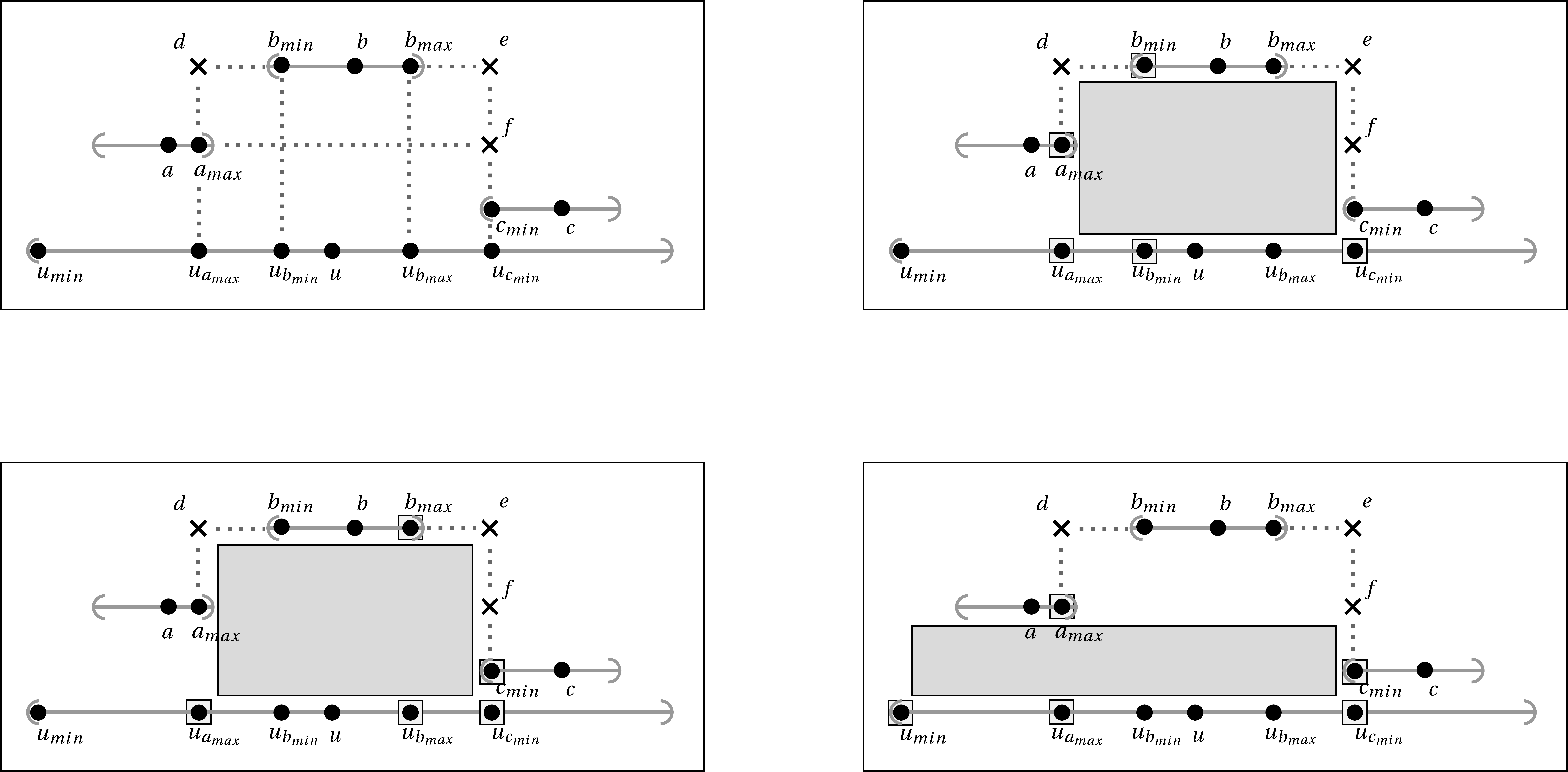}
		\par\end{centering}
	\caption{\textbf{(a)} Points in the proof of \Cref{claim:find_gadget_two_neighbor}. \textbf{(b)}
	 $\protect\add$ gadget $G_{d}$. \textbf{(c)} $\protect\add$
		gadget $G_{e}$. \textbf{(d)} $\protect\add$ gadget $G_{f}$. Shaded rectangles are empty.\label{fig:find_gadget_two_neighbor}}
\end{figure}

\begin{claim}\label{claim:find_gadget_two_neighbor}
If $x\in\{d,e,f\}$ is newly added, i.e., $x\notin Q$,
then $G_{x}$ is an $\add$ gadget in $Q$ and $x$ fills $G_{x}$.\end{claim}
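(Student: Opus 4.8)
The plan is to verify, for each $x\in\{d,e,f\}$ and relative to the \emph{current} point set $Q$ (before $d,e,f$ are added), the four defining conditions of an \add{} gadget in \Cref{def:add gadget}, and then to observe that the point filling $G_{x}$ is exactly $x$. The first condition, that the five vertices lie in $Q$, is immediate: every vertex of every $G_{x}$ is of the form $v_{\min}$ or $v_{\max}$ for $v\in\{a,b,c\}$, or a horizontal projection $u_{p}$ of such a point to the row of the parent $u$, and all of these belong to $Q$ by Invariant (4) (which places both interval endpoints of each node and their projections to the parent's row into $Q$); the only special vertex is $u_{\min}$ in $G_{f}$, which lies in $Q$ by Invariant (4) applied to $u$ and its own parent, or in $\boxx$ when $u=u_{0}$. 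The second condition, the geometric configuration, splits into the $y$-ordering --- one vertex strictly highest, one strictly below it, three on the parent's row --- which follows from $b.y>a.y>u.y$ and the standing assumption $a.y>c.y$; the vertical/horizontal alignments, which hold by construction; and the strict left-to-right orderings, which follow from the laminar ordering $I(a).\max\le I(b).\min\le I(b).\max\le I(c).\min$ of the sibling intervals (Invariant (1) for $a,b,c$ left-to-right), sharpened to strict inequalities exactly where needed using the hypothesis $x\notin Q$ (e.g.\ if $I(a).\max=I(b).\min$ then $d=\coord{I(a).\max}{b.y}$ coincides with $b_{\min}\in Q$, a contradiction), with \Cref{claim:singleton get gap} and \Cref{claim:gap ac} supplying the rest (in particular $I(a).\max<I(c).\min$, and for $G_{f}$ also $I(u).\min<I(a).\max$, using Invariant (3') to exclude $I(u).\min=a.x$).

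For the third condition: $\coord{b.x}{a.y}\notin Q$ holds by hypothesis (for $f$ this is already established), and that the fill point of $G_{x}=(a',b',c',d',e')$, namely $\coord{b'.x}{a'.y}$, equals $x$ is a one-line coordinate check in each case. The fourth condition, $\square_{G_{x}}\cap Q=\emptyset$, is the real content and the step I expect to be the main obstacle. In each case $\square_{G_{x}}$ is an open horizontal band, bounded below by row $u.y$, above by row $b.y$ (for $G_{d},G_{e}$) or row $a.y$ (for $G_{f}$), and in $x$ between $I(a).\max$ and $I(c).\min$ (for $G_{d},G_{e}$) or between $I(u).\min$ and $I(c).\min$ (for $G_{f}$). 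I would argue by contradiction: take $q\in Q$ strictly inside $\square_{G_{x}}$; since its $x$-coordinate lies strictly between $0$ and $n+1$ and its $y$-coordinate strictly above $0$, $q\notin\boxx$, so $q\in Q_{v}\setminus\boxx$ for a tree node $v$ whose row lies strictly inside the band; as $v.y>u.y$, $v$ lies in the subtree of some child $v_{j}$ of $u$, whence $I(v_{j}).\min\le q.x\le I(v_{j}).\max$ by Invariants (1) and (5). A case analysis on the position of $v_{j}$ among the children of $u$ then closes most cases by interval/row bounds: if $v_{j}$ is $c$ or lies to its right then $q.x\ge I(c).\min$; if $v_{j}=b$ then $v$ lies in the subtree of $b$, so $v.y\ge b.y$, outside the band; and if $v_{j}$ lies weakly to the left of $a$ then $q.x\le I(a).\max$.

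For $G_{d}$ and $G_{e}$ this last bound already contradicts $q.x>I(a).\max$, so those two gadgets are fully handled by combinatorial bounds on the intervals. For $G_{f}$ the band is wider ($q.x>I(u).\min$ rather than $q.x>I(a).\max$), so only the sub-case $v_{j}=a$ is eliminated directly (then $v.y\ge a.y$, outside the band whose top is $a.y$), and the remaining sub-case --- $q$ strictly below row $a.y$ lying in a subtree entirely to the left of $a$, with $I(u).\min<q.x\le I(a).\min$ --- is the delicate one where I expect the real difficulty. I would rule it out by combining Invariant (3') (no point of $Q\setminus\boxx$ lies strictly below a node of the same column) with Invariant (4) and the laminar structure, which together force any such $q$ onto the left border $x=I(u).\min$ of the band, or onto/above row $a.y$ --- in either case out of the open rectangle $\square_{G_{f}}$. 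Once $\square_{G_{x}}\cap Q=\emptyset$ is established, all four conditions of \Cref{def:add gadget} hold and, by the fill-point computation, $x$ fills $G_{x}$, proving the claim; the one-neighbor-phase half of \Cref{lem:smooth add to gadget} is then handled by a completely analogous argument with the one-neighbor gadgets.
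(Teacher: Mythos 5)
Your outline is structurally the same as the paper's: verify the four gadget conditions, with emptiness of $\square_{G_x}$ being the substance. Your interval-based treatment of $G_d$ and $G_e$ is sound --- there the band has $x$-range $(I(a).\max, I(c).\min)$, so any $r\in Q$ inside it would, by Invariants (1) and (5), be carried by a node in the subtree of $b$, hence sit at or above row $b.y$. And you are right to flag $G_f$ as the delicate case.

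However, your proposed resolution for $G_f$ does not work, because what you are trying to establish is, as stated, false. Take $n=5$ and $P=\{\coord 13,\coord 24,\coord 35,\coord 41,\coord 52\}$, round $1$ with $u=\coord 00$ and $Q=P\cup\boxx$. The unique two-neighbor local maximum is $b=\coord 35$ with $a=\coord 24$, $c=\coord 41$, and $a.y>c.y$. Then $I(a)=(2,2)$, $I(c)=(4,4)$, $I(u)=(0,6)$, $f=\coord 44\notin Q$, and $G_f=(\coord 24,\coord 41,\coord 20,\coord 40,\coord 00)$, so $\square_{G_f}=(0,4)\times(0,4)$. But $\coord 13\in P\subseteq Q$ lies strictly inside this rectangle, so $G_f$ fails condition (4) of \ref{def:add gadget}. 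This is precisely the sub-case you describe ($I(u).\min<q.x\le I(a).\min$, $u.y<q.y<a.y$), and nothing in Invariants (3'), (4), (5) or the laminar structure forces such a $q$ onto the boundary: $\coord 13$ is a left sibling of $a$ with $y$-coordinate below $a.y$, and the invariants are silent about it. For what it is worth, the paper's own proof of this case has the same gap --- its ``case 3'' asserts $I(a).\max<I(v_r).\min$, which only covers $v_r$ in the subtree of $b$; if $v_r$ is a descendant of a left sibling of $a$ then $I(v_r).\max\le I(a).\min$, and none of the paper's three alternatives applies.

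The repair is not a sharper argument but a different fifth vertex. Replace $u_{\min}$ by $e^{*}$, the \emph{rightmost} point of $Q$ on row $u.y$ with $e^{*}.x<I(a).\max$ (it exists since $\base\subseteq Q$; this mirrors the ``maximize $e.x$'' convention in the proof of \ref{lem:cannot add imply sat}). The band then shrinks to $(e^{*}.x, I(c).\min)\times(u.y,a.y)$, and its emptiness does follow: a supposed $r$ inside lies in the subtree of some child $v_j$ of $u$; the sub-cases $v_j\in\{a,b,c\}$ or $v_j$ right of $c$ are as before; and if $v_j$ is a left sibling of $a$, Invariant (4) puts $\coord{I(v_j).\max}{u.y}\in Q$, with $I(v_j).\max<I(a).\max$ (when $I(v_j).\max=I(a).\max=I(a).\min$, Invariant (3') would force $r.y\ge a.y$), so maximality of $e^{*}$ gives $r.x\le I(v_j).\max\le e^{*}.x$, outside the open band. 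In the example above $e^{*}=\coord 10$, the band becomes $(1,4)\times(0,4)$, and $\coord 13$ now sits on its excluded border. With this change your remaining steps and the conclusion of \ref{lem:smooth add to gadget} go through.
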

\begin{proof}
Observe that Invariant (4) implies that 
\[
\{a_{\max},b_{\min},b_{\max},c_{\min},u_{a_{\max}},u_{b_{\min}},u_{b_{\max}},u_{c_{\min}},u_{\min}\}\subseteq Q.
\]

(These are the points of $G_d$, $G_e$, $G_f$.)

We first prove the case $x=d$. See \Cref{fig:find_gadget_two_neighbor}(b). We argue that the five
points $b_{\min}$, $a_{\max}$, $u_{b_{\min}}$, $u_{a_{\max}}$, $u_{c_{\min}}$ are distinct by showing that $I(a).\max<I(b).\min<I(c).\min$. To
see this, as $d\notin Q$, we get $I(a).\max<I(b).\min$ (otherwise,
by Invariant (4), $d\in Q$). Also $I(b).\min<I(c).\min$, because if
$I(b).\min=I(b).\max$, then $I(b).\min=b.x<I(c).\min$ by \ref{claim:singleton get gap}.
Else, $I(b).\min<I(b).\max\le I(c).\min$. Next, it is easy to see
that the five points are in the correct relative position for $\add$ gadgets.
It remains to show that $\square_{G_{d}}\cap Q=\emptyset$. Suppose
there is a point $r\in\square_{G_{d}}\cap Q$. Then, let $v_{r}$
be a node such that $v_{r}.y=r.y$. By Invariant (5), we have $r.x\in I(v_{r})$.
By Invariant (1), we obtain a contradiction in one of the following
ways: 1) if $v_{r}$ is a descendant of $a$ or $c$, then $I(v_{r})\not\subseteq I(a)$
or $I(v_{r})\not\subseteq I(c)$ respectively, contradicting Invariant
(1), or 2) if $v_{r}$ is a descendant of $b$, then $r.y\ge b.y$
and so $r\notin\square_{G_{d}}$, or else 3) $v_{r}$ is such that
$I(v_{r})$ is disjoint from $I(a),I(b),I(c)$, and so $r\notin\square_{G_{d}}$.

Next we prove the case $x=e$ in a similar way. See \Cref{fig:find_gadget_two_neighbor}(c).
First, we
argue that the five points $b_{\max},c_{\min},u_{b_{\max}},u_{c_{\min}},u_{a_{\max}}$
are distinct by showing that $I(a).\max<I(b).\max<I(c).\min$. To
see this, as $e\notin Q$, we get $I(b).\max<I(c).\min$ (otherwise,
by Invariant (4), $e\in Q$). Also $I(a).\max<I(b).\max$, because if
$I(b).\min=I(b).\max$, then $I(a).\max<b.x=I(b).\max$ by \ref{claim:singleton get gap}.
Else, $I(a).\max\le I(b).\min<I(b).\max$. Next, it is easy to see
that the five points are in the correct relative position for $\add$ gadgets.
It is left to show that $\square_{G_{e}}\cap Q=\emptyset$. Suppose
there is a point $r\in\square_{G_{e}}\cap Q$. Then, let $v_{r}$
be a node such that $v_{r}.y=r.y$. By Invariant (5), we have $r.x\in I(v_{r})$.
By Invariant (1), we obtain a contradiction in one of the following
ways: 1) if $v_{r}$ is a descendant of $a$ or $c$, then $I(v_{r})\not\subseteq I(a)$
or $I(v_{r})\not\subseteq I(c)$ respectively, or 2) if $v_{r}$ is
a descendant of $b$, then $r.y\ge b.y$ and so $r\notin\square_{G_{e}}$,
or else 3) $v_{r}$ is such that $I(v_{r})$ is disjoint with $I(a),I(b),I(c)$,
and so $r\notin\square_{G_{e}}$.

Remains the case $x=f$. See \Cref{fig:find_gadget_two_neighbor}(d). First,
we argue that the five points $a_{\max},c_{\min},u_{a_{\max}},u_{c_{\min}},u_{\min}$
are distinct by showing that $I(u).\min<I(a).\max<I(c).\min$. To
see this, $I(a).\max<I(c).\min$ by \ref{claim:gap ac}. Also $I(u).\min<I(a).\max$.
There are two cases: if $I(a).\min<I(a).\max$ and we are done because
$I(u).\min\le I(a).\min$. Else, we have $I(a).\min=I(a).\max$, then,
by Invariant (3'), $u_{a_{\max}}\notin Q\setminus\boxx$. As $u_{a_{\max}}\in Q$,
it must be that $u_{a_{\max}}\in\boxx$, i.e. $u.y=0$. Hence, $I(u).\min=0$.
But $I(a).\max>0$. Next, we can see that the five points are in the correct
relative position for $\add$ gadgets. It is left to show that $\square_{G_{f}}\cap Q=\emptyset$.
Suppose there is a point $r\in\square_{G_{f}}\cap Q$. Then, let $v_{r}$
be a node such that $v_{r}.y=r.y$. By Invariant (5), we have $r.x\in I(v_{r})$.
By Invariant (1), we obtain a contradiction in one of the following
ways: 1) if $v_{r}$ is a descendant of $c$, then $I(v_{r})\not\subseteq I(c)$,
or 2) if $v_{r}$ is a descendant of $a$, then $r.y\ge a.y$ and
so $r\notin\square_{G_{f}}$, or else 3) $v_{r}$ is such that 
$I(a).\max < I(v_r).\min \leq I(v_r).\max < I(c).\min$. \qedd
%$I(c).\max,I(a).\max<I(v_{r}).\min$
%and so $r\notin\square_{G_{f}}$. \qedd
\end{proof}

\paragraph{One-neighbor phases.}
Suppose that we are
at the beginning of the one-neighbor phase of the $r$-th round. Let
$v_{1},\dots,v_{k+1}$ be the current children of $u=u_{r-1}$ ordered
from left to right. Let $i$ be the index such that, at the end of
the round, $v_{j}$ is linked to $v_{j+1}$ for all $j<i$, and $v_{j}$
is linked to $v_{j-1}$ for all $j>i$. By symmetry, we only consider
$j$ where $j<i$ from now. A point $q{}_{j}=\coord{I(v_{j-1}).\max}{v_{j}.y}$
is included into $Q'$ (if it is not already in $Q$). Let $G_{q_{j}}=(v_{j-1,\max},v_{j,\min},u_{v_{j-1,\max}},u_{v_{j,\min}},u_{\min})$. See \Cref{fig:find_gadget_one_neighbor}.

\begin{figure}
	\begin{centering}
		\includegraphics[scale=0.23]{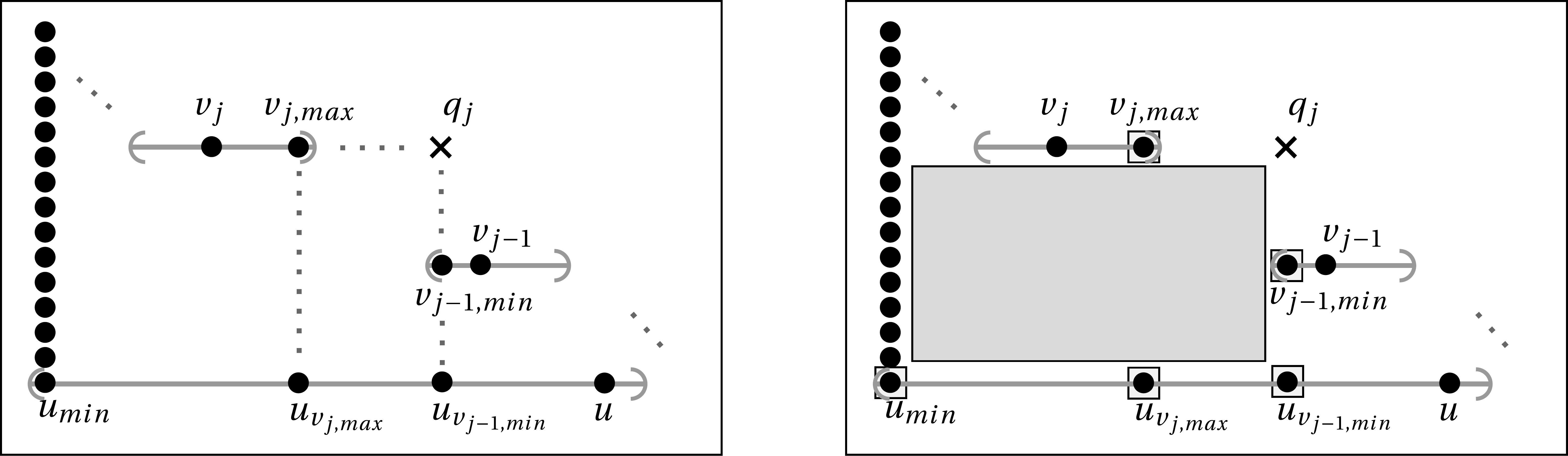}
		\par\end{centering}
	\caption{\textbf{(a)} Points in the proof of \Cref{claim:find_gadget_one_neighbor}. \textbf{(b)}
	 $\protect\add$ gadget $G_{q_{j}}$. Shaded rectangle is empty. \label{fig:find_gadget_one_neighbor}}
\end{figure}

\begin{claim}\label{claim:find_gadget_one_neighbor}
If $q{}_{j}$ is newly added, i.e.\ $q_{j}\notin Q$, then $G_{q_{j}}$
is an $\add$ gadget in $Q$ and $q_{j}$ fills $G_{q_{j}}$.\end{claim}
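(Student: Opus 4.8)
The plan is to reuse, essentially verbatim, the template of the proof of Claim~\ref{claim:find_gadget_two_neighbor}: the gadget $G_{q_j}$ has the same shape as the gadget $G_f$ there --- an interval endpoint of one sibling, an interval endpoint of the neighboring sibling, their two projections to the parent's row, and an extreme endpoint of the parent's interval --- so the case $x=f$ of that claim is the argument to transcribe. Fix the round $r$, write $u=u_{r-1}$, let $v_j$ be the node that receives the new point $q_j$, and let $I$ denote the interval function immediately before $q_j$ is added, so that $q_j$ sits at the newly created end of $I'(v_j)$ in row $v_j.y$. By the symmetry of the two branches of the V-shape of $u$'s children I would treat only one of them. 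First I would record that all five points listed in $G_{q_j}$ belong to $Q$: the two sibling endpoints and their projections to row $u.y$ lie in $Q$ by Invariant~(4) applied to the child--parent pairs $(v_{j-1},u)$ and $(v_j,u)$, while the remaining corner, an extreme endpoint of $I(u)$, lies in $Q$ by Invariant~(4) applied to $u$ and its parent, or, when $u=u_0$, because it equals $\langle 0,0\rangle\in\boxx\subseteq Q$.

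Next I would check that these five points form an $\add$ configuration in the sense of \ref{def:add gadget} (up to horizontal reflection) and that $q_j$ is exactly its ``fill point'' $\langle b.x,a.y\rangle$. The three distinct $y$-levels are immediate from the V-shape: $v_{j-1}$ and $v_j$ lie on one branch, strictly separated in height, and $u$ is strictly below both. The non-degeneracy of the $x$-coordinates is the crux, and it is where the hypothesis $q_j\notin Q$ is used. Invariant~(1) gives $I(v_{j-1}).\max\le I(v_j).\min$ for the disjoint sibling intervals, and equality would make the point $v_{j,\min}$ --- already shown to lie in $Q$ --- coincide with $q_j$, contradicting $q_j\notin Q$; hence $I(v_{j-1}).\max < I(v_j).\min$. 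For the remaining strict inequality involving the extreme endpoint of $I(u)$ I would copy the two-case split from the $G_f$ analysis: if the relevant sibling interval is nondegenerate, Invariant~(1) already separates it from that endpoint; otherwise that interval is a single point, and Invariant~(3') forces the corresponding projection (which lies in $Q$ by Invariant~(4)) to be a $\boxx$ point, hence $u.y=0$, hence $I(u).\min=0$, which is strictly separated from the sibling endpoint (a positive integer). Given these inequalities, a direct coordinate check confirms both the gadget configuration and that $\langle b.x,a.y\rangle=q_j$; since $q_j\notin Q$, the ``$f\notin P$'' clause of \ref{def:add gadget} also holds, so $q_j$ fills $G_{q_j}$.

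Finally I would prove $\square_{G_{q_j}}\cap Q=\emptyset$ exactly as in Claim~\ref{claim:find_gadget_two_neighbor}: supposing $r'\in\square_{G_{q_j}}\cap Q$, let $v_{r'}$ be the node with $v_{r'}.y=r'.y$; by Invariant~(5), $r'.x\in I(v_{r'})$; now split on where $v_{r'}$ sits in the tree. If $v_{r'}$ is a descendant of $v_{j-1}$ or of $v_j$, then Invariant~(1) nests $I(v_{r'})$ inside $I(v_{j-1})$ or $I(v_j)$, so either $r'.x$ falls outside the column range of $\square_{G_{q_j}}$ or $r'.y$ falls outside its row range (a descendant lies strictly above its ancestor); and if $v_{r'}$ is anywhere else, Invariant~(1) makes $I(v_{r'})$ disjoint from $I(v_{j-1})$, $I(v_j)$, and $I(u)$, again pushing $r'.x$ out of the $x$-span of $\square_{G_{q_j}}$. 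Every case contradicts $r'\in\square_{G_{q_j}}$, so the rectangle inside the gadget is empty, completing the proof.

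The step I expect to be the real obstacle is the non-degeneracy bookkeeping above: one must keep the left/right orientation of $G_{q_j}$ --- in particular whether the extreme parent endpoint is $I(u).\min$ or $I(u).\max$ --- matched to the chosen branch of the V and to the direction in which the one-neighbor links accumulate, and dispatch the degenerate-interval and $u=u_0$ sub-cases without circularity, i.e.\ without appealing to any invariant for the updated tree that has not yet been established. Everything else is a routine transcription of the two-neighbor argument.
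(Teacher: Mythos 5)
Your proof is correct and follows essentially the same route as the paper's: gather the five gadget corners from Invariant~(4), extract $I(v_{j-1}).\max<I(v_j).\min$ from the hypothesis $q_j\notin Q$, and rule out points inside $\square_{G_{q_j}}$ by Invariant~(5) together with the laminar structure of the intervals, splitting on where the node in that row sits in the monotone tree.

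The one place you do more work than the paper is the second strict $x$-separation, between the extreme corner of $I(u)$ and the nearer sibling endpoint. You transcribe the two-case split from the $G_f$ analysis (nondegenerate sibling interval, vs.\ degenerate interval forcing $u.y=0$ via Invariant~(3')); the paper simply asserts the distinctness of the five corners once the first separation is shown and leaves this step implicit. Your care is not misplaced, but there is a short-circuit peculiar to the one-neighbor setting that you did not exploit: the parent $u=u_{r-1}$ always has $I(u)=(0,n+1)$ here. Indeed this is the initial interval when $r=1$; for $r\ge 2$ it was set to $(0,n+1)$ in the one-neighbor phase of round $r-1$ when $u_{r-1}$ was the bottom node $v_i$, and neither the two-neighbor nor the one-neighbor updates of round $r$ ever touch the parent's interval. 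So the extreme corner of $I(u)$ is a $\boxx$ point and the second separation is automatic; in particular your degenerate-interval sub-case, which concludes $u=u_0$, never actually arises with $u\ne u_0$. Two very minor phrasing slips worth fixing: the extreme corner need not equal $\langle 0,0\rangle$ when $u=u_0$ (it may be the $\boxx$ point $\langle n+1,0\rangle$, depending on which branch of the V is being handled), and in the final emptiness case split the interval of a stray node $v_{r'}$ is not disjoint from $I(u)$ when $v_{r'}$ is merely another child of $u$ -- what finishes that sub-case, as in the paper, is disjointness from $I(v_{j-1})$ and $I(v_j)$ combined with the $y$-range bound via Invariant~(3').
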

\begin{proof}
As $q_{j}\notin Q$, then $I(v_{j-1}).\max<I(v_{j}).\min$ (otherwise
$q_{j}\in Q$ by Invariant (4)). So the five points $v_{j-1,\max},v_{j,\min},u_{v_{j-1,\max}},u_{v_{j,\min}},u_{\min}$
are distinct. Also, they are all in $Q$ by Invariant (4), and are
in the correct relative position for an $\add$ gadget. It is left
to argue that $\square_{G_{q_{j}}}\cap Q=\emptyset$. Suppose there
is a point $r\in\square_{G_{p_{j}}}\cap Q$. Then, let $v'$ be a
node such that $v'.y=r.y$. By Invariant (5), we have $r.x\in I(v')$.
By Invariant (1), we obtain a contradiction in one of the following
ways: 1) if $v'$ is a descendant of $v_{j}$, then $I(v')\not\subseteq I(v_{j})$,
or 2) if $v'$ is a descendant of $v_{j-1}$, then $v'.y\ge v_{j-1}.y$
and so $r\notin\square_{G_{q_{j}}}$, or else 3) $v_{r}$ is such
that $I(v_{j-1}).\max<I(v').\min$ and so $r\notin\square_{G_{q_{j}}}$. \qedd
\end{proof}

\subsubsection{Finishing the proof}

Finally, we conclude with the proof of \ref{lem:smooth tran geo}. 
\begin{proof}
[Proof of \ref{lem:smooth tran geo}]First, observe that $|Q|=\Theta(|\smooth_{path}(P)|)$
by \ref{lem:inv maintained two-neighbor} and \ref{lem:inv maintained one-neighbor}. 
We partition $Q=Q_{2}\cup Q_{1}\cup Q^{init}$ where
$Q_{2}$ and $Q_{1}$ are points newly added to $Q$ in two-neighbor
and one-neighbor phases respectively, and $Q^{init}=P\cup\boxx$ is
the initial point set. Similarly, we write $S=\smooth_{path}(P)$
and partition $S=S_{2}\cup S_{1}$ where $S_{2}$ and $S_{1}$ correspond
to the links in two-neighbor and one-neighbor phases respectively.
By \ref{lem:inv maintained two-neighbor}, we have $|S_{2}|\le|Q_{2}|\le3|S_{2}|$,
and, by \ref{lem:inv maintained one-neighbor}, we have $|Q_{1}|,|S_{1}|\le2n$
, and so $|Q_{1}\cup Q^{init}|\le6n$. There are two cases: if $|Q_{2}|\le20n$,
then $|Q|=\Theta(n)$ and $|S|=\Theta(n)$. Else, we have $|Q|=\Theta(|Q_{2}|)=\Theta(|S_{2}|)=\Theta(|S|)$.
Therefore, $|Q|=\Theta(|\smooth_{path}(P)|)$.

Next, we claim that $\greedy(P)=Q\setminus\boxx$ exactly. By \ref{lem:smooth add to gadget},
$Q$ is constructed by repeatedly filling some $\add$ gadget in $Q$.
We claim that this continues until there is no $\add$ gadget in $Q$.
This is because, by \ref{lem:inv init smooth}, \ref{lem:inv maintained two-neighbor}
and \ref{lem:inv maintained one-neighbor}, the invariants hold. So,
by \ref{lem:inv end}, at the end $Q$ is a satisfied superset of
$P\cup\boxx$. Hence, by \ref{lem:cannot add imply sat}, there is
no $\add$ gadget left in $Q$. Recall \ref{def:nondet greedy}. We
can see that $Q$ is produced exactly by the non-deterministic \greedy.
Therefore $Q\setminus\boxx=\greedy_{nondet}(P)=\greedy(P)$, by \ref{thm:greedy nondet}.
So we can conclude $|\greedy(P)|=\Theta(|\smooth_{path}(P)|)$.\qedd\end{proof}

\newpage
\section{Discussion and open questions}
\label{sec:open}
The work presented in this paper raises several new questions. We discuss those that we find the most interesting.

\paragraph{Further study of the stable heap model.}

The fact that a simple modification in the definition of the link operation implies a general connection between heaps and BSTs is intriguing. We believe
that the stable model is interesting in its own right.
 
\begin{itemize}
	\item \textbf{Stable vs.\ non-stable heaps:} How restrictive is stable linking? Are there examples showing separations between the various stable and
	non-stable algorithms? 
	\item \textbf{Instance-specific lower bounds for general sequences:} We
	have transferred instance-specific lower bounds from BSTs to stable heaps when only extract-min operations are used (\S\,\oldref{sub:dynopt for heap}). 
	Can one extend this study to general sequences including extract-min, insert, and decrease-key
	operations?
	\item \textbf{Algorithms (besides smooth heap) in the stable heap model:} What is the complexity
	of the stable variants of pairing heaps described in \ref{fig_heap_algos}?
	How good are their BST counterparts?
\end{itemize}

\paragraph{Further connection between heaps and BSTs.}

By \ref{thm:general tran}, we can obtain an offline BST algorithm
from any stable heap algorithm. The following directions seem promising.
\begin{itemize}
	\item \textbf{Beyond stable heaps: }Can one do the same for (some class of)
	non-stable heap algorithms?
	\item \textbf{Online BSTs: }Our connections yield offline BST algorithms,
	except for $\greedyfuture$, which can be made online~\cite{DHIKP09}.
	Can one derive other online BST algorithms from heaps?
	\item \textbf{Characterization:} Is it true that \emph{every} BST algorithm can be
	obtained from some stable heap algorithm? If not, what is the subclass of BST algorithms for which this is possible? Does it include algorithms such as Splay, Move-to-root,
	Tango, a static BST, etc.? If yes, what stable heap algorithms generate
	these? 
\end{itemize}

\paragraph{Further analysis of smooth heaps.}
In \S\,\oldref{sec:produce sat set}, we gave an upper bound on the amortized cost of extract-min for smooth heaps, and two instance-specific upper bounds. There are several further directions.
\begin{itemize}
	\item What is the amortized cost of \textbf{insert} and \textbf{decrease-key}? The
	techniques developed in~\cite{our_wads} for Greedy may be useful.
	As discussed in~\S\,\oldref{sec:intro}, matching the bounds of Fibonacci heaps by a simpler algorithm is a long-standing problem. Can smooth heaps achieve this?
	\item Does smooth heap satisfy \textbf{working-set-like bounds} similar
	to those studied in~\cite{queaps,ElmasryWS,ElmasryEtc}?
	\item Are smooth heaps (with a careful implementation) efficient \textbf{in practice}? 
\end{itemize}

\paragraph{Status of the conjectured optimality of $\greedyfuture$.}

In \S\,\oldref{sub:consequence greedy} we present results conditional on the the conjectured optimality of $\greedyfuture$ (i.e.\ \ref{conj:greedy opt}). Settling the following questions would give more insight 
towards the proof or refutation of the conjecture, a significant open problem of the field.
\begin{itemize}
	\item \textbf{Optimality of smooth heap:} Is smooth heap (in sorting-mode) an instance-optimal
	stable heap algorithm? By our results, the competitive ratio of the smooth heap is not worse than the competitive ratio of Greedy. Is it better?
	\item \textbf{Inversion and reversion:} Is it true that for every permutation
	$X$, 
	\[
	|\greedyfuture(X)|=\Theta(|\greedyfuture(X')|)=\Theta(|\greedyfuture(X^{r})|)?
	\]
	How about $\smooth(\cdot)$, $\opt_{\stable}(\cdot)$, and $\opt_{\bstins}(\cdot)$?
	If any of these are false, then so is \ref{conj:greedy opt}.
	\item \textbf{Approximation algorithms:} Are there polynomial-time algorithms
	for $O(1)$-approximating $\opt_{\bst}$ and $\opt_{\stable}$? Both
	of these are captured by clean geometric problems (i.e.\ the satisfied
	superset and the star-path problem).

\end{itemize}

\section*{Acknowledgement}
This project has received funding from the European Research Council (ERC) under the European Unions Horizon 2020 research and innovation programme under grant agreement No 715672 and through the ERC consolidator grant No 617951, and from the Israeli Centers of Research Excellence (I-CORE) program (Center No.\ 4/11), and from  the Swedish
Research Council (Reg. No. 2015-04659).

\newpage{}

{\reffontsize
\bibliographystyle{alpha}
\bibliography{paper}	
}
\newpage

\appendix

\section{Adaptive sorting}\label{app_sort}

Several measures of pre-sortedness have been considered in the literature on adaptive sorting. Most, but not all, of these are known to be subsumed by properties of the BST optimum. A full survey of such measures is out of our scope. We briefly discuss four examples. In all cases, we consider an input permutation $X = (x_1, \dots, x_n) \in [n]^n$, and we express the number of comparisons a sorting algorithm performs as a function of $X$.\\

\textbf{Number of inversions.} Perhaps the most natural measure of pre-sortedness of $X$ is the number of inversions $Inv(X)$. It is equal to the number of pairs $i,j$, such that $1 \leq i < j \leq n$, and $x_i > x_j$. Observe that for increasing $X$ we have $Inv(X) = 0$, and in general, $Inv(X) \leq {n \choose 2}$, attained by the decreasing sequence.

There are sorting algorithms that achieve a running time of $O\bigl(n \cdot \log{\bigl(\frac{Inv(X)}{n} + 1 \bigr)}\bigr)$, and this is, in general, well-known to be optimal (see e.g.\ \cite{Mehlhorn79}).
\\

\textbf{Number of runs.} The number of runs, denoted $Run(X)$ is the number of pairs $x_i,x_{i+1}$, such that $x_i > x_{i+1}$. This value is between $0$ (for the increasing sequence) and $n-1$ for the decreasing sequence. A running time of $O\bigl(n \cdot \log{\bigl( Run(X) + 1 \bigr)}\bigr)$ is achievable and, in general, optimal~\cite{Mannila}.\\

\textbf{Number of oscillations.} The number of oscillations of $X$ is defined in~\cite{Levcopoulos89} as follows:
$$ 
Osc(X) = \sum_{i=1}^n{\bigl|\bigl\{j: \min{\{x_i,x_{i+1}\}} < x_j < \max{\{x_i,x_{i+1}\}}\bigr\}\bigr|}.
$$
In words, $Osc(X)$ measures the number of oscillations, i.e.\ the number of times that a consecutive pair ``brackets'' some item in the sequence. Clearly, $0 \leq Osc(X) \leq n^2/2$.

Levcopoulos and Petersson~\cite{Levcopoulos89} show that their sorting algorithm based on the Cartesian tree achieves $O\bigl(n \cdot \log{\bigl(\frac{Osc(X)}{n} + 1 \bigr)}\bigr)$, and that this is, in general, optimal. Furthermore, they show that $Osc(X) \leq 4\cdot Inv(X)$, and $Osc(X) \leq 2n \cdot Run(X) + n$. In consequence, an algorithm that is optimal with respect to the number of oscillations is also optimal (up to a constant factor) with respect to the number of inversions and the number of runs.\\

\textbf{Shuffled monotone sequences.} The quantity $SMS(X)$, introduced in~\cite{Levcopoulos}, is the minimum number of monotone subsequences (not necessarily contiguous) that cover the sequence $X$. Levcopoulos and Petersson~\cite{Levcopoulos} show that their \emph{Slabsort} algorithm achieves the running time $O\bigl(n \cdot \log{\bigl(SMS(X) + 1\bigr)}\bigr)$, which is, in general, optimal. As $SMS(X) \leq Run(X)+1$ clearly holds, the ``shuffled monotone sequences'' measure subsumes the ``number of runs'' measure. \\

In the following we discuss the relationship between the four measures and the BST optimum.

We first show that the BST optimum subsumes the ``number of oscillations'' measure. The argument is based on an unpublished note by Eppstein~\cite{Eppstein-blog}. Then we show that the BST optimum \emph{almost} subsumes the ``shuffled monotone sequences'' measure.

Let $DF(X)$ denote the \emph{dynamic finger} bound, i.e.\ $$DF(X) = \sum_{i=2}^n{\log{(|x_{i+1} - x_i| + 1)}}.$$ It is known that $\opt_{\bst}(X) \leq O(DF(X))$, furthermore, the Splay and \greedy algorithms also match this bound~\cite{finger1, finger2, LI16}.

\begin{lem}[Eppstein~\cite{Eppstein-blog}]
$$DF(X) \leq O\left(n \cdot \log{\left(\frac{Osc(X)}{n} + 1 \right)}\right).$$
\end{lem}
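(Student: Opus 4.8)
The plan is to bound $DF(X)$ by relating each term $\log(|x_{i+1}-x_i|+1)$ to the count of items that the consecutive pair $(x_i,x_{i+1})$ ``brackets'', i.e.\ to the local contribution $c_i \eqdef |\{j : \min\{x_i,x_{i+1}\} < x_j < \max\{x_i,x_{i+1}\}\}|$ appearing in the definition of $Osc(X) = \sum_i c_i$. Since all keys are distinct and lie in $[n]$, the number of integers strictly between $x_i$ and $x_{i+1}$ is exactly $|x_{i+1}-x_i|-1$, and every such integer is the value $x_j$ of some index $j$. Hence $c_i = |x_{i+1}-x_i|-1$, so $\log(|x_{i+1}-x_i|+1) = \log(c_i+2) \le 1 + \log(c_i+1)$. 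Summing over $i$ from $2$ to $n$ gives
\[
DF(X) \;\le\; n + \sum_{i=2}^n \log(c_i+1).
\]

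First I would handle the remaining sum by Jensen's inequality: $\log(\cdot+1)$ is concave, so
\[
\frac{1}{n}\sum_{i=2}^n \log(c_i+1) \;\le\; \log\!\left(\frac{1}{n}\sum_{i=2}^n c_i + 1\right) \;\le\; \log\!\left(\frac{Osc(X)}{n} + 1\right),
\]
using $\sum_i c_i = Osc(X)$ (extending the sum harmlessly to $i=1$ or noting the missing term is nonnegative, which only helps the inequality in the direction we want — one has to be slightly careful, since dropping a term from the left of Jensen is fine but we should make sure the averaging is over the right count; I would just average over all $n$ terms $c_1,\dots,c_n$, which upper-bounds the partial sum). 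Therefore $\sum_{i=2}^n \log(c_i+1) \le n\log(\tfrac{Osc(X)}{n}+1)$, and combining with the previous display,
\[
DF(X) \;\le\; n + n\log\!\left(\frac{Osc(X)}{n}+1\right) \;=\; O\!\left(n\log\!\left(\frac{Osc(X)}{n}+1\right)\right),
\]
where the last step uses that $\log(\tfrac{Osc(X)}{n}+1)$ could be as small as $0$, so we absorb the stray $+n$ only if that logarithm is bounded below by a constant; to cover the degenerate case where $Osc(X) = O(n)$, note that then $DF(X) \le n + n\log(O(1)) = O(n)$ directly, which is also $O(n\log(\tfrac{Osc(X)}{n}+1))$ once we interpret the bound as $\max\{n, \dots\}$ — or, more cleanly, observe $DF(X) \le n\log(n+1) = O(n\log n)$ always and the claimed bound is only interesting when $Osc(X)$ is large.

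The only mild subtlety — the ``hard part,'' though it is not hard — is the bookkeeping around the constant $+n$ term and the degenerate regime $Osc(X) \in O(n)$, i.e.\ making sure the $O(\cdot)$ on the right-hand side genuinely dominates $n$; this is resolved by the dichotomy above. Everything else is the elementary identity $c_i = |x_{i+1}-x_i|-1$ (valid precisely because the keys form a permutation of $[n]$) followed by one application of Jensen's inequality.
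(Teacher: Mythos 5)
Your proof is correct, and it takes a genuinely different (and shorter) route than the paper's. Your key move is the identity $c_i = |x_{i+1}-x_i| - 1$, which holds precisely because $X$ is a permutation of $[n]$: the bracketed values are exactly the integers strictly between $x_i$ and $x_{i+1}$, all of which occur in $X$. This lets you write $DF(X) = \sum_i \log(c_i+2) = \Theta\bigl(\sum_i \log(c_i+1)\bigr) + O(n)$ and finish with a single application of Jensen's inequality to the per-pair (``row'') counts $c_i$, using $\sum_i c_i = Osc(X)$. The paper's proof (following Eppstein) instead introduces the harmonic-weighted double sum $Q = \sum_i \sum_{j:\,i\text{ brackets }j} \frac{1}{M_i - x_j}$. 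It uses the \emph{same} permutation fact to show $Q = \sum_i H_{M_i-m_i-1} = \Theta(DF(X))$, then re-sums $Q$ by the bracketed index $j$, invoking an extra lemma (for fixed $j$, at most two bracketing $i$'s share the same $M_i$, so $Q \le \sum_j 2H_{Osc_j(X)}$), and applies Jensen to the ``column'' counts $Osc_j(X)$. The two are dual: you Jensen the $c_i$'s, the paper Jensens the $Osc_j$'s, and both families sum to $Osc(X)$. The paper's route has the advantage that the intermediate quantity $\sum_j \log(Osc_j(X)+1)$ is exactly the Cartesian-tree-sort cost from Levcopoulos--Petersson, so it transparently compares $DF$ to that cost along the way; your route dispenses with the double-counting and the at-most-two lemma entirely, at the cost of being specific to the permutation-of-$[n]$ setting (though the paper's version of the lower bound $Q=\Theta(DF)$ also silently relies on it). Both proofs share the same harmless abuse in the degenerate regime: $DF(X)\ge n-1$ always while the right-hand side vanishes when $Osc(X)=0$, so the bound should really be read as $O\bigl(n + n\log(Osc(X)/n+1)\bigr)$; your explicit dichotomy is a fine way to say this.
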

\begin{proof}
We denote $M_i = \max{\{x_i,x_{i+1}\}}$ and $m_i = \min{\{x_i,x_{i+1}\}}$. We thus have $DF(X) = \sum_{i=1}^{n-1}{\log{(M_i - m_i + 1)}}$. 
We say that $i$ \emph{brackets} $j$ if $m_i < x_j < M_i$.

%Let $q_{ij} = \frac{1}{M_i - x_j}$, and 
Consider the quantity:
$$Q = \sum_{i=1}^{n-1} {\sum_{~j: ~i \mbox{~brackets~} j~} {\frac{1}{M_i - x_j}}}.$$

Let $H_n$ denote the $n$-th harmonic number, i.e.\ $H_n = \sum_{i=1}^n{\frac{1}{n}}$. Since all items bracketed by $i$ have distinct ranks between $m_i$ and $M_i$, we have:
$$
Q = \sum_{i=1}^{n-1} {\sum_{j=m_{i}+1}^{M_i-1} {\frac{1}{M_i - j}}} = \sum_{i=1}^{n-1} H_{(M_i-m_i - 1)} = \Theta(DF(X)).$$

On the other hand we have:

$$
Q = \sum_{j=1}^{n} {\sum_{~i: ~i \mbox{~brackets~} j~} {\frac{1}{M_i - x_j}}}.
$$
Let $Osc_j(X)$ denote the number of items that bracket $j$. We have $Osc(X) = \sum_{j=1}^n{Osc_j(X)}$.

Since for all $i$ that bracket $j$, at most two have the same value $M_i$, we have $$
Q \leq \sum_{j=1}^n {2 H_{Osc_j(X)}}.
$$
By convexity, this quantity is maximized if all $Osc_j(X)$ are equal, and thus $$Q \leq O\left(n \cdot \log{\left(\frac{Osc(X)}{n} + 1 \right)}\right). \qedd$$
\let\qed\relax
\end{proof}

We now discuss the connection between the BST optimum and the ``shuffled monotone sequences'' measure.

Let $k$ denote the quantity $SMS(X)$. In~\cite{FOCS15} we study the restricted variant of $SMS(X)$ in which all monotone subsequences are increasing. We show (from more general results for pattern-avoiding sequences) that $\opt_{\bst}(X) \leq n \cdot 2^{O(k^2)}$. Furthermore, this bound is achieved by \greedy. In~\cite{landscape} we show, building on~\cite{LI16}, the stronger $\opt_{\bst}(X) \leq O(n \cdot k \log{k})$, which is at this time not known to be matched by online BST algorithms. (In~\cite{landscape} the bound is stated for shuffled \emph{increasing} sequences, but the argument easily extends to shuffled \emph{monotone} sequences.)

Thus, we observe that the BST optimum is known to match the $O(n \log{k})$ bound from the adaptive sorting literature only for small values of $k$. Whether, in fact, it holds that $\opt_{\bst}(X) \leq O(n \cdot \log{k})$, i.e.\ whether the BST optimum subsumes the $SMS$ measure is an interesting open question. We note that, in general, we cannot expect the BST optimum to subsume \emph{every} structural measure for adaptive sorting; for every input sequence there is \emph{some} sorting algorithm that can sort that particular sequence with a linear number of comparisons, whereas $\opt_{\bst} = \Omega(n \log{n})$ for most sequences.

\section{Cartesian tree sorting}\label{app_cartesian}
Levcopoulos and Petersson~\cite{Levcopoulos89} show that the quantity 
$n \cdot \log{\bigl(\frac{Osc(X)}{n} + 1 \bigr)}$ is not just an upper bound, but (up to a small constant factor) a closed-form expression for the exact running time of the Cartesian sort algorithm.

We show an example in which sorting with the smooth heap is significantly faster than sorting with the Cartesian tree algorithm. The example is the ``tilted grid'' permutation, shown in~\ref{fig:tilted}. 
A simple inductive argument shows that the cost of Greedy on both this permutation and its inverse is linear~\cite{FOCS15, landscape}. As a consequence, by our results in this paper, the cost of sorting the permutation by smooth heap is $O(n)$. On the other hand, we observe that at least half of the items are ``bracketed'' by $\Theta(\sqrt{n})$ other items, yielding $Osc(X) = \Omega(n\sqrt{n})$. The running time of Cartesian sort on this sequence is thus $\Omega(n \log{n})$.

\begin{figure}[H]
	\begin{center}
\includegraphics[width=0.45\textwidth]{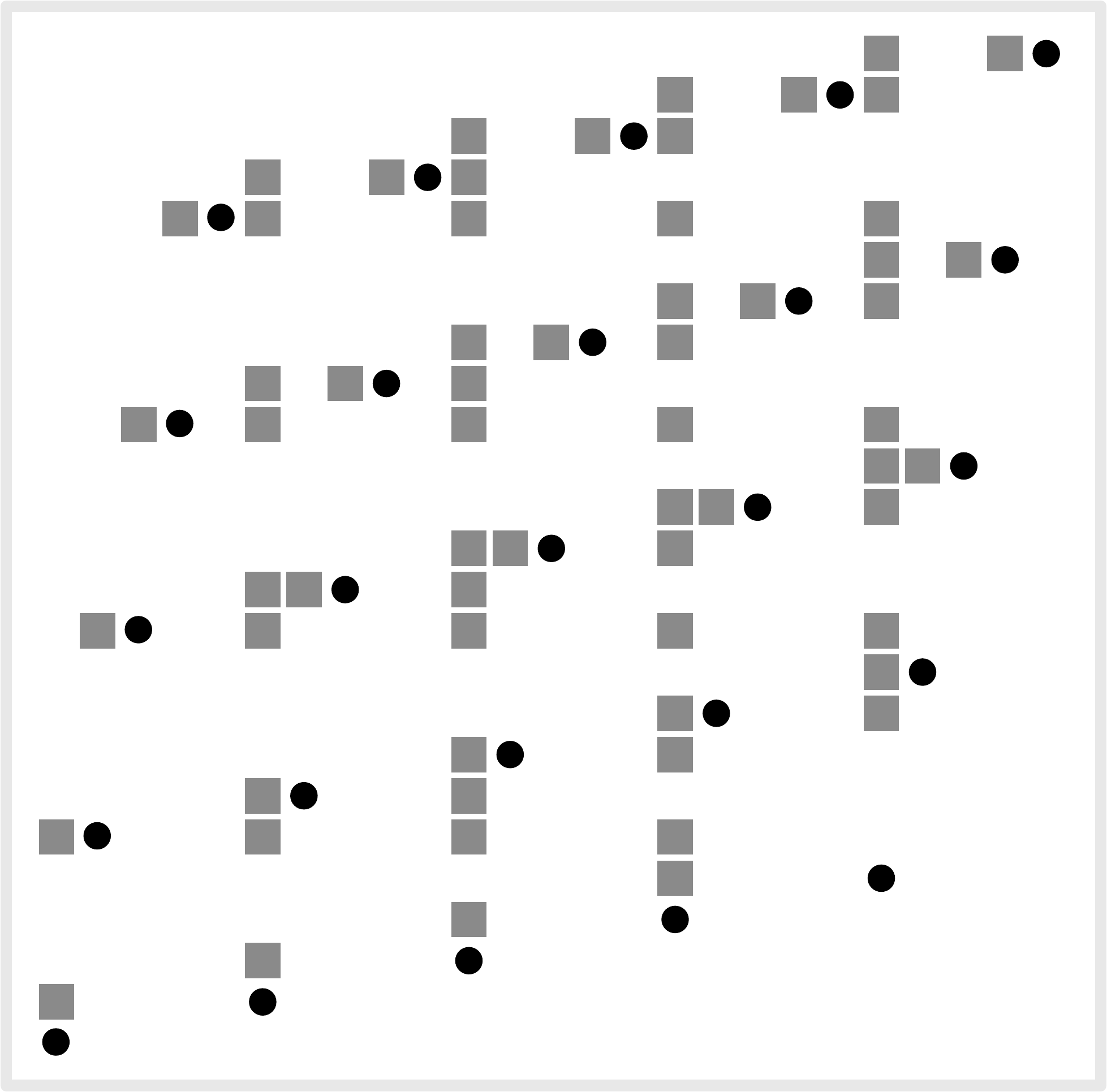}
	\end{center}
\caption{Tilted grid sequence. Points denote input permutation point set, squares denote \greedy output. The sequence is defined as follows. Let $n=t^2$ and consider the point set $\mathcal{P} = \{(i t + (j-1),jt + i -1)~|~i,j \in [t]\}$. It is not hard to verify that no two points align on either $x$ or $y$ coordinates, therefore $\mathcal{P}$ defines a permutation of $[n]$. The example is a special case of a more general family of ``perturbed grid'' sequences, amenable to similar analysis~\cite{landscape}.
\label{fig:tilted}}
\end{figure}

As the running time of Cartesian sort is subsumed by the dynamic finger bound, an example on which Cartesian tree sorting is asymptotically faster than smooth heap sorting would contradict the dynamic optimality of \greedy (\ref{conj:greedy opt}). In fact, it would contradict even the weaker conjecture that $\greedy(X)$ and $\greedy(X')$ are within a constant factor of each other.

\section{Geometry of insert-only execution}\label{app_insert}

\begin{thm}
	[Geometry of BSTs, insertion-compatible (Restatement of \ref{thm:geo bst})] Let
	$X\in[n]^{n}$ be an arbitrary permutation, and let $Q\subset[n]^{2}$.
	$Q$ is a insertion-compatible satisfied superset of $P^{X}$ iff
	there is an offline BST algorithm $\cA_{\bst}$ in insert-only mode
	such that $\cA_{\bst}(X)=Q$.\end{thm}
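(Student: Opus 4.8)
The plan is to adapt the proof of \ref{thm:geo bst} (the search-only geometry of BSTs, from \cite{DHIKP09}) almost verbatim, changing only the bookkeeping that tracks when each key first enters the tree. There are two implications to prove.

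\emph{From executions to point sets.} Suppose $\cA_{\bst}$ is a legal insert-only BST execution on $X$ and let $Q=\cA_{\bst}(X)$. First, $Q\supseteq P^{X}$, since the $i$-th operation inserts $x_i$ and thus touches the node $\coord{x_i}i$. That $Q$ is \emph{satisfied} is shown exactly as in \cite{DHIKP09}: for two touched points $\coord a s,\coord b t$ with $s<t$ and $a\ne b$, one examines the tree at the times in $[s,t]$ and locates inside $\square$ a key strictly between $a$ and $b$ that is touched during one of those operations. The only new case is when $b$ is inserted at some time in $(s,t]$; then the insertion of $b$ touches $\mathrm{pred}(b)$ and $\mathrm{succ}(b)$, one of which lies in the key-interval $[a,b]$, which settles this case immediately. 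That $Q$ is \emph{insertion-compatible} uses only that we start from the empty tree: the node with key $i$ does not exist before its insertion time $t_i$ (the index with $x_{t_i}=i$), so it is never touched before time $t_i$; hence the minimum $y$-coordinate in column $i$ of $Q$ equals $t_i$, which is the $y$-coordinate of the unique point of $P^{X}$ in that column.

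\emph{From point sets to executions.} Given an insertion-compatible satisfied superset $Q$ of $P^{X}$, I would replay the treap-style reconstruction of \cite{DHIKP09}: processing $Q$ row by row, one builds trees $T_0\subseteq T_1\subseteq\cdots$ (in node set) so that the touched region of row $i$ is $Q_{y=i}$ and the step $T_{i-1}\to T_i$ is realizable by operations confined to $Q_{y=i}$. Insertion-compatibility is precisely what makes each $x_i$ a genuinely new key: column $x_i$ of $Q$ attains its minimum at $y=i$, so $T_{i-1}$ has node set $\{x_1,\dots,x_{i-1}\}$ and $T_i$ has node set $\{x_1,\dots,x_i\}$; in particular row $1$ of $Q$ is exactly $\{\coord{x_1}1\}$, consistent with the empty initial tree. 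The one genuinely new ingredient is the following claim: if $i\ge 2$ and $p,s$ denote the predecessor and successor of $x_i$ among $\{x_1,\dots,x_{i-1}\}$, then $\coord p i\in Q$ and $\coord s i\in Q$ (whenever $p$, resp.\ $s$, exists). To see this for $p$, let $t_p=\max\{\,y<i:\coord p y\in Q\,\}$ (nonempty, since $p$ is inserted before time $i$), and apply satisfaction to the rectangle with corners $\coord{x_i}i$ and $\coord p{t_p}$: it contains some $r\in Q\setminus\{\coord{x_i}i,\coord p{t_p}\}$ whose $x$-coordinate $v$ satisfies $p\le v\le x_i$; by insertion-compatibility $v$ is inserted by time $i$, so $p$ being the predecessor of $x_i$ forces $v\in\{p,x_i\}$; the case $v=x_i$ is impossible (it would put a touch of $x_i$ below row $i$, violating insertion-compatibility), so $v=p$ with $t_p<r.y\le i$, and maximality of $t_p$ gives $r.y=i$. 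The case of $s$ is symmetric. With this claim, the search path of $x_i$ in $T_{i-1}$ --- which visits $p$, $s$, and a prefix of their ancestors --- lies inside $Q_{y=i}$, so the insert is realizable legally: descend that path, attach $x_i$ as the right child of $p$ or the left child of $s$, then rotate within the touched region to reach $T_i$, mirroring the realization of a single access in \cite{DHIKP09}.

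The main obstacle is this last matching step: checking that the row-by-row realization of \cite{DHIKP09}, which was designed for accesses, still produces exactly $Q_{y=i}$ when the $i$-th step is an \emph{insertion} rather than a search, i.e.\ that ``attach the new key as a leaf at its in-order position, then rotate'' is consistent with the touched subtree prescribed by $Q$. The claim above is what unlocks this; once it is in hand, the remaining verifications (the touched region is a connected subtree containing the root, reachable by rotations) are line-by-line transcriptions of the search-only argument, with ``the initial tree holds every key'' replaced by ``every key enters at its own row''.
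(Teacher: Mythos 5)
Your proposal is correct and follows essentially the same route as the paper: both directions are obtained by adapting the two lemmas of Demaine et al.\ (the treap-priority reconstruction for point-set~$\to$~execution, and the touched-node argument for execution~$\to$~point-set), with the insert-only twist handled via insertion-compatibility. The one place you go beyond the paper's terse write-up is the explicit claim that $\coord{p}{i},\coord{s}{i}\in Q$ for the current predecessor and successor $p,s$ of $x_i$ --- the paper merely asserts that the touched set at time $i$ "contains both the current predecessor and successor" without proof, so your lemma usefully fills in that assertion and your derivation of it from satisfaction plus insertion-compatibility is sound.
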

\begin{proof}
	($\Rightarrow$) This direction is similar to \cite[Lemma 2.2]{DHIKP09}.
	In their proof, each node is assigned with a priority which is the
	next touch time, and then the min-treap is maintained. By the property
	of this treap, at time $i$, the keys from $K_{i}=\{p.x\mid p\in Q_{y=i}\setminus\{p_{y=i}\}\}$
	form a connected subtree containing the current root of the treap.
	So at time $i$, we just touch exactly $K_{i}$ which contains both
	the current predecessor and successor of $p_{y=i}.x$. Then, we insert
	$p_{y=i}.x$ into the subtree formed by $K_{i}$ and rearrange the
	subtree with next touch time as priority again. This implies that
	each time $i$, we touch exactly $K_{i}$. Hence, the execution trace
	is exactly $Q$. 
	
	($\Leftarrow$) This is exactly \cite[Lemma 2.1]{DHIKP09} but with
	an additional observation that, for any key $x$, $x$ is never touched
	before it is inserted. So $Q$ is insertion-compatible for $P^{X}$. \qedd \end{proof}

\section{Detailed pseudocode of smooth heap}\label{smooth_heap_ps}

\begin{figure}[H]
	\begin{center}
		\parbox{5.5in}{
\begin{framed}
\hspace{0.1in}
			\begin{algorithm}[H]
				\newcommand{\llWhile}[2]{{\let\par\relax\lWhile{#1}{#2}}}
				\newcommand{\llIf}[2]{{\let\par\relax\lIf{#1}{#2}}}
				\newcommand{\llElse}[1]{{\let\par\relax\lElse{#1}}}
				\DontPrintSemicolon
				\SetAlgoLined
				\NoCaptionOfAlgo
				%\caption{Smooth heap re-structuring for extract-min}
				\KwIn{A list of nodes, with $x$ initially pointing to the leftmost node}
				
				\vspace{0.1in}
				\itemindent=30pc
				\nlset{(LTR)}\While{$x.\NEXT \neq \NULL$}{
					\llIf{$(x.\KEY < x.\NEXT.\KEY)$}{$x \gets x.\NEXT$ \;}
					\SetAlgoNoLine
					
					\Else(\Comment*[f]{Local maximum found}  ){ 
						\SetAlgoLined 
						\While{$(x.\PREV \neq \NULL)$}{
							
							\SetAlgoNoLine
							\If{$(x.\PREV.\KEY > x.\NEXT.\KEY)$}{
								$ x \gets x.\PREV $ \;
								$\link(x)$ \;
							}
							
							\Else{
								$ x \gets x.\NEXT $ \;
								$\link(x.\PREV)$ \;
								\textbf{continue} \textbf{\textsc{\scriptsize(LTR)}}
							}
						}
						$\link(x)$ \;
					}
					% \SetAlgoNoLine

				}
				\SetAlgoLined
				
				\vspace{0.1in}
				
				\nlset{(RTL)}\While{$x.\PREV \neq \NULL$}{
					$x \gets x.\PREV$ \;
					$\link(x)$ \;
				}
				
			\end{algorithm}
\end{framed}
			
			\caption{Pseudocode of smooth heap re-structuring}
		}
	\end{center}
\end{figure}

\section{Example in which smooth heap is not optimal}\label{app:nonoptimal}
\begin{figure}[H]
	\begin{centering}
		\includegraphics[width=0.9\textwidth]{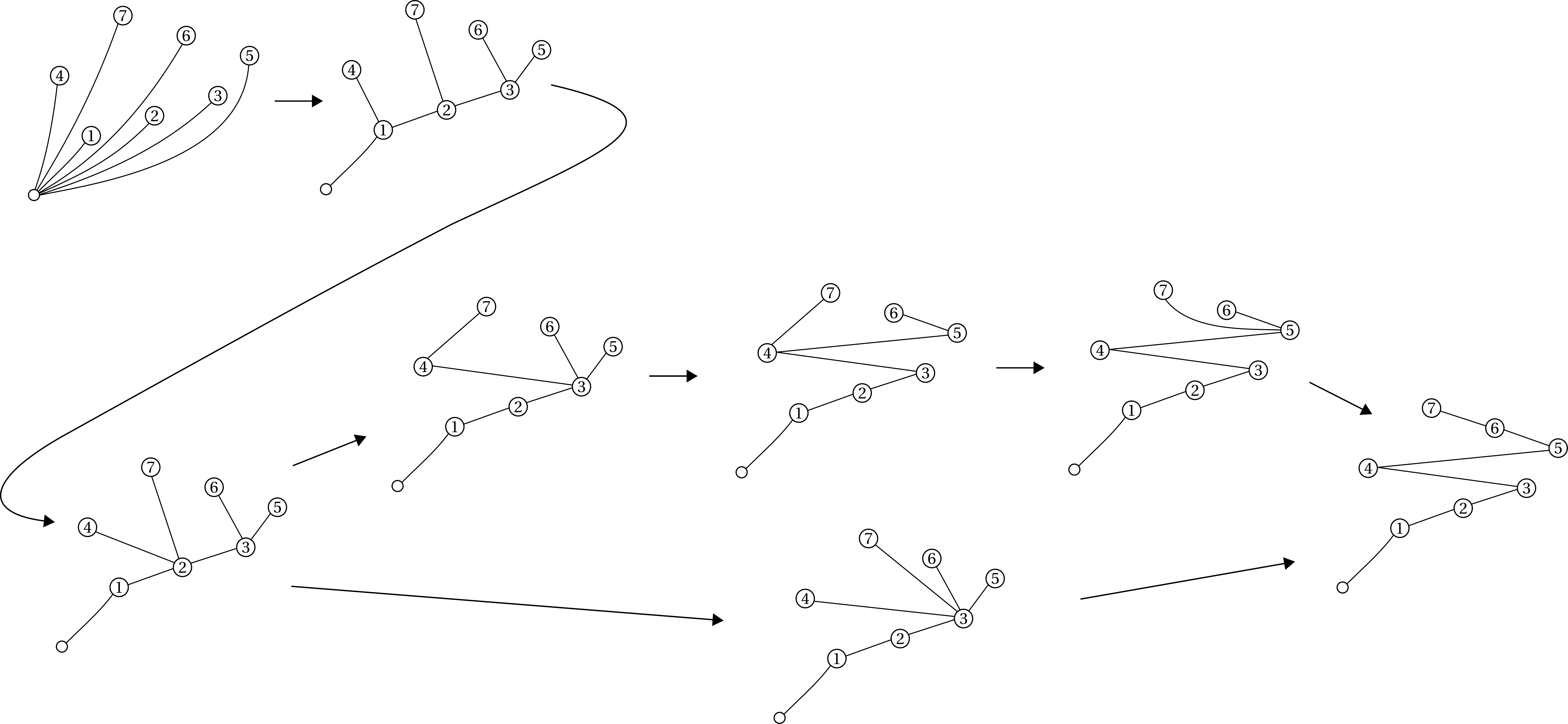}
		\par\end{centering}
	
	\caption{Sequence of operations illustrating that smooth heap is not globally optimal. Top branch: smooth heap execution in sorting-mode for $X=(4,1,7,2,6,3,5)$. Total link-cost is $13$. Bottom branch: an alternative stable heap execution for the same input. Total link-cost is $12$. \label{fig_counter}}
\end{figure}

\end{document}